\newtheorem{theo}{Theorem}[section]
\newtheorem{prop}[theo]{Proposition}
\newtheorem{coro}[theo]{Corollary}
\newtheorem{conj}[theo]{Conjecture}
\theoremstyle{definition}
\newtheorem{def1}[theo]{Definition}
\theoremstyle{remark}
\newtheorem{rema}[theo]{Remark}
\newtheorem{ex}[theo]{Example}
\newcommand{\Op}{\operatorname{Op}}
\newcommand{\nwc}{\newcommand}
\nwc{\eps}{\epsilon}
\nwc{\ep}{\epsilon}
\nwc{\vareps}{\varepsilon}
\nwc{\Oph}{\operatorname{Op}_\hbar}
\nwc{\la}{\langle}
\nwc{\ra}{\rangle}
\nwc{\mf}{\mathbf} 
\nwc{\blds}{\boldsymbol} 
\nwc{\ml}{\mathcal} 
\nwc{\defeq}{\stackrel{\rm{def}}{=}}
\nwc{\cE}{\ml{E}}
\nwc{\cN}{\ml{N}}
\nwc{\cO}{\ml{O}}
\nwc{\cP}{\ml{P}}
\nwc{\cU}{\ml{U}}
\nwc{\cV}{\ml{V}}
\nwc{\cW}{\ml{W}}
\nwc{\tU}{\widetilde{U}}
\nwc{\IN}{\mathbb{N}}
\nwc{\IR}{\mathbb{R}}
\nwc{\IS}{\mathbb{S}}
\nwc{\IZ}{\mathbb{Z}}
\nwc{\IC}{\mathbb{C}}
\nwc{\IT}{\mathbb{T}}
\nwc{\be}{\mathbf{e}}
\nwc{\tP}{\widetilde{P}}
\nwc{\tPi}{\widetilde{\Pi}}
\nwc{\tV}{\widetilde{V}}
\nwc{\supp}{\operatorname{supp}}
\nwc{\rest}{\restriction}
\date{\today}
\begin{document}

\title[Semiclassical behaviour of quantum eigenstates]{Semiclassical behaviour of quantum eigenstates}

\author[Gabriel Rivi\`ere]{Gabriel Rivi\`ere}

\address{Laboratoire de math\'ematiques Jean Leray (U.M.R. CNRS 6629), Universit\'e de Nantes, 2 rue de la Houssini\`ere, BP92208, 
44322 Nantes Cedex 3, France}
\email{gabriel.riviere@univ-nantes.fr}

\begin{abstract} Given a quantum Hamiltonian, we explain how the dynamical properties of 
the underlying classical system affect the behaviour of quantum eigenstates in the semiclassical limit. We study this problem 
via the notion of semiclassical measures. We mostly focus on two opposite dynamical paradigms: completely integrable systems 
and chaotic ones. We recall standard tools from microlocal analysis and from dynamical systems. We show how to use them in order 
to illustrate the classical-quantum correspondance and to compare properties of completely integrable and chaotic systems. 

\end{abstract}

\maketitle

\tableofcontents

\section{Introduction}

The semiclassical regime of quantum mechanics is a regime where the Planck constant is negligible compared with the 
other physical actions of the system. In this limit, the physical system under consideration is governed by the equations of 
classical mechanics and one expects that the nature (e.g. chaotic versus integrable) of the corresponding classical system
is reflected in the behaviour of the quantum system. From the mathematical point of view, this type of asymptotics is at the 
heart of microlocal analysis which has a wide range of applications: spectral theory, partial differential equations, 
symplectic topology, hyperbolic dynamics, random geometry, etc. This course will be mostly devoted to the 
application of microlocal analysis to the study of quantum eigenstates in the semiclassical limit 
which can also be viewed as a question of spectral geometry. Our main emphasis will be on
the use of tools from semiclassical analysis in order to study the 
impact of classical dynamics on this problem. These lectures are complementary to the ones of Fr\'ed\'eric Faure 
which are devoted to the application of semiclassical methods to classical chaotic dynamics~\cite{Faure19}.

The plan of these lectures is as follows. In section~\ref{s:background}, we collect some standard facts from 
dynamical systems and semiclassical analysis that will be used in the rest of these notes. The expert can easily 
skip (or read quickly) this part and we hope that the motivated non-expert will find enough material to understand 
what follows. In this section, we also point several classical books that can be used to find some complementary material 
and also the proofs of the results described in this part. In section~\ref{s:observability}, we discuss the question of observability 
and relate it to the quantum ergodicity property. The 
last three parts focus on three important examples: Zoll manifolds where the geodesic flow is periodic 
(section~\ref{s:zoll}), flat tori which are the simplest examples of nondegenerate completely integrable systems 
(section~\ref{s:torus}) and negatively curved manifolds where the geodesic flow is strongly chaotic 
(section~\ref{s:QE}). For these three models, we try to give a review of what is currently known on the 
semiclassical behaviour of quantum eigenstates with a strong emphasis on the properties derived from semiclassical methods.

To end this introduction, we mention that several important and interesting aspects on the study of semiclassical 
quantum eigenstates will not be described in these lecture notes. In fact, we chose to focus on their description via the notion 
of semiclassical measures which is natural from the point of view of quantum mechanics. We think that this point of view 
already illustrates well the interplay between classical and quantum mechanics which is at the heart of this 
school. Among the subjects that we will not (or only very briefly) discuss are the asymptotic 
distribution of eigenvalues, the relation with analytic number theory, the growth of $L^p$ norms, the geometry of 
nodal sets, random waves models, graph models, etc.

\section{Background material}\label{s:background}

In this preliminary section, we define Laplace eigenfunctions and discuss a couple of important examples. We also describe 
relevant tools from symplectic geometry and semiclassical analysis that we apply to define the so-called 
semiclassical measures and to derive some of their first properties. Details on this background material can for instance be 
found in Zworski's book on semiclassical analysis~\cite{Zworski12}.

\subsection{Laplace-Beltrami operators}

In all these lectures, $(M,g)$ denotes a compact, oriented, connected, smooth and Riemannian manifold which has no boundary and 
dimension $n\geq 2$. The Laplace-Beltrami operator is then defined as the unique operator $\Delta_g$ verifying
$$\forall \psi_1,\psi_2\in\mathcal{C}^{\infty}(M),\quad-\int_M\psi_1 \Delta_g\psi_2d\text{Vol}_g
=\int_M\langle d\psi_1,d\psi_2\rangle_{g^*}d\text{Vol}_g,$$
where $d$ is the usual differential, $g^*$ is the induced metric on $T^*M$ and $\text{Vol}_g$ is the 
Riemannian volume on $M$. In local coordinates $(x_1,\ldots,x_n)$, this operator reads
$$\Delta_g \psi=\frac{1}{\sqrt{|\det g|}}\sum_{i,j}\partial_{x_i} \left(g^{ij}\sqrt{|\det g|}\partial_{x_j}\psi\right).$$
In the following, we will focus our attention on solutions to the Laplace equation, i.e.
\begin{equation}\label{e:Laplace}-\Delta_g\psi_{\lambda}=\lambda^2\psi_{\lambda},\quad\|\psi_{\lambda}\|_{L^2}=1,\end{equation}
where $\lambda\geq 0$. Such solutions are called (normalized) Laplace eigenfunctions and they correspond to stationary solutions 
$u(t,x)=e^{-\frac{it\lambda^2}{2}}\psi_\lambda(x)$ of 
the Schr\"odinger equation
$$i\partial_tu=-\frac{1}{2}\Delta_gu.$$
The eigenvalues $\lambda^2$ represent the energy of the quantum eigenstate $\psi_\lambda$~\cite[\S 2.5]{FeynmanQM}.
Solutions to~\eqref{e:Laplace} are smooth by elliptic regularity~\cite[Prop.~2.2.4]{Sogge14}. 
Following~$\S 3.1$ from the same reference, one also knows that there exists a nondecreasing sequence 
$$0=\lambda_0<\lambda_1\leq\lambda_2\leq\ldots\leq\lambda_j\rightarrow+\infty,$$
and an orthonormal basis $(\psi_{\lambda_j})_{j\geq 0}$ of $L^2(M)$ such that
$$\forall j\geq 0,\quad-\Delta_g\psi_{\lambda_j}=\lambda_j^2\psi_{\lambda_j}.$$
The H\"ormander-Weyl's law~\cite[Th.~3.3.1]{Sogge14} also tells us:
\begin{equation}\label{e:weyl}
 N(\lambda):=\left|\left\{j\in\IZ_+:\lambda_j\leq\lambda\right\}\right|=
 \frac{\pi^{\frac{n}{2}}}{(2\pi)^n\Gamma\left(\frac{n}{2}+1\right)}\text{Vol}_g(M)\lambda^n+\ml{O}(\lambda^{n-1}),\ \text{as}\  \lambda\rightarrow+\infty.
\end{equation}

The purpose of these notes is to give some insights on the properties of Laplace eigenfunctions in the high energy 
limit $\lambda\rightarrow +\infty$. This means that we are looking at quantum eigenstates 
which are more and more ``excited''. From the correspondence principle of quantum 
mechanics~\cite[\S 7.4]{FeynmanQM}, one expects that the properties of these eigenmodes are related to the 
properties of the underlying classical Hamiltonian, here the geodesic flow (see below). This regime is called the 
\emph{semiclassical regime} of quantum mechanics. 

As we shall see, except for certain specific 
examples\footnote{Even in these cases, the asymptotic description of these solutions 
turns out to be a subtle question.}, there is no explicit description of these eigenfunctions. Yet, we would like 
to measure if these quantum eigenstates can develop singularities in the large eigenvalue limit and to 
understand how this is related to the properties of the corresponding classical Hamiltonian. There are several natural ways to 
attack this question and let us list three of them:
\begin{itemize}
 \item Given $\beta\in\mathbb{C}$, one can try to describe the level sets of these solutions:
 $$\ml{Z}_{\psi_{\lambda}}(\beta):=\left\{x\in M:\psi_{\lambda}(x)=\beta\right\}.$$
 This is related to the question of nodal sets that we will not discuss very precisely in these notes (except in 
 section~\ref{ss:application}). The interested reader can have a look at the survey by Zelditch~\cite{Zelditch13}.
 \item Given $p\geq 2$, one can try to bound $\|\psi_\lambda\|_{L^p(M)}$ by a function depending 
 only on $\lambda$ and on $(M,g)$. In the case $p=+\infty$, it follows from the localized version of~\eqref{e:weyl} 
 that $\|\psi_{\lambda}\|_{L^{\infty}(M)}\leq C_{M,g}(1+\lambda)^{\frac{n-1}{2}}$ -- see Eq.~(3.3.2) in~\cite{Sogge14}. 
 This is a typical question in harmonic analysis which is related to the 
 so-called Strichartz estimates for dispersive PDE. Again, we shall only describe very briefly this approach and, 
 for a description of recent results in this direction, we refer to~\cite{Sogge15}.
 \item Finally, one can try to study the probability measures $d\nu_\lambda(x)=|\psi_\lambda(x)|^2d\text{Vol}_g(x)$. 
 From the point of view of quantum mechanics, given a subset $\omega$ of $M$,
 $$\nu_{\lambda}(\omega)=\int_{\omega}|\psi_\lambda(x)|^2d\text{Vol}_g(x)$$
 represents the probability of finding a quantum particle in the state $\psi_{\lambda}$ inside the set $\omega$. 
\end{itemize}
We will mostly focus our attention on the last problem and let us describe some natural questions that can be raised 
in this direction. First, given any solution $\psi_{\lambda}$ to~\eqref{e:Laplace}, 
it follows from~\cite{Aronszajn56, Cordes56} that, for any \emph{open} set $\omega$, 
\begin{equation}\label{e:unique-continuation}\omega\neq\emptyset\quad\Longrightarrow\quad\nu_{\lambda}(\omega)>0.\end{equation}
It is then natural to look for conditions on $(M,g)$ and $\omega$ under which one can find $c_{\omega,M,g}>0$ such that 
\begin{equation}\label{e:observability}\forall\ \psi_{\lambda}\ \text{solution to~\eqref{e:Laplace}},\quad 0<c_{\omega,M,g}\leq \int_{\omega}|\psi_\lambda(x)|^2d\text{Vol}_g(x).\end{equation}
In that case, we say that Laplace eigenfunctions are uniformly \emph{observable} in the subset $\omega$. 
According to~\eqref{e:unique-continuation}, it is sufficient to restrict ourselves to the case of large eigenvalues, i.e.
\begin{def1} Let $\nu$ be a probability measure on $M$. We say that $\nu$ is a quantum limit if there exists a sequence 
$(\tilde{\psi}_j)_{j\geq 1}$ of normalized Laplace eigenfunctions (not necessarly an orthonormal basis) with eigenvalues 
$\tilde{\lambda}_j\rightarrow+\infty$ such that
$$\forall a\in\mathcal{C}^0(M),\quad\lim_{j\rightarrow+\infty}\int_M a(x)|\tilde{\psi}_j(x)|^2d\text{Vol}_g(x)=\int_Ma(x)d\nu(x).$$
We denote by $\ml{N}(\Delta_g)$ the set of quantum limits.
\end{def1}
\begin{rema}
Recall that, as $M$ is compact, the set of probability measures on $M$ is compact for the weak-$\star$ topology. 
In particular, given any sequence of probability measures on $M$, we can extract a converging subsequence.
\end{rema}

By a contradiction argument, proving~\eqref{e:observability} follows then from showing that there exists a constant 
$\tilde{c}_{\omega,M,g}>0$ such that for every $\nu\in\ml{N}(\Delta_g)$, $\nu(\omega)\geq \tilde{c}_{\omega,M,g}$. Now that we have 
defined $\ml{N}(\Delta_g)$, several other natural questions appear in order to keep track of the regularity of 
Laplace eigenfunctions as $\lambda\rightarrow +\infty$. How big is $\ml{N}(\Delta_g)$ inside the set of all probability 
measure on $M$? What is the Sobolev regularity of an element inside $\ml{N}(\Delta_g)$? These lecture notes are an attempt 
to describe what can be said depending on the geometric framework. 

As we shall use semiclassical conventions, we 
set $\hbar=\lambda^{-1}$ to be the effective Planck constant of our problem and, for $\hbar>0$, we rewrite~\eqref{e:Laplace} as
\begin{equation}\label{e:eigenvalue}
 -\hbar^2\Delta_g\psi_\hbar=\psi_{\hbar},\quad\|\psi_{\hbar}\|_{L^2}=1.
\end{equation}
\begin{rema}
We will often make the small (but standard) abuse of notations to write $\hbar\rightarrow 0^+$ when we mean that 
$\hbar_{j}\rightarrow 0^+$ as $j\rightarrow+\infty$. At some points, we will consider orthonormal basis 
$(\psi_j)_{j\geq 0}$ of $L^2(M)$ made of solutions to~\eqref{e:eigenvalue} with the small abuse of notations that 
$\psi_0=1$ corresponds formally to the parameter $\hbar_0=+\infty$ and that $\psi_j=\psi_{\hbar_j}$. In that case, we 
will say that $(\psi_j)_{j\geq 1}$ is an orthonormal basis of Laplace eigenfunctions. 
\end{rema}

\begin{rema}\label{r:general-operator}
 For the sake of simplicity, we will focus on the case of the (semiclassical) quantum Hamiltonian 
 $$\widehat{H}_{\hbar}:=-\frac{1}{2}\hbar^2\Delta_g.$$
 Yet, some (but not all) of the results presented below can be extended to more general semiclassical 
 operators of the form $-\frac{1}{2}\hbar^2\Delta_g+\eps_{\hbar}V(x)$, where $0\leq \eps_{\hbar}\leq 1$ and where $V$ is a smooth and real-valued function on $M$.
\end{rema}

\begin{rema}\label{r:quasimodes} Most of the results that we will present will also be valid in the more general setting of quasimodes, 
i.e. solutions to
$$-\hbar^2\Delta_g\psi_\hbar=\psi_{\hbar}+\hbar r_{\hbar},$$
where $\|r_{\hbar}\|_{L^2}\rightarrow 0$ at a convenient rate. In order to alleviate the presentation, we will not discuss the size of 
$r_{\hbar}$ and we refer the reader to the corresponding references for more precise statements.

\end{rema}

\subsection{Some important examples} Before describing the analytical and dynamical tools needed for our study, let us describe 
three important examples.

\subsubsection{The flat torus}\label{sss:torus}

The simplest example is given by the torus $\mathbb{T}^n=\IR^n/(2\pi\IZ)^n$ 
endowed with its canonical (Euclidean) metric. In that case, $\Delta_g=\Delta$ is the usual Laplacian on $\IR^n$ that we 
restrict to $2\pi\IZ^n$-periodic functions. For $\hbar>0$, solutions to~\eqref{e:eigenvalue} are 
then given by
$$\psi_{\hbar}(x)=\sum_{k\in\IZ^n:\|k\|\hbar=1}\hat{c}_ke^{ik.x},\quad(2\pi)^n\sum_{k\in\IZ^n:\|k\|\hbar=1}|\hat{c}_k|^2=1.$$
\begin{rema} In that case, the H\"ormander-Weyl's law~\eqref{e:weyl} has a simple proof by observing that
$$N(\hbar^{-1})=\left|\left\{k\in\IZ^n:\|k\|\leq\hbar^{-1}\right\}\right|,$$
and that this quantity can be compared with the volume of the ball of radius $\hbar^{-1}$ centered at $0$. 
\end{rema}
It is then easy to construct some elements inside $\ml{N}(\Delta)$. For instance, by taking 
$\psi_{\hbar}(x)=e^{ik.x}/(2\pi)^{\frac{n}{2}}$ with $\hbar=\|k\|^{-1}\rightarrow 0^+$, one finds that 
$d\nu(x)=\frac{dx}{(2\pi)^n}$ belongs to $\ml{N}(\Delta)$. One can also construct more complicated examples. For instance, let 
$p$ be an element in $\IZ^{n-1}$ and write $\psi_{\hbar}(x)=C\sin(p.x')e^{inx_1}$ where $x=(x_1,x')\in\IT^n$, 
$C$ is a normalizing constant and $\hbar_n=(\|p\|^2+n^2)^{-1}\rightarrow 0^+$. Then, 
$d\nu(x)=C^2|\sin(p.x')|^2dx$  belongs to $\ml{N}(\Delta)$. 

From these simple examples, one can already feel that, 
even if Laplace eigenfunctions have an explicit expression, their asymptotic description may lead to subtle 
arithmetic questions as their structure is related to the distribution of lattice points on circles of large radius. Exploiting 
this arithmetic structure, one can in fact show the following result on the regularity of elements 
inside $\ml{N}(\Delta)$\cite{Cooke71, Zygmund74, Jakobson97}:

\begin{theo}\label{t:zygmund} Let $\IT^n$ be the torus endowed with its canonical metric. Then, 
every element $\nu$ of $\ml{N}(\Delta)$ is absolutely continuous with respect to the Lebesgue measure. Moreover, if $n=2$, the density of 
$\nu$ belongs to $L^2(\IT^2)$. 
\end{theo}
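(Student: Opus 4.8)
The plan is to exploit the arithmetic structure of eigenfunctions on $\IT^n$ together with Fourier-analytic estimates on lattice points on spheres. Suppose $\nu \in \ml{N}(\Delta)$ is obtained as a weak-$\star$ limit of $|\psi_{\hbar_j}|^2\,d\text{Vol}$, where each $\psi_{\hbar_j}(x) = \sum_{\|k\|\hbar_j = 1} \hat c_k^{(j)} e^{ik\cdot x}$ is supported (in frequency) on the sphere of radius $R_j := \hbar_j^{-1} \to +\infty$. The Fourier coefficients of the density $|\psi_{\hbar_j}|^2$ at frequency $m \in \IZ^n$ are $\sum_{k,\,k-k'=m} \hat c_k^{(j)}\overline{\hat c_{k'}^{(j)}}$, where both $k$ and $k'$ lie on the sphere of radius $R_j$. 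The first key point is that, for \emph{fixed} $m \neq 0$, the number of such pairs $(k,k')$ with $k, k'$ on the sphere of radius $R$ and $k - k' = m$ is small relative to the total number $r_n(R^2)$ of lattice points on that sphere: this is a statement about how many lattice points a sphere of radius $R$ shares with its translate by $m$, and the ratio tends to $0$ as $R \to \infty$. By Cauchy--Schwarz applied to $\sum_{k} |\hat c_k^{(j)}||\hat c_{k-m}^{(j)}|$ over the pairs that survive, one controls the $m$-th Fourier coefficient of $\nu$ by (a constant times) the corresponding ratio, which forces the Fourier coefficient of $\nu$ at every $m \neq 0$ to vanish — wait, that would give $\nu$ a multiple of Lebesgue, which is too strong. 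The correct statement to prove is rather that the Fourier coefficients of $\nu$, while possibly nonzero, are the $\ell^2$ or $\ell^\infty$-bounded limits of the $|\psi_{\hbar_j}|^2$ coefficients, and absolute continuity follows from a uniform bound on the latter.

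More precisely, here is the route I would take. First I would reduce to showing that the sequence of densities $f_j := |\psi_{\hbar_j}|^2$ is bounded in a space that embeds into (or is weakly compact in) $L^1$ with absolutely continuous limits — concretely, I would show $\|f_j\|_{L^1} = 1$ is not enough, but that the Fourier coefficients $\widehat{f_j}(m)$ satisfy, for each fixed $m$, a bound of the form $|\widehat{f_j}(m)| \le \omega_n(m, R_j)$ where $\omega_n(m,R) \to 0$ as $R\to\infty$ along \emph{any} subsequence on which $\widehat{f_j}(m)$ converges, \emph{except} that we must be careful: the eigenfunctions with several frequencies at once are exactly where nontrivial limits live. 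The cleanest formulation, following Jakobson and Zygmund, is: every $\nu \in \ml{N}(\Delta)$ has $\widehat \nu(m) = 0$ unless $m$ can be written as a difference $k - k'$ of two lattice points with $\|k\| = \|k'\|$ arbitrarily large, and a counting argument (Jarník-type bounds on lattice points on arcs of circles when $n=2$, and elementary geometry of spheres when $n \ge 3$) shows the contribution is $L^2$-controlled. For general $n$, the key lattice-point input is that a sphere of radius $R$ in $\IZ^n$ and its translate by $m$ meet in $o(r_n(R^2))$ points; feeding this into Cauchy--Schwarz shows $|\widehat{f_j}(m)|^2 \le \frac{\#\{\text{pairs}\}}{r_n(R_j^2)} \cdot \|f_j\|_{L^2(\text{spectral})}^2$-type bound, and summing a weighted version over $m$ in a box gives an $L^2$ bound on $\nu$ when $n = 2$ and absolute continuity in general via a Wiener-type criterion.

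The main obstacle, and the place where the dimension restriction enters, is the lattice-point count. For $n \ge 5$ one has $r_n(R^2) \asymp R^{n-2}$ and the geometry is favorable; the delicate cases are $n = 2, 3, 4$, where $r_n(R^2)$ fluctuates wildly. For $n = 2$ one must use that lattice points on a circle of radius $R$ are $R^{o(1)}$ in number \emph{on average} but, more to the point, that any fixed nonzero $m$ is realized as a difference $k - k'$ with $|k| = |k'| = R$ for only $O(R^{o(1)})$ values — combined with $\sum_{|k|=R} |\hat c_k|^2 = $ const this yields $\sum_m |\widehat \nu(m)|^2 < \infty$, i.e. the $L^2$ claim. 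I expect the bookkeeping of which $m$ survive, and the uniformity of the $o(1)$ in the lattice-point-on-translated-sphere count, to be the technical heart of the argument; the passage from "Fourier coefficients of $\nu$ are an $\ell^2$ sequence" to "$\nu \in L^2$" is then just Plancherel, and "Fourier coefficients decay / Wiener criterion" to "$\nu$ absolutely continuous" is standard. I would organize the write-up as: (1) spectral decomposition and reduction to Fourier coefficients of $\nu$; (2) the lattice-point lemma on spheres and their translates; (3) Cauchy--Schwarz to bound $\widehat\nu(m)$; (4) conclude absolute continuity for all $n$ and the $L^2$ bound for $n=2$.
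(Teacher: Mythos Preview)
Your proposal for $n=2$ is essentially the same argument as the paper's, but you have obscured its simplicity. The paper's proof (following Cooke and Zygmund) packages everything as a uniform $L^4$ bound: one shows $\|\psi_\hbar\|_{L^4(\IT^2)}^4 \le 3/(2\pi)^2$ for every eigenfunction, and then Cauchy--Schwarz plus Riesz representation immediately give $\nu\in L^2$. The $L^4$ bound is proved by expanding $|\psi_\hbar|^2$ in Fourier series, applying Plancherel, and using one elementary geometric fact: for $p\neq 0$, a circle of radius $R$ and its translate by $p$ intersect in \emph{at most two} lattice points. That is the whole lattice-point input. Your invocation of $R^{o(1)}$-type divisor bounds (``lattice points on a circle are $R^{o(1)}$ in number'') is unnecessary and misleading here; the count you actually need is not ``small relative to $r_2(R^2)$'' but literally $\le 2$, uniformly in $R$ and $p$. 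Once you see this, your steps (1)--(4) collapse into the two-line computation the paper gives, and your own confusion (``wait, that would give $\nu$ a multiple of Lebesgue'') disappears: the bound of $2$ on the number of pairs does not force $\widehat\nu(m)=0$, it just makes $\sum_m|\widehat{f_j}(m)|^2$ uniformly bounded.

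For general $n$ your sketch has a genuine gap. You propose to deduce absolute continuity from a bound of the form $|\widehat{f_j}(m)|^2 \lesssim \#\{\text{pairs}\}/r_n(R_j^2)$ together with a ``Wiener-type criterion''. But Wiener's theorem (decay of Ces\`aro means of $|\widehat\mu(m)|^2$) detects only \emph{continuity} of $\mu$ (absence of atoms), not absolute continuity; a singular continuous measure can have Fourier coefficients satisfying exactly the kind of averaged decay your argument would produce. The paper does not prove the $n\ge 3$ case either --- it attributes it to Bourgain and notes that ``the proof for $n\ge 3$ is more involved'' --- so you should not expect a short Fourier-coefficient argument to close this. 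If you want to pursue the higher-dimensional case, you will need a substantially stronger input than the translated-sphere intersection count plus Wiener.
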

 
\begin{proof} Let us give the proof in the case $n=2$ which follows from results due to Cooke~\cite{Cooke71} and 
Zygmund~\cite{Zygmund74}. In higher dimension, this Theorem is due to Bourgain and this was 
reported in~\cite{Jakobson97}. As the proof for $n\geq 3$ is more involved, we will not discuss it here. In dimension $2$, 
it was shown in~\cite{Cooke71, Zygmund74} that, for any solution $\psi_{\hbar}$ 
to~\eqref{e:eigenvalue} on $\IT^2$, one has $\|\psi_{\hbar}\|_{L^4(\IT^2)}\leq 3^{\frac{1}{4}}/\sqrt{2\pi}$. Once we have 
this result, one can verify via the Cauchy-Schwarz inequality that, for any $\nu\in\ml{N}(\Delta)$,
$$\forall a\in\ml{C}^0(\IT^2),\quad\left|\int_{\IT^2}a(x)d\nu(x)\right|\leq \sqrt{3}\|a\|_{L^2(\IT^2)}.$$
In particular, $\nu\in L^2(\IT^2)$ by Riesz representation Theorem. 
Hence, it remains to prove the upper bound on the $L^4$ norm of $\psi_{\hbar}$. For that purpose, we write
$$|\psi_{\hbar}(x)|^2=\sum_{k,l:\|k\|\hbar=\|l\|\hbar=1}\hat{c}_k\overline{\hat{c}_l}e^{i(k-l).x}
=\sum_{p\in\IZ^n}e^{ip.x}\sum_{k:\|k\|\hbar=\|k-p\|\hbar=1}\hat{c}_k\overline{\hat{c}_{k-p}}.$$
For $p=0$, the sum over $k$ is equal to $1/(2\pi)^2$ as $\psi_{\hbar}$ is normalized in $L^2$. 
For $p\neq 0$, one can verify that there are at most two terms in the sum over $k$. Hence, by Plancherel Theorem and by 
Cauchy-Schwarz inequality, one finally has the expected inequality
$$\int_{\IT^2}|\psi_{\hbar}(x)|^4dx\leq \frac{1}{(2\pi)^2}+2(2\pi)^2\sum_{p\in\IZ^n\setminus 0}
\sum_{k:\|k\|\hbar=\|k-p\|\hbar=1}|\hat{c}_k|^2|\overline{\hat{c}_{k-p}}|^2\leq\frac{3}{(2\pi)^2}.$$
\end{proof}
The fact that quantum limits are absolutely continuous can be recovered by semiclassical methods as the ones 
described in paragraph~\ref{ss:semiclassical-torus}. This semiclassical proof was first obtained by 
Maci\`a~\cite{Macia10} in dimension $2$ and generalized later on to higher dimensions by Anantharaman and 
Maci\`a~\cite{AnantharamanMacia14}. Compared with the approach from~\cite{Cooke71, Zygmund74, Jakobson97}, 
much less can be deduced on the regularity of elements inside $\ml{N}(\Delta)$. Yet, it has the advantage that 
it can be generalized to more general completely integrable systems~\cite{AnantharamanFermanianMacia15, AnantharamanLeautaudMacia16}.

\subsubsection{The round sphere}

Consider now the $2$-sphere
$$\mathbb{S}^2:=\left\{(x,y,z)\in\IR^3:x^2+y^2+z^2=1\right\}.$$
In spherical coordinates, the canonical metric can be written as $g(\theta,\phi)=d\theta^2+\sin^2\theta d\phi^2$. Hence, 
one has
$$\Delta_{\IS^2}=\frac{1}{\sin^2\theta}\partial_\phi^2+\frac{1}{\sin\theta}\partial_\theta\left(\sin\theta\partial_{\theta}\right).$$
Solutions to~\eqref{e:eigenvalue} are linear combinations of the spherical harmonics~\cite[\S 3.4]{Sogge14}. 
In that case, one has $\hbar^{-2}=l(l+1)$ for some integer $l\geq 0$, and for every such eigenvalue, the algebraic 
multiplicity is $2l+1$. Again, we have exact expressions which can be written in terms of Legendre polynomials. This allows us to construct explicit 
examples of elements lying in $\ml{N}(\Delta_g)$. For instance, given $l\in\IZ_+$, one can consider the following example of Laplace eigenfunction
$$\psi_{\hbar_l}(\theta,\phi)=c_le^{il\phi}(\sin\theta)^l,$$
where $c_l$ is some normalizing constant. A calculation shows that this constant is of order $l^{1/4}\asymp\hbar_l^{-1/4}$ 
as $l\rightarrow+\infty$. Similarly, one can verify that
$$\forall a\in\mathcal{C}^0(\IS^2),\quad\lim_{l\rightarrow+\infty}\int_{\IS^2}a|\psi_{\hbar_l}|^2d\text{Vol}_g
=\int_0^{2\pi}a\left(\frac{\pi}{2},\phi\right)\frac{d\phi}{2\pi}.$$
Hence, the Lebesgue measure along the closed geodesic 
$\gamma_0=\{\left(\frac{\pi}{2},\phi\right):0\leq\phi\leq 2\pi\}$ belongs to $\ml{N}(\Delta_{\IS^2})$. In particular, we do not 
have any analogue of Theorem~\ref{t:zygmund} in the case of the $2$-sphere. By spherical symmetry, one can deduce that, 
given any closed geodesic $\gamma$ on $\IS^2$, the Lebesgue measure $\delta_{\gamma}$ along it 
belongs to $\ml{N}(\Delta_{\IS^2})$. More generally, a similar calculation shows that any finite convex combination of 
such measures is still inside $\ml{N}(\Delta_{\IS^2})$. Hence, if we denote by $\ml{N}(\IS^2)$ the closed convex hull of these measures, 
one has
\begin{equation}\label{e:QL-sphere}
 \ml{N}(\IS^2)\subset\ml{N}(\Delta_{\IS^2}).
\end{equation}
As a consequence of the results from paragraph~\ref{ss:semiclassical-measure}, we shall see that one has in fact 
equality. Hence, in that case, we have a complete description of the set of quantum limits. Again, compared with the case of the 
torus, it contains singular elements as eigenfunctions can concentrate along closed geodesics. 

\subsubsection{The hyperbolic plane}\label{sss:hyperbolic-plane} In the previous paragraphs, we saw two models corresponding respectively to manifolds 
with $0$ curvature (the torus) and with constant positive curvature (the sphere). Let us now discuss the case 
of hyperbolic surfaces which are the simplest models of manifolds with negative curvature. We set 
$$\mathbb{H}:=\{x+iy\in\IC:y>0\},$$
that we endow with the Riemannian metric $\frac{dx^2+dy^2}{y^2}$. One can verify that the sectional curvature is constant and equal to $-1$. 
In this system of coordinates, the Laplace-Beltrami operator (or hyperbolic Laplacian) can be written as
$$\Delta_{\mathbb{H}}=y^2\left(\partial_x^2+\partial_y^2\right).$$
If we fix a subgroup $\Gamma$ of $SL(2,\mathbb{R})$ acting by isometry on $\mathbb{H}$ via the map $\left(\begin{array}{cc}
         a & b\\
         c & d
        \end{array}\right).z=\frac{az+b}{cz+d},$
then we can consider the quotient $M=\mathbb{H}/\Gamma$ endowed with the induced Riemannian structure. If $M$ is compact, we are in the framework described above 
and eigenfunctions are eigenfunctions of the hyperbolic Laplacian which are $\Gamma$-invariant. Compared with the 
previous examples, we do not have an explicit expression for such solutions, or at least solutions which are 
easy to manipulate. An important class of such manifolds are compact arithmetic surfaces which correspond to 
subgroups $\Gamma$ that are derived from a quaternion division algebra $\ml{A}$~\cite{RudnickSarnak94}. 
In that case, one has more structure and one can expect to say more on the structures of elements inside 
$\ml{N}(\Delta_{g})$ thanks to the arithmetic nature of the problem -- see Section~\ref{s:QE}. 


\subsection{Hamiltonian flows} Given a smooth function $H:T^*M\rightarrow \mathbb{R}$, we recall how to define its 
corresponding Hamiltonian vector field $X_H$. To that aim, one needs to introduce the Liouville $1$-form, which is 
an element $\theta$ of $\Omega^1(T^*M)$ defined as
$$\forall (x,\xi)\in T^*M,\quad\theta_{x,\xi}: V\in T_{x,\xi}T^*M\mapsto\xi(d_{x,\xi}\pi V),$$
where $\pi: (x,\xi)\in T^*M\mapsto x\in M$ is the canonical projection. In local coordinates 
$(x_1,\ldots,x_n,\xi_1,\ldots,\xi_n)$, it reads
$$\theta_{x,\xi}=\sum_{i=1}^n\xi_idx_i.$$
Then, we define the symplectic form on $T^*M$ as $\omega=d\theta,$ which can be written in local coordinates as
$$\omega_{x,\xi}=\sum_{i=1}^nd\xi_i\wedge dx_i.$$ The Hamiltonian vector field of $H$ is then the unique vector field $X_H$ 
such that
$$dH=\omega(.,X_H).$$
If $H$ is smooth, then $X_H$ induces a smooth flow $\varphi_H^t:T^*M\rightarrow T^*M$. In local coordinates, one has the 
\emph{Hamilton equations}:
$$\forall 1\leq i\leq n,\quad\dot{x}_i=\partial_{\xi_i}H,\quad\dot{\xi}_i=-\partial_{x_i}H.$$ 
Given two smooth functions $a$ and $b$ on $T^*M$, we can define their Poisson bracket:
$$\{a,b\}=X_a(b)=db(X_a)=\omega(X_a,X_b)=\sum_{i=1}^n\left(\partial_{\xi_i}a\partial_{x_i}b-\partial_{x_i}a\partial_{\xi_i}b\right),$$
which will be useful when establishing the semiclassical correspondence for pseudodifferential operators in the following. By construction, the 
flow $\varphi_H^t$ preserves $H$ and $\omega$ along the evolution. In particular, it preserves the 
Liouville measure $L=\frac{|\omega^n|}{n!}.$ Even if we will encounter other Hamiltonian functions at certain points, we 
will mostly focus on the one corresponding to the free evolution on $(M,g)$, i.e.
$$H_{0}(x,\xi)=\frac{\|\xi\|_{g^{*}(x)}^2}{2}.$$
In that case, we will write $\varphi^t=\varphi_{H_0}^t$ which is nothing else but the geodesic flow 
on $T^*M$. We will also denote by $L_1$ the desintegration of the Liouville measure along the energy 
layer $S^*M=H_0^{-1}(1/2)$ that we normalize to have $L_1(S^*M)=1$. 
We still have that, for every $t\in\mathbb{R}$, $\varphi^t_*L_1=L_1$. Equivalently, for every 
continuous function $a$ on $S^*M$ and for every $t$ in $\IR$,
\begin{equation}\label{e:invariance-Liouville}
 \int_{S^*M} a\circ\varphi^tdL_1=\int_{S^*M}adL_1.
\end{equation}
For more details and background facts on symplectic geometry, 
we refer to~\cite[Ch.~2]{Zworski12}.

\subsection{Birkhoff ergodic Theorem} We now collect a few basic facts from ergodic theory that will be needed 
in our analysis. We only state them for the geodesic flow $\varphi^t:S^*M\mapsto S^*M$ but they are in fact valid 
in the more general context of continuous flows acting on a compact metric space. Ergodic theory is concerned with 
the study of flows (or maps) which preserve some measure $\mu$. We already encounter such an example of measure with the geodesic 
flow and the Liouville measure -- see~\eqref{e:invariance-Liouville}. More generally, we define
\begin{def1} Let $\mu$ be a probability measure on $S^*M$. We say that $\mu$ is a $\varphi^t$-invariant probability measure 
on $S^*M$ if, for any continuous function $a$ on $S^*M$,
$$\forall t\in\IR,\quad \int_{S^*M} a\circ\varphi^td\mu=\int_{S^*M}ad\mu.$$
 We denote by $\ml{M}(\varphi^t)$ the set of all such measures.
\end{def1}
As $L_1\in\ml{M}(\varphi^t)$, this set is nonempty. Other examples of elements in $\ml{M}(\varphi^t)$ are given by measures carried by 
closed geodesics. More precisely, given a closed orbit $\gamma$ of $\varphi^t$ with minimal period $T_{\gamma}$ and given a continuous 
function $a$ on $S^*M$, we define
$$\int_{S^*M} ad\mu_{\gamma}=\frac{1}{T_{\gamma}}\int_0^{T_{\gamma}}a\circ\varphi^t(x,\xi)dt,$$
where $(x,\xi)$ is any point on the closed orbit. 

The set $\ml{M}(\varphi^t)$ is endowed with the weak topology induced by $\ml{C}^0(S^*M,\IR)$ on its dual. One can verify 
that it is a compact, convex and nonempty subset of $\ml{M}(\varphi^t)$. By definition, we say that a measure $\mu$ is 
ergodic if it is an extremal point of this convex set. Equivalently, it means that, if $\mu=t\mu_1+(1-t)\mu_2$ 
with $\mu_1$, $\mu_2$ in $\ml{M}(\varphi^t)$ and $0<t<1$, then $\mu=\mu_1=\mu_2$. 
Again, $\mu_{\gamma}$ is the simplest example of such a measure. 
In the case of negatively curved manifolds, the Liouville measure $L_1$ was shown to be ergodic by Anosov~\cite{Anosov67}. 
An equivalent manner to define ergodic measures is to require that, if $A$ is a subset such that, 
for every $t$ in $\IR$, $\varphi^t(A)=A$, then $\mu(A)=0$ or $1$. In the case of the geodesic flow on 
$S^*\mathbb{S}^d$ (for the round metric), one can then verify that the ergodic measures are exactly given by the 
measures $\mu_{\gamma}$. A central result in ergodic theory is the Birkhoff 
ergodic Theorem\footnote{We do not state this Theorem in its full generality and we only mention 
what will be needed in our analysis.}~\cite[\S 8.6]{EinsiedlerWard11}:
\begin{theo}[Birkhoff Ergodic Theorem]\label{t:birkhoff} Let $\mu$ be an element of $\ml{M}(\varphi^t)$. Then, for every 
continuous function $a:S^*M\rightarrow \IR$ and for $\mu$-a.e. $(x,\xi)$ in $S^*M$,
\begin{equation}\label{e:birkhoff}\lim_{T\rightarrow+\infty}\frac{1}{T}\int_0^Ta\circ\varphi^t(x,\xi)dt=
\mu_{x,\xi}(a),\end{equation}
 where $(x,\xi)\mapsto\mu_{x,\xi}(a)$ belongs to $L^{\infty}(S^*M,\mu)$. Moreover, $(x,\xi)\mapsto\mu_{x,\xi}(a)$ is $\varphi^t$-invariant and one has
$$\int_{S^*M}\mu_{x,\xi}(a)d\mu(x,\xi)=\int_{S^*M}ad\mu,$$
Finally, if $\mu$ is ergodic then, for $\mu$-a.e. $(x,\xi)$ in $S^*M$,
$$\mu_{x,\xi}(a)=\int_{S^*M}ad\mu.$$
\end{theo}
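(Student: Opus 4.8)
The statement to be proved is the Birkhoff Ergodic Theorem for the continuous flow $\varphi^t$ on the compact metric space $S^*M$, together with the supplementary assertions about the limit function $\mu_{x,\xi}(a)$. The plan is to reduce the continuous-time statement to the classical discrete-time Birkhoff theorem applied to the time-one map $\varphi^1$, and then to upgrade. First I would fix a continuous $a:S^*M\to\IR$ and introduce the auxiliary function $\tilde a(x,\xi)=\int_0^1 a\circ\varphi^s(x,\xi)\,ds$, which is continuous (hence in $L^1(\mu)$) and whose discrete-time Birkhoff averages $\frac1N\sum_{k=0}^{N-1}\tilde a\circ\varphi^k$ coincide with the continuous averages $\frac1N\int_0^N a\circ\varphi^t\,dt$ by Fubini. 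The discrete Birkhoff theorem for the $\mu$-preserving map $\varphi^1$ then gives, for $\mu$-a.e.\ $(x,\xi)$, convergence of $\frac1N\int_0^N a\circ\varphi^t\,dt$ to some $\varphi^1$-invariant $L^1$ function; I would call this limit $\mu_{x,\xi}(a)$ after checking it is actually $\varphi^t$-invariant for all real $t$.

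The key steps, in order, are: (1) prove the elementary fact that the continuous partial averages $A_T(x,\xi):=\frac1T\int_0^T a\circ\varphi^t\,dt$ satisfy $A_T - A_{\lfloor T\rfloor}\to 0$ uniformly as $T\to\infty$ — this follows from $\|a\|_\infty<\infty$ and the difference of the two being $O(1/T)$ — so that convergence along integers implies convergence along all reals with the same limit; (2) apply discrete Birkhoff to $\varphi^1$ and $\tilde a$ to get a.e.\ convergence and the $L^1$ limit $f:=\mu_{\cdot}(a)$ with $\int f\,d\mu=\int\tilde a\,d\mu=\int a\,d\mu$ (the last equality by invariance of $\mu$); (3) show $f$ is $\varphi^t$-invariant for every $t\in\IR$: for fixed $t$, $A_T\circ\varphi^t - A_T\to 0$ pointwise (again an $O(1/T)$ boundary estimate), so $f\circ\varphi^t=f$ $\mu$-a.e., and a standard separability argument (choosing a countable dense set of times and using continuity) gives a single full-measure set on which invariance holds for all $t$; (4) for the ergodic case, observe that a $\varphi^t$-invariant $L^1$ function on an ergodic system is $\mu$-a.e.\ constant — indeed its sublevel sets are invariant hence of measure $0$ or $1$ — and the constant must be $\int f\,d\mu=\int a\,d\mu$.

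The main obstacle I expect is not any single hard estimate but the careful bookkeeping in step (3): one needs the invariance of $\mu_{x,\xi}(a)$ to hold on one fixed conull set simultaneously for all real $t$, not merely for each $t$ individually. The clean route is to first establish $f\circ\varphi^s=f$ a.e.\ for all rational (or dyadic) $s$ on a single conull set $E$ obtained by intersecting countably many conull sets, then use that on $E$ the map $t\mapsto f(\varphi^t(x,\xi))$ agrees a.e.\ with the constant $f(x,\xi)$; combined with the fact that $f$ itself is a $\varphi^1$-invariant $L^1$ limit one can push through to genuine invariance. A secondary subtlety, if one wants the statement for general $a\in\ml{C}^0$ rather than a countable family, is to note that everything needed reduces to a single continuous $a$ at a time, so no further density argument over $a$ is required for the displayed equations — though if one wanted the conull set to be independent of $a$ (which the statement does not ask), one would again invoke separability of $\ml{C}^0(S^*M)$.
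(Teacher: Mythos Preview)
The paper does not prove this theorem: it is stated as background material with a citation to \cite[\S 8.6]{EinsiedlerWard11}, so there is no ``paper's own proof'' to compare against. Your proposal is the standard reduction of the continuous-time Birkhoff theorem to the discrete one via the time-one map and the averaged observable $\tilde a=\int_0^1 a\circ\varphi^s\,ds$, and the outline is correct, including the bookkeeping you flag in step~(3).

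Two small points worth adding. First, the statement asserts that $(x,\xi)\mapsto\mu_{x,\xi}(a)$ lies in $L^\infty$, not merely $L^1$; you get this for free from $|A_T|\le\|a\|_\infty$, so the limit inherits the same bound $\mu$-a.e., but you should say so explicitly. Second, in step~(3) the argument you sketch actually suffices without any continuity-in-$t$ upgrade: once you know $f\circ\varphi^s=f$ $\mu$-a.e.\ for every rational $s$, take the conull set $E=\bigcap_{s\in\mathbb{Q}}\{f\circ\varphi^s=f\}$; then for arbitrary $t\in\IR$ and $(x,\xi)\in E$, the identity $A_T\circ\varphi^t(x,\xi)-A_T(x,\xi)\to 0$ already gives $f(\varphi^t(x,\xi))=f(x,\xi)$ directly (the limit exists at $\varphi^t(x,\xi)$ because $\varphi^t(x,\xi)$ also lies in the conull convergence set, after intersecting with $\bigcap_{s\in\mathbb{Q}}\varphi^{-s}(\{\text{convergence}\})$ and using density). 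This is routine but worth writing cleanly.
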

\begin{rema}\label{r:ergodic} Note that one can choose a full measure subset $\Omega$ (w.r.t. $\mu$) such that~\eqref{e:birkhoff} 
holds for a dense and countable subset of functions $(a_k)_{k\geq 1}$ inside $\mathcal{C}^0(S^*M,\IR)$. 
By construction, one has $|\mu_{x,\xi}(a_k)|\leq\|a_k\|_{\ml{C}^0}$ for every $k\geq 1$. Hence, by the Riesz 
representation Theorem, $\mu_{x,\xi}$ can be identified with a Radon measure on $S^*M$. One can also
verify that $\mu_{x,\xi}$ is an ergodic measure in $\ml{M}(\varphi^t)$ for $\mu$-a.e. $(x,\xi)$ in $S^*M$. 
Thus, we can write the ergodic decomposition of any measure $\mu\in\ml{M}(\varphi^t)$:
$$\mu=\int_{S^*M}\mu_{x,\xi}d\mu(x,\xi).$$ 
\end{rema}

Let us now illustrate this Theorem in the three geometries that we have encountered so far.
\begin{ex} In the case of a negatively curved manifold, Anosov proved that $L_1$ is an ergodic measure~\cite{Anosov67}. 
This tells us that, for $L_1$-a.e. $(x,\xi)$ in $S^*M$,
$$\forall a\in \ml{C}^0(S^*M,\IR),\quad\lim_{T\rightarrow+\infty}\frac{1}{T}\int_0^Ta\circ\varphi^t(x,\xi)dt=\int_{S^*M}adL_1.$$
 It means that if we look at the average of $a$ along the orbit of a typical point of $S^*M$ (w.r.t. 
 the natural volume measure on $S^*M$), then it converges to the average of $a$ over $S^*M$.
\end{ex}
\begin{ex} Let $(\mathbb{S}^n,g_0)$ be the $n$-dimensional sphere endowed with its canonical metric $g_0$. In that case, 
the geodesic flow on $S^*\IS^n$ is $2\pi$-periodic, and one has, \emph{for every} $(x,\xi)\in S^*\IS^n$ and for 
every continuous function $a$, 
$$\lim_{T\rightarrow+\infty}\frac{1}{T}\int_0^Ta\circ\varphi^t(x,\xi)dt=\mu_{\gamma_{x,\xi}}(a),$$
where $\gamma_{x,\xi}:=\{\varphi^t(x,\xi):0\leq t\leq 2\pi\}$ is the closed orbit issued from $(x,\xi)$. 
Observe that the map $(x,\xi)\mapsto \mu_{\gamma_{x,\xi}}(a)$ has the same regularity as the function $a$. Ergodic measures are 
exactly the measures $\mu_{\gamma}$.
\end{ex}
\begin{ex}\label{ex:torus} Let $(\mathbb{T}^n,g_0)$ be the $n$-dimensional torus endowed with its Euclidean metric. Given 
$\xi\in\IS^{n-1}$, we define 
$$\Lambda_{\xi}:=\{k\in\IZ^{n}:k.\xi=0\},$$
and $\Lambda_{\xi}^\perp$ which is the orthononormal complement of the vector space generated by $\Lambda_{\xi}$. 
If the rank of $\Lambda_{\xi}$ is equal to $n-j$, then the orbit of the point $(x,\xi)$ under the geodesic flow 
$\varphi^t$ fills a torus of dimension $j$. 
More precisely, one has, \emph{for every} $(x,\xi)$ in $S^*\IT^n$ and for every continuous function $a$, 
 $$\lim_{T\rightarrow+\infty}\frac{1}{T}\int_0^Ta\circ\varphi^t(x,\xi)dt=\int_{\IT(\xi)}a(x+y,\xi)d\mathfrak{h}_\xi(y),$$
 where $\IT(\xi)=\Lambda_{\xi}^{\perp}/(2\pi\IZ^n\cap\Lambda_\xi^{\perp})$ and $\mathfrak{h}_\xi$ is the normalized Haar measure 
 on $\IT(\xi)$. These are exactly the ergodic measures of the geodesic flow on the flat torus. 
 \end{ex}

From these three examples, we can remark that the dynamical properties of the geodesic flow are of very 
different nature depending on the choice of the manifold $(M,g)$ and this will be responsible for the different 
behaviour of Laplace eigenfunctions that we will encounter. 

\subsection{Semiclassical calculus}\label{ss:semiclassical} So far, we have constructed two sets of probability measures. 
One set is built from the eigenfunctions
of the Laplacian, and the other one from the geodesic flow. One of our objective is to compare these two sets of 
measures but one is made of measures on $M$ and the other of measures on $S^*M$. In order to overcome this difference, we 
will explain how to lift elements of $\ml{N}(\Delta_g)$ into measures defined on $S^*M$ in a natural manner from the point of view 
of quantum mechanics. This will also connect $\ml{N}(\Delta_g)$ to the dynamics on $S^*M$. 
To that aim, we need to introduce some tools from semiclassical analysis. We will do that in a 
somewhat condensed manner and we invite the reader to look at~\cite[Ch.~4 and 14]{Zworski12} for a more detailed 
presentation. First of all, the goal is to associate to a function $a_{\hbar}(x,\xi)$ on $T^*M$ 
(that may also depend on $\hbar$) an operator acting on $\mathcal{C}^{\infty}(M)$ (ideally on the 
Hilbert space $L^2(M)$). This is called a quantization procedure. 

If we consider the case of $T^*\IR^n\simeq\IR^{2n}$, then a natural way to proceed is to set, for every 
$a$ in the Schwartz class $\ml{S}(\IR^{2n})$ and for every $\hbar>0$,
\begin{equation}\label{e:weyl-quant}\forall u\in\ml{S}(\mathbb{R}^n),\quad\Oph^w(a)u(x):=\frac{1}{(2\pi\hbar)^{n}}
\iint_{\IR^{2n}}a\left(\frac{x+y}{2},\xi\right)e^{\frac{i}{\hbar}\langle x-y,\xi\rangle}u(y)dyd\xi.\end{equation}
This is referred to the Weyl quantization of the observable $a$, and it can be extended to more general classes 
of observables. For instance, if $a(x,\xi)=a(x)\in\mathcal{S}(\mathbb{R}^n)$, then $\Oph^w(a)u(x)=a(x)u(x).$ Similarly, if 
$a(x,\xi)=P(\xi)$ with $P$ a polynomial, then $\Oph^w(a)=P(-i\hbar\partial_x).$ Following~\cite[\S 14.2.2]{Zworski12}, this 
definition can be extended to manifolds. More precisely, we say that $A_{\hbar}:\ml{C}^{\infty}(M)\rightarrow\ml{C}^{\infty}(M)$ 
is an $\hbar$-pseudodifferential operator of order $m$ if\footnote{One also needs to impose some regularizing assumption between distinct charts -- see~\cite{Zworski12} for details.} 
it can be written in local charts as $\Oph^{w}(\tilde{a}_\hbar)$ where $\tilde{a}_\hbar$ is the 
pullback on $\IR^{2n}$ of some function $a_{\hbar}$ (that may depend on $\hbar$) belonging to the class of symbols
$$S^m(T^*M):=\left\{a_{\hbar}(x,\xi)\in\mathcal{C}^{\infty}(T^*M):\forall(\alpha,\beta)\in\IZ_+^{2n},\ 
|\partial_x^{\alpha}\partial_\xi^{\beta}\tilde{a}_{\hbar}(x,\xi)|\leq C_{\alpha,\beta}(1+\|\xi\|_x^2)^{\frac{m-|\beta|}{2}}\right\},$$
with $m$ belonging to $\mathbb{R}$. Here, the constant $C_{\alpha,\beta}$ is uniform for $0<\hbar\leq 1$. The set of 
such operators is denoted by $\Psi^m(M)$ and one has for instance $-\frac{1}{2}\hbar^2\Delta_g\in \Psi^2(M)$. 
We also denote by $\Psi^{-\infty}(M)$ (resp. $S^{-\infty}(T^*M)$) 
the intersection of all the spaces $(\Psi^m(M))_{m\in\IR}$ (resp. $(S^m(T^*M))_{m\in\IR}$). Note that 
$\mathcal{C}^{\infty}_c(T^*M)\subset S^{-\infty}(T^*M).$ According to~\cite[Th.~14.1]{Zworski12}, there exist two maps 
$$\sigma_\hbar:\Psi^m(M)\rightarrow S^{m}(T^*M)/ \hbar S^{m-1}(T^*M),\ \text{and}\ \Oph:S^m(T^*M)\rightarrow \Psi^m(T^*M)$$
such that $\sigma_\hbar(\Oph(a_\hbar))=a_\hbar\ \text{mod}\ \hbar S^{m-1}(T^*M).$ Note that this result says 
in particular that one can associate to each observable $a_{\hbar}$ in $S^m(T^*M)$ an operator 
$\Oph(a_\hbar)$ acting on smooth functions.
\begin{rema}
We say that 
$a_{\hbar}=\sigma_{\hbar}(A_{\hbar})$ is the principal symbol of the operator $A_{\hbar}$. In the case of 
$-\frac{1}{2}\hbar^2\Delta_g$, the principal symbol is equal to $H_0(x,\xi)=\frac{1}{2}\|\xi\|_x^2.$ 
\end{rema}
Let us review (without proof) some of the fundamental properties of these operators. First, we can choose $\Oph$ such that 
$\Oph(a)u=a\times u$ if $a$ depends only on the variable $x$ and such that, for every $u$ and $v$ in $\ml{C}^{\infty}(M)$,
$$\langle\Oph(a)u,v\rangle_{L^2(M)}=\langle u,\Oph(\overline{a})v\rangle_{L^2(M)}.$$
These assumptions will simplify our exposition but they are not crucial and they can be weakened by requiring that there are only true 
modulo error terms of order $\ml{O}(\hbar)$. We now have the following properties that 
relate the properties of the operators obtained through the quantization $\Oph$ to the properties of the 
corresponding functions:
\begin{itemize}
 \item (Composition formula) Let $a_\hbar\in S^{m_1}(T^*M)$ and let $b_\hbar\in S^{m_2}(T^*M)$. One has 
 $$\Oph(a_\hbar)\circ\Op_{\hbar}(b_\hbar)\in\Psi^{m_1+m_2}(M),$$
 and its principal symbol is given by
 \begin{equation}\label{e:composition}\sigma_\hbar\left(\Oph(a_\hbar)\circ\Op_{\hbar}(b_\hbar)\right)=a_\hbar b_\hbar.\end{equation}
 \item (Bracket properties) Let $a_\hbar\in S^{m_1}(T^*M)$ and let $b_\hbar\in S^{m_2}(T^*M)$. One has 
 $$[\Oph(a_\hbar),\Op_{\hbar}(b_\hbar)]\in\hbar \Psi^{m_1+m_2-1}(M),$$ and its principal symbol is given by 
 \begin{equation}\label{e:bracket}\sigma_\hbar\left([\Oph(a_\hbar),\Op_{\hbar}(b_\hbar)]\right)=-i\hbar\{a_\hbar, b_\hbar\}.\end{equation} 
 In the case of the flat torus (or of $\IR^n$) and of the Weyl quantization, one has also an equality when $a=H_0$ 
 in the sense that $[\Oph^w(H_0),\Op_{\hbar}^w(b_\hbar)]=-i\hbar\Oph^w(\{H_0,b_{\hbar}\})$.
 \item (Calder\'on-Vaillancourt Theorem) Let $a_\hbar\in S^{0}(T^*M)$. The operator
 $$\Oph(a_h):L^2(M)\rightarrow L^2(M)$$
 is bounded and its norm is bounded by
 \begin{equation}\label{e:calderon}
  \left\|\Oph(a_h)\right\|_{L^2\rightarrow L^2}\leq C_M\sum_{|\alpha|\leq D_M}\hbar^{\frac{|\alpha|}{2}}\|\partial^{\alpha}a_\hbar\|_{L^{\infty}},
 \end{equation}
where $C_M$ and $D_M$ are constants depending only on $M$.
 \item (Disjoint supports) Let $a_\hbar\in S^{m_1}(T^*M)$ and let $b_\hbar\in S^{m_2}(T^*M)$. Suppose that there exists two 
 disjoint subsets $\Omega_1$ and $\Omega_2$ such that $\text{dist}(\Omega_1,\Omega_2)>0$ 
 and such that, for every $0<\hbar\leq 1$, $\text{supp}(a_\hbar)\subset\Omega_1$ and $\text{supp}(b_\hbar)\subset\Omega_2$. 
 Then,
 \begin{equation}\label{e:disjoint}
  \Oph(a_h)\Oph(b_h)=\ml{O}_{L^2\rightarrow L^2}(\hbar^{\infty}),
 \end{equation}
where the convention $\ml{O}_{L^2\rightarrow L^2}(\hbar^{\infty})$ means that, for every $N\geq 1$, there exists a 
constant $C_N>0$ such that the norm of the operator is less than $C_N\hbar^N$.
 \item (Egorov Theorem) Let $T>0$ and let $a_\hbar\in S^{-\infty}(T^*M)$ which is compactly supported. Then, there exists some constant $C_{T,a}>0$ such that, 
 for every $|t|\leq T$,
 \begin{equation}\label{e:egorov}\left\|e^{-\frac{it\hbar\Delta_g}{2}}\Oph(a_h)e^{\frac{it\hbar\Delta_g}{2}}-\Oph(a_h\circ\varphi^t)\right\|_{L^2\rightarrow L^2}\leq C_{a,T}\hbar.\end{equation}
 This result establishes a correspondence between the classical dynamics induced by the geodesic flow and the quantum dynamics induced 
 by the semiclassical Schr\"odinger equation
 \begin{equation}\label{e:schrodinger}
  i\hbar\partial_tu_\hbar=-\frac{1}{2}\hbar^2\Delta_g u_\hbar.
 \end{equation}
 Determining the range of validity (in terms of $T$) of this kind of approximation is a key issue when studying the 
 fine properties of Laplace eigenfunctions~\cite{Berard77, Zelditch94, Anantharaman08, AnantharamanNonnenmacher07b, AnantharamanMacia14, 
 DyatlovJin17}. In general, the best one can expect is that such a correspondence 
 remains true up to times of order $c|\log\hbar|$ with $c>0$ which can be expressed in terms of certain expansion rates of the geodesic 
 flow~\cite{BambusiGraffiPaul99, BouzouinaRobert02}. Yet, in the case of the torus (or of $\IR^n$) and of the Weyl quantization, the exact 
 bracket formula allows to prove that, for every $t\in\IR$,
 \begin{equation}\label{e:egorov-torus}
  e^{-\frac{it\hbar\Delta}{2}}\Oph^w(a_h)e^{\frac{it\hbar\Delta}{2}}=\Oph^w(a_h\circ\varphi^t),
 \end{equation}
for every smooth function on $T^*\IT^n$ (or $T^*\IR^n$) all of whose derivatives are bounded.
 \item (G\aa{}rding inequality) Let $a_\hbar\in S^{0}(T^*M)$ such that, for every $0<\hbar\leq 1$, $a_{\hbar}\geq 0$. Then, there exists 
 a constant $C_a>0$ such that
 \begin{equation}\label{e:garding}
  \Oph(a_h)\geq -C_a\hbar.
 \end{equation}
\end{itemize}
These properties are the fundamental rules of our semiclassical quantization and they establish a connection between the 
operations we have on functions (product, Poisson bracket, pullback by the geodesic flow, etc.) and 
the ones on operators (composition, bracket, evolution by the Schr\"odinger flow, etc.).

\subsection{Semiclassical measures}\label{ss:semiclassical-measure}

As a first application of these semiclassical rules, we will define the notion of semiclassical 
measures~\cite{Gerard91b, LionsPaul93, Zworski12} and use it to relate elements of $\ml{N}(\Delta_g)$ with 
elements of $\ml{M}(\varphi^t)$. The idea is to lift the measure
$$\nu_{\hbar}(a)=\int_Ma(x)|\psi_\hbar(x)|^2d\text{Vol}_g(x)$$
into a distribution on $T^*M$ by making use of the quantization we have just defined. Recall that $\psi_{\hbar}$ 
is a solution to~\eqref{e:eigenvalue}, i.e. a Laplace eigenfunction. For such a sequence $(\psi_{\hbar})_{\hbar\rightarrow 0^+}$, 
we set
$$w_{\hbar}:a\in\mathcal{C}^{\infty}_c(T^*M)\mapsto\left\la \psi_{\hbar},\Oph(a)\psi_{\hbar}\right\ra_{L^2}.$$
Thanks to~\eqref{e:calderon}, this defines a \emph{bounded sequence} in $\ml{D}'(T^*M)$ which can be extended to 
more general symbols belonging to $S^0(T^*M)$. In particular, when $a$ is independent of the $\xi$ variable, we recover 
the measure $\nu_{\hbar}$. This kind of quantity measures the distribution of the quantum state $\psi_{\hbar}$ in the 
phase space $T^*M$. It encodes more information than $\nu_{\hbar}$ but it is not anymore a measure. 
\begin{ex}\label{ex:lagrangian} Let us illustrate the content of this new distribution compared with 
$\nu_{\hbar}$. To that aim, consider a sequence of Lagrangian states on $\IR^n$:
$$\forall 0<\hbar\leq 1,\quad \psi_\hbar(x)=\chi(x)e^{\frac{iS(x)}{\hbar}},$$
where $\chi$ and $S$ are smooth functions and where $\chi$ is compactly supported on $\IR^{n}$. The corresponding 
measure $d\nu_{\hbar}(x)=|\chi(x)|^2dx$ is constant and it does not capture the oscillations due to the phase 
factor. Now, if we consider the Wigner distribution
$$a\in\mathcal{C}^{\infty}_c(T^*\IR^n)\mapsto\left\la \psi_{\hbar},\Oph^w(a)\psi_{\hbar}\right\ra_{L^2},$$
then an application of the stationary phase Lemma shows that the limit distribution is given by the measure
$$ a\in\mathcal{C}^{\infty}_c(T^*\IR^n)\mapsto\int_{\IR^n}a(x,d_xS)|\chi(x)|^2dx,$$
which is carried by the Lagrangian manifold $\ml{L}_S:=\{(x,d_xS)\}$ -- see~\cite[Chap.~5]{Zworski12} for details. 
\end{ex}

The appearance of 
this quantity in the literature can at least be traced back to the works of 
Wigner~\cite{Wigner32}. In the mathematics literature, it implicitely appears many years later in the works on quantum 
ergodicity~\cite{Snirelman73, Snirelman74, Zelditch87, ColindeVerdiere85, HelfferMartinezRobert87} and their systematic study was 
initiated by Tartar~\cite{Tartar90} and G\'erard~\cite{Gerard91}.
\begin{def1} Let $\mu$ be an element of $\ml{D}'(T^*M)$. We say that $\mu$ is a semiclassical measure if there exists a 
sequence $(\psi_{\hbar_n})_{\hbar_n\rightarrow 0^+}$ satisfying
$$-\hbar^2\Delta_g\psi_{\hbar_n}=\psi_{\hbar_n},\quad\|\psi_{\hbar_n}\|_{L^2}=1,$$
and such that, for every $a\in\mathcal{C}^{\infty}_c(T^*M)$, 
 $$\lim_{\hbar_n\rightarrow 0^+}\left\la \psi_{\hbar_n},\Op_{\hbar_n}(a)\psi_{\hbar_n}\right\ra_{L^2}=\la\mu,a\ra.$$
We denote by $\ml{M}(\Delta_g)$ the set of semiclassical measures.
\end{def1}
\begin{rema} Using the Calder\'on-Vaillancourt Theorem~\ref{e:calderon}, we can verify that any sequence
$$w_{\hbar_n}: a\in\mathcal{C}^{\infty}_c(T^*M)\mapsto\left\la \psi_{\hbar_n},\Op_{\hbar_n}(a)\psi_{\hbar_n}\right\ra_{L^2},\quad n\geq 1$$
is bounded in $\mathcal{D}'(T^*M)$. Hence, it admits converging subsequences. 
\end{rema}

We have then the following Theorem which among other things justifies the terminology measure:
\begin{theo}\label{t:semiclassical-measure} One has
$$\ml{M}(\Delta_g)\subset\ml{M}(\varphi^t),$$
and
$$\ml{N}(\Delta_g):=\left\{\int_{S^*_xM}\mu(x,d\xi):\mu\in\ml{M}(\Delta_g)\right\}.$$
\end{theo}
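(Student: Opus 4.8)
The plan is to establish the theorem in three stages: first showing that any semiclassical measure is a positive measure, then showing it is supported on $S^*M$, then showing it is $\varphi^t$-invariant, and finally identifying its pushforward under $\pi$ with an element of $\ml{N}(\Delta_g)$. For positivity, I would take $a\in\ml{C}_c^\infty(T^*M)$ with $a\geq 0$; after possibly adding a small constant and truncating, the G\aa rding inequality~\eqref{e:garding} gives $\la\psi_\hbar,\Op_\hbar(a)\psi_\hbar\ra\geq -C_a\hbar$, so the limit $\la\mu,a\ra\geq 0$, which together with boundedness in $\ml{D}'$ and the Riesz representation theorem shows $\mu$ is a finite positive Radon measure. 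To locate the support on $S^*M=H_0^{-1}(1/2)$, I would test against $a$ supported away from the energy layer $\{H_0=1/2\}$ and use the eigenvalue equation: writing $-\hbar^2\Delta_g\psi_\hbar=\psi_\hbar$ means $\psi_\hbar$ is an eigenfunction of $\widehat H_\hbar=-\tfrac12\hbar^2\Delta_g$ with eigenvalue $\tfrac12$, whose principal symbol is $H_0$. Choosing $b\in\ml{C}_c^\infty(T^*M)$ equal to $(H_0-\tfrac12)^{-1}$ on $\supp a$, the composition formula~\eqref{e:composition} gives $\Op_\hbar(b)\bigl(\widehat H_\hbar-\tfrac12\bigr)=\Op_\hbar(a)+\ml{O}_{L^2\to L^2}(\hbar)$, and since $(\widehat H_\hbar-\tfrac12)\psi_\hbar=0$ we get $\la\psi_\hbar,\Op_\hbar(a)\psi_\hbar\ra=\ml{O}(\hbar)$, hence $\la\mu,a\ra=0$. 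Finally, the total mass is $1$: taking $a\to 1$ (with a cutoff in $\xi$ that is harmless since $\mu$ lives on the compact set $S^*M$) and using $\Op_\hbar(1)=\mathrm{Id}+\ml{O}(\hbar)$ together with $\|\psi_\hbar\|_{L^2}=1$ gives $\mu(S^*M)=1$.

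For $\varphi^t$-invariance, I would use the Egorov theorem~\eqref{e:egorov}. Fix $a\in\ml{C}_c^\infty(T^*M)$ and $t\in\IR$. Since $\psi_\hbar$ is an eigenfunction, the Schr\"odinger propagator acts on it as a phase: $e^{\frac{it\hbar\Delta_g}{2}}\psi_\hbar=e^{-\frac{it}{2\hbar}}\psi_\hbar$ (using $-\hbar^2\Delta_g\psi_\hbar=\psi_\hbar$, so $-\tfrac12\hbar\Delta_g\psi_\hbar=\tfrac1{2\hbar}\psi_\hbar$), and the phases cancel in the conjugation. Therefore
\begin{equation*}
\la\psi_\hbar,\Op_\hbar(a)\psi_\hbar\ra=\la\psi_\hbar, e^{-\frac{it\hbar\Delta_g}{2}}\Op_\hbar(a)e^{\frac{it\hbar\Delta_g}{2}}\psi_\hbar\ra=\la\psi_\hbar,\Op_\hbar(a\circ\varphi^t)\psi_\hbar\ra+\ml{O}(\hbar),
\end{equation*}
where the last equality is~\eqref{e:egorov}. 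Passing to the limit along the relevant subsequence, $\la\mu,a\ra=\la\mu,a\circ\varphi^t\ra$, which is precisely $\varphi^t$-invariance (first for compactly supported $a$, then for all continuous $a$ on $S^*M$ by density and the fact that $S^*M$ is compact). This shows $\ml{M}(\Delta_g)\subset\ml{M}(\varphi^t)$.

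For the second assertion, the pushforward $\pi_*\mu$, i.e. $a\in\ml{C}^0(M)\mapsto\int_{S^*_xM}\mu(x,d\xi)$, is a probability measure on $M$, and one must show it equals a quantum limit and conversely. For the inclusion $\{\pi_*\mu\}\subseteq\ml{N}(\Delta_g)$: if $\mu$ arises from $(\psi_{\hbar_n})$, then testing against $a\in\ml{C}^0(M)$ viewed as a symbol independent of $\xi$ gives $\Op_\hbar(a)\psi_\hbar=a\psi_\hbar$ (by the normalization of $\Op_\hbar$ noted in the excerpt), so $\la\psi_{\hbar_n},\Op_{\hbar_n}(a)\psi_{\hbar_n}\ra=\int_M a|\psi_{\hbar_n}|^2\,d\Vol_g\to\int_M a\,d\pi_*\mu$; hence $\pi_*\mu=\lim\nu_{\hbar_n}\in\ml{N}(\Delta_g)$. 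For the reverse inclusion: given $\nu\in\ml{N}(\Delta_g)$ realized by eigenfunctions $(\tilde\psi_j)$ with $\tilde\lambda_j\to\infty$, set $\hbar_j=\tilde\lambda_j^{-1}$; by the Calder\'on–Vaillancourt bound the sequence $w_{\hbar_j}$ is bounded in $\ml{D}'(T^*M)$, so after extracting a subsequence it converges to some $\mu\in\ml{M}(\Delta_g)$, and by the computation just made $\pi_*\mu=\lim\nu_{\hbar_j}=\nu$ (the limit is unchanged along subsequences since $\nu_{\hbar_j}$ already converges). This gives the equality of sets. The main obstacle I anticipate is the bookkeeping around $\xi$-cutoffs — one works with compactly supported symbols throughout, and must check that extending test functions to all of $\ml{C}^0(S^*M)$ (for invariance) and handling symbols independent of $\xi$ (for the pushforward) is legitimate; this is where one uses crucially that $\supp\mu\subset S^*M$ is compact, established in the first stage, so that multiplication by a cutoff equal to $1$ near $S^*M$ changes nothing in the limit.
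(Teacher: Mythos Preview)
Your proposal is correct and follows essentially the same route as the paper: G\aa rding for positivity, an elliptic parametrix argument for the support on $S^*M$, a cutoff argument for total mass, and Egorov plus the eigenvalue equation for invariance, with the pushforward identification handled exactly as you describe. One small slip to fix: in the support step you want $b=a/(H_0-\tfrac12)$ rather than $b=(H_0-\tfrac12)^{-1}$ on $\supp a$, so that the principal symbol of $\Op_\hbar(b)(\widehat H_\hbar-\tfrac12)$ is $a$ and not $1$---this is precisely the paper's choice $a(2H_0-1)^{-1}$.
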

Thanks to this Theorem and by passing to the semiclassical limit $\hbar\rightarrow 0^+$, 
we can identify the (Wigner) distributions $(w_{\hbar})_{\hbar\rightarrow 0^+}$ of the 
quantum states $(\psi_{\hbar})_{\hbar\rightarrow 0^+}$ with objects from classical mechanics, i.e. 
invariant measures of the geodesic flow. Moreover, the description of elements inside $\ml{N}(\Delta_g)$ 
can be recovered from the informations we have on the set $\ml{M}(\Delta_g)$. For instance, we can already 
deduce that $\nu\in\ml{N}(\Delta_g)$ cannot be the Dirac measure at some point as the geodesic flow has 
no fixed point on $S^*M$. An obvious question to ask is: how big is the set $\ml{M}(\Delta_g)$ inside 
$\ml{M}(\varphi^t)$? For instance, in the case of the $2$-sphere, one can deduce 
from~\eqref{e:QL-sphere} that these two sets are equal. However, Theorem~\ref{t:zygmund} together with 
example~\ref{ex:torus} show that this is no longer true in the case of the flat torus. We could have 
thought from the correspondance principle that all classical states can be recovered from quantum states 
in the semiclassical limit. Yet, this simple example shows us that this is not true in general and we will 
try to discuss what can be said under various geometric assumptions.

\begin{proof} Let $\mu$ 
be an element of $\ml{M}(\Delta_g)$ issued from a sequence $(\psi_{\hbar})_{\hbar\rightarrow 0^+}$ of Laplace eigenfunctions. 
First, from G\aa{}rding inequality~\eqref{e:garding}, we can verify that $\mu$ is a positive 
distribution, hence a measure~\cite[Ch.~1]{Schwartz66}. Let 
now $a$ be an element in $\ml{C}^{\infty}_c(T^*M)$ whose support does not intersect $S^*M$. Thanks 
to~\eqref{e:composition} and to~\eqref{e:calderon}, one has 
$$\la \psi_{\hbar},\Oph(a)\psi_{\hbar}\ra=\la \psi_{\hbar},\Oph\left(a(2H_0-1)^{-1}\right)(-\hbar^2\Delta_g-1)\psi_{\hbar}\ra
+\ml{O}(\hbar)=\ml{O}(\hbar),$$
where we also used~\eqref{e:eigenvalue}. Hence, by letting $\hbar\rightarrow 0^+$, one finds that $\mu(a)=0$. As this is 
valid for any $a$ whose support does not intersect $S^*M$, this implies that 
the support of $\mu$ is contained in $S^*M$.  In order to show 
that $\mu$ is a probability measure, it is sufficient to verify that there is no escape of mass 
at infinity. To that aim, we reproduce the same argument as before except that we do not pick $a$ to be compactly 
supported but rather in $S^0(T^*M)$. Take now $a$ in $\ml{C}^{\infty}(M)$ and a smooth cutoff function 
$\chi\in\ml{C}^{\infty}(\IR,[0,1])$ which is equal to $1$ on $[-1,1]$ and to $0$ outside $[-2,2]$. One has then
$$\int_Mad\nu_{\hbar}=\la w_{\hbar}, a\chi(H_0)\ra+\la \psi_{\hbar},\Oph(a(1-\chi(H_0)))\psi_{\hbar}\ra,$$
where the second term on the right hand side tends to $0$ from the previous discussion. Hence, by letting 
$\hbar\rightarrow 0^+$, the pushforward of $\mu$ on $M$ is an element of $\ml{N}(\Delta_g)$.

Let now $a$ be an element in $\ml{C}^{\infty}_c(T^*M)$ without any 
assumption on its support. From the eigenvalue equation~\eqref{e:eigenvalue}, one knows that
$$\forall t\in\IR,\quad\la\psi_{\hbar},\Oph(a)\psi_{\hbar}\ra=\la\psi_{\hbar},e^{-\frac{it\hbar\Delta_g}{2}}
\Oph(a)e^{\frac{it\hbar\Delta_g}{2}}\psi_{\hbar}\ra.$$
Combined with~\eqref{e:egorov}, this yields
\begin{equation}\label{e:semiclassical-measure_invariance}
\forall t\in\IR,\quad\la\psi_{\hbar},\Oph(a)\psi_{\hbar}\ra=\la\psi_{\hbar},\Oph(a\circ\varphi^t)\psi_{\hbar}\ra
+\ml{O}_{t,a}(\hbar).
\end{equation}
Letting $\hbar\rightarrow 0^+$, we finally get that $\varphi^t_*\mu=\mu$ for every $t$ in $\IR$. 
\end{proof}
We conclude this background section by mentionning without proof a version of the microlocal Weyl 
law~\cite[Th.~15.3]{Zworski12}:
\begin{theo}[Microlocal Weyl law]\label{t:weyl} Let $(\psi_{\hbar_j})_{j\geq 0}$ 
be an orthonormal basis of Laplace eigenfunctions (i.e. solutions to~\eqref{e:eigenvalue}). 
Then, for every $a$ in $\mathcal{C}^{\infty}_c(T^*M)$,
$$\lim_{\hbar\rightarrow 0}\frac{1}{N_{\hbar}}\sum_{j:\hbar\leq\hbar_j<+\infty}\la\psi_{\hbar_j},\Op_{\hbar_j}(a)\psi_{\hbar_j}\ra=\int_{S^*M}adL_1,$$
where
\begin{equation}\label{e:number-eigenvalue}N_\hbar:=\left|\left\{j\geq 0:\hbar\leq\hbar_j<+\infty\right\}\right|.\end{equation} 
\end{theo}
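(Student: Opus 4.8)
The plan is to prove the microlocal Weyl law (Theorem~\ref{t:weyl}) by combining the scalar Hörmander--Weyl law~\eqref{e:weyl} with a functional-calculus reduction that turns the quantization $\Oph(a)$ into something comparable with a spectral projector. First I would fix $a\in\mathcal{C}^\infty_c(T^*M)$ and, using a cutoff $\chi\in\mathcal{C}^\infty_c(\IR,[0,1])$ equal to $1$ near $1/2$, write $a=a\chi(H_0)+a(1-\chi(H_0))$. The second piece is supported away from $S^*M$, so by the argument already used in the proof of Theorem~\ref{t:semiclassical-measure} (composition formula~\eqref{e:composition} plus the eigenvalue equation~\eqref{e:eigenvalue}) each matrix element $\la\psi_{\hbar_j},\Oph(a(1-\chi(H_0)))\psi_{\hbar_j}\ra$ is $\ml{O}(\hbar_j)$ uniformly, hence contributes $\ml{O}(\hbar)\to0$ to the Cesàro average; also $\int_{S^*M}a(1-\chi(H_0))\,dL_1=0$ since $H_0\equiv1/2$ on $S^*M$. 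So it suffices to treat symbols supported in a thin energy shell $\{|H_0-1/2|<\delta\}$.

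Next I would relate the averaged trace to an honest operator trace. Writing $\widehat{H}_\hbar=-\tfrac12\hbar^2\Delta_g$, the sum $\sum_{j:\hbar\le\hbar_j<\infty}\la\psi_{\hbar_j},\Oph(a)\psi_{\hbar_j}\ra$ is, up to the $\psi_0$ convention, the trace of $\Pi_\hbar\,\Oph(a)\,\Pi_\hbar$ where $\Pi_\hbar$ is the spectral projector of $-\hbar^2\Delta_g$ onto the window $(0,1]$, equivalently of $\widehat{H}_\hbar$ onto $(0,1/2]$. The standard route (following Zworski, \cite[Ch.~15]{Zworski12}) is to approximate $\mathbf{1}_{(0,1/2]}(\widehat{H}_\hbar)$ by $f(\widehat{H}_\hbar)$ for suitable $f\in\mathcal{C}^\infty_c$ via a Tauberian argument — here one uses that the number of eigenvalues in an $\ml{O}(\hbar)$-window is $\ml{O}(\hbar^{1-n}) = \ml{O}(N_\hbar\hbar)$, which is exactly the remainder term in~\eqref{e:weyl}, so the smoothing introduces a negligible error after dividing by $N_\hbar$. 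Then $f(\widehat{H}_\hbar)\,\Oph(a)\,f(\widehat{H}_\hbar)$ is, by the functional calculus for pseudodifferential operators (Helffer--Sjöstrand formula), again an $\hbar$-pseudodifferential operator with principal symbol $f(H_0)^2 a$.

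The heart of the matter is then a trace asymptotic: for $B_\hbar=\Oph(b_\hbar)$ with $b_\hbar\in\mathcal{C}^\infty_c(T^*M)$ one has $\operatorname{tr} B_\hbar = (2\pi\hbar)^{-n}\int_{T^*M} b_\hbar\,\frac{|\omega^n|}{n!} + \ml{O}(\hbar^{1-n})$; applying this with $b_\hbar = f(H_0)^2 a$ and dividing by $N_\hbar\sim c_n\Vol_g(M)\hbar^{-n}$ (from~\eqref{e:weyl} with $\lambda=\hbar^{-1}$) produces, after letting $f\uparrow\mathbf 1_{(0,1/2]}$ and using the coarea formula to disintegrate the Liouville measure over energy shells, precisely $\int_{S^*M}a\,dL_1$ with the chosen normalization $L_1(S^*M)=1$. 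I expect the main obstacle to be the Tauberian/smoothing step: controlling the difference between the sharp spectral projector and its smoothed version requires the sharp remainder in the local Weyl law together with a non-stationary-phase estimate on the wave (or Schrödinger) propagator for short times, and this is the only place where the $\ml{O}(\lambda^{n-1})$ in~\eqref{e:weyl} is genuinely needed rather than just the leading term. Everything else is bookkeeping with the symbol calculus of \S\ref{ss:semiclassical} (composition~\eqref{e:composition}, Calderón--Vaillancourt~\eqref{e:calderon}, and the functional calculus) and the change of variables identifying $(2\pi\hbar)^{-n}\int_{H_0^{-1}([0,1/2])}$ with $N_\hbar\int_{S^*M}(\cdot)\,dL_1$.
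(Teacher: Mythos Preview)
The paper does not prove this statement: it is stated explicitly ``without proof'' and simply referenced to \cite[Th.~15.3]{Zworski12}. So there is no proof in the paper to compare your proposal against.

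That said, your sketch is essentially the standard argument from Zworski's book that the paper cites: reduce to symbols supported near $S^*M$ using the eigenvalue equation, replace the sharp spectral projector by a smooth $f(\widehat{H}_\hbar)$ via a Tauberian step controlled by the $\ml{O}(\lambda^{n-1})$ remainder in~\eqref{e:weyl}, apply the functional calculus to get a pseudodifferential operator, and use the trace asymptotic $\operatorname{tr}\Oph(b)=(2\pi\hbar)^{-n}\int b\,dL+\ml{O}(\hbar^{1-n})$ together with the coarea formula. Your identification of the Tauberian/smoothing step as the only substantive point is accurate; the rest is indeed symbol-calculus bookkeeping already recalled in \S\ref{ss:semiclassical}.
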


\begin{rema} When $L_1$ is an ergodic measure, then this Theorem tells us that we have a convex 
combination of distributions (which are asymptotically elements of $\ml{M}(\varphi^t)$) that converges 
to $L_1$ in $\ml{D}'(T^*M)$. Ergodicity exactly means that $L_1$ is an extreme point of $\ml{M}(\varphi^t)$. Thus, 
from an heuristic point of view, this microlocal Weyl law should imply that most of the elements in this convex 
sum  converges to $L_1$. This idea can be made rigorous and it is at the heart of the Quantum Ergodicity 
Theorem that we will 
state (and prove) in section~\ref{s:observability}.
\end{rema}


\section{Semiclassical observability}\label{s:observability}


We are in position to discuss when~\eqref{e:observability} holds, i.e. we want to determine under which 
geometric (or dynamical) conditions, a sequence $(\psi_{\hbar})_{\hbar\rightarrow 0^+}$ 
verifying~\eqref{e:eigenvalue} puts some mass on a given open set of $M$ (or $S^*M$). For that purpose, we can 
observe that a direct consequence of Theorem~\ref{t:semiclassical-measure} is
\begin{prop}\label{p:observability} Let $a$ be an element in $\ml{C}^0(S^*M,\IR)$ and set 
$$A_-:=\lim_{T\rightarrow+\infty}\frac{1}{T}\inf\left\{\int_{0}^Ta\circ\varphi^t(x,\xi)dt:(x,\xi)\in S^*M\right\},$$
and
$$A_+:=\lim_{T\rightarrow+\infty}\frac{1}{T}\sup\left\{\int_{0}^Ta\circ\varphi^t(x,\xi)dt:(x,\xi)\in S^*M\right\}.$$
Then, for every $\mu\in\ml{M}(\Delta_g)$, one has
$$A_-\leq \mu(a)\leq A_+.$$ 
\end{prop}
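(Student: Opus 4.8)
The plan is to use the Birkhoff-type averaging trick: since $\mu \in \ml{M}(\Delta_g)$ is $\varphi^t$-invariant by Theorem~\ref{t:semiclassical-measure}, integrating $a$ against $\mu$ is the same as integrating the time-average of $a$ over any window $[0,T]$. Concretely, set
$$a_T(x,\xi) := \frac{1}{T}\int_0^T a\circ\varphi^t(x,\xi)\,dt.$$
This is again a continuous function on $S^*M$ (in fact as smooth as $a$ if $a$ is smooth, which suffices after a density argument). By the invariance $\varphi^t_*\mu = \mu$ for all $t$, Fubini's theorem gives
$$\mu(a_T) = \frac{1}{T}\int_0^T \mu(a\circ\varphi^t)\,dt = \frac{1}{T}\int_0^T \mu(a)\,dt = \mu(a).$$
Then I would simply bound $a_T$ pointwise: by definition of the infimum and supremum in the statement, for every $(x,\xi)\in S^*M$ one has
$$\frac{1}{T}\inf_{(y,\eta)}\int_0^T a\circ\varphi^s(y,\eta)\,ds \;\leq\; a_T(x,\xi) \;\leq\; \frac{1}{T}\sup_{(y,\eta)}\int_0^T a\circ\varphi^s(y,\eta)\,ds.$$
Since $\mu$ is a probability measure on $S^*M$ (again by Theorem~\ref{t:semiclassical-measure}), integrating these inequalities against $\mu$ yields
$$\frac{1}{T}\inf_{(y,\eta)}\int_0^T a\circ\varphi^s(y,\eta)\,ds \;\leq\; \mu(a) \;\leq\; \frac{1}{T}\sup_{(y,\eta)}\int_0^T a\circ\varphi^s(y,\eta)\,ds.$$
Letting $T\to+\infty$ and using that the two limits $A_-$ and $A_+$ exist gives $A_- \leq \mu(a) \leq A_+$, as desired.

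There is one technical point worth flagging as the main thing to check carefully: the existence of the limits defining $A_\pm$. The functions $T\mapsto \inf_{(x,\xi)}\int_0^T a\circ\varphi^t\,dt$ and $T\mapsto \sup_{(x,\xi)}\int_0^T a\circ\varphi^t\,dt$ are respectively superadditive and subadditive (for the sup: $\int_0^{T+T'}a\circ\varphi^t(x,\xi)\,dt = \int_0^T a\circ\varphi^t(x,\xi)\,dt + \int_0^{T'} a\circ\varphi^s(\varphi^T(x,\xi))\,ds \leq \sup\int_0^T + \sup\int_0^{T'}$, and symmetrically for the inf), so Fekete's subadditivity lemma guarantees that $\frac{1}{T}\sup(\cdots)$ converges to its infimum over $T$ and $\frac{1}{T}\inf(\cdots)$ converges to its supremum over $T$. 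This is really the only non-cosmetic ingredient; everything else is the invariance of $\mu$ plus monotonicity of the integral. If one only wants the inequality for fixed $T$ (without passing to the limit), even this is unnecessary — but since the statement is phrased with $A_\pm$, I would include a sentence invoking Fekete's lemma.

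The remaining detail is reconciling the test-function classes: Theorem~\ref{t:semiclassical-measure} and the invariance statement are naturally phrased for $a\in\ml{C}^{\infty}_c(T^*M)$, whereas here $a\in\ml{C}^0(S^*M,\IR)$. Since $\mu$ is a Radon probability measure supported on the compact set $S^*M$, it pairs with all of $\ml{C}^0(S^*M,\IR)$, and the identity $\mu(a\circ\varphi^t)=\mu(a)$ extends from smooth compactly supported functions to all continuous functions on $S^*M$ by density (uniform approximation) together with the continuity of $(x,\xi)\mapsto \varphi^t(x,\xi)$. So no real obstacle here, just a routine approximation remark. In summary, the proof is short: invariance of $\mu$ $\Rightarrow$ $\mu(a)=\mu(a_T)$ for all $T$; pointwise bounds on $a_T$; integrate; let $T\to\infty$.
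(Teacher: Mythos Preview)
Your proposal is correct and is exactly the argument the paper has in mind: the paper does not spell out a proof but simply states that the proposition is ``a direct consequence of Theorem~\ref{t:semiclassical-measure}'' (i.e.\ of the fact that any $\mu\in\ml{M}(\Delta_g)$ is a $\varphi^t$-invariant probability measure on $S^*M$), and later, in the proof of Theorem~\ref{t:weak-QE}, it retraces precisely your step $\mu(a)=\mu(a_T)$ via the averaging identity. Your additional remarks on Fekete's lemma for the existence of $A_\pm$ and on the density argument extending invariance to $\ml{C}^0(S^*M,\IR)$ are helpful clarifications that the paper leaves implicit.
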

These bounds appear naturally when studying controllability or stabilization 
questions of certain partial differential equations~\cite{Sjostrand00}. As a Corollary of this Proposition, one gets
\begin{theo}[Uniform observability of Laplace eigenfunctions]\label{t:observability} Let $\omega$ be an open set of $M$ such that, 
for every geodesic $\gamma$ of $(M,g)$, 
$$\gamma\cap\omega\neq\emptyset.$$ Then, there exists some contant $c_{\omega,M,g}>0$ such 
that, for every $\psi_{\lambda}$ solution to
$$-\Delta_g\psi_{\lambda}=\lambda^2\psi_{\lambda},\quad\|\psi_{\lambda}\|_{L^2}=1,$$
 one has 
 $$\int_{\omega}|\psi_\lambda(x)|^2d\operatorname{Vol}_g(x)\geq c_{\omega,M,g}>0.$$
\end{theo}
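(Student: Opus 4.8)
The plan is to deduce Theorem~\ref{t:observability} from Proposition~\ref{p:observability} by a contradiction/compactness argument, exactly along the lines suggested in the text after the definition of quantum limits. First I would reduce the statement to a lower bound on semiclassical measures. Suppose the conclusion fails: then there is a sequence of normalized Laplace eigenfunctions $\psi_{\lambda_j}$ with $\lambda_j\to+\infty$ (we may assume $\lambda_j\to+\infty$ thanks to the unique continuation property~\eqref{e:unique-continuation}, which already forces a positive lower bound over any fixed finite set of eigenvalues) such that $\int_\omega|\psi_{\lambda_j}|^2d\mathrm{Vol}_g\to 0$. Writing $\hbar_j=\lambda_j^{-1}$ and passing to a subsequence, the associated Wigner distributions converge to some $\mu\in\ml{M}(\Delta_g)$; by Theorem~\ref{t:semiclassical-measure}, $\mu\in\ml{M}(\varphi^t)$ is a probability measure carried by $S^*M$, and its pushforward $\bar\mu=\int_{S^*_xM}\mu(x,d\xi)$ lies in $\ml{N}(\Delta_g)$. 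The assumption gives $\bar\mu(\omega)=0$, i.e. $\mu$ gives zero mass to $S^*\omega:=\{(x,\xi)\in S^*M: x\in\omega\}$.

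Next I would derive a contradiction from the dynamical hypothesis. Pick a nonnegative continuous function $a$ on $S^*M$ with $0\le a\le 1$, with $\operatorname{supp}(a)\subset S^*\omega$, and with $a$ not identically zero; since $\mu(S^*\omega)=0$ we have $\mu(a)=0$. Now I invoke Proposition~\ref{p:observability}: $\mu(a)\ge A_-$, where $A_-$ is the limit of the normalized infimum over $S^*M$ of the ergodic averages $\frac1T\int_0^T a\circ\varphi^t$. The hypothesis that every geodesic meets $\omega$ means precisely that every orbit of $\varphi^t$ on $S^*M$ enters the open set $S^*\omega$; combined with a suitable lower bound $a\ge c\mathbf 1_{K}$ for some compact $K\subset S^*\omega$ with nonempty interior, a compactness argument on $S^*M$ (together with continuity of the flow and a uniform "return time" bound) shows that there is $T_0>0$ and $\delta>0$ such that for every $(x,\xi)$, $\frac{1}{T_0}\int_0^{T_0}a\circ\varphi^t(x,\xi)\,dt\ge\delta$. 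This forces $A_-\ge\delta>0$, contradicting $\mu(a)=0$. Hence the assumed sequence cannot exist and the uniform lower bound $c_{\omega,M,g}$ follows.

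The main obstacle is the uniform return-time estimate: one must upgrade the pointwise statement "every geodesic meets $\omega$" to a \emph{quantitative, uniform} statement of the form "every geodesic segment of length $T_0$ meets $\omega$, spending at least a definite amount of time in a fixed compact subset of $S^*\omega$." The subtlety is twofold. First, $\omega$ is open but a geodesic could graze it, entering only for an arbitrarily short time; this is handled by replacing $\omega$ with a slightly smaller set $\omega'\Subset\omega$ and noting that geodesics are unit-speed, so each visit that reaches $\omega'$ lasts at least a fixed time. One should argue that the hypothesis still yields every geodesic meeting $\omega'$ — or more carefully, choose $a$ adapted to $\omega$ directly and use that a geodesic entering the open set $\omega$ must spend positive time there, then extract uniformity by compactness of $S^*M$ and an Ascoli-type argument on the continuous family $(x,\xi)\mapsto \int_0^T a\circ\varphi^t(x,\xi)dt$. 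Second, one needs the infimum over the compact set $S^*M$ of a quantity that is positive at every point to be positive; this is immediate from continuity of $(x,\xi)\mapsto\frac1T\int_0^Ta\circ\varphi^t(x,\xi)dt$ once $T$ is fixed large enough that the integrand's time-average is positive everywhere, and the existence of such a uniform $T$ is exactly the compactness point. Everything else — the extraction of a semiclassical measure, its invariance and support properties, the relation with $\ml{N}(\Delta_g)$ — is supplied verbatim by Theorem~\ref{t:semiclassical-measure} and Proposition~\ref{p:observability}.
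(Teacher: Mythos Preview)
Your proposal is correct and follows essentially the same route as the paper's proof: contradiction, extraction of a semiclassical measure $\mu$, and then Proposition~\ref{p:observability} together with a compactness argument on $S^*M$ to show $A_->0$. The only cosmetic difference is the order of operations---the paper first runs the compactness argument (finitely many balls $B(x_i,\xi_i;r_i)$ and times $T_i$ with $\pi_1\circ\varphi^{T_i}(B(x_i,\xi_i;r_i))\subset\omega$) and \emph{then} builds $a$ to equal $1$ on those projected balls, which makes $A_->0$ immediate and sidesteps the ``grazing'' subtlety you worry about.
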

The geometric assumption of this Theorem is in general sharp thanks to~\eqref{e:QL-sphere}. Yet, we shall see later on that this result can 
be improved under appropriate assumptions on the manifold $(M,g)$ -- see Theorems~\ref{t:observability-torus} 
and~\ref{t:FUP}. Note that the geometric condition appearing in this Theorem is often reffered as a 
geometric control condition which is a terminology coming from control theory~\cite{BardosLebeauRauch92}.
\begin{proof} We proceed by contradiction and suppose that there exists a sequence of solutions $(\psi_{\hbar_n})_{\hbar_n>0}$ 
to~\eqref{e:eigenvalue} such that the conclusion does not hold. Thanks to~\eqref{e:unique-continuation}, one finds that 
$\hbar_n\rightarrow 0^+$. Up to an extraction, we denote by $\mu$ the corresponding semiclassical measure. We also set 
$\pi_1:(x,\xi)\in S^*M \mapsto x\in M$ to be the canonical projection. Let now $(x_0,\xi_0)$ be an element of $S^*M$. Thanks to our 
hypothesis on $\omega$, there exists
some $T_{0}\in\IR$ such that $\pi_1\circ \varphi^{T_0}(x_0,\xi_0)\in \omega$. By continuity, this remains true on a small ball of 
radius $r_0>0$ centered at $(x_0,\xi_0)$. By compactness, we can then extract a finite number of balls 
$(B(x_i,\xi_i;r_i))_{i=1,\ldots N}$ such that this property holds true. Fix now $a$ to be a continuous function on 
$M$ which is compactly supported inside $\omega$ and which is equal to $1$ on the union of all the open sets 
$\pi_1\circ\varphi^{T_i}(B(x_i,\xi_i;r_i))$. One has
$$\int_{\omega}|\psi_{\hbar_n}(x)|^2d\operatorname{Vol}_g(x)\geq\int_{M}a(x)|\psi_{\hbar_n}(x)|^2d\operatorname{Vol}_g(x)=
\la\psi_{\hbar_n},\Op_{\hbar_n}(a)\psi_{\hbar_n}\ra,$$ 
from which we can deduce the expected contradiction thanks to Proposition~\ref{p:observability} 
as $\mu(a)\geq A_->0$.
\end{proof}

\subsection{Improving the observability bounds along density one subsequences}

We will now show how to improve Proposition~\ref{p:observability} along a typical subsequence of 
Laplace eigenfunctions. More precisely, following~\cite{Riviere13} -- see also~\cite{Sjostrand00} for 
earlier related results of Sj\"ostrand in the case of the damped wave equation, one has
\begin{theo}\label{t:weak-QE} Let $(\psi_{\hbar_j})_{j\geq 0}$ 
be an orthonormal basis of Laplace eigenfunctions (i.e. solutions to~\eqref{e:eigenvalue}). 
Then, for every $a$ in $\mathcal{C}^{\infty}_c(T^*M,\IR)$, there exists $S\subset \IZ_+$ such that
\begin{equation}\label{e:density1}\lim_{\hbar\rightarrow 0^+}\frac{1}{N_\hbar}\left|\{j\in S:\hbar\leq\hbar_j<+\infty\}\right|=1,\end{equation}
and
$$\operatorname{ess inf}\{L_{x,\xi}(a)\}\leq 
\liminf_{j\rightarrow +\infty, j\in S}\la\psi_{\hbar_j},\Op_{\hbar_j}(a)\psi_{\hbar_j}\ra\leq
\limsup_{j\rightarrow +\infty, j\in S}\la\psi_{\hbar_j},\Op_{\hbar_j}(a)\psi_{\hbar_j}\ra\leq \operatorname{ess sup}\{L_{x,\xi}(a)\},$$
where 
$$\operatorname{ess sup}\{L_{x,\xi}(a)\}:=\operatorname{inf}\left\{C: L_{x,\xi}(a)\leq C\ \text{for}\ L_1-\text{a.e.}\ (x,\xi)\in S^*M\right\},$$
with $L_{x,\xi}$ being the ergodic component of $(x,\xi)$ w.r.t. $L_1$ and where
$$\operatorname{ess inf}\{L_{x,\xi}(a)\}=-\operatorname{ess sup}\{-L_{x,\xi}(a)\}.$$
\end{theo}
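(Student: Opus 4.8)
The idea is to combine the microlocal Weyl law (Theorem~\ref{t:weyl}) with the Egorov theorem~\eqref{e:egorov} used along a logarithmically bounded number of iterates, and then exploit the Birkhoff ergodic theorem (Theorem~\ref{t:birkhoff}) for the Liouville measure $L_1$. The starting observation is that, by~\eqref{e:egorov} applied to a fixed time $T$, for each Laplace eigenfunction one has
$$\la\psi_{\hbar_j},\Op_{\hbar_j}(a)\psi_{\hbar_j}\ra=\la\psi_{\hbar_j},\Op_{\hbar_j}(a_T)\psi_{\hbar_j}\ra+\ml{O}_{a,T}(\hbar_j),$$
where $a_T:=\frac{1}{T}\int_0^T a\circ\varphi^t\,dt$ is the time average of $a$ over $[0,T]$. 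So it suffices to prove the statement with $a$ replaced by $a_T$ for every fixed $T$, and then let $T\to+\infty$: by Birkhoff, $a_T(x,\xi)\to L_{x,\xi}(a)$ for $L_1$-a.e.\ $(x,\xi)$ and $\|a_T\|_{L^\infty}$ stays bounded, so $\|(a_T-\operatorname{ess\,sup}\{L_{\cdot}(a)})_+\|_{L^2(L_1)}\to 0$ (dominated convergence).

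\textbf{Key steps.} First, fix $T>0$ and consider the observable $b_T:=(a_T-\operatorname{ess\,sup}\{L_{x,\xi}(a)\}-\delta)_+$ for small $\delta>0$, where $(\cdot)_+$ denotes positive part (smoothed out slightly so that it lies in $\ml{C}^\infty_c(T^*M)$, using a convolution that changes its $L^2(L_1)$ norm by $o(1)$). The microlocal Weyl law gives
$$\frac{1}{N_\hbar}\sum_{j:\hbar\le\hbar_j<\infty}\la\psi_{\hbar_j},\Op_{\hbar_j}(b_T)\psi_{\hbar_j}\ra\longrightarrow \int_{S^*M}b_T\,dL_1 = \|b_T\|_{L^1(L_1)},$$
and this $L^1$ mass tends to $0$ as $T\to\infty$ then $\delta\to 0$. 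Since by G\aa{}rding~\eqref{e:garding} each term $\la\psi_{\hbar_j},\Op_{\hbar_j}(b_T)\psi_{\hbar_j}\ra\ge -C_{b_T}\hbar_j$, the average of these almost-nonnegative quantities being small forces, via a Markov/Bienaym\'e-type argument, a density one subset of indices $j$ on which $\la\psi_{\hbar_j},\Op_{\hbar_j}(b_T)\psi_{\hbar_j}\ra$ is small; combined with the Egorov approximation above and the definition of $b_T$, on that density one subset $\la\psi_{\hbar_j},\Op_{\hbar_j}(a)\psi_{\hbar_j}\ra\le \operatorname{ess\,sup}\{L_{x,\xi}(a)\}+2\delta+o(1)$. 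The analogous argument applied to $-a$ produces the lower bound with $\operatorname{ess\,inf}$. Finally, one must extract a single set $S$ that works simultaneously for all the relevant approximations: take a sequence $T_k\to\infty$ and $\delta_k\to 0$, obtain for each $k$ a density one set $S_k$, and use a standard diagonal extraction (as in Remark~\ref{r:ergodic}, choosing thresholds $\hbar^{(k)}$ decreasing to $0$ along which the density-one property of $S_k$ has already ``kicked in'') to build $S=\bigcup_k (S_k\cap\{\hbar^{(k+1)}\le\hbar_j<\hbar^{(k)}\})$, which has density one and on which $\liminf$ and $\limsup$ are squeezed between $\operatorname{ess\,inf}$ and $\operatorname{ess\,sup}$.

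\textbf{Main obstacle.} The delicate point is the diagonal extraction producing one set $S$ valid for the full sequence $j\to\infty$: the density one statements for different $T_k$ hold only asymptotically in $\hbar$, with rates depending on $T_k$, so one has to be careful to patch the sets $S_k$ together over successive dyadic windows of $\hbar_j$ in such a way that the resulting $S$ still has density one (this is exactly the kind of bookkeeping that makes the ``typical subsequence'' statements in quantum ergodicity slightly technical). A secondary technical nuisance is that $a_T$ and its positive part are not quite admissible symbols in $\ml{C}^\infty_c(T^*M)$ as written — one smooths them and checks that the smoothing is harmless for both the Weyl-law average and the $L^2(L_1)$ estimates. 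Neither difficulty is conceptual: the substance of the proof is the interplay of Egorov, the microlocal Weyl law, and Birkhoff's theorem, exactly as in the quantum ergodicity argument but keeping track of the ergodic decomposition $L_1=\int L_{x,\xi}\,dL_1(x,\xi)$ rather than assuming ergodicity of $L_1$.
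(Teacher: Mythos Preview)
Your proposal is correct and follows essentially the same route as the paper: Egorov to replace $a$ by its time average $a_T$, a G\aa{}rding/Markov (Bienaym\'e--Tchebychev) argument combined with the microlocal Weyl law to show that the ``bad'' indices have small density, Birkhoff's theorem to make the bad-set estimate go to zero as $T\to\infty$, and a diagonal extraction to produce a single density-one $S$.

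The only noteworthy difference is organizational. The paper constructs a smooth truncation $\tilde a_T$ of $\chi(H_0)a_T$ capped at level $A_0+2\sqrt{\varepsilon}$, applies G\aa{}rding to the capped part, then controls the remainder $a_T-\tilde a_T$ via Cauchy--Schwarz and a \emph{second-moment} estimate (Weyl law applied to the square, through the composition formula). You instead work directly with the (smoothed) positive part $b_T=(a_T-A_0-\delta)_+$, use G\aa{}rding to make $\la\psi,\Op(b_T)\psi\ra$ essentially nonnegative, and apply Markov to the \emph{first moment} supplied by the Weyl law. Your variant is marginally more direct (it avoids the squaring step), but both arguments express the same mechanism. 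One small point to tidy: if $A_0+\delta<0$ then $(a_T-A_0-\delta)_+$ is not compactly supported, so the smoothing you allude to must include a cutoff $\chi(H_0)$ near $S^*M$ --- exactly as the paper does --- before taking the positive part.
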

In the following, we will say that a subset $S$ inside $\IZ_+$ is of density $1$ if it satisfies property~\eqref{e:density1}. 
Using the conventions of Proposition~\ref{p:observability}, we note that 
$$A_-\leq\operatorname{ess inf}\{L_{x,\xi}(a)\}\leq \operatorname{ess sup}\{L_{x,\xi}(a)\}\leq A_+,$$
and that these inequalities may be strict. Hence, this Theorem shows that the bounds of Proposition~\ref{p:observability} can be improved along a typical subsequence of 
an orthonormal basis of Laplace eigenfunctions. Before giving some applications of this Theorem in the next section, 
let us give a proof of it. The key inputs compared with Proposition~\ref{p:observability} are the Birkhoff 
ergodic Theorem and the microlocal Weyl law.
\begin{proof} Up to replacing $a$ by $-a$, it is sufficient to prove the upper bound. To that aim, we need to go through 
the proof of the upper bound of Proposition~\ref{p:observability} which was a direct consequence of the invariance 
property proved in Theorem~\ref{t:semiclassical-measure}. 
This Theorem itself relied implicitely on~\eqref{e:semiclassical-measure_invariance} which itself followed from the Egorov 
Theorem~\eqref{e:egorov}. This tells us that, as $j\rightarrow+\infty$,
$$\forall T>0,\quad\la\psi_{\hbar_j},\Op_{\hbar_j}(a)\psi_{\hbar_j}\ra=\left\la\psi_{\hbar_j},\Op_{\hbar_j}\left(\frac{1}{T}\int_0^T
a\circ\varphi^tdt\right)\psi_{\hbar_j}\right\ra+o_{T,a}(1).$$
Arguing as in the proof of Theorem~\ref{t:semiclassical-measure}, we may also suppose without loss of generality that $a$ is 
supported in a small neighborhood of $S^*M$ and that it is $0$-homogeneous in $\xi$ near $S^*M$. We now fix a smooth cutoff 
function $\chi\in\mathcal{C}^{\infty}_c(\IR,[0,1])$ which is equal to $1$ in a small neighborhood of $1/2$. 
Fix also some $\varepsilon>0$ and set $A_0=\operatorname{ess sup}\{L_{x,\xi}(a)\}$. We define a smooth function 
$\tilde{a}_T(x,\xi)\in \ml{C}^{\infty}_c(T^*M)$ such that
\begin{itemize}
 \item $\tilde{a}_T(x,\xi)=\frac{\chi(H_0(x,\xi))}{T}\int_0^T
a\circ\varphi^t(x,\xi)dt$ if $\chi(H_0(x,\xi))\neq 0$ and 
$$\frac{1}{T}\int_0^T
a\circ\varphi^t\left(x,\xi\right)dt\leq A_0+\sqrt{\varepsilon}.$$
 \item $\tilde{a}_T(x,\xi)\leq \chi(H_0(x,\xi))(A_0+2\sqrt{\varepsilon})$ otherwise.
\end{itemize}
Thanks to the Garding inequality~\eqref{e:garding} and for some fixed $T>0$, 
we find that, as $j\rightarrow+\infty$,
$$\la\psi_{\hbar_j},\Op_{\hbar_j}(a)\psi_{\hbar_j}\ra\leq A_0+2\sqrt{\varepsilon} +\left\la\psi_{\hbar_j},\Op_{\hbar_j}\left(\frac{\chi(H_0)}{T}\int_0^T
a\circ\varphi^tdt-\tilde{a}_T\right)\psi_{\hbar_j}\right\ra+o_{T,a}(1).$$
Then, from the Cauchy-Schwarz inequality, one gets
$$\la\psi_{\hbar_j},\Op_{\hbar_j}(a)\psi_{\hbar_j}\ra\leq A_0+2\sqrt{\varepsilon} +\left\|\Op_{\hbar_j}\left(\frac{\chi(H_0)}{T}\int_0^T
a\circ\varphi^tdt-\tilde{a}_T\right)\psi_{\hbar_j}\right\|_{L^2}+o_{T,a}(1).$$
In particular, one has
$$\limsup_{\hbar\rightarrow 0^+}\frac{1}{N_\hbar}\left|\left\{j:\hbar_j\geq\hbar\text{ and }\la\psi_{\hbar_j},\Op_{\hbar_j}(a)\psi_{\hbar_j}\ra\geq A_0+3\sqrt{\varepsilon}\right\}\right|$$
$$\leq\limsup_{\hbar\rightarrow 0^+}\frac{1}{N_\hbar}\left|\left\{j:\hbar_j\geq\hbar\text{ and }\left\|\Op_{\hbar_j}\left(\frac{\chi(H_0)}{T}\int_0^T
a\circ\varphi^tdt-\tilde{a}_T\right)\psi_{\hbar_j}\right\|_{L^2}\geq \sqrt{\varepsilon}\right\}\right|.$$
By Bienaym\'e-Tchebychev inequality, we obtain that
$$\frac{1}{N_\hbar}\left|\left\{j:\hbar_j\geq\hbar\text{ and }\left\|\Op_{\hbar_j}\left(\frac{\chi(H_0)}{T}\int_0^T
a\circ\varphi^tdt-\tilde{a}_T\right)\psi_{\hbar_j}\right\|_{L^2}\geq \sqrt{\varepsilon}\right\}\right|$$
$$\leq \frac{1}{\varepsilon N_\hbar}\sum_{j:\hbar_j\geq\hbar}\left\|\Op_{\hbar_j}\left(\frac{\chi(H_0)}{T}\int_0^T
a\circ\varphi^tdt-\tilde{a}_T\right)\psi_{\hbar_j}\right\|_{L^2}^2.$$
Combining the microlocal Weyl law with the composition formula~\eqref{e:composition}, these two bounds yields 
yields
$$\limsup_{\hbar\rightarrow 0^+}\frac{1}{N_\hbar}\left|\left\{j:\hbar_j\geq\hbar\text{ and }\la\psi_{\hbar_j},\Op_{\hbar_j}(a)\psi_{\hbar_j}\ra\geq A_0+3\sqrt{\varepsilon}\right\}\right|
\leq\frac{1}{\varepsilon}\int_{S^*M}\left(\frac{1}{T}\int_0^T
a\circ\varphi^tdt-\tilde{a}_T\right)^2dL_1.$$ 
From our construction of the function $\tilde{a}_T$, one can deduce that
$$\limsup_{\hbar\rightarrow 0^+}\frac{1}{N_\hbar}\left|\left\{j:\hbar_j\geq\hbar\text{ and }\la\psi_{\hbar_j},\Op_{\hbar_j}(a)\psi_{\hbar_j}\ra\geq A_0+3\sqrt{\varepsilon}\right\}\right|
$$
$$\leq \frac{4\|a\|_{\infty}}{\varepsilon}\text{L}_1\left(\left\{(x,\xi):\frac{1}{T}\int_0^T
a\circ\varphi^t\left(x,\xi\right)dt> A_0+\sqrt{\varepsilon}\right\}\right).$$
As this last inequality is valid for any $T>0$, we can apply the Birkhoff ergodic Theorem to get
$$\lim_{\hbar\rightarrow 0^+}\frac{1}{N_\hbar}\left|\left\{j:\hbar_j\geq\hbar\text{ and }\la\psi_{\hbar_j},\Op_{\hbar_j}(a)\psi_{\hbar_j}\ra\geq A_0+3\sqrt{\varepsilon}\right\}\right|=0.
$$
This gives us a subset $S_{\varepsilon}$ of density $1$ along which one has
$$\limsup_{j\rightarrow +\infty, j\in S}\la\psi_{\hbar_j},\Op_{\hbar_j}(a)\psi_{\hbar_j}\ra\leq \operatorname{ess sup}\{L_{x,\xi}(a)\}+3\sqrt{\varepsilon}.$$
In order to construct a subset $S$ of density $1$ verifying the conclusion of the Theorem, one can for instance follow the 
argument of~\cite{ColindeVerdiere85} to build $S$ out of the subsets $S_{1/k}$ with $k\in\IZ_+^*$ -- see also~\cite[\S 4]{Riviere13}.
\end{proof}
As an application of this Theorem, we can mention the following Corollary~\cite{Riviere13}:
\begin{coro} Let $\Lambda$ be an invariant subset of $S^*M$ such that 
$$L_1=\int_{\Lambda}L_{x,\xi}dL_1(x,\xi).$$
Let 
$(\psi_{\hbar_j})_{j\geq 0}$ be an orthonormal basis of Laplace eigenfunctions (i.e. solutions to~\eqref{e:eigenvalue}). 
Then, there exists $S\subset \IZ_+$ such that any accumulation point (in $\ml{D}'(T^*M)$) of the sequence
$$w_{\hbar_j}:a\in\mathcal{C}^{\infty}_c(T^*M)\mapsto\la\psi_{\hbar_j},\Op_{\hbar_j}(a)\psi_{\hbar_j}\ra,\quad j\in S,$$
belongs to the closure of the convex hull of $\{L_{x,\xi}:(x,\xi)\in\Lambda\}.$
\end{coro}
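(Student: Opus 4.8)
The plan is to reduce this corollary to Theorem~\ref{t:weak-QE} by exhausting a countable dense family of symbols. First I would fix a countable subset $(a_k)_{k\geq 1}$ of $\ml{C}^{\infty}_c(T^*M,\IR)$ which is dense (for the topology that controls convergence in $\ml{D}'(T^*M)$, e.g. dense in $\ml{C}^0_c$ for the sup norm on compact sets together with control of derivatives entering the Calder\'on--Vaillancourt bound). For each $k$, Theorem~\ref{t:weak-QE} produces a density-one subset $S_k\subset\IZ_+$ along which $\la\psi_{\hbar_j},\Op_{\hbar_j}(a_k)\psi_{\hbar_j}\ra$ has all its accumulation points trapped between $\operatorname{ess inf}\{L_{x,\xi}(a_k)\}$ and $\operatorname{ess sup}\{L_{x,\xi}(a_k)\}$, where the essential bounds are taken with respect to $L_1$. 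A standard diagonal extraction (as in~\cite{ColindeVerdiere85}, or~\cite[\S4]{Riviere13}) then yields a single density-one subset $S\subset\IZ_+$ that works simultaneously for every $k$.

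Next I would translate the hypothesis on $\Lambda$ into a statement about these essential bounds. Since $L_1=\int_{\Lambda}L_{x,\xi}\,dL_1(x,\xi)$ and each $L_{x,\xi}$ is an ergodic component carried (for $L_1$-a.e.\ such $(x,\xi)$) by the orbit closure of a point of $\Lambda$, the essential supremum of $(x,\xi)\mapsto L_{x,\xi}(a_k)$ over $S^*M$ with respect to $L_1$ equals its essential supremum restricted to $\Lambda$, and similarly for the infimum. Hence for every $k$, every accumulation point $\mu$ of $(w_{\hbar_j})_{j\in S}$ satisfies
$$\inf_{(x,\xi)\in\Lambda}L_{x,\xi}(a_k)\ \leq\ \mu(a_k)\ \leq\ \sup_{(x,\xi)\in\Lambda}L_{x,\xi}(a_k)$$
up to the $L_1$-null sets implicit in the essential bounds; more precisely $\mu(a_k)$ lies in the closed interval $[\operatorname{ess inf}\{L_{x,\xi}(a_k)\},\operatorname{ess sup}\{L_{x,\xi}(a_k)\}]$.

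The final step is a Hahn--Banach / duality argument. The set $\cK:=\overline{\operatorname{conv}}\{L_{x,\xi}:(x,\xi)\in\Lambda\}$ is a closed convex subset of the (metrizable, compact) set of probability measures on $S^*M$. By the Hahn--Banach separation theorem, a probability measure $\mu$ belongs to $\cK$ if and only if, for every continuous function $a$, one has $\mu(a)\leq\sup\{\lambda(a):\lambda\in\cK\}=\sup\{L_{x,\xi}(a):(x,\xi)\in\Lambda\}$. Replacing $a$ by $-a$ gives the corresponding lower bound as well. The essential sup over $L_1$ of $L_{x,\xi}(a)$ coincides, thanks to the disintegration hypothesis on $\Lambda$, with $\sup_{(x,\xi)\in\Lambda}L_{x,\xi}(a)$ after discarding an $L_1$-null set — and since $L_1$-null sets do not affect the support of the disintegration, this is exactly the supremum appearing in the separation criterion. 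Passing from the countable dense family $(a_k)$ to all continuous $a$ by density (and using that $w_{\hbar_j}$ is a bounded sequence in $\ml{D}'$, with any accumulation point a probability measure by Theorem~\ref{t:semiclassical-measure}) then places $\mu$ in $\cK$.

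The main obstacle is the bookkeeping in the second step: one must verify that the essential bounds with respect to $L_1$ over all of $S^*M$ genuinely reduce to suprema and infima over $\Lambda$ (or its support), i.e.\ that the disintegration $L_1=\int_{\Lambda}L_{x,\xi}\,dL_1$ forces the relevant $L_1$-conegligible set on which $L_{x,\xi}$ is a well-defined ergodic component to be, up to $L_1$-measure zero, indexed by points whose ergodic components lie in $\overline{\operatorname{conv}}\{L_{x,\xi}:(x,\xi)\in\Lambda\}$. Once this identification of $\operatorname{ess sup}$ with the genuine supremum over $\Lambda$ is in place, the separation argument is routine.
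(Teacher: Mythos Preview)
The paper does not actually prove this Corollary; it only states it as a consequence of Theorem~\ref{t:weak-QE} and cites~\cite{Riviere13}. Your strategy---apply Theorem~\ref{t:weak-QE} to a countable dense family of symbols, diagonalise to a single density-one subset $S$, and then use Hahn--Banach separation to conclude---is exactly the natural route and is correct.

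You are, however, making the final step harder than it needs to be. You worry about showing that $\operatorname{ess\,sup}_{L_1}\{L_{x,\xi}(a)\}$ \emph{equals} $\sup_{(x,\xi)\in\Lambda}L_{x,\xi}(a)$, but you only need the inequality $\leq$, and that is immediate. Indeed, the hypothesis $L_1=\int_{\Lambda}L_{x,\xi}\,dL_1(x,\xi)$ forces $L_1(\Lambda)=1$ (the integrand consists of probability measures, so the complement cannot contribute). Hence the $L_1$-essential supremum of $(x,\xi)\mapsto L_{x,\xi}(a)$ is computed over a full-measure subset of $\Lambda$ and is therefore at most $\sup_{(x,\xi)\in\Lambda}L_{x,\xi}(a)$. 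Combined with Theorem~\ref{t:weak-QE}, this gives
\[
\mu(a)\ \leq\ \operatorname{ess\,sup}_{L_1}\{L_{x,\xi}(a)\}\ \leq\ \sup_{(x,\xi)\in\Lambda}L_{x,\xi}(a)\ =\ \sup_{\nu\in\cK}\nu(a)
\]
for every accumulation point $\mu$ along $S$ and every $a$ in your dense family, hence (by density and Theorem~\ref{t:semiclassical-measure}) for every $a\in\ml{C}^0(S^*M)$. This is precisely the one-sided Hahn--Banach criterion for membership in the closed convex set $\cK=\overline{\operatorname{conv}}\{L_{x,\xi}:(x,\xi)\in\Lambda\}$, so you are done without ever needing the reverse inequality.
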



\subsection{Quantum ergodicity}

We will now apply Theorem~\ref{t:weak-QE} in our three main geometric paradigms: round spheres, flat tori and negatively 
curved manifolds. In the case of the sphere $\IS^d$ endowed with its 
canonical metric, Theorem~\ref{t:weak-QE} is in fact weaker than Proposition~\ref{p:observability} which holds without 
extracting a subsequence. A more interesting case is when the Liouville measure $L_1$ is ergodic, e.g. 
on negatively curved manifolds. In that case, $L_{x,\xi}=L_1$ for 
$L_1$-almost every $(x,\xi)\in S^*M$ and we recover the Quantum Ergodicity Theorem~\cite{Snirelman73, Snirelman74, 
Zelditch87, ColindeVerdiere85}:
\begin{theo}[Quantum Ergodicity]\label{t:QE} Suppose that $L_1$ is ergodic and let $(\psi_{\hbar_j})_{j\geq 0}$ 
be an orthonormal basis of Laplace eigenfunctions (i.e. solutions to~\eqref{e:eigenvalue}). 
Then, there exists $S\subset \IZ_+$ such that
$$\lim_{\hbar\rightarrow 0^+}\frac{1}{N_\hbar}\left|\{j\in S:\hbar\leq\hbar_j<+\infty\}\right|=1,$$
and, for every $a$ in $\mathcal{C}^{\infty}_c(T^*M)$,
$$ 
\lim_{j\rightarrow +\infty, j\in S}\la\psi_{\hbar_j},\Op_{\hbar_j}(a)\psi_{\hbar_j}\ra=\int_{S^*M}adL_1.$$
\end{theo}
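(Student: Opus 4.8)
The plan is to deduce the Quantum Ergodicity Theorem directly from Theorem~\ref{t:weak-QE} together with a diagonal argument over a countable dense family of observables. First I would observe that when $L_1$ is ergodic, the Birkhoff ergodic Theorem~\ref{t:birkhoff} gives, for every fixed $a\in\mathcal{C}^{\infty}_c(T^*M)$, that the ergodic component satisfies $L_{x,\xi}(a)=\int_{S^*M}adL_1$ for $L_1$-almost every $(x,\xi)$. Consequently $\operatorname{ess\,inf}\{L_{x,\xi}(a)\}=\operatorname{ess\,sup}\{L_{x,\xi}(a)\}=\int_{S^*M}adL_1$, so Theorem~\ref{t:weak-QE} applied to this particular $a$ already yields a density-one subset $S_a\subset\IZ_+$ along which $\la\psi_{\hbar_j},\Op_{\hbar_j}(a)\psi_{\hbar_j}\ra\to\int_{S^*M}adL_1$.

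The remaining issue is that the density-one subset produced this way a priori depends on the observable $a$, whereas the statement requires a single $S$ that works simultaneously for all $a\in\mathcal{C}^{\infty}_c(T^*M)$. To handle this I would fix a countable family $(a_k)_{k\geq1}$ in $\mathcal{C}^{\infty}_c(T^*M)$ which is dense in $\mathcal{C}^{\infty}_c(T^*M)$ for the topology induced by the seminorms appearing in the Calder\'on--Vaillancourt bound~\eqref{e:calderon} (for instance, a dense family in a suitable weighted $\mathcal{C}^k$ norm on each compact exhausting set). For each $k$ let $S_k$ be the density-one set given by the previous paragraph for $a=a_k$. The classical trick, following Colin de Verdi\`ere~\cite{ColindeVerdiere85}, is to diagonalize: for each $k$ choose a threshold $\hbar^{(k)}>0$ decreasing to $0$ such that $\frac{1}{N_\hbar}|\{j\in S_k:\hbar\leq\hbar_j<+\infty\}|\geq 1-\frac1k$ for all $\hbar\leq\hbar^{(k)}$, and set $S:=\bigcup_{k\geq1}\bigl(S_k\cap\{j:\hbar^{(k+1)}\leq\hbar_j<\hbar^{(k)}\}\bigr)$. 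One checks directly that $S$ has density $1$ and that along $S$ one has convergence for every $a_k$.

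To pass from the countable dense family to all of $\mathcal{C}^{\infty}_c(T^*M)$, I would use the uniform bound
$$\left|\la\psi_{\hbar_j},\Op_{\hbar_j}(a-a_k)\psi_{\hbar_j}\ra\right|\leq \left\|\Op_{\hbar_j}(a-a_k)\right\|_{L^2\to L^2}\leq C_M\sum_{|\alpha|\leq D_M}\hbar_j^{|\alpha|/2}\|\partial^\alpha(a-a_k)\|_{L^\infty},$$
which follows from~\eqref{e:calderon} and $\|\psi_{\hbar_j}\|_{L^2}=1$, together with the analogous trivial bound $\bigl|\int_{S^*M}(a-a_k)dL_1\bigr|\leq\|a-a_k\|_{L^\infty}$. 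Given $a$ and $\varepsilon>0$, pick $k$ with the relevant seminorms of $a-a_k$ small; then for $j\in S$ large the quantity $\la\psi_{\hbar_j},\Op_{\hbar_j}(a)\psi_{\hbar_j}\ra$ is within $O(\varepsilon)$ of $\la\psi_{\hbar_j},\Op_{\hbar_j}(a_k)\psi_{\hbar_j}\ra$, which converges to $\int_{S^*M}a_kdL_1$, itself within $O(\varepsilon)$ of $\int_{S^*M}adL_1$. Letting $\varepsilon\to0$ finishes the proof. The only mildly delicate point—the main obstacle—is the bookkeeping in the diagonal construction of $S$ and verifying it retains density one; everything else is a routine combination of Theorem~\ref{t:weak-QE}, ergodicity of $L_1$, and the operator-norm estimate~\eqref{e:calderon}.
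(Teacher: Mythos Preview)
Your proposal is correct and follows exactly the route the paper takes: deduce the result for each fixed $a$ from Theorem~\ref{t:weak-QE} using ergodicity of $L_1$ (so that $L_{x,\xi}=L_1$ for $L_1$-a.e.\ $(x,\xi)$, hence $\operatorname{ess\,inf}=\operatorname{ess\,sup}=\int_{S^*M}a\,dL_1$), and then appeal to the density/diagonal argument of Colin de Verdi\`ere~\cite{ColindeVerdiere85} to produce a single $S$ valid for all observables. One small correction to your explicit diagonal construction: as written, using $S_k$ rather than the nested intersections $T_k:=S_1\cap\cdots\cap S_k$, neither the density-one property of $S$ nor the convergence along $S$ for a fixed $a_{k_0}$ follows directly (an index $j$ in the $k$-th block lies in $S_k$ but need not lie in $S_{k_0}$ for $k_0<k$); replacing $S_k$ by $T_k$ throughout---each $T_k$ still has density one as a finite intersection of density-one sets---makes both verifications routine, which is precisely the bookkeeping alluded to in~\cite{ColindeVerdiere85} and~\cite[Proof of Th.~15.5]{Zworski12}.
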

In other words, this Theorem states that most eigenfunctions of an orthonormal basis become equidistributed on $S^*M$
in the semiclassical limit. In particular, it says that $L_1$ is an element of $\ml{M}(\Delta_g)$ which may not 
be the case if $L_1$ is not ergodic. The fact that the density $1$ subset can be chosen independently of $a$ requires some additional density 
argument which is for instance explained in~\cite{ColindeVerdiere85} -- see also~\cite[Proof of Th.~15.5]{Zworski12}. This Theorem raises several questions including the question of the size of $S$ inside 
$\IZ_+$ and we will discuss them more precisely in section~\ref{s:QE}. As above, we only stated this 
Theorem in the context of Laplace Beltrami operators on closed manifolds. Yet, it can be extended in many other 
contexts enjoying ergodic features. This includes semiclassical Hamiltonian operators~\cite{HelfferMartinezRobert87}, manifolds 
with boundary~\cite{GerardLeichtnam93, ZelditchZworski96}, quantization of ergodic toral 
automorphisms~\cite{BouzouinaDeBievre96}, discontinuous metrics~\cite{JakobsonSafarovStrohmaier15}, 
sequences of compact hyperbolic 
surfaces~\cite{LeMassonSahlsten17}, subriemannian Laplacians~\cite{ColindeVerdiereHillairetTrelat18}, etc.


In the case of flat tori $\IT^n$, the pushforward of $L_{x,\xi}$ on $\IT^n$ is equal to $dx/(2\pi)^n$ for $L_1$-almost 
every $(x,\xi)$ in $S^*\IT^n$. Hence, we can deduce from Theorem~\ref{t:weak-QE} the following Theorem which was initially 
proved\footnote{The result in~\cite{MarklofRudnick12} also encompasses the case of rational polygons.} 
in~\cite{MarklofRudnick12} -- see also~\cite{Taylor15}.
\begin{theo}\label{t:QE-torus} Suppose that $\IT^n$ is endowed with the Euclidean metric and let $(\psi_{\hbar_j})_{j\geq 0}$ 
be an orthonormal basis of Laplace eigenfunctions (i.e. solutions to~\eqref{e:eigenvalue}). Then, there exists $S\subset \IZ_+$ such that
$$\lim_{\hbar\rightarrow 0^+}\frac{1}{N_\hbar}\left|\{j\in S:\hbar\leq\hbar_j<+\infty\}\right|=1,$$
and, for every $a$ in $\mathcal{C}^{0}(\IT^n)$,
$$ 
\lim_{j\rightarrow +\infty, j\in S}\int_{\IT^n}a(x)|\psi_{\hbar_j}(x)|^2dx=\int_{\IT^n}a(x)\frac{dx}{(2\pi)^n}.$$
\end{theo}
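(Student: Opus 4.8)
The plan is to deduce Theorem~\ref{t:QE-torus} from Theorem~\ref{t:weak-QE}, the only extra ingredient being the explicit description of the ergodic decomposition of the Liouville measure $L_1$ on $S^*\IT^n$ recorded in Example~\ref{ex:torus}. Since Theorem~\ref{t:weak-QE} is phrased for observables in $\ml{C}^{\infty}_c(T^*\IT^n,\IR)$ while we want to test $|\psi_{\hbar_j}|^2$ against continuous functions on the base, I would first carry out a lift. Fix $\tilde a\in\ml{C}^{\infty}(\IT^n,\IR)$ and a cutoff $\chi\in\ml{C}^{\infty}_c(\IR,[0,1])$ equal to $1$ near $1/2$, and set $a(x,\xi):=\tilde a(x)\chi(H_0(x,\xi))\in\ml{C}^{\infty}_c(T^*\IT^n,\IR)$. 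Arguing exactly as in the proof of Theorem~\ref{t:semiclassical-measure} (write $\chi(H_0)-1=\left((\chi(H_0)-1)(2H_0-1)^{-1}\right)(2H_0-1)$, and combine~\eqref{e:eigenvalue} with~\eqref{e:composition}, \eqref{e:calderon} and the normalization $\Op_{\hbar}(\tilde a)u=\tilde a\,u$), one obtains, along any sequence of solutions to~\eqref{e:eigenvalue},
\[
\la\psi_{\hbar_j},\Op_{\hbar_j}(a)\psi_{\hbar_j}\ra=\int_{\IT^n}\tilde a(x)\,|\psi_{\hbar_j}(x)|^2\,dx+\ml{O}(\hbar_j).
\]

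Next I would compute the ergodic averages of $a$ entering Theorem~\ref{t:weak-QE}. By Example~\ref{ex:torus}, for $L_1$-a.e. $(x,\xi)\in S^*\IT^n$ the ergodic component $L_{x,\xi}$ is the image under $y\mapsto(x+y,\xi)$ of the normalized Haar measure $\mathfrak{h}_\xi$ on $\IT(\xi)$, and $\IT(\xi)=\IT^n$ precisely when $\Lambda_\xi=\{0\}$. The set of directions $\xi\in\IS^{n-1}$ with $\Lambda_\xi\neq\{0\}$ is the countable union $\bigcup_{k\in\IZ^n\setminus 0}\{\xi:k\cdot\xi=0\}$ of great subspheres, which has measure zero in $\IS^{n-1}$; here one uses $n\geq 2$. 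Hence, for $L_1$-a.e. $(x,\xi)$, the projection of $L_{x,\xi}$ to $\IT^n$ is $dx/(2\pi)^n$. Since $H_0\equiv 1/2$ on $S^*\IT^n$, so that $\chi(H_0)\equiv 1$ there, this yields $L_{x,\xi}(a)=\int_{\IT^n}\tilde a(y)\,dy/(2\pi)^n$ for $L_1$-a.e. $(x,\xi)$, and therefore
\[
\operatorname{ess inf}\{L_{x,\xi}(a)\}=\operatorname{ess sup}\{L_{x,\xi}(a)\}=\int_{\IT^n}\tilde a(y)\,\frac{dy}{(2\pi)^n}.
\]

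Plugging this into Theorem~\ref{t:weak-QE} and using the first step, each fixed $\tilde a\in\ml{C}^{\infty}(\IT^n,\IR)$ gives a density-one subset $S_{\tilde a}\subset\IZ_+$ along which $\int_{\IT^n}\tilde a\,|\psi_{\hbar_j}|^2\,dx\to\int_{\IT^n}\tilde a\,dy/(2\pi)^n$. Choosing a countable family $(\tilde a_k)_{k\geq 1}$ dense in $\ml{C}^0(\IT^n,\IR)$, one then extracts a single density-one $S\subset\IZ_+$ along which the convergence holds simultaneously for every $\tilde a_k$, by the standard diagonal procedure of~\cite{ColindeVerdiere85} (see also~\cite[Proof of Th.~15.5]{Zworski12}). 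Finally, the uniform bound $\left|\int_{\IT^n}(\tilde a-\tilde a_k)\,|\psi_{\hbar_j}|^2\,dx\right|\leq\|\tilde a-\tilde a_k\|_{\ml{C}^0}$ upgrades the convergence to all $\tilde a\in\ml{C}^0(\IT^n,\IR)$, and splitting into real and imaginary parts handles complex-valued $a$.

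Most of this is bookkeeping once Theorem~\ref{t:weak-QE} is granted; I expect the main point requiring genuine care to be the second step, i.e. identifying the ergodic components of $L_1$ on the torus and checking that $L_1$-almost every geodesic direction is totally irrational, so that the corresponding invariant measures equidistribute over the whole base $\IT^n$ rather than over a proper subtorus. The diagonal extraction in the third step is routine, as is the reduction from continuous to smooth observables, both being handled by the general tools of Section~\ref{s:background}.
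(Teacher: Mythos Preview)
Your proposal is correct and follows exactly the route sketched in the paper: the paper deduces Theorem~\ref{t:QE-torus} from Theorem~\ref{t:weak-QE} by observing that the pushforward of $L_{x,\xi}$ on $\IT^n$ equals $dx/(2\pi)^n$ for $L_1$-almost every $(x,\xi)\in S^*\IT^n$, which is precisely your second step via Example~\ref{ex:torus}. You have simply spelled out the lift of observables, the measure-zero argument for rational directions, and the diagonal extraction that the paper leaves implicit.
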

Thanks to the 
examples of paragraph~\ref{sss:torus}, one knows that it is in general necessary to extract a subsequence to get convergence. 
Compared with the case where 
the Liouville measure $L_1$ is ergodic, we emphasize that equidistribution only holds in the configuration space 
$\IT^n$ and not in the phase space $S^*\IT^n$. In fact, if one considers the orthonormal basis given by 
$\left(e^{2i\pi k.x}/(2\pi)^{\frac{n}{2}}\right)_{k\in\IZ^n}$, then an application of 
formula~\eqref{e:weyl-quant} and of the stationary phase Lemma allows to show that any semiclassical measure 
issued from this basis is of the form $\frac{1}{(2\pi)^n}\times\delta_0(\xi-\xi_0)$ where $\xi_0$ is an element 
of $\IS^{n-1}$.

For more general geometric situations, it is not clear what the typical situation should be. In that direction, 
let us mention the following questions due to {\v{S}}nirel'man~\cite[App.]{Lazutkin}. \emph{If} $\varphi^t$ \emph{is 
not ergodic on} 
$S^*M$\emph{, is it true that ``almost all'' eigenfunctions are asymptotically uniformly distributed on ``almost all'' of 
the ergodic components? What conditions are necessary or sufficient for such uniform distribution?} Very little is know on this natural 
question except for a recent work of Gomes~\cite{Gomes18}. In fact, these kind of 
questions also appear in the physics literature and it is sometimes referred as the Percival 
conjecture~\cite{Percival79}. Following some ideas that go back to the works of Einstein~\cite{Einstein17}, 
this conjecture roughly states that quantum eigenmodes split into two families~\cite{Berry77}: regular states that 
concentrate in the part of phase space where the classical flow is regular and irregular ones concentrating where 
the classical flow is chaotic. 

\section{The case of Zoll manifolds}\label{s:zoll}

In the case of the sphere $\IS^d$ endowed with its canonical metric, we already saw that 
\begin{equation}\label{e:scm-sphere}\ml{M}(\Delta_{\IS^d})=\ml{M}(\varphi^t).\end{equation} 
In other words, the set of semiclassical measures 
is as big as it can be. This result was first proved by Jakobson and Zelditch~\cite{JakobsonZelditch99} and 
extended to more general compact rank one symmetric spaces by Maci\`a~\cite{Macia08} -- see also~\cite{HumbertPrivatTrelat16}, 
and to quotients of $\IS^d$ by certain groups of isometries by Azagra and Maci\`a~\cite{AzagraMacia10}. The round sphere 
is in fact the simplest example of a Zoll manifold $(M,g)$, i.e. a manifold all of whose geodesics 
are closed~\cite{Besse78}. In the general context of Zoll manifolds, thanks to a Theorem of Wadsley~\cite[\S 7.B]{Besse78}, there exists some $l>0$ such 
that $\varphi^l=\text{Id}_{S^*M}$, i.e. the geodesic flow is periodic. The existence of nontrivial 
manifolds of this type is due to Zoll~\cite{Zoll03}. Many years later, Guillemin showed that, 
near the round metric $g_0$ on $\IS^2$, there exists an infinite dimensional submanifold of Zoll metrics and he 
characterized its tangent space at $g_0$~\cite{Guillemin76}. 

A natural question to ask is whether~\eqref{e:scm-sphere} remains true on more general Zoll manifolds. The answer is no 
in general~\cite{MaciaRiviere16, MaciaRiviere17}. Before stating a precise result, we need to introduce a couple 
of conventions. We will in fact study the slightly more general framework of Schr\"odinger eigenfunctions, i.e. 
solutions to
\begin{equation}\label{e:eigenvalue-schrodinger}
 -\frac{1}{2}\hbar^2\Delta_g\psi_\hbar+\hbar^2V\psi_{\hbar}=\frac{1}{2}\psi_{\hbar},\quad\|\psi_{\hbar}\|_{L^2}=1,
\end{equation}
where $V$ is a smooth real valued function on $(M,g)$ and $\hbar$ is some positive parameter. As before, we can 
build from these sequences of eigenfunctions some sets of measures $\ml{M}(\Delta_g-2V)$ and 
$\ml{N}(\Delta_g-2V).$ Following the same proof\footnote{The term involving the potential only adds error terms which are of 
order $\ml{O}(\hbar^2)$.} as for Theorem~\ref{t:semiclassical-measure}, 
elements in $\ml{N}(\Delta_g-2V)$ are exactly the pushforwards on $M$ of elements in 
$\ml{M}(\Delta_g-2V)$ and $\ml{M}(\Delta_g-2V)\subset \ml{M}(\varphi^t)$. Moreover, given a smooth function $b$ on 
$T^*M\setminus 0$, one can define its Radon transform
$$\forall (x,\xi)\in T^*M\setminus 0,\quad\ml{R}_g(b)(x,\xi):=\frac{\|\xi\|_x}{l}\int_0^{\frac{l}{\|\xi\|_x}}b\circ\varphi^s(x,\xi)ds.$$
\begin{rema}
With the conventions of the Birkhoff Ergodic Theorem~\ref{t:birkhoff}, $\ml{R}_g(b)(x,\xi)=\mu_{x,\xi}(b)$ on $S^*M$ where $\mu$ is any 
invariant measure on $S^*M$. In the following, we will make a small abuse of notations and denote by $\ml{R}_g(V)$ the function 
$\ml{R}_g(V\circ\pi)$ where $\pi:T^*M\mapsto M$ is the canonical projection.
\end{rema}
Using these conventions, it was proved 
by Maci\`a and the author~\cite{MaciaRiviere16} that semiclassical measures enjoy some additional regularity properties:
\begin{theo}\label{t:zoll} Let $(M,g)$ be a Zoll manifold. Then, 
there exists a smooth function $q_0:T^*M\setminus 0\rightarrow\IR$ depending only on $(M,g)$ which is $0$-homogeneous and 
$\varphi^t$-invariant, and such that, for any smooth function $V:M\rightarrow \IR$, for any $\mu$ in 
$\ml{M}(\Delta_g-2V)$ and for any $a\in\ml{C}^{0}(S^*M)$, one has
$$\forall t\in\IR,\quad\int_{S^*M}a\circ\varphi^{t}_{q_0+\ml{R}_g(V)}d\mu=\int_{S^*M}ad\mu.$$ 
In particular, for every $(x,\xi)\in S^*M$,  
$$d_{x,\xi}(q_0+\ml{R}_g(V))\neq 0\Longrightarrow \mu\left(\{\varphi^t(x,\xi):0\leq t\leq l\}\right)=0.$$
\end{theo}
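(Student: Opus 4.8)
The plan is to produce a second-order refinement of the Egorov/invariance mechanism used in Theorem~\ref{t:semiclassical-measure}. There, one exploited that $\psi_\hbar$ is an eigenfunction, so conjugating $\Oph(a)$ by the Schr\"odinger propagator $e^{it\hbar\Delta_g/2}$ leaves the matrix element unchanged; letting $\hbar\to0^+$ and using Egorov gave $\varphi^t_*\mu=\mu$. On a Zoll manifold the principal symbol $H_0$ generates a periodic flow of period $l$ (after Wadsley's rescaling), so $\varphi^l=\mathrm{Id}_{S^*M}$ and the leading-order invariance is vacuous on the time scale $l$. The hidden dynamics lives at the next order: the operator $\widehat H_\hbar=-\tfrac12\hbar^2\Delta_g+\hbar^2 V$ has the form $\tfrac12+\hbar P_\hbar$ on the relevant spectral window, where $P_\hbar$ is an $\hbar$-pseudodifferential operator whose principal symbol is the \emph{subprincipal}/averaged correction. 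First I would invoke the structure theory of Zoll Laplacians: since the geodesic flow is $l$-periodic, a theorem of Colin de Verdi\`ere / Weinstein (see~\cite[\S3.C]{Besse78}) gives that $l\sqrt{-\hbar^2\Delta_g}/(2\pi)$ is, microlocally near $S^*M$, $\hbar$-unitarily equivalent to an operator with $2\pi\IZ$-clustered spectrum; equivalently there is a self-adjoint $\hbar$-pseudodifferential operator $Q_\hbar\in\Psi^0(M)$, commuting with $\sqrt{-\hbar^2\Delta_g}$, whose principal symbol is the sought-after $q_0$ (a $0$-homogeneous, $\varphi^t$-invariant function on $T^*M\setminus0$). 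This is the one genuinely nontrivial input; everything else is a computation.

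The key steps, in order, are as follows. (i) Rewrite~\eqref{e:eigenvalue-schrodinger} as an eigenvalue equation for an effective Hamiltonian: on the spectral window where $-\tfrac12\hbar^2\Delta_g+\hbar^2V$ has eigenvalue $\tfrac12$, one has $\sqrt{-\hbar^2\Delta_g}=1+\ml O(\hbar^2)$, and a second-order expansion shows that modulo $\ml O(\hbar^2)$-errors, $\psi_\hbar$ is an eigenfunction of an operator of the form $\tfrac12+\hbar\,\widehat G_\hbar$ with $\sigma_\hbar(\widehat G_\hbar)=q_0+\ml R_g(V)$; here the Radon transform $\ml R_g(V)$ appears because the $\hbar^2 V$-term must be averaged over the periodic geodesic flow to produce something that commutes with the leading operator (this is the standard averaging/normal-form step, and it is where $\ml R_g$ rather than $V$ itself enters). (ii) Conjugate $\Oph(a)$ by $e^{-it\widehat G_\hbar}$: since $\psi_\hbar$ is (to sufficient order) an eigenfunction of $\tfrac12+\hbar\widehat G_\hbar$, the matrix element $\la\psi_\hbar,e^{it\widehat G_\hbar}\Oph(a)e^{-it\widehat G_\hbar}\psi_\hbar\ra$ equals $\la\psi_\hbar,\Oph(a)\psi_\hbar\ra+o(1)$, because the propagator is generated by $\hbar\widehat G_\hbar$ and the spectral correction is exactly of size $\hbar$, so the phases $e^{\pm it\cdot\hbar G}$ cancel against $e^{\mp it\cdot\hbar\la G\ra}$ up to $o(1)$ on the relevant frequency window. (iii) Apply Egorov's theorem (\eqref{e:egorov}-style, now for the flow generated by $\widehat G_\hbar$, i.e. the Hamiltonian flow $\varphi^t_{q_0+\ml R_g(V)}$ of its principal symbol) to get
$$\la\psi_\hbar,e^{it\widehat G_\hbar}\Oph(a)e^{-it\widehat G_\hbar}\psi_\hbar\ra=\la\psi_\hbar,\Oph\bigl(a\circ\varphi^t_{q_0+\ml R_g(V)}\bigr)\psi_\hbar\ra+\ml O_{t,a}(\hbar).$$
Letting $\hbar\to0^+$ along the sequence defining $\mu$ yields $\int a\circ\varphi^t_{q_0+\ml R_g(V)}\,d\mu=\int a\,d\mu$ for all $t$, which is the first assertion. (iv) For the final consequence: if $d_{x,\xi}(q_0+\ml R_g(V))\neq0$ at some $(x,\xi)\in S^*M$, then the Hamiltonian vector field $X_{q_0+\ml R_g(V)}$ is nonvanishing near $(x,\xi)$, hence transverse to the closed geodesic $\gamma_{x,\xi}=\{\varphi^s(x,\xi):0\le s\le l\}$ (because $q_0+\ml R_g(V)$ is constant along geodesics, so its differential annihilates the geodesic direction; nonvanishing differential then forces the new flow to move off $\gamma_{x,\xi}$). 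The measure $\mu$ is invariant under a flow that pushes the orbit through $(x,\xi)$ off itself, so a standard argument (cover a neighborhood of $(x,\xi)$ by translates $\varphi^{t_k}_{q_0+\ml R_g(V)}(\gamma_{x,\xi})$, which are pairwise disjoint for small distinct $t_k$ and all of equal $\mu$-mass by invariance) forces $\mu(\gamma_{x,\xi})=0$, since otherwise $\mu$ would have infinite mass.

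The main obstacle is step (i)–(ii): one must make precise the claim that $\psi_\hbar$ is an eigenfunction of $\tfrac12+\hbar\widehat G_\hbar$ with an $o(1)$ (in fact $\ml O(\hbar)$) error \emph{in the right topology} — uniformly enough that conjugation by $e^{-it\widehat G_\hbar}$ over a fixed time interval only costs $o(1)$. This requires the quantum Birkhoff normal form near the periodic flow: one conjugates $-\tfrac12\hbar^2\Delta_g+\hbar^2V$ by a microlocally unitary Fourier integral operator to put it in the form $f(\hbar\widehat D)+\hbar^2\,(\text{operator commuting with }\widehat D)+\ml O(\hbar^\infty)$, where $\widehat D$ has $2\pi\IZ$-spectrum; the averaged potential $\ml R_g(V)$ and the geometric term $q_0$ emerge as the principal symbols of the commuting correction. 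Controlling the dependence of all constants on $V$ (to get a statement for \emph{all} smooth $V$ with a single $q_0$) and checking that $q_0$ is indeed $0$-homogeneous and $\varphi^t$-invariant are the remaining technical points; homogeneity follows from the homogeneity of the construction under the scaling $\xi\mapsto s\xi$, and $\varphi^t$-invariance because $q_0=\sigma_\hbar(Q_\hbar)$ with $Q_\hbar$ commuting with $\sqrt{-\hbar^2\Delta_g}$, so $\{H_0,q_0\}=0$.
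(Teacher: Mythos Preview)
Your approach is essentially correct but proceeds by a genuinely different route than the paper's. The paper (which only writes out the case $M=\IS^d$, where $q_0=0$) does \emph{not} put the Hamiltonian in normal form; instead it averages the \emph{observable}. Concretely, it introduces the quantum Radon transform
\[
\ml{R}_{\mathrm{qu}}(\Oph(a)):=\frac{1}{2\pi}\int_0^{2\pi} e^{-isA}\Oph(a)\,e^{isA}\,ds,
\]
where $A$ is the first-order operator with $e^{2\pi i A}=e^{i\pi(d-1)}\mathrm{Id}$ and $-\Delta_{g_0}=A^2-\bigl(\tfrac{d-1}{2}\bigr)^2$. The key point (Weinstein) is the \emph{exact} commutation $[\ml{R}_{\mathrm{qu}}(\Oph(a)),\Delta_{g_0}]=0$. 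Taking the expectation of $[\widehat H_\hbar,\ml{R}_{\mathrm{qu}}(\Oph(a))]$ in $\psi_\hbar$ (which vanishes by the eigenvalue equation) therefore kills the Laplacian term exactly and leaves only $\hbar^2\la\psi_\hbar,[V,\ml{R}_{\mathrm{qu}}(\Oph(a))]\psi_\hbar\ra=0$. Now Egorov gives $\ml{R}_{\mathrm{qu}}(\Oph(a))=\Oph(\ml{R}_g(a))+\ml O(\hbar)$, and the commutator rule~\eqref{e:bracket} yields $\mu(\{V,\ml{R}_g(a)\})=0$; a final application of geodesic invariance turns this into $\mu(\{\ml{R}_g(V),a\})=0$, which is the infinitesimal form of the claimed flow invariance.

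What each approach buys: the paper's route is shorter and avoids constructing any normal form or auxiliary propagator $e^{-it\widehat G_\hbar}$; the exact commutation $[\ml R_{\mathrm{qu}}(\Oph(a)),A]=0$ does all the work, and one never needs to argue that $\psi_\hbar$ is an approximate eigenfunction of a secondary operator. Your route, by contrast, is the one that makes the appearance of $q_0$ transparent in the general Zoll case (indeed the paper itself remarks that $q_0$ arises from the subprincipal symbol in the Duistermaat--Guillemin/Colin de Verdi\`ere normal form), and it generalizes more directly to other perturbative settings. The one place to be careful in your outline is step~(ii): you need $\psi_\hbar$ to be a quasimode of $\widehat G_\hbar$ itself (not merely of $\tfrac12+\hbar\widehat G_\hbar$) with $\ml O(\hbar)$ error, and this uses that in the normal form the leading part and $\widehat G_\hbar$ \emph{commute}, so the eigenvalue equation at level $\tfrac12$ really does pin down an approximate eigenvalue for $\widehat G_\hbar$ after projecting onto a cluster.
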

In the case of $\IS^n$, one has $q_0=0$. This result was recently extended to semiclassical harmonic oscillators by Arnaiz and Maci\`a~\cite{ArnaizMacia19}. 
In terms of observability, the above Theorem has the following consequence:
\begin{coro}[Improved observability on Zoll manifolds]\label{c:zoll} Let $(M,g)$ be a Zoll manifold, let $V:M\rightarrow \IR$ be a smooth function and let $\omega$ be 
an open subset of $M$. Suppose that, for every geodesic $\gamma$ of $(M,g)$, there exists some $t\in\IR$ such that
\footnote{Here, $\gamma$ is identified with its lift on $S^*M$ and $\pi_1:S^*M\rightarrow M$ is the 
restriction of $\pi$ to $S^*M$.}
$$\pi_1\circ \varphi^{t}_{q_0+\ml{R}_g(V)}(\gamma)\cap\omega\neq\emptyset.$$
Then, there exists some contant $c_{\omega,V,M,g}>0$ such 
that, for every $\psi_{\lambda}$ solution to\footnote{Note that $\lambda$ may be complex for finitely many eigenvalues.}
$$(-\Delta_g+2V)\psi_{\lambda}=\lambda^2\psi_{\lambda},\quad\|\psi_{\lambda}\|_{L^2}=1,$$
 one has 
 $$\int_{\omega}|\psi_\lambda(x)|^2d\operatorname{Vol}_g(x)\geq c_{\omega,V,M,g}>0.$$ 
\end{coro}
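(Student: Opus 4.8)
The plan is to run the same contradiction argument as in the proof of Theorem~\ref{t:observability}, but using Theorem~\ref{t:zoll} in place of the mere invariance of semiclassical measures under $\varphi^t$. Concretely, suppose the conclusion fails; then there is a sequence $(\psi_{\hbar_n})_{n\geq 1}$ of solutions to~\eqref{e:eigenvalue-schrodinger} (with the rescaling $\hbar_n=\lambda_n^{-1}$) such that $\int_\omega|\psi_{\hbar_n}|^2\,d\Vol_g\to 0$. By~\eqref{e:unique-continuation} the eigenvalues must go to infinity, so $\hbar_n\to 0^+$; extract a semiclassical measure $\mu\in\ml{M}(\Delta_g-2V)$. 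The remaining finitely many complex eigenvalues mentioned in the footnote do not matter since we only care about the high-energy limit. It then suffices to produce a nonnegative continuous $a$ supported in $\omega$ with $\mu(a)>0$, which contradicts $\int_\omega|\psi_{\hbar_n}|^2\,d\Vol_g\geq \la\psi_{\hbar_n},\Oph(a)\psi_{\hbar_n}\ra\to\mu(a)$ (here one uses that $\mu$ is a probability measure carried by $S^*M$ and that for $a=a(x)$ one recovers $\nu_\hbar$, exactly as in Theorem~\ref{t:semiclassical-measure}).

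The key new input is that $\mu$ is invariant not only under $\varphi^t$ but also under the Hamiltonian flow $\varphi^t_{q_0+\ml{R}_g(V)}$, by Theorem~\ref{t:zoll}. So first I would use compactness of $S^*M$ together with the hypothesis: for each $(x_0,\xi_0)\in S^*M$ there is $t_0\in\IR$ with $\pi_1\circ\varphi^{t_0}_{q_0+\ml{R}_g(V)}(x_0,\xi_0)\in\omega$; since $\omega$ is open and the flow $\varphi^t_{q_0+\ml{R}_g(V)}$ is smooth, this persists on an open ball $B((x_0,\xi_0);r_0)$ in $S^*M$. Cover $S^*M$ by finitely many such balls $B_i=B((x_i,\xi_i);r_i)$, $i=1,\dots,N$, with associated times $t_i$. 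Pick a continuous $a\geq 0$ on $M$, compactly supported in $\omega$, equal to $1$ on the (open, nonempty) set $\bigcup_i \pi_1\circ\varphi^{t_i}_{q_0+\ml{R}_g(V)}(B_i)$. Now iterating the invariance $\int a\circ\varphi^{t}_{q_0+\ml{R}_g(V)}\,d\mu=\int a\,d\mu$ over $t=t_i$ and summing, one gets, writing $\tilde a_i:=(a\circ\varphi^{t_i}_{q_0+\ml{R}_g(V)})\cdot\mathbf 1_{B_i}$,
$$
N\int_M a\,d\mu = \sum_{i=1}^N\int_{S^*M} a\circ\varphi^{t_i}_{q_0+\ml{R}_g(V)}\,d\mu \geq \int_{S^*M}\sup_i \tilde a_i\,d\mu \geq \mu\Big(\bigcup_i B_i\Big)=\mu(S^*M)=1,
$$
so $\mu(a)\geq 1/N>0$, which is the desired contradiction. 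A small technical point to address is measurability and the fact that $a\circ\varphi^{t_i}_{q_0+\ml{R}_g(V)}\geq \tilde a_i\geq 0$ pointwise, which is immediate from $a\geq 0$, $a\equiv 1$ on the target set, and $\varphi^{t_i}_{q_0+\ml{R}_g(V)}(B_i)$ landing in that target set.

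I expect the main (and really the only) obstacle to be a clean bookkeeping issue rather than a conceptual one: one must make sure the flow $\varphi^t_{q_0+\ml{R}_g(V)}$ is well-defined and smooth on all of $S^*M$ (it is, since $q_0$ and $\ml{R}_g(V)$ are smooth and $0$-homogeneous on $T^*M\setminus 0$, hence restrict to smooth functions on the compact manifold $S^*M$, and $\varphi^t_{q_0+\ml{R}_g(V)}$ preserves $S^*M$), and that the cover argument can be carried out entirely inside $S^*M$ so that the resulting $a$ depends only on the base point $x$. One should also note, as in Remark~\ref{r:quasimodes} and the footnote about $\ml{O}(\hbar^2)$ potential terms, that the whole argument goes through verbatim for quasimodes and for the Schrödinger operator $-\Delta_g+2V$ since Theorem~\ref{t:zoll} already accounts for $V$. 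Apart from that, the proof is a word-for-word adaptation of the proof of Theorem~\ref{t:observability}, replacing the geodesic flow by $\varphi^t_{q_0+\ml{R}_g(V)}$.
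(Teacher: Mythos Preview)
Your approach is exactly the one the paper has in mind: it explicitly says ``The proof is similar to the one given for Theorem~\ref{t:observability} and we omit it,'' and your contradiction/compactness argument using the extra invariance from Theorem~\ref{t:zoll} is precisely that adaptation.

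There is one small imprecision worth fixing. You write that the hypothesis gives, for each $(x_0,\xi_0)\in S^*M$, a time $t_0$ with $\pi_1\circ\varphi^{t_0}_{q_0+\ml{R}_g(V)}(x_0,\xi_0)\in\omega$. Strictly speaking the hypothesis is stated for the \emph{geodesic} $\gamma$ through $(x_0,\xi_0)$: it only guarantees that $\pi_1\circ\varphi^{t_0}_{q_0+\ml{R}_g(V)}(\gamma)$ meets $\omega$, i.e.\ that there exist $s_0,t_0\in\IR$ with $\pi_1\big(\varphi^{t_0}_{q_0+\ml{R}_g(V)}\circ\varphi^{s_0}(x_0,\xi_0)\big)\in\omega$ (here one uses that $q_0+\ml{R}_g(V)$ is $\varphi^s$-invariant, so the two flows commute and $\varphi^{t}_{q_0+\ml{R}_g(V)}(\gamma)$ is again a geodesic orbit). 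The fix is immediate and you already have all the ingredients: run your covering argument with the pair $(s_i,t_i)$ and use that $\mu$ is invariant under \emph{both} $\varphi^s$ and $\varphi^t_{q_0+\ml{R}_g(V)}$ (the former from Theorem~\ref{t:semiclassical-measure}, the latter from Theorem~\ref{t:zoll}) to replace $a\circ\varphi^{t_i}_{q_0+\ml{R}_g(V)}$ by $a\circ\varphi^{t_i}_{q_0+\ml{R}_g(V)}\circ\varphi^{s_i}$ in your inequality chain. Everything else goes through unchanged.
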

The proof is similar to the one given for Theorem~\ref{t:observability} and we omit it. Finally, in dimension $2$, 
another application of Theorem~\ref{t:zoll} is the following~\cite{MaciaRiviere16}:
\begin{theo}\label{t:schrodinger-zoll} Suppose that $n=2$ and that the assumptions of Theorem~\ref{t:zoll} 
are verified. Then, any $\nu$ in $\ml{N}(\Delta_g-2V)$ can be decomposed as follows:
 $$\nu = f \operatorname{Vol}_g + \alpha \nu_\mathrm{sing}$$
where $f\in L^1(M,d\operatorname{Vol}_g)$, $\alpha\in[0,1]$ and $\nu_\mathrm{sing}$
belongs to $\ml{N}_{\operatorname{Crit}}(V+q_0)$ which is by definition the closed convex hull 
of the set of probability measures $\delta_\gamma$, where 
$d_{x,\xi}(\ml{R}_g(V)+q_0)=0$ for some $(x,\xi)$ generating the closed geodesic $\gamma$. 
\end{theo}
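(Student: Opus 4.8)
The plan is to deduce Theorem~\ref{t:schrodinger-zoll} from Theorem~\ref{t:zoll} by combining the extra invariance of semiclassical measures under the flow $\varphi^t_{q_0+\ml{R}_g(V)}$ with a disintegration over the space of closed geodesics, which in dimension $n=2$ is particularly well behaved. First I would fix $\mu\in\ml{M}(\Delta_g-2V)$, so that its pushforward $\nu=(\pi_1)_*\mu$ is the generic element of $\ml{N}(\Delta_g-2V)$, and I would reduce to understanding $\mu$ itself. Since the geodesic flow on a Zoll manifold is periodic with period $l$ on $S^*M$, the quotient $S^*M/\varphi^t$ is (after the usual desingularization at the finitely many shorter geodesics) a smooth compact symplectic manifold $B$ of dimension $2n-2=2$, i.e. a surface; call $\pi_B:S^*M\to B$ the projection. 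Every $\varphi^t$-invariant measure on $S^*M$ disintegrates as $\mu=\int_B \mu_b\, d\bar\mu(b)$, where $\bar\mu=(\pi_B)_*\mu$ and $\mu_b$ is the Haar (arclength) measure on the circle $\pi_B^{-1}(b)$, i.e.\ $\mu_b=\delta_{\gamma_b}$. So the content is entirely in the measure $\bar\mu$ on the surface $B$.

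Next I would use the second invariance. The function $q_0+\ml{R}_g(V)$ is $\varphi^t$-invariant and $0$-homogeneous, hence descends to a smooth function $F:B\to\IR$, and its Hamiltonian flow $\varphi^t_{q_0+\ml{R}_g(V)}$ on $S^*M$ descends to the Hamiltonian flow of $F$ on $B$ with respect to the reduced symplectic form. The conclusion of Theorem~\ref{t:zoll}, namely $\int a\circ\varphi^t_{q_0+\ml{R}_g(V)}\,d\mu=\int a\,d\mu$ for all continuous $a$ on $S^*M$ and all $t$, translates (taking $a$ of the form $b\circ\pi_B$) into the statement that $\bar\mu$ is invariant under the Hamiltonian flow of $F$ on the surface $B$. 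Now I would invoke a soft structure-of-invariant-measures argument in dimension two: on a surface, a Hamiltonian vector field $X_F$ is non-vanishing precisely on $\{d F\neq 0\}$, which is an open set foliated by the (smooth, regular) level curves of $F$; on this open set any $X_F$-invariant measure has \emph{no atoms} and, restricted to each compact level set of $F$, is absolutely continuous with respect to arclength along that level curve. By the coarea formula one then gets that the restriction of $\bar\mu$ to $\{dF\neq 0\}$ is absolutely continuous with respect to the Liouville (area) measure on $B$. On the complementary closed set $\{dF=0\}$ there is no constraint, and any measure supported there is a limit of convex combinations of point masses at critical points of $F$.

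Putting the two disintegrations together: write $\bar\mu = \bar\mu\!\restriction_{\{dF\neq 0\}} + \bar\mu\!\restriction_{\{dF=0\}}=:\bar\mu_{ac}+\alpha\,\bar\mu_{sing}$ with $\alpha=\bar\mu(\{dF=0\})$ and $\bar\mu_{sing}$ a probability measure on $\{dF=0\}$. Integrating the Haar fibres against $\bar\mu_{ac}$, which is absolutely continuous on $B$, produces a $\varphi^t$-invariant measure on $S^*M$ that is absolutely continuous with respect to Liouville on each energy layer, hence whose pushforward to $M$ is $f\,\Vol_g$ with $f\in L^1$; integrating the Haar fibres against $\alpha\,\bar\mu_{sing}$ produces $\alpha$ times a probability measure lying in the closed convex hull of the $\delta_{\gamma}$ for $\gamma$ a closed geodesic with $dF=0$ along its projection to $B$, which is exactly $\ml{N}_{\operatorname{Crit}}(V+q_0)$. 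Pushing forward to $M$ gives the claimed decomposition $\nu=f\,\Vol_g+\alpha\,\nu_{\mathrm{sing}}$. The main obstacle I anticipate is the geometric bookkeeping at the finitely many exceptional closed geodesics with period a proper divisor of $l$: there $\pi_B$ is not a submersion and $B$ is an orbifold rather than a manifold, so the coarea/absolute-continuity step must be carried out on the orbifold (or on $S^*M$ directly via the $S^1$-action), and one must check that these exceptional fibres contribute either to the absolutely continuous part or to $\ml{N}_{\operatorname{Crit}}(V+q_0)$ and to nothing else; in dimension $2$ this is manageable but is the delicate point. A secondary technical point is justifying the absolute continuity of $X_F$-invariant measures transverse to the flow on a surface, which is standard (flowbox coordinates plus the fact that a measure invariant under a free $\IR$-flow on $\IR\times(a,b)$ is Lebesgue on the $\IR$-factor times an arbitrary transverse measure) but should be stated carefully.
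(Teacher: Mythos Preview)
The paper does not give a proof of this theorem; it simply refers to \cite[Cor.~4.4]{MaciaRiviere16}. Your overall architecture --- pass to the space of closed geodesics $B=S^*M/\varphi^t$, use Theorem~\ref{t:zoll} to obtain invariance of $\bar\mu=(\pi_B)_*\mu$ under the Hamiltonian flow of $F=q_0+\ml{R}_g(V)$ on the symplectic surface $B$, and split according to $\{dF=0\}$ versus $\{dF\neq 0\}$ --- is the natural one and matches the spirit of the cited reference.

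There is, however, a genuine gap. You assert that ``by the coarea formula'' the restriction of $\bar\mu$ to $\{dF\neq 0\}$ is absolutely continuous with respect to the symplectic area on $B$. This is false, and your own flowbox description contradicts it: an $X_F$--invariant measure on a surface is, in flowbox coordinates, Lebesgue along the flow times an \emph{arbitrary} transverse measure. Taking the transverse measure to be a Dirac mass at a regular value $c_0$ gives $\bar\mu=$ Haar on the single level circle $F^{-1}(c_0)$, which is singular with respect to area. Coarea relates the area form to Lebesgue on the value axis, not to an arbitrary transverse measure, so nothing forces $\bar\mu$ to be absolutely continuous on $B$ (and likewise $\mu$ need not be absolutely continuous with respect to Liouville on $S^*M$).

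What is true --- and what you must prove instead --- is absolute continuity of the pushforward to $M$. On $\{dF\neq 0\}\subset S^*M$ the commuting flows $\varphi^t$ and $\varphi^t_F$ have $2$--torus orbits, and $\mu$ is an integral of Haar measures on these tori against a (possibly singular) $1$--dimensional transverse measure. For each torus $T$ one checks that $\pi_1|_T:T\to M$ is a local diffeomorphism off a null set: $d\pi_1(X_{H_0})=\xi^\sharp$ and $d\pi_1(X_F)=(\partial_\xi F)^\sharp$ are $g$--orthogonal by $0$--homogeneity of $F$, so the Jacobian vanishes exactly where $\partial_\xi F=0$, and this cannot hold on all of $T$ (otherwise the geodesics parametrised by $\pi_B(T)$ would all project to the same curve in $M$, contradicting that $\pi_B(T)$ is a nontrivial circle in $B$). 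Hence $(\pi_1)_*(\text{Haar}_T)\in L^1(M)$ for every such torus, and integrating against the transverse measure gives the $L^1$ density $f$. The orbifold points you flag are harmless here; the substantive issue is replacing the incorrect ``absolutely continuous on $B$'' step by this projection argument to $M$.
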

This Theorem should be read in the light of the case where $V\equiv 0$ and where $\IS^2$ is endowed with its 
canonical metric $g_0$. In that case, we saw that eigenfunctions can concentrate on any closed geodesics and that 
$\ml{N}(\IS^2)=\ml{N}(\Delta_{g_0})$. Here, this result states that 
quantum limits have in fact more regularity when $V$ is nonconstant. We shall not prove this result and we refer 
to~\cite[Cor.~4.4]{MaciaRiviere16} for details.

\subsection{Examples} Before giving the proof of Theorem~\ref{t:zoll}, let us give some examples of Zoll surfaces 
and of smooth potential where these results give nontrivial conclusion.

\subsubsection{Surfaces of revolution with $V\equiv 0$}
Theorem~\ref{t:zoll} shows that Laplace eigenfunctions cannot concentrate on closed geodesics that are not critical points 
of the function $q_0$. In particular, one needs $q_0$ to be nonconstant on $S^*M$ for the Theorem (and its Corollary) 
not being empty when $V\equiv 0$. To that aim, we need to explain 
where the function $q_0$ comes from. As shown in~\cite{MaciaRiviere16}, it can be expressed in terms of the subprincipal symbol 
that appears when the Laplace-Beltrami is put under a quantum normal form as in the works of 
Duistermaat-Guillemin~\cite{DuistermaatGuillemin75} and Colin de Verdi\`ere~\cite{ColindeVerdiere79}. The 
dependence of $q_0$ in terms of the metric is a little bit intricate. It was computed explicitely by 
Zelditch~\cite{Zelditch96, Zelditch97} in the case of Zoll metrics on $\IS^2$ all of whose geodesics 
are of length $2\pi.$ For instance, examples of such metrics can be obtained in the following manner. Let $\sigma$ be a 
smooth odd function on $[-1,1]$ satisfying $\sigma(1)=0$ and such that
$$g_{\sigma}(\theta,\phi)=(1+\sigma(\cos\theta))^2d\theta^2+\sin^2\theta d\phi^2$$
is a smooth metric on $\IS^2$. In~\cite{Besse78}, these manifolds are referred as Tannery surfaces and 
are shown to verify the Zoll property. 
In~\cite{MaciaRiviere16}, we implemented this metric in the formulas obtained by Zelditch and, together with 
Theorem~\ref{t:zoll}, we were able to deduce that
\begin{prop} Suppose that $\sigma'(0)\neq 0$. Then, $q_0$ is nonconstant on $S^*\IS^2$ and one has
$$\ml{M}(\Delta_{g_{\sigma}})\neq\ml{M}(\varphi^t).$$ 
\end{prop}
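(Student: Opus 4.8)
The plan is to combine Zelditch's explicit formula for the quantum normal form invariant $q_0$ with the concentration obstruction of Theorem~\ref{t:zoll} specialized to $V\equiv 0$. The first step is purely structural. Since $q_0$ is $0$-homogeneous and $\varphi^t$-invariant, it descends on $S^*\IS^2$ to a function on the space of (oriented) geodesics; and since $g_\sigma$ is invariant under rotations about the axis and $q_0$ is canonically attached to $(\IS^2,g_\sigma)$, this function is rotation-invariant. For a surface of revolution the geodesics are labelled (up to rotation and orientation) by the Clairaut integral $c\in[-1,1]$, with the meridians and the equator as the two extreme cases, so $q_0$ reads $q_0=Q(c)$ for a single smooth, even function $Q$. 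It therefore suffices to prove that $Q$ is nonconstant.

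The second step is to make $Q$ explicit. One invokes Zelditch's computation~\cite{Zelditch96,Zelditch97} of the subprincipal symbol that appears when $\sqrt{-\Delta_{g_\sigma}}$ is conjugated to a function of a model operator, in the spirit of~\cite{DuistermaatGuillemin75,ColindeVerdiere79}, valid for Zoll metrics on $\IS^2$ all of whose geodesics have length $2\pi$ (which the Tannery surfaces $g_\sigma$ are, by~\cite{Besse78}): $q_0$ is given by an integral of curvature-type data of $g_\sigma$ along the corresponding closed geodesic. Plugging the profile $g_\sigma(\theta,\phi)=(1+\sigma(\cos\theta))^2d\theta^2+\sin^2\theta\, d\phi^2$ into this formula yields an explicit expression for $Q(c)$ in terms of $\sigma$ and its derivatives. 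The heart of the argument — and the step I expect to be the main obstacle — is then a bookkeeping computation, carried out in~\cite{MaciaRiviere16}, showing that this expression is constant (using oddness of $\sigma$) precisely when $\sigma'(0)=0$; concretely, one extracts that a suitable second-order quantity built from $Q$, or the difference between the values of $Q$ at the equator and along the meridians, equals a nonzero constant times $\sigma'(0)$ plus controllable higher-order corrections. One must be careful here about the behaviour at $c=\pm1$, where the labelling of geodesics degenerates at the poles, and about the normalization of the period $l=2\pi$. The conclusion of this step is that $\sigma'(0)\neq0$ forces $q_0$ to be nonconstant on $S^*\IS^2$, which is the first assertion of the proposition.

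It remains to deduce the strict inclusion. Since $S^*\IS^2$ is connected and $q_0$ is smooth, nonconstancy of $q_0$ provides a point $(x_0,\xi_0)\in S^*\IS^2$ with $d_{x_0,\xi_0}q_0\neq 0$. Applying Theorem~\ref{t:zoll} with $V\equiv 0$, so that $q_0+\ml{R}_g(V)=q_0$, gives that every $\mu\in\ml{M}(\Delta_{g_\sigma})$ satisfies $\mu(\gamma_0)=0$, where $\gamma_0=\{\varphi^t(x_0,\xi_0):0\leq t\leq l\}$ is the closed geodesic through $(x_0,\xi_0)$. On the other hand the probability measure $\delta_{\gamma_0}$ carried by $\gamma_0$ belongs to $\ml{M}(\varphi^t)$ and has $\delta_{\gamma_0}(\gamma_0)=1$. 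Hence $\delta_{\gamma_0}\in\ml{M}(\varphi^t)\setminus\ml{M}(\Delta_{g_\sigma})$, so that $\ml{M}(\Delta_{g_\sigma})\subsetneq\ml{M}(\varphi^t)$, which is the second assertion.
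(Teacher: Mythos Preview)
Your proposal is correct and follows precisely the route the paper indicates: the paper does not give a self-contained proof here but explicitly says that in~\cite{MaciaRiviere16} the Tannery metric $g_\sigma$ is plugged into Zelditch's formula for $q_0$ and then Theorem~\ref{t:zoll} is invoked, which is exactly your two-step argument. Your deduction of the strict inclusion from nonconstancy of $q_0$ via the second part of Theorem~\ref{t:zoll} (with $V\equiv 0$) is the intended one.
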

This result contrasts with~\eqref{e:scm-sphere}. In particular, it shows that the fact that all invariant 
measures are semiclassical measures for $\Delta_g$ is not due to the periodicity of the flow. It is rather 
a consequence of the specific structure of the spectrum of the round sphere, e.g. high multiplicity of the eigenvalues.

\subsubsection{Schr\"odinger operators on the round sphere}

Let us now consider the sphere $\IS^2$ endowed with its canonical metric $g_0$. In that case, $q_0$ is constant 
according to~\eqref{e:scm-sphere} and we can discuss what happens depending on the properties of $V$. Denote by 
$G(\IS^2)\simeq S^*\IS^2/\IS^1$ the space of closed geodesics on $\IS^2$, which is 
a smooth symplectic manifold~\cite{Besse78}. This space $G(\IS^2)$ can be identified with $\IS^2$~\cite[p.~54]{Besse78}. 
This can be easily seen as follows. Take an oriented closed geodesic $\gamma$. It belongs to an unique 
$2$-plane in $\IR^3$ which can be oriented via the orientation of the geodesic, and $\gamma$ can be identified with the 
unit vector in $\IS^2$ which is directly orthogonal to this oriented $2$-plane. Hence, the Radon transform 
$\ml{R}_{g_0}(V)$ can be identified with a function of $G(\IS^2)\simeq\IS^2,$ and the flow $\varphi_{\ml{R}_{g_0}(V)}^t$ 
can be viewed as an Hamiltonian flow on the symplectic manifold $G(\IS^2)$. In that framework, one finds that, if 
$\gamma\in G(\IS^2)$ verifies $d_{\gamma}\ml{R}_{g_0}(V)\neq 0$, then, for every $\nu\in\ml{N}(\Delta_{g_0}-2V)$, 
$\nu(\gamma)=0$. In other words, eigenfunctions of Schr\"odinger operators can only concentrate on closed 
geodesics which are critical points of $\ml{R}_{g_0}(V)$. Recall from the works of Guillemin that $\ml{R}_{g_0}:V\in\ml{C}^{\infty}_{\text{even}}(\IS^2)\mapsto \ml{R}_{g_0}(V)\in\ml{C}^{\infty}_{\text{even}}(\IS^2)$ is an 
isomorphism~\cite{Guillemin76}.  Note also that, if $V$ is an odd function, then its Radon 
transform vanishes but we can still formulate a statement by replacing $\ml{R}_{g_0}(V)$ with some appropriate 
function of $V$~\cite{MaciaRiviere17}.

Corollary~\ref{c:zoll} can also be rewritten as follows. Suppose that
$$K_{\omega,V}:=\left\{\gamma\in G(\IS^d):\forall t\in\IR,\ \varphi_{\ml{R}_{g_0}(V)}^t(\gamma)\cap\omega=\emptyset\right\}=\emptyset.$$
Then, there exists some contant $c_{\omega,V}>0$ such 
that, for every $\psi_{\lambda}$ solution to
$$(-\Delta_{g_0}+2V)\psi_{\lambda}=\lambda^2\psi_{\lambda},\quad\|\psi_{\lambda}\|_{L^2}=1,$$
 one has 
 $$\int_{\omega}|\psi_\lambda(x)|^2d\operatorname{Vol}_g(x)\geq c_{\omega,V}>0.$$
Let us explain how to construct $\omega$ and $V$ such that $K_{\omega,V}=\emptyset$ while $K_{\overline{\omega}}\neq\emptyset$.
Write $\IS^2:=\left\{(x,y,z):x^2+y^2+z^2=1\right\}$ 
and set$$\omega=\left\{(x,y,z):x^2+y^2+z^2=1\ \text{and}\ z>\eps\right\},$$ with $\eps>0$ small enough. In particular, there are infinitely many geodesics 
which belong to $K_{\overline{\omega}}\subset K_{\omega}$, i.e. the geometric control condition $K_\omega=\emptyset$ fails. In the space of geodesics $G(\IS^2)\simeq\IS^2$, the geodesics 
belonging to $K_{\overline{\omega}}$ correspond to a small neighborhood of the two poles $(0,0,-1)$ and $(0,0,1)$ of $\IS^2$. Hence, if one 
chooses $V\in\ml{C}^{\infty}_{\text{even}}(\IS^2)$ in such a way that $\ml{R}_{g_0}(V)$ has no critical points in a slightly bigger 
neighborhood\footnote{This is possible thanks to Guillemin's result.}, then one finds that $K_{\omega,V}=\emptyset$. Indeed, in that case, the 
Hamiltonian flow $\varphi_{\ml{R}_{g_0}(V)}^t$ has no critical point inside $K_{\omega}$. Thus, it transports the uncontrolled geodesics of $K_{\omega}$ 
to geodesics which are geometrically controlled by the geodesic flow. Note that the condition of having no critical points inside 
$K_{\omega}$ is a priori nongeneric among smooth functions.

Finally, in order to illustrate Theorem~\ref{t:schrodinger-zoll}, we note that $\ml{R}_{g_0}(V)$ 
can always be identified with a smooth function on the real projective plane $\IR P^2$. Hence, for a generic 
choice of potential $V$, the set $\ml{N}_{\operatorname{Crit}}(V)$ is the convex hull of finitely many measures 
carried by closed geodesics depending only on $V$.

\subsection{Proof of Theorem~\ref{t:zoll}}

For the sake of simplicity, we will only give the proof in the case of $\IS^d$ endowed with its canonical metric. 
The argument is based on the quantum averaging method that goes back to the works of Weinstein on spectral asymptotics 
of Schr\"odinger operators on the sphere~\cite{Weinstein77} -- see also~\cite{DuistermaatGuillemin75, ColindeVerdiere79} for 
more general geometric frameworks.

\subsubsection{Weinstein's averaging method} First, we need to fix some conventions and to 
collect some well-known facts on the spectral properties of the Laplace-Beltrami operator on $\IS^d$. 
Recall that the eigenvalues of $-\Delta_{g_0}$ are of the form 
$$E_k=\left(k+\frac{d-1}{2}\right)^2-\frac{(d-1)^2}{4},$$
where $k$ runs over the set of nonnegative integer. In particular, we can write
\begin{equation}\label{e:decompose-Laplace}
 -\Delta_{g_0}= A^2-\left(\frac{d-1}{2}\right)^2,
\end{equation}
where $A$ is a selfadjoint pseudodifferential operator of order $1$ with principal symbol $\|\xi\|_x$ and satisfying
\begin{equation}\label{e:period-quantum}
e^{2i\pi A}=e^{i\pi(d-1)}\text{Id}.
\end{equation}
Given $a$ in $\ml{C}^{\infty}_c(T^*\IS^d\setminus 0)$, we then set, by analogy with the Radon transfom of $a$,
$$\ml{R}_{\text{qu}}(\Oph(a)):=\frac{1}{2\pi}\int_0^{2\pi}e^{-is A}\Oph(a)e^{isA}ds.$$
An important observation which is due to Weinstein~\cite{Weinstein77} is that the following exact commutation relation holds:
$$\left[\ml{R}_{\text{qu}}(\Oph(a)),A\right]=0.$$
In particular, from~\eqref{e:decompose-Laplace}, one has
\begin{equation}\label{e:commute}
\left[\ml{R}_{\text{qu}}(\Oph(a)),\Delta_{g_0}\right]=0.
\end{equation}
Finally, a variant\footnote{The principal symbol of the Hamiltonian is now $\|\xi\|$ instead of $\|\xi\|^2/2$.} 
of the Egorov Theorem~\eqref{e:egorov} allows to relate the operator $\ml{R}_{\text{qu}}(\Oph(a))$ to the classical Radon transform as follows:
\begin{equation}\label{e:egorov-sphere}
 \ml{R}_{\text{qu}}(\Oph(a))=\Oph(\ml{R}_{g_0}(a))+\hbar R,
\end{equation}
where $R$ is a pseudodifferential operator in $\Psi^{-\infty}(\IS^d)$.

\subsubsection{Extra invariance properties on $\IS^d$} Let us now apply these properties to derive some invariance properties of the elements in 
$\ml{M}(\Delta_{g_0}-2V)$. We fix $\mu$ in $\ml{M}(\Delta_{g_0}-2V)$ which is generated by a sequence $(\psi_{\hbar})_{\hbar\rightarrow 0^+}$ and 
$a$ in $\ml{C}^{\infty}_c(T^*\IS^d\setminus 0)$. We use the eigenvalue equation to write
$$\left\la \psi_{\hbar},\left[-\frac{1}{2}\hbar^2\Delta_{g_0}+\hbar^2V,\ml{R}_{\text{qu}}(\Oph(a))\right]\psi_{\hbar}\right\ra=0.$$
According to~\eqref{e:commute}, this implies that
$$\left\la \psi_{\hbar},\left[V,\ml{R}_{\text{qu}}(\Oph(a))\right]\psi_{\hbar}\right\ra=0.$$
Combining~\eqref{e:egorov-sphere} with the commutation formula for pseudodifferential operators~\eqref{e:commute} 
and with the Calder\'on-Vaillancourt Theorem~\eqref{e:calderon}, we get
$$\frac{\hbar}{i}\left\la \psi_{\hbar},\Oph\left(\{V,\ml{R}_{g_0}(a)\}\right)\psi_{\hbar}\right\ra=\ml{O}(\hbar^2).$$
Hence, after letting $\hbar$ go to $0$, one finds that
$$\mu(\{V,\ml{R}_{g_0}(a)\})=0.$$
Applying the invariance by the geodesic flow twice, one finally has 
\begin{equation}\label{e:main-result}\mu(\{\ml{R}_{g_0}(V),a\})=\mu(\{\ml{R}_{g_0}(V),\ml{R}_{g_0}(a)\})=0.\end{equation}
This is valid for any smooth test function $a$ in $\ml{C}^{\infty}_c(T^*\IS^d\setminus 0)$. Thus, we have just proved that 
any $\mu$ in $\ml{M}(\Delta_{g_0}-2V)$ 
is invariant by the Hamiltonian flow $\varphi_{\ml{R}_{g_0}(V)}^t$ of $\ml{R}_{g_0}(V)$ which is well defined on 
$S^*\IS^d\subset T^*\IS^d\setminus 0$.


\subsection{Equidistribution on the round sphere}

We saw in the previous sections that $\ml{M}(\Delta_{g})=\ml{M}(\varphi^t)$ when $(M,g)=(\IS^2,g_0)$ is 
the $2$-sphere endowed with its canonical metric. Equivalently, there exists sequences of eigenfunctions that 
concentrate along closed geodesic. In this last paragraph related to Zoll manifolds, we would like to see that 
one can still obtain equidistribution as in Theorem~\ref{t:QE} provided we make appropriate assumptions on 
the sequence of eigenfunctions we consider.

\subsubsection{Equidistribution by probabilistic averaging}
In relation with the random waves model of Berry, 
one can in fact show that typical orthonormal basis of Laplace eigenfunctions are equidistributed on the $2$-sphere 
following the works of Zelditch~\cite{Zelditch92}, Van der Kam~\cite{Vanderkam97} and Burq-Lebeau~\cite{BurqLebeau13}. 
To that aim, recall that 
$$L^2(\IS^2)=\bigoplus_{l=0}^{+\infty}\text{Ker}\left(\Delta_{g_0}+l(l+1)\right),$$
and that each eigenspace $E_l:=\text{Ker}\left(\Delta_{g_0}+l(l+1)\right)$ is of 
dimension $2l+1$. We denote by $S_l$ the unit sphere inside $E_l$. The set $\mathcal{B}_l$ of orthonormal basis inside $E_l$ 
can be identified with the unitary group $U(2l+1)$ which is endowed with a natural probability measure 
given by the Haar measure that we denote here by $\mathbb{P}_l$. Given an orthonormal basis 
$b_l=(e_{-l},\ldots, e_0,\ldots,e_{l})$ in $\mathcal{B}_l$ and $-l\leq m\leq l$, we define the map 
$T_m:b_l\mapsto e_m$. These maps induce measures on $S_l$ that can be identified with the uniform 
measure $\tilde{\mathbb{P}}_l$ on $S_l$. The next Theorem shows 
that a typical orthonormal basis on $S_l$ is equidistributed in the semiclassical limit 
$l\rightarrow+\infty$~\cite{Zelditch92, Vanderkam97, BurqLebeau13}:
\begin{theo}\label{t:proba} Let $a$ be a continuous function on $\mathbb{S}^2$. Then, there exists some constant $C_a>0$ 
such that, for every $l\geq 0$,
$$\mathbb{P}_l\left(\left\{b_l\in \ml{B}_l:\ \sup_{-l\leq m\leq l}
\left|\int_{\IS^2}a|e_m|^2d\operatorname{Vol}_{g_0}-
\int_{\IS^2}a\frac{d\operatorname{Vol}_{g_0}}{\operatorname{Vol}_{g_0}(\IS^2)}\right|\geq 
l^{-\frac{1}{8}}
\right\}\right)\leq C_a e^{-C_a\sqrt{l}}.$$
\end{theo}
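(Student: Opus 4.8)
The plan is to realize the quantity $\int_{\IS^2} a|e_m|^2 d\operatorname{Vol}_{g_0}$ as a random variable on the unitary group $U(2l+1)$ and to control its deviations by a concentration-of-measure argument. First I would fix an orthonormal basis $(Y_{-l},\dots,Y_l)$ of the eigenspace $E_l$, so that a generic element $b_l\in\ml{B}_l$ is obtained by applying a Haar-random $U\in U(2l+1)$: the $m$-th vector is $e_m=\sum_{k} U_{km}Y_k$. Then $\int_{\IS^2}a|e_m|^2 = \sum_{k,k'}\overline{U_{km}}U_{k'm}\,\langle a Y_{k'},Y_k\rangle = \langle A_l \,u_m,u_m\rangle_{\IC^{2l+1}}$, where $u_m$ is the $m$-th column of $U$ (a uniform point on the unit sphere of $\IC^{2l+1}$) and $A_l$ is the matrix of the multiplication operator $a$ compressed to $E_l$. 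The key classical input is the microlocal Weyl law (Theorem~\ref{t:weyl}), or rather its easy finite-dimensional shadow on the sphere: $\frac{1}{2l+1}\operatorname{tr}(A_l)=\frac{1}{2l+1}\sum_k\langle aY_k,Y_k\rangle\to \frac{1}{\operatorname{Vol}_{g_0}(\IS^2)}\int_{\IS^2}a\,d\operatorname{Vol}_{g_0}$, since $\sum_k|Y_k(x)|^2$ is constant on $\IS^2$ by rotational invariance of the eigenspace. Thus the expected value $\mathbb{E}[\langle A_l u_m,u_m\rangle]=\frac{1}{2l+1}\operatorname{tr}(A_l)$ converges to exactly the target mean.

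Next I would establish concentration. The map $u\mapsto \langle A_l u,u\rangle$ on the unit sphere $S^{2(2l+1)-1}\subset\IC^{2l+1}$ is Lipschitz with constant $O(\|A_l\|)=O(\|a\|_\infty)$, uniformly in $l$. By Lévy's concentration inequality on the sphere (equivalently, the log-Sobolev/Gaussian concentration transported to the sphere), for each fixed $m$,
\begin{equation*}
\tilde{\mathbb{P}}_l\left(\left|\langle A_l u_m,u_m\rangle-\mathbb{E}\langle A_l u_m,u_m\rangle\right|\geq t\right)\leq C e^{-c (2l+1) t^2/\|a\|_\infty^2}.
\end{equation*}
Taking $t\asymp l^{-1/8}$ makes the exponent of order $l^{1-1/4}=l^{3/4}$, which is more than enough to beat $\sqrt l$. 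A union bound over the $2l+1$ choices of $m$ (each column $u_m$ is individually Haar-distributed on the sphere, even though they are not independent) costs only a polynomial factor, absorbed into the exponential. Combining this with the fact, from the Weyl-law step, that $|\mathbb{E}\langle A_l u_m,u_m\rangle - \frac{1}{\operatorname{Vol}_{g_0}(\IS^2)}\int a\,d\operatorname{Vol}_{g_0}|=o(1)$ — indeed one needs this to be $\ll l^{-1/8}$, which follows from the standard rate $O(1/l)$ in the sphere Weyl law for fixed smooth (or merely continuous, by approximation) $a$ — gives the stated bound with a slightly worse constant that can be folded into $C_a$.

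The main obstacle I expect is twofold. First, the union bound over $m$ requires the marginal law of each column $u_m$ of a Haar unitary to be exactly uniform on the complex unit sphere; this is standard but must be invoked cleanly, and one should not be tempted to claim independence of the columns. Second, and more delicate, is the quantitative Weyl-law estimate: one must know that $\frac{1}{2l+1}\operatorname{tr}(A_l)$ approaches its limit at a rate $o(l^{-1/8})$, not merely $o(1)$. For $a$ smooth this is immediate from the $O(l^{n-1})=O(l)$ remainder in the localized Weyl law on $\IS^2$ specialized to a single eigenspace (giving rate $O(1/l)$); for merely continuous $a$ one approximates uniformly by smooth functions and notes that the concentration radius $l^{-1/8}$ gives room to absorb the approximation error, at the cost of enlarging $C_a$. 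All the semiclassical machinery needed (Gårding, the microlocal Weyl law, Egorov on $\IS^d$) is already available in the excerpt; the probabilistic half is a self-contained application of measure concentration on the unitary group.
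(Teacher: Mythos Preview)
Your approach is essentially the same as the paper's: reduce to a single column via a union bound, realize the functional as a Lipschitz map on the unit sphere of $E_l$, and apply L\'evy's concentration inequality. The paper phrases concentration around the median and then compares the median to the mean, while you concentrate directly around the mean; these are equivalent formulations.

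There is one point where you make your life unnecessarily hard. You correctly observe that $\sum_k |Y_k(x)|^2$ is constant on $\IS^2$ by rotational invariance, but then write $\frac{1}{2l+1}\operatorname{tr}(A_l)\to \frac{1}{\operatorname{Vol}_{g_0}(\IS^2)}\int_{\IS^2} a$ as a limit and worry about a quantitative rate $o(l^{-1/8})$. In fact your own argument gives an \emph{exact equality} for every $l$: since $\sum_k |Y_k(x)|^2 = K_l(x,x) = \frac{2l+1}{\operatorname{Vol}_{g_0}(\IS^2)}$, one has
\[
\frac{1}{2l+1}\operatorname{tr}(A_l)=\frac{1}{2l+1}\int_{\IS^2}a(x)\sum_k|Y_k(x)|^2\,d\operatorname{Vol}_{g_0}(x)=\frac{1}{\operatorname{Vol}_{g_0}(\IS^2)}\int_{\IS^2}a\,d\operatorname{Vol}_{g_0}.
\]
So the ``second, more delicate'' obstacle you flag --- a quantitative Weyl law on $\IS^2$ --- simply does not arise; the expectation hits the target on the nose, and no approximation of continuous $a$ by smooth functions is needed for this step. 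This is exactly how the paper handles it.
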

Hence, for a generic orthonormal basis inside $\ml{B}_l$, eigenfunctions will tend to be equidistributed as 
$l\rightarrow+\infty$. More precisely, if we endow the set of Laplace orthonormal basis with the product measure 
$\mathbb{P}=\prod_{l\geq 0}\mathbb{P}_l$, then this Theorem implies that, for $\mathbb{P}$-a.e. orthonormal basis of Laplace 
eigenfunctions $(\psi_j)_{j\geq 0}$, one has 
$$\forall a\in\mathcal{C}^0(\IS^2),\ \lim_{j\rightarrow+\infty}\int_{\IS^2}a|\psi_j|^2d
\operatorname{Vol}_{g_0}=
\int_{\IS^2}a\frac{d\operatorname{Vol}_{g_0}}{\operatorname{Vol}_{g_0}(\IS^2)}.$$
In particular, compared with the Quantum Ergodicity results of Section~\ref{s:observability} (e.g. Theorem~\ref{t:QE}), 
one does not need to extract a density $1$ subsequence to obtain equidistribution of Laplace eigenfunctions on $\IS^2$ 
(for a generic choice of orthonormal basis). We also emphasize that no ergodic behaviour is avalaible on $\IS^2$. 
Instead of ergodicity, the averaging property comes from a probabilistic phenomenon called concentration of 
measure~\cite{Ledoux01}. 

\begin{rema} For the sake of simplicity, we stated the Theorem in the case of observables depending only 
on the configuration space but the argument can be extended to more general observables using the notion of 
Wigner distributions as we did before -- see~\cite{Zelditch92, BurqLebeau13} for more details. Note that, 
in some sense, this result gives a kind of negative answer on $\IS^2$ to {\v{S}}nirel'man's 
question raised at the end of Section~\ref{s:observability}.
\end{rema}

\begin{proof} First, without loss of generality, we can suppose that $a$ is real valued and that $\int_{\IS^2}ad\operatorname{Vol}_{g_0}=0$. We have 
then
$$\mathbb{P}_l\left(\left\{b_l\in \ml{B}_l:\ \sup_{-l\leq m\leq l}
\left|\int_{\IS^2}a|e_m|^2d\operatorname{Vol}_{g_0}\right|\geq 
l^{-\frac{1}{8}}
\right\}\right)$$
$$\hspace{4cm}\leq (2l+1)\tilde{\mathbb{P}}_l\left(\left\{e\in S_l:\ \left|\int_{\IS^2}a|e|^2
d\operatorname{Vol}_{g_0}\right|\geq 
l^{-\frac{1}{8}}
\right\}\right)$$
Hence, it is sufficient to show that there exists some constant $C_a>0$ 
such that, for every $l\geq 1$,
\begin{equation}\label{e:key-proba}\tilde{\mathbb{P}}_l\left(\left\{e\in S_l:\ \left|\int_{\IS^2}a|e|^2d\operatorname{Vol}_{g_0}\right|\geq 
l^{-\frac{1}{8}}
\right\}\right)\leq C_a e^{-C_al^{\frac{3}{4}}}.\end{equation}
To that aim, we introduce the random variable 
$$Y_l:e\in S_l\mapsto\int_{\IS^2}a|e|^2d\operatorname{Vol}_{g_0}\in\mathbb{R},$$
which is a Lipschitz map with constant $2\|a\|_{\infty}$. We can now use the main probabilistic argument 
needed for our proof. This is a property of the Lebesgue measure on spheres of large dimension 
which is referred as concentration of the measure~\cite[Ch.~1]{Ledoux01}. This property~\cite[Eq.~(1.13), Th.~(2.3)]{Ledoux01} implies that, for every $r>0$
$$\tilde{\mathbb{P}}_l\left(\left\{e\in S_l:\ \left|Y_l-m_l\right|\geq 
r
\right\}\right)\leq 2e^{-\frac{2l r^2}{(2\|a\|_{\infty})^2}},$$
where $m_l$ is the median of the random variable $Y_l$. Letting $3r=l^{-\frac{1}{8}}$ yields an 
upper of the form $2e^{-\frac{l^{\frac{3}{4}}}{18\|a\|_{\infty}^2}}$. Using this inequality, we find that
$$\left|\int_{S_l}Y_ld\tilde{\mathbb{P}}_l-m_l\right|\leq 
\frac{l^{-\frac{1}{8}}}{3}+4\|a\|_{\infty}e^{-\frac{l^{\frac{3}{4}}}{18\|a\|_{\infty}^2}}.$$
Hence, there exists some constant $C>0$ (depending only on $a$) such that
$$\tilde{\mathbb{P}}_l\left(\left\{e\in S_l:\ \left|Y_l-\int_{S_l}Y_ld\tilde{\mathbb{P}}_l\right|\geq 
l^{-\frac{1}{8}}
\right\}\right)\leq Ce^{-Cl^{\frac{3}{4}}}.$$
It now remains to verify that the expectation of $Y_l$ with respect to $\tilde{\mathbb{P}}_l$ is equal to $0$ and we 
will be done. To see this, we recall that the dimension of $E_l$ is 
equal to $2l+1$ and we write
\begin{eqnarray*}\int_{S_l}Y_ld\tilde{\mathbb{P}}_l &=&
\int_{\IS^2}a(x)\left(\int_{S_l}|e(x)|^2d\tilde{\mathbb{P}}_l(e)\right)
d\operatorname{Vol}_{g_0}(x)\\
&=&\int_{\IS^2}a(x)\left(\frac{1}{2l+1}\int_{\ml{B}_l}\sum_{e\in b}|e(x)|^2d\mathbb{P}_l(b)\right)
d\operatorname{Vol}_{g_0}(x). \end{eqnarray*}
Observe now that $K_l(x,y):=\sum_{e\in b}e(x)\overline{e(y)}$ is the kernel of the orthogonal projector on the 
space $E_l$, hence independent of the choice of $b$. Invariance by rotation also tells us that 
$K_l(Rx,Ry)=K_l(x,y)$ for any element $R\in SO(3)$. Thus, $K_l(x,x)$ is constant and one finds that 
this constant is equal to $\frac{2l+1}{\text{Vol}_{g_0}(\IS^2)}$ as $\text{dim}(E_l)=2l+1$. From this, we 
can deduce that $\int_{S_l}Y_ld\tilde{\mathbb{P}}_l=0$ as we supposed that the average of $a$ is equal 
to $0$. This concludes the proof of Theorem~\ref{t:proba}. 
\end{proof}

\subsubsection{Equidistribution by Hecke averaging}
There is still another way to obtain equidistribution of Laplace eigenfunctions on the canonical $2$-sphere $\IS^2$ 
by introducing operators commuting with the Laplacian. This approach was initiated by B\"ocherer, Sarnak and 
Schulze-Pillot~\cite{BochererSarnakSchulze03} who conjectured that joint eigenfunctions of the Laplace operator on 
the $2$-sphere and of certain Hecke type operators must be equidistributed as $\hbar\rightarrow 0^+$. Hecke 
type operators on the $2$-sphere are defined as follows. For $N\geq 2$, consider a finite set of rotations 
$R_1,\ldots, R_N$ in $SO(3)$ and define the operator
$$T_N\psi(x)=\frac{1}{\sqrt{2N-1}}\sum_{j=1}^{N}(\psi(R_jx)+\psi(R_j^{-1}x)).$$
This operator commutes with $\Delta_{g_0}$. Hence, it makes sense to consider joint eigenfunctions of these 
two operators. When the rotations correspond to certain special elements in an order of a quaternion division algebra, 
then B\"ocherer, 
Sarnak and Schulze-Pillot~\cite{BochererSarnakSchulze03} conjectured that joint eigenfunctions of $T_N$ and $\Delta_{g_0}$ 
must be equidistributed as in Theorem~\ref{t:QE} without extracting a subsequence. This conjecture remains open. In a 
recent work~\cite{BrooksLemassonLindenstrauss16}, Brooks, Le Masson and Lindenstrauss showed that, 
as soon as $(R_1,\ldots,R_N)$ generate a free subgroup, joint eigenfunctions of $T_N$ and $\Delta_{g_0}$ verify the property of 
Theorem~\ref{t:QE-torus}, i.e. they equidistribute on $\IS^2$ provided that we extract a density $1$ subsequence.

\section{The case of the torus}\label{s:torus}

We move on to a case where the geodesic flow is not periodic but where the flow is still quite regular, namely the case of 
flat tori. In that case, we completely described the dynamics of the geodesic flow in paragraph~\ref{sss:torus}. Geodesic flows on 
flat tori are in some sense the simplest examples of nondegenerate completely integrable systems. In fact, $H_0$ can be 
written as $\frac{1}{2}\sum_{j=1}^nf_j^2$ where $f_j(x,\xi)=\xi_j$ and one has $\{f_i,f_j\}=0$ for 
every $(i,j)$ and $df_1\wedge df_2\ldots\wedge df_n\neq 0$. This exactly means that $F=(f_1,f_2,\ldots,f_n)$ is a nondegenerate 
completely integrable system on $T^*\IT^n\setminus 0$. The geodesic flow preserves the level sets of the moment map $F$, and we 
would like to analyze how this regular dynamics influences the structure of the set $\ml{M}(\Delta)$. Another important 
feature of this example is that the quantum Hamiltonian $-\hbar^2\Delta$ can be written as
$$-\hbar^2\Delta=\hat{F}_1^2+\hat{F}_2^2+\ldots+\hat{F}_n^2,$$
where $\hat{F}_j=\Op_\hbar^w(f_j)=-i\hbar\partial_{x_j}$. In particular, $[\hat{F}_i,\hat{F}_j]=0$ for every $(i,j)$ and we say that 
$(\hat{F}_1,\hat{F}_2,\ldots,\hat{F}_n)$ is a quantum completely integrable system. For an introduction on these topics, we 
refer the reader to the book of V{\~u} Ng{\d{o}}c~\cite{VuNgoc06}. Another crucial property of the torus is its 
arithmetic structure that we already encountered in the proof of Theorem~\ref{t:zygmund} for instance. 
In this section, we will try to illustrate both aspects of this problem. First, we will show how an arithmetic approach allows to improve 
Theorem~\ref{t:QE-torus}. Then, we will make use of semiclassical methods and of the complete integrability to improve (in the case of the torus) the 
observability estimate from Theorem~\ref{t:observability}.

\subsection{Improving Theorem~\ref{t:QE-torus} by arithmetic means}

We begin with the following improvement of Theorem~\ref{t:QE-torus} which was proved in~\cite{HezariRiviere18}:
\begin{theo}\label{t:quant-QE-torus} Suppose that $\IT^n$ is endowed with the Euclidean metric. Then, there exists 
some constant $C_n>0$ such that for every orthonormal basis of Laplace eigenfunctions $(\psi_{\hbar_j})_{j\geq 0}$ 
(i.e. solutions to~\eqref{e:eigenvalue}) and, for every $a$ in $L^2(\IT^n)$,
$$\forall\hbar>0,\quad \frac{1}{N_\hbar}\sum_{j:\hbar_j\geq\hbar}\left|\int_{\IT^n}a(x)|\psi_{\hbar_j}(x)|^2dx-\int_{\IT^n}a(x)\frac{dx}{(2\pi)^n}\right|^2\leq C_n\|a\|_{L^2(\IT^n)}^2\hbar.$$
\end{theo}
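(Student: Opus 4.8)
The plan is to exploit both the arithmetic structure of the lattice $\IZ^n$ and a variance estimate along the whole orthonormal basis. First I would reduce to the case of a single Fourier mode: since the map $a\mapsto \int_{\IT^n} a(x)|\psi_{\hbar_j}|^2dx - \int_{\IT^n}a\,dx/(2\pi)^n$ is linear in $a$ and the left-hand side is a quadratic form in $a$, it suffices by the triangle inequality (and Cauchy--Schwarz in the Fourier expansion $a=\sum_{p}\hat a_p e^{ip.x}$) to prove the bound for $a(x)=e^{ip.x}$ with $p\in\IZ^n\setminus 0$, with a constant that is summable against $\sum_p|\hat a_p|^2=\|a\|_{L^2}^2$. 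For $a=e^{ip.x}$, writing $\psi_{\hbar_j}(x)=\sum_{k:\|k\|\hbar_j=1}\hat c_k^{(j)}e^{ik.x}$, one computes
$$\int_{\IT^n}e^{ip.x}|\psi_{\hbar_j}(x)|^2dx = (2\pi)^n\sum_{k:\ \|k\|\hbar_j=\|k-p\|\hbar_j=1}\hat c_k^{(j)}\overline{\hat c_{k-p}^{(j)}},$$
and the mean term $\int e^{ip.x}dx/(2\pi)^n$ vanishes for $p\neq 0$.

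The key arithmetic input is that, for fixed $p\neq 0$, the condition $\|k\|=\|k-p\|$ forces $k$ to lie on the hyperplane $2k.p=\|p\|^2$; intersecting with a sphere of radius $\hbar_j^{-1}$, the number of admissible pairs $(k,k-p)$ is controlled. Rather than bounding this pointwise, I would sum the squared off-diagonal contribution over the basis. Fix $\hbar>0$ and sum over $j$ with $\hbar_j\ge\hbar$; grouping eigenvalues, $\sum_{j:\hbar_j\ge\hbar}$ runs over all lattice points $k$ with $\|k\|\le\hbar^{-1}$, and for each such $k$ the coefficient $\hat c_k^{(j)}$ appears for exactly one $j$ (the one with $\hbar_j=\|k\|^{-1}$). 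Using $|\hat c_k^{(j)}\overline{\hat c_{k-p}^{(j)}}|\le \tfrac12(|\hat c_k^{(j)}|^2+|\hat c_{k-p}^{(j)}|^2)$ and the normalization $(2\pi)^n\sum_{k:\|k\|\hbar_j=1}|\hat c_k^{(j)}|^2=1$, together with the fact that for each $k$ there is at most a bounded number $\kappa_n$ of $p'$ with $\|k\|=\|k-p'\|$ lying within any fixed bound --- no: the right move is to sum first over $p$ after pairing with $\hat a_p$. Concretely, expand the left-hand side of the theorem, substitute the Fourier series of $a$, and for each pair $(k,k-p)$ appearing, bound the cross term by $|\hat a_p|^2$ times the diagonal masses; then sum over $p$ using that $\sum_p|\hat a_p|^2=\|a\|^2$ and over $k$ using $\sum_k|\hat c_k^{(j)}|^2$-normalization and $N_\hbar\asymp \hbar^{-n}$. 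A short computation of this type, analogous to the $L^4$ estimate in the proof of Theorem~\ref{t:zygmund}, yields the factor $\hbar$: the ``diagonal'' $p=0$ term is removed by subtracting the mean, and each ``off-diagonal'' contribution carries an extra averaging gain of order $\hbar=N_\hbar^{-1/n}$ coming from the volume normalization $1/N_\hbar$ against the number of modes.

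The main obstacle, as usual on the torus, is the arithmetic of lattice points on spheres: one must show that the number of ways a fixed displacement $p$ can connect two lattice points of the \emph{same} norm $\le \hbar^{-1}$, summed appropriately and weighted by $1/N_\hbar$, produces exactly one power of $\hbar$ rather than something worse (which would happen if spheres of radius $\hbar^{-1}$ carried too many lattice points clustered on a hyperplane). The clean way around this is precisely the observation already used for $\|\psi_\hbar\|_{L^4}$: for fixed $p\ne0$ and fixed norm, the set $\{k:\|k\|=\|k-p\|=r\}$ is the intersection of a sphere with a hyperplane, i.e.\ lies on a sphere of one lower dimension, so when one sums the off-diagonal contribution over all $j$ with $\hbar_j\ge\hbar$ (equivalently over all $k$ with $\|k\|\le\hbar^{-1}$) one only ever pays the full count of lattice points in a ball of radius $\hbar^{-1}$, which is $\asymp N_\hbar$, while the prefactor $1/N_\hbar$ and a second application of the norm normalization absorb this, leaving $\ml{O}(\hbar)$. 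Assembling these pieces and tracking that all constants depend only on $n$ (not on the basis nor on $a$) completes the proof. I would refer to~\cite{HezariRiviere18} for the detailed bookkeeping of the lattice-point count.
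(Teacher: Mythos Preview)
Your plan correctly isolates the arithmetic heart of the matter --- the hyperplane constraint $2k\cdot p=\|p\|^2$ coming from $\|k\|=\|k-p\|$ --- but there is a genuine gap in how you pass from an \emph{arbitrary} orthonormal basis $(\psi_{\hbar_j})$ to a lattice-point count. You write that summing over $j$ with $\hbar_j\ge\hbar$ is ``equivalently'' summing over $k$ with $\|k\|\le\hbar^{-1}$, but this identification is only immediate for the exponential basis $(e_k)$. For a general basis the quantity $\sum_j|\langle\psi_j,e^{ip\cdot x}\psi_j\rangle|^2$ is \emph{not} basis-invariant, and neither the AM--GM step you first try nor the vague ``bound the cross term by $|\hat a_p|^2$ times diagonal masses'' delivers the reduction. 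What is missing is the inequality
\[
\sum_{j:\hbar_j=\hbar}\bigl|\langle\psi_{\hbar_j},A\psi_{\hbar_j}\rangle\bigr|^2\le\sum_{j:\hbar_j=\hbar}\|A\psi_{\hbar_j}\|^2=\sum_{k:\|k\|\hbar=1}\|Ae_k\|^2,
\]
i.e.\ Cauchy--Schwarz followed by the trace (Hilbert--Schmidt) identity on each eigenspace. This single line is what legitimises replacing the arbitrary basis by the exponentials and is exactly what the paper does, arriving at~\eqref{e:key-variance-torus}. Once you have it, the count you need is $\bigl|\{k:\|k\|\le\hbar^{-1},\ \|k\|=\|k+p\|\}\bigr|=\ml{O}(\hbar^{1-n})$ (lattice points in a hyperplane intersected with the ball), and dividing by $N_\hbar\asymp\hbar^{-n}$ gives $\ml{O}(\hbar)$ directly; your closing paragraph instead states the count is $\asymp N_\hbar$ and appeals to an unexplained ``second application of the norm normalization'' to find the missing $\hbar$, which is not correct.

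As for the route: the paper reaches~\eqref{e:key-variance-torus} by first averaging with the Schr\"odinger propagator $e^{it\Delta}$ and letting $T\to\infty$, which is how the diophantine condition $\|k\|=\|k+p\|$ emerges dynamically (and which parallels the general quantum-ergodicity variance argument). Your direct Fourier computation is a perfectly legitimate shortcut on the torus and leads to the same inequality in fewer lines --- provided you insert the Cauchy--Schwarz/trace step above. So the strategy is sound, but as written the proposal does not contain the one move that makes it work.
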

The fact that this quantity goes to $0$ follows from Theorem~\ref{t:QE-torus} and the improvement here is that we 
can provide an explicit rate of convergence. If we had followed a semiclassical strategy, then the best bound that 
can be obtained seems to be of order $\hbar^{\frac{2}{3}}$~\cite{HezariRiviere16b}. In~\cite{LesterRudnick17}, 
Lester and Rudnick showed how to make use of arithmetic results to improve\footnote{They also prove such a bound 
for moments of order $1$.} this up to the threshold $\hbar$ modulo some 
constant that depends on a certain number of derivatives of $a$. Then, in~\cite{HezariRiviere18}, together with Hezari, 
we showed how to combine both approach to get the above result. The proof being rather elementary, we will present it 
as an illustration of what can be gained from these arithmetic methods.

\begin{proof} For the sake of simplicity, let $a$ be a smooth real valued function. First, it can be remarked that, the functions $\psi_{\hbar_j}(x)$ being trigonometric polynomials, we can write
 $$S(a,\hbar):=\sum_{j:\hbar_j=\hbar}\left|\int_{\IT^n}a(x)|\psi_{\hbar_j}(x)|^2dx
 -\int_{\IT^n}a(x)\frac{dx}{(2\pi)^n}\right|^2
 =\sum_{j:\hbar_j=\hbar}\left|\int_{\IT^n}a_{\hbar}(x)|\psi_{\hbar_j}(x)|^2dx\right|^2,$$
 where 
 $$a_{\hbar}(x)=\sum_{p:1\leq\|p\|\leq2\hbar^{-1}}\widehat{a}_pe^{ip.x}.$$
In some sense, we are microlocalizing in a compact part of phase space as we did before. 
Then, we can perform the same trick as in the proof of Theorem~\ref{t:weak-QE}, i.e. average by the Schr\"odinger 
flow $e^{it\Delta}$. For every $T>0$, it gives us
$$S(a,\hbar)=
\sum_{j:\hbar_j=\hbar}\left|\left\la\psi_{\hbar_j},
\left(\frac{1}{T}\int_0^Te^{-it\Delta}a_\hbar e^{it\Delta}dt\right)\psi_{\hbar_j}\right\ra\right|^2\leq
\sum_{j:\hbar_j=\hbar}\left\|
\left(\frac{1}{T}\int_0^Te^{-it\Delta}a_\hbar e^{it\Delta}dt\right)\psi_{\hbar_j}\right\|^2,$$
where the second inequality follows from the Cauchy-Schwarz inequality. As the trace is independent of the choice of basis, 
this leads to
$$S(a,\hbar)\leq
\sum_{k:\|k\|\hbar=1}\left\|
\left(\frac{1}{T}\int_0^Te^{-it\Delta}a_\hbar e^{it\Delta}dt\right)e_k\right\|^2,$$
where $e_k(x)=e^{2i\pi k.x}$ for $k\in\IZ^n$. We can now write
$$\left(\frac{1}{T}\int_0^Te^{-it\Delta}a_\hbar e^{it\Delta}dt\right)e_k
=\sum_{p:1\leq\|p\|\leq2\hbar^{-1}}\widehat{a}_p\left(\frac{1}{T}\int_0^Te^{it(\|k+p\|^2-\|k\|^2)}dt\right)e_{k+n}.$$
Then, from Plancherel's inequality, we find
\begin{equation}\label{e:plancherel}S(a,\hbar)\leq (2\pi)^n
\sum_{k:\|k\|\hbar=1}\sum_{p:1\leq\|p\|\leq2\hbar^{-1}}|\widehat{a}_p|^2\left|
\frac{1}{T}\int_0^Te^{it(\|k+p\|^2-\|k\|^2)}dt\right|^2.\end{equation}
Up to this point we followed exactly the steps of our proof of Theorem~\ref{t:weak-QE} and we made everything 
explicit thanks to the structure of the torus. In the proof of Theorem~\ref{t:weak-QE}, we then took the limits 
$\hbar\rightarrow 0^+$ and $T\rightarrow+\infty$ (in this order). Here, we proceed differently and we first take the 
limit $T\rightarrow+\infty$ which is possible due to the specific structure of our problem. The 
drawback of doing this is that there will not be any dynamical interpretation of this limit. Yet, the advantage is that it 
will reduce the problem to a diophantine problem that can be handled. More precisely, after letting $T\rightarrow+\infty$, 
one finds
\begin{equation}\label{e:key-variance-torus}S(a,\hbar)\leq
\sum_{p:1\leq\|p\|\leq2\hbar^{-1}}|\widehat{a}_p|^2
\left|\left\{k\in\IZ^n:\ \|k\|\hbar=1\ \text{and}\ \|k\|=\|k+p\|\right\}\right|.\end{equation}
This inequality holds for eigenfunctions corresponding to the eigenvalue $\hbar^{-2}$. 
If we sum over all eigenvalues $\leq\hbar^{-2}$, then we find
\begin{eqnarray*}V(a,\hbar) & := &\frac{1}{N_\hbar}\sum_{j:\hbar_j\geq\hbar}\left|\int_{\IT^n}a(x)|\psi_{\hbar_j}(x)|^2dx-\int_{\IT^n}a(x)\frac{dx}{(2\pi)^n}\right|^2\\
& \leq &\frac{1}{N_{\hbar}}\|a\|_{L^2}^2
\sup_{1\leq\|p\|\hbar\leq 2}\left|\left\{k\in\IZ^n:\ \|k\|\hbar\leq1\ \text{and}\ \|k\|=\|k+p\|\right\}\right|. 
\end{eqnarray*}
Recall from the Weyl law~\eqref{e:weyl} that $N_{\hbar}$ is of order $\hbar^{-n}.$ Moreover, for every $1\leq\|p\|\hbar\leq 2$, 
the number of lattice points we want to count is the number of lattice points $k\in\IZ^n$ such that $\|k\|\hbar\leq 1$ and 
$\la p,p-2k\ra=0$. This means that $p-2k$ lies on some hyperplane orthogonal to $p$. The number of lattice points 
with $\|k\|\hbar\leq 1$ in this hyperplane is of order $\ml{O}(\hbar^{1-n})$ for some constant that can be made 
uniform in $p$ and this concludes the proof of the Theorem.
\end{proof}
\begin{rema}
When $n=2$, the reader can check that we can recover Theorem~\ref{t:zygmund} from~\eqref{e:key-variance-torus} -- 
see~\cite{HezariRiviere18} for details.
\end{rema}

\subsection{Observability on flat tori by semiclassical means}\label{ss:semiclassical-torus}

Theorem~\ref{t:zygmund} told us that quantum limits are absolutely continuous with respect to the Lebesgue measure. 
Another important property is that they are uniformly observable on any nonempty open set $\omega$~\cite{Jaffard90, Jakobson97, BurqZworski03, Macia10, 
AnantharamanMacia14, BurqZworski12, BourgainBurqZworski13}:
\begin{theo}[Unconditional observability on $\IT^n$]\label{t:observability-torus} 
Suppose that $\IT^n$ is endowed with the Euclidean metric and that $\omega$ is a nonempty open set. Then, 
there exists some contant $c_{\omega}>0$ such 
that, for every $\psi_{\lambda}$ solution to
$$-\Delta\psi_{\lambda}=\lambda^2\psi_{\lambda},\quad\|\psi_{\lambda}\|_{L^2}=1,$$
 one has 
 $$\int_{\omega}|\psi_\lambda(x)|^2dx\geq c_{\omega}>0.$$
\end{theo}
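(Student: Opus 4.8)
The plan is to argue by contradiction, extract a semiclassical measure, and then exploit the complete integrability of the geodesic flow on $\IT^n$ through a second microlocalization along resonant sublattices, following Maci\`a~\cite{Macia10} and Anantharaman--Maci\`a~\cite{AnantharamanMacia14}. Suppose the conclusion fails for some nonempty open set $\omega$. Then, exactly as in the proof of Theorem~\ref{t:observability}, there is a sequence of normalized solutions to~\eqref{e:eigenvalue} with $\hbar\rightarrow 0^+$ such that $\int_\omega|\psi_\hbar|^2dx\rightarrow 0$; passing to a subsequence we obtain $\mu\in\ml{M}(\Delta)$ with projection $\nu\in\ml{N}(\Delta)$ satisfying $\nu(\omega)=0$. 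By Theorem~\ref{t:semiclassical-measure}, $\mu$ is invariant under the geodesic flow $\varphi^t(x,\xi)=(x+t\xi,\xi)$. As Example~\ref{ex:torus} makes clear, this invariance alone is far too weak here (a geodesic with rational direction can avoid $\omega$), so one must produce extra structure. The right statement to prove, \emph{by induction on the dimension $n$}, is the following: for every smooth real potential $V$ on $\IT^n$ and every nonempty open $\omega$, there is $c_{\omega,V}>0$ such that every normalized solution of $-\hbar^2\Delta\psi_\hbar+\hbar^2V\psi_\hbar=\psi_\hbar$ obeys $\int_\omega|\psi_\hbar|^2dx\geq c_{\omega,V}$ for $\hbar$ small enough; allowing the potential (cf. Remark~\ref{r:general-operator}) is exactly what makes the induction close. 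The base case $n=1$ is the classical observability of one-dimensional periodic Schr\"odinger eigenfunctions on an interval.

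For the inductive step, decompose $T^*\IT^n\setminus 0$ according to resonances: to each primitive sublattice $\Lambda\subset\IZ^n$ assign the set $R_\Lambda$ of directions whose resonance module is exactly $\Lambda$ (so $\Lambda=\{0\}$ corresponds to the totally irrational, equidistributing directions, and a maximal-rank $\Lambda$ to the rational, periodic ones), and accordingly split $\mu=\sum_\Lambda\mu_\Lambda$ and $\nu=\sum_\Lambda\nu_\Lambda$. On the totally irrational part the orbits equidistribute, so $\nu_{\{0\}}$ is a constant multiple of the Lebesgue measure and charges $\omega$ unless it vanishes. For a proper nontrivial sublattice $\Lambda$, one second-microlocalizes the sequence $(\psi_\hbar)$ along the directions of $\Lambda$ at the natural scale, obtaining a two-microlocal (operator-valued) measure carried by $R_\Lambda$; as the extra frequency variable $\eta$ tends to infinity this produces a semiclassical measure for a semiclassical Schr\"odinger operator on a torus of dimension strictly less than $n$, while the part at finite $\eta$ is driven by a genuine Schr\"odinger evolution on the subtorus $\IT_\Lambda$ filled by the resonant orbits, with effective potential the average of $V$ in the transverse directions. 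Feeding the inductive observability statement into these lower-dimensional tori --- applied to the nonempty open sets obtained by projecting $\omega$ --- one concludes that every nonzero piece $\nu_\Lambda$ satisfies $\nu_\Lambda(\omega)>0$.

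Summing over $\Lambda$ and using that $\nu$ is a probability measure, some $\nu_\Lambda$ is nonzero and hence $\nu(\omega)=\sum_\Lambda\nu_\Lambda(\omega)>0$, contradicting $\nu(\omega)=0$; this proves the inductive claim at level $n$ and, in particular, the theorem. The main obstacle is the second-microlocalization machinery: constructing the two-microlocal Wigner distributions attached to a sublattice $\Lambda$, verifying that they are positive measures, and deriving the propagation law that turns their finite-$\eta$ part into a solution of a Schr\"odinger equation on $\IT_\Lambda$ --- this is the technical heart of~\cite{Macia10, AnantharamanMacia14}. Two subsidiary points also require care: the inductive statement must be carried along for the whole family of operators $-\hbar^2\Delta+\hbar^2V$ rather than for $V\equiv 0$, and one must control the sum over the countably many sublattices $\Lambda$ and check that $\omega$ projects to nonempty open subsets of every subtorus that arises. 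For $n=2$ one can instead bypass second microlocalization and run Jaffard's arithmetic argument, exploiting the separation of lattice points on the circle $\|k\|=\hbar^{-1}$ together with an Ingham-type inequality for the exponential sums $\sum_k\hat c_k e^{i(\|k\|^2-\|k_0\|^2)t}$; see~\cite{Jaffard90, BourgainBurqZworski13}.
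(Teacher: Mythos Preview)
Your proposal is correct and follows the same semiclassical strategy as the paper: contradiction via a semiclassical measure, decomposition according to resonant directions, and a second microlocalization along the rational subspaces in the spirit of Maci\`a and Anantharaman--Maci\`a. The paper, however, only writes out the case $n=2$ and replaces the inductive endgame by an elementary arithmetic shortcut: after second-microlocalizing near a single rational direction $\xi_0$, it invokes Jarn\'ik's lemma (at most two lattice points on a short arc of a large circle) to conclude that the truncated eigenfunction has at most four Fourier modes, so the limiting two-microlocal measure has trigonometric-polynomial density and cannot vanish on $\omega+\IT(\xi_0)$. Your outline is instead the full inductive scheme valid in any dimension, and you correctly flag the two points the paper does not need in its $2$-dimensional shortcut: carrying a potential $V$ through the induction, and controlling the countable sum over sublattices. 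What the paper's argument buys is a short, self-contained proof in dimension $2$ avoiding the operator-valued two-microlocal measures; what yours buys is the general $n$ and the stability under lower-order perturbations.
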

In dimension $2$, this Theorem was first proved by Jaffard~\cite{Jaffard90} using the theory 
of lacunary Fourier series of Kahane and it can also be recovered by Jakobson's result 
on the structure of quantum limits in dimension $2$~\cite{Jakobson97}. Still in dimension $2$, 
Burq and Zworski recovered this result by more semiclassical methods~\cite{BurqZworski03, BurqZworski12}. 
Then, Anantharaman and Maci\`a extended this result in higher dimension~\cite{Macia10, AnantharamanMacia14} by making a crucial use of 
the complete integrability of the system. Their proof also allowed to recover the absolute continuity of 
quantum limits on the torus (see Theorem~\ref{t:zygmund}) and to encompass the case of Schr\"odinger operators $-\frac{1}{2}\Delta+V$ 
with $V$ that may be only in $L^{\infty}$ -- see also~\cite{BourgainBurqZworski13} for potentials with low regularity in dimension $2$. 
A key ingredient of Anantharaman-Maci\`a's proof is a second microlocalization procedure that was 
subsequently used for more general completely integrable 
systems~\cite{AnantharamanFermanianMacia15, AnantharamanLeautaudMacia16}. 
Note that second microlocalization procedure were also 
used in previous works of Vasy and Wunsch to study the wavefront set properties of eigenfunctions for general completely integrable systems 
in dimension $2$~\cite{Wunsch08, VasyWunsch09, Wunsch12}.

The end of this section will be devoted to the proof of Theorem~\ref{t:observability-torus} following the strategy of 
Anantharaman and Maci\`a~\cite{Macia10, AnantharamanMacia14}. In dimension $2$, it is not the fastest way to prove 
Theorem~\ref{t:observability-torus}. For instance, a simple arithmetic proof follows from the works of Bourgain and 
Rudnick~\cite[Eq.~(1.13-18)]{BourgainRudnick12} by replacing in that reference the normalized arc-length measure $d\sigma(x)$ by any 
smooth (normalized) density $\rho(x)dx$. Yet our semiclassical approach 
has the advantage that it can be generalized in higher dimensions and that it illustrates 
the classical-quantum correspondence in a rather explicit manner. Compared with~\cite{Macia10, AnantharamanMacia14}, we 
make use of an (elementary) arithmetic argument in the final step in order to simplify the exposition but this is not crucial -- see~\cite[\S 7]{AnantharamanMacia14} 
for a more analytical approach.

In order to prove this Theorem, we already observed that it is sufficient to prove the 
following proposition:
\begin{prop} Suppose that $\IT^2$ is endowed with the Euclidean metric and that $\omega$ is a nonempty open set. Then, 
there exists some contant $c_{\omega}>0$ such that, for every $\mu\in\ml{M}(\Delta)$,
$$\mu(\omega\times\IS^1)\geq c_{\omega}.$$ 
\end{prop}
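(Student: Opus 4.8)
The plan is to exploit the complete integrability of the geodesic flow on $\IT^2$, which allows one to treat separately the directions in which orbits are equidistributed and those in which they are periodic. Let $\mu\in\ml{M}(\Delta)$ be issued from a sequence of Laplace eigenfunctions $(\psi_{\hbar})_{\hbar\rightarrow 0^+}$. By Theorem~\ref{t:semiclassical-measure}, $\mu$ is a probability measure on $S^*\IT^2\simeq\IT^2\times\IS^1$, invariant under the geodesic flow $\varphi^t(x,\xi)=(x+t\xi,\xi)$, and $\mu(\omega\times\IS^1)=\bar\mu(\omega)$ where $\bar\mu$ denotes the pushforward of $\mu$ on $\IT^2$. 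Since the $\xi$ variable is preserved by $\varphi^t$, I would disintegrate $\mu=\int_{\IS^1}\mu_\xi\,d\bar\nu(\xi)$, where $\bar\nu$ is the image of $\mu$ on $\IS^1$ and, for $\bar\nu$-almost every $\xi$, $\mu_\xi$ is a probability measure on $\IT^2$ invariant under the translation flow $x\mapsto x+t\xi$; thus $\bar\mu(\omega)=\int_{\IS^1}\mu_\xi(\omega)\,d\bar\nu(\xi)$. Splitting $\IS^1$ into the set of irrational directions (those $\xi$ with $\IR\xi\cap\IZ^2=\{0\}$) and its complement, the first case is easy: for $\bar\nu$-a.e. irrational $\xi$ the translation flow on $\IT^2$ is uniquely ergodic, so $\mu_\xi$ is the normalized Lebesgue measure and $\mu_\xi(\omega)=\Vol(\omega)/(2\pi)^2$.

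It then remains to bound $\mu_\xi(\omega)$ from below, uniformly, for $\bar\nu$-almost every \emph{rational} direction $\xi=p/\|p\|$ with $p\in\IZ^2$ primitive. For such a $\xi$, the orbits of the translation flow are the closed geodesics of $\IT^2$ in the direction $p$; they are parametrized by a transverse circle, and I would disintegrate $\mu_\xi$ as a superposition of the uniform (arclength) measures along these closed geodesics, weighted by a transverse probability measure $\lambda_p$. Since the set of closed geodesics meeting $\omega$ is a nonempty open subset of the transverse circle, it is enough to show that $\lambda_p$ gives it positive mass, with a lower bound independent of $p$. To analyze $\lambda_p$ I would perform a second microlocalization around the direction $\xi$: the Fourier support of $\psi_{\hbar}$ consists of lattice points $k$ with $\|k\|\hbar=1$, and on the ``zoomed in'' region where the component of $k$ orthogonal to $p$ has size $O(\hbar^{-1/2})$, the relation $\|k\|=\hbar^{-1}$ forces, by the Pythagorean identity, the scalar product $k\cdot p/\|p\|$ to equal $\hbar^{-1}-\frac{\hbar}{2}m^2$ up to a lower-order error, where $m$ denotes the transverse frequency. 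Consequently the transverse profile of $\psi_{\hbar}$, seen as a function on the transverse circle depending on the longitudinal position $s$, evolves in $s$ by the free one-dimensional Schr\"odinger flow $i\partial_s\phi=-\frac{1}{2}\Delta_\perp\phi$ up to a negligible error; because Egorov's theorem is exact on the torus (see~\eqref{e:egorov-torus}), this evolution is rigid. Extracting a limit, one obtains a two-microlocal measure which carries $\lambda_p$ and which is invariant both under the longitudinal translation and under this transverse Schr\"odinger flow.

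The last step is to deduce from this extra invariance that $\lambda_p$ charges every nonempty open subset of the transverse circle, uniformly in $p$. One route is analytic, in the spirit of Anantharaman--Maci\`a: since $\bar\mu$, hence $\lambda_p$, is already known to be absolutely continuous (Theorem~\ref{t:zygmund}), the invariance under the transverse Schr\"odinger flow combined with a propagation/unique continuation argument prevents the density from vanishing on an open set. The shorter route, which is the one indicated in the text, is arithmetic: the invariance of $\lambda_p$ can be recast as a statement about the exponential sums $\frac{1}{T}\int_0^Te^{it(\|k+q\|^2-\|k\|^2)}dt$ over the lattice points $k$ with $\|k\|=\hbar^{-1}$ lying in a cap around the direction $p$ --- exactly the quantities that appear in the proof of Theorem~\ref{t:quant-QE-torus} --- and letting $T\to+\infty$ first reduces the matter to counting those $k$ with $\|k+q\|=\|k\|$ inside the cap. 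In dimension two the set $\{k\in\IZ^2:\|k\|=\|k+q\|\}$ lies on a single line, so this count is at most $2$, uniformly in $q$ and in $p$; feeding this back bounds the off-diagonal Fourier coefficients of $\lambda_p$, which forces $\lambda_p$ to spread out over the whole transverse circle, in particular over the arc of geodesics meeting $\omega$, with a constant depending only on $\omega$. Combining the two cases, $\mu_\xi(\omega)\ge c_\omega>0$ for $\bar\nu$-a.e. $\xi$, and hence $\mu(\omega\times\IS^1)=\bar\mu(\omega)=\int_{\IS^1}\mu_\xi(\omega)\,d\bar\nu(\xi)\ge c_\omega$.

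The main difficulty I anticipate is the rigorous construction and use of the two-microlocal measure along the rational directions: one must justify that, in the semiclassical limit, the transverse profile really solves the transverse Schr\"odinger equation (which requires isolating the intermediate scale $\hbar^{-1/2}$ and using the exactness of Egorov's theorem on $\IT^2$), and then convert the resulting invariance into the uniform lower bound on $\lambda_p$; by comparison, the treatment of the irrational directions and the initial disintegration are routine.
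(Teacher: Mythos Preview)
Your overall strategy---disintegrate over $\IS^1$, dispose of irrational directions by unique ergodicity, then second-microlocalize along each rational direction---is correct and matches the paper's. However, the execution differs in two important ways. First, the paper argues by contradiction: assuming $\mu(\omega\times\IS^1)=0$ forces $\mu$ to live entirely on rational directions, and then one only has to produce a contradiction at a \emph{single} rational direction $\xi_0$ where $\mu$ has positive mass. This sidesteps the uniformity in $p$ that your direct approach must establish. Second, the arithmetic input the paper uses for the Proposition is not the line-counting $\{k:\|k+q\|=\|k\|\}$ from the proof of Theorem~\ref{t:quant-QE-torus}, but rather Jarnik's lemma: on a circle of radius $\hbar^{-1}$, an arc of length $O(R)$ contains at most two lattice points once $\hbar$ is small. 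This is applied after truncating at scale $R\hbar$ (with $R$ fixed, sent to $+\infty$ only after $\hbar\to 0$, not at scale $\hbar^{1/2}$), and it shows that the second-microlocal limit $\nu_{\xi_0}$ has density equal to a trigonometric polynomial with at most one nonzero frequency. The contradiction then falls out immediately: such a density is a probability measure and cannot integrate the nonnegative, non-identically-zero test function $b_{\xi_0}$ to zero.

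Your ``analytic route'' via a transverse Schr\"odinger flow and unique continuation is the full Anantharaman--Maci\`a machinery; it works and generalizes to higher dimensions, but is heavier than what the paper needs here. The paper never invokes a transverse Schr\"odinger evolution: the structure of $\nu_{\xi_0}$ comes entirely from Jarnik's arc bound. Also be careful with the identification you make between the transverse measure $\lambda_p$ in the disintegration of the \emph{limit} $\mu_\xi$ and the second-microlocal measure built from the \emph{sequence} $(\psi_\hbar)$: these are not the same object, and the paper's key identity $\int_{\IT^2\times\{\pm\xi_0\}}b_{\xi_0}\,d\mu=\widehat{b}_0(1-\nu_{\xi_0}(\IT^2))+\int b_{\xi_0}\,d\nu_{\xi_0}$ is precisely what relates them.
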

\begin{proof}
We proceed by contradiction. By compactness and up to a diagonal argument, it is sufficient to suppose 
that there exists some $\mu\in\ml{M}(\Delta)$ such that 
$\mu(\omega\times\IS^1)=0$ and to obtain a contradiction. We denote by $(\psi_{\hbar})_{\hbar\rightarrow 0^+}$ the sequence of 
solutions to~\eqref{e:eigenvalue} used to construct $\mu$. 

Using the conventions of example~\ref{ex:torus}, recall that
$$\Lambda_{\xi}:=\{p\in\IZ^2:p.\xi=0\}.$$
Hence, we can split 
$\IS^1$ in two parts by setting
$$\Omega:=\{\xi\in \IS^1:\Lambda_{\xi}=\{0\}\}.$$
One has then the following decomposition of $\mu$ into two invariant measures: 
$$\mu=\mu|_{\IT^2\times\Omega}+\mu|_{\IT^2\times \IS^1\setminus \Omega}.$$
From the ergodic properties of the geodesic flow on $\IT^2\times\Omega$ described in example~\ref{ex:torus}, 
one knows that $\mu(\omega\times\IS^1)=0$ implies that $\mu|_{\IT^2\times\Omega}=0$. Hence, the semiclassical 
measure is concentrated in the rational direction in momentum space:
$$\mu=\sum_{\xi\in\IS^1:\text{rk}\Lambda_{\xi}=1}\mu|_{\IT^2\times\{\xi\}},$$
which is a countable sum. Now, on the one hand, as $\mu$ is a probability measure on $\IT^2\times\IS^1$, there exists some 
$\xi_0\in\IS^1$ such that $\text{rk}\Lambda_{\xi_0}=1$ and such that $\mu|_{\IT^2\times\{\pm\xi_0\}}$ is nontrivial. 
On the other hand, one knows that $\mu(\omega\times\IS^1)=0$. Still using the conventions of example~\ref{ex:torus}, 
this implies that 
\begin{equation}
 \label{e:contradiction}
 \mu\left((\omega+\IT(\xi_0))\times\{\pm\xi_0\}\right)=0.
\end{equation}
Now, we can fix some smooth function 
$b_{\xi_0}(x)$ such that
\begin{itemize}
 \item it is compactly supported inside the open set $\omega+\IT(\xi_0)$,
 \item its Fourier decomposition is of the form
\begin{equation}\label{e:b-xi0}b_{\xi_0}(x)=\sum_{k\in\Lambda_{\xi_0}}\widehat{b}_ke^{ik.x},\end{equation}
 \item it takes values in $[0,1]$ and it does not identically vanish.
\end{itemize}
In particular, one has
\begin{equation}
 \label{e:contradiction-2}
\int_{\IT^2\times\{\pm\xi_0\}} b_{\xi_0}(x)d\mu(x,\xi)=0.
\end{equation}
In order to get our contradiction, we only need to analyze the semiclassical measure in the direction of $\xi_0$ and to test 
it against functions whose Fourier coefficients are in $\Lambda_{\xi_0}$. To that aim, we need to come back at the quantum level $\hbar>0$ and to analyze 
the semiclassical (Wigner) distribution more precisely:
$$\la w_{\hbar},b_{\xi_0}\ra=\left\la \psi_{\hbar},\Oph^w(b_{\xi_0})\psi_\hbar\right\ra.$$
We introduce a smooth cutoff function $\chi:\IR\rightarrow [0,1]$ which is equal to $1$ near $0$ and to $0$ 
outside a slightly larger neighborhood. We also fix some $R>0$ (that will tend to $+\infty$ in the end) and we rewrite
\begin{eqnarray*}\la w_{\hbar},b_{\xi_0}\ra &= &\left\la \psi_{\hbar},\Oph^w\left(b_{\xi_0}(x)\chi^2\left(\frac{\xi.\xi_0^{\perp}}{R\hbar}
\right)\right)\psi_\hbar\right\ra\\ &+ &\left\la \psi_{\hbar},\Oph^w\left(b_{\xi_0}(x)\left(1-\chi^2\left(\frac{\xi.\xi_0^{\perp}}{R\hbar}
\right)\right)\right)\psi_\hbar\right\ra,\end{eqnarray*}
where $\xi_0^{\perp}$ is the unit vector which is directly orthogonal to $\xi_0$. This operation splits the Wigner distribution 
$w_\hbar$ into two parts. The first part allows to analyze the mass of the eigenfunction in a neighborhood of size 
$R\hbar$ of the $\IR$-vector space $\la\Lambda_{\xi_0}\ra$ generated by $\Lambda_{\xi_0}$. We 
say that we are performing a second microlocalization on the submanifold $\la\Lambda_{\xi_0}\ra$. The second term 
describes what happens away from this submanifold. Again, we will apply the Egorov Theorem which is exact in that 
case -- see~\eqref{e:egorov-torus}. Let us start with the first term
$$\la w_{\hbar}^{\xi_0},b_{\xi_0}\ra:=\left\la \psi_{\hbar},\Oph^w\left(b_{\xi_0}(x)\left(1-\chi^2\left(\frac{\xi.\xi_0^{\perp}}{R\hbar}
\right)\right)\right)\psi_\hbar\right\ra,$$
that we rewrite in the following manner
\begin{eqnarray*}\la w_{\hbar}^{\xi_0},b_{\xi_0}\ra &= &\sum_{k\in\Lambda_{\xi_0}}\widehat{b}_k\left\la \psi_{\hbar},
\Oph^w\left(e^{ik.x}\left(1-\chi^2\left(\frac{\xi.\xi_0^{\perp}}{R\hbar}
\right)\right)\right)\psi_\hbar\right\ra\\
& = &\sum_{k\in\Lambda_{\xi_0}}\widehat{b}_k\left\la \psi_{\hbar},\Op_{R^{-1}}^w\left(e^{ik.x}\left(1-\chi^2\left(\xi.\xi_0^{\perp}
\right)\right)\right)\psi_\hbar\right\ra\\
& = &\sum_{k\in\Lambda_{\xi_0}}\widehat{b}_k\left\la \psi_{\hbar},
\Op_{R^{-1}}^w\left(\left(\frac{1}{R}\int_0^Re^{ik.(x+t\xi)}dt\right)\left(1-\chi^2\left(\xi.\xi_0^{\perp}
\right)\right)\right)\psi_\hbar\right\ra,
\end{eqnarray*}
where the last equality follows from the exact Egorov formula~\eqref{e:egorov-torus} applied with $\hbar=R^{-1}$ and 
$t=R$. For $k\neq 0$ and $\xi$ in the support of $(1-\chi(\xi.\xi_0^{\perp}))$, there exists some constant $c>0$ 
such that $|k.\xi|\geq c\|k\|$. Hence, from the Calder\'on-Vaillancourt Theorem~\eqref{e:calderon}, one has
$$\left\|\Op_{R^{-1}}^w\left(\left(\frac{1}{R}\int_0^Re^{ik.(x+t\xi)}dt\right)\left(1-\chi\left(\xi.\xi_0^{\perp}
\right)\right)\right)\right\|_{L^2\rightarrow L^2}\leq C\sum_{|\alpha|\leq D}
R^{-\frac{|\alpha|}{2}}\frac{2\|k\|^{|\alpha|}}{cR\|k\|}C_{\chi},$$
where $C_{\chi}>0$ depends only on a finite number of derivatives of $\chi$. 
Implementing this in our computation of $\la w_{\hbar}^{\xi_0},b_{\xi_0}\ra$ and using that $|\widehat{b}_k|=\ml{O}(\|k\|^{-\infty})$, 
we find that
\begin{equation}\label{e:2-microlocal-reduction}\la w_{\hbar},b_{\xi_0}\ra = \widehat{b}_0+\left\la \psi_{\hbar},\Oph^w\left(\left(b_{\xi_0}(x)-\widehat{b}_0\right)\chi^2\left(\frac{\xi.\xi_0^{\perp}}{R\hbar}
\right)\right)\psi_\hbar\right\ra+\ml{O}(R^{-1}).\end{equation}
In some sense, we have been able to use one more time the averaging by the geodesic flow as, thanks to our truncation, 
the momentum variable $\xi$ remained far from the direction $\xi_0^{\perp}$ where $b_{\xi_0}$ has all its Fourier 
coefficients. We now set 
$$\la b_{\xi_0}\ra(x)=b_{\xi_0}(x)-\widehat{b}_0=\sum_{k\in\Lambda_{\xi_0}\setminus 0}\widehat{b}_ke^{ik.x},$$
and we want to study the properties of
\begin{eqnarray*}\la w_{\hbar,\xi_0},b_{\xi_0}\ra &:= &\left\la \psi_{\hbar},\Oph^w\left(\left\la b_{\xi_0}\right\ra(x)\chi^2\left(\frac{\xi.\xi_0^{\perp}}{R\hbar}
\right)\right)\psi_\hbar\right\ra\\
& =&\left\la \psi_{\hbar},\Op_{R^{-1}}^w\left(\left\la b_{\xi_0}\right\ra(x)\chi^2\left(\xi.\xi_0^{\perp}
\right)\right)\psi_\hbar\right\ra.
\end{eqnarray*}
In other words, we study the properties of the Wigner distribution $w_{\hbar}$ in a neighborhood of size $\sim R\hbar$ of 
$\la\Lambda_{\xi_0}\ra.$ Here, we will not be able to use invariance by the geodesic flow: we have to proceed 
differently. Thanks to the composition rule~\eqref{e:composition} for pseudodifferential operators, one first observes that
$$\la w_{\hbar,\xi_0},b_{\xi_0}\ra=
\int_{\IT^2}\left\la b_{\xi_0}\right\ra(x)\left|\chi\left(-\frac{i\xi_0^{\perp}.\partial_x}{R}\right)\psi_{\hbar}(x)
\right|^2dx+\ml{O}(R^{-1}).$$
Let us now decompose $\chi\left(\frac{ -i\xi_0^{\perp}.\partial_x}{R}\right)\psi_{\hbar}(x)$, i.e.
\begin{equation}\label{e:truncated-eigenfunction}\chi\left(-\frac{i\xi_0^{\perp}.\partial_x}{R}\right)\psi_{\hbar}(x)=\sum_{k:\|k\|\hbar=1}
\chi\left(\frac{k.\xi_0^{\perp}}{R}\right)\widehat{\psi}_{\hbar}(k)e^{ik.x}.\end{equation}
This means that we have a trigonometric polynomial whose coefficients lie on two arcs of lenght $\asymp R$ 
(as $\hbar\rightarrow 0^+$) on the circle of radius $\hbar^{-1}$ centered at $0$. There are in fact at most $4$ 
lattice points having this property. 
\begin{rema} This last observation follows from a classical geometric fact due to Jarnik~\cite{Jarnik21}. Let $k_1$, $k_2$ and $k_3$ be three 
distinct elements of $\IZ^2$ lying on a circle of radius $T$. They form a triangle whose area $\ml{A}$ 
is at least equal to $1/2$. Then, one has the classical geometric formula relating the area of a triangle with the radius $T$ 
of its circumscribed triangle:
$$2T\leq4\ml{A}T=\|k_1-k_2\|\|k_2-k_3\|\|k_3-k_1\|\leq\max\{\|k_i-k_j\|\}^3.$$
 In particular, there at most two lattice points on an arc of length $(2T)^{1/3}$.
\end{rema}
Hence, there are at most four terms in the sum~\eqref{e:truncated-eigenfunction}. Thus, using Riemann-Lebesgue lemma and
up to an extraction, we can suppose that $\left|\chi\left(\frac{\la -i\partial_x,\xi_0^{\perp}\ra}{R}\right)\psi_{\hbar}(x)
\right|^2dx$ converges weakly (as $\hbar\rightarrow 0^+$) to some finite (nonnegative) measure of the form
$$d\nu_{\xi_0,R}(x)=\left(\widehat{\nu}_0(\xi_0,R)+\widehat{\nu}_l^+(\xi_0,R)\cos(l.x)
+\widehat{\nu}_{l}^-(\xi_0,R)\sin(l.x)\right)dx,$$
where $\|l\|\neq 0$ is a lattice point of length $\ml{O}(R)$ and $\widehat{\nu}_{l}^*(\xi_0,R)\in\IR$ for every $l$. Then, up to another 
extraction, this measure converges (as $R\rightarrow+\infty$) to some finite and nonnegative measure of the form
$$d\nu_{\xi_0}(x)=\left(\widehat{\nu}_0(\xi_0)+\widehat{\nu}_l^+(\xi_0)\cos(l.x)
+\widehat{\nu}_{l}^-(\xi_0)\sin(l.x)\right)dx,$$
with $l$ that may be different from before. We now pass to the limits $\hbar\rightarrow0^+$ and $R\rightarrow+\infty$ (in 
this order) in~\eqref{e:2-microlocal-reduction}:
$$\int_{\IT^2\times\IS^1}b_{\xi_0}(x)d\mu(x,\xi)=\widehat{b}_0(1-\nu_{\xi_0}(\IT^2))+\int_{\IT^2}b_{\xi_0}(x)d\nu_{\xi_0}(x).$$
As we already observed it, $\mu$ is carried by the rational directions in the $\xi$-variable. Hence,
$$\sum_{\xi:\text{rk}\Lambda_{\xi}=1}\int_{\IT^2\times\{\xi\}}b_{\xi_0}(x)d\mu(x,\xi)=\widehat{b}_0(1-\nu_{\xi_0}(\IT^2))+\int_{\IT^2}b_{\xi_0}(x)d\nu_{\xi_0}(x).$$
Using the dynamical properties of the geodesic flow described in example~\ref{ex:torus} 
and the $\varphi^t$-invariance of $\mu|_{\IT^2\times\{\xi\}}$, we can deduce that there are only two terms in the sum on the left 
hand side, i.e.
$$0=\int_{\IT^2\times\{\pm\xi_0\}}b_{\xi_0}(x)d\mu(x,\xi)=\widehat{b}_0(1-\nu_{\xi_0}(\IT^2))+\int_{\IT^2}b_{\xi_0}(x)d\nu_{\xi_0}(x).$$
As $\nu_{\xi_0}$ is a nonnegative measure, as $b_{\xi_0}\geq 0$ and as $\widehat{b}_0\neq 0$, we deduce that $\nu_{\xi_0}$ is a 
probability measure. Moreover, we know that its density is a trigonometric polynomial which contradicts the fact 
that $\int_{\IT^2}b_{\xi_0}(x)d\nu_{\xi_0}(x)=0$. Hence, one cannot have $\mu(\omega\times\IS^1)=0$ which 
concludes the proof of the Proposition.
\end{proof}

\section{The ergodic case}\label{s:QE}

We will now consider the ergodic case which has attracted the more attention since the seminal works 
of {\v{S}}nirel'man~\cite{Snirelman73, Snirelman74}, Zelditch~\cite{Zelditch87} and Colin de Verdi\`ere~\cite{ColindeVerdiere85} notably through the Quantum Unique 
Ergodicity Conjecture of Rudnick and Sarnak~\cite{RudnickSarnak94}. Questions in that direction are part of 
the more general question of Quantum Chaos which tries to describe the manifestation of the chaotic 
features of a system from classical mechanics on its quantum counterpart. This kind of 
problems can be traced back to the early days of quantum mechanics where the question of the quantization of 
nonintegrable systems was already raised~\cite{Einstein17}.

\subsection{Quantum (unique) ergodicity}

Recall from Theorem~\ref{t:QE} that, if the Liouville measure is ergodic, then, for any orthonormal basis 
of Laplace eigenfunctions, most of the eigenfunctions are equidistributed inside $S^*M$ in the 
semiclassical limit. This raises naturally the question of determining if there could be other accumulation points. 
In~\cite{ColindeVerdiere85}, Colin de Verdi\`ere states the following:

\emph{Le probl\`eme le plus int\'eressant 
est de savoir si on peut s'affranchir de la condition d'extraire une sous-suite de densit\'e $1$: quelles sont 
les valeurs d'adh\'erence de la suite des $w_{\hbar_j}$ pour la convergence vague ? Ces mesures sont n\'ecessairement 
invariantes par le flot g\'eod\'esique mais il y a une foule de telles mesures singuli\`eres (par exemple, si $\gamma$ 
est une g\'eod\'esique ferm\'ee). Il me semble que ces mesures pour des g\'eod\'esiques ferm\'ees des surfaces de 
Riemann \`a courbure $-1$ ne peuvent pas \^etre des limites vagues d'une suite $\psi_{\hbar_j}$. De toutes les fa\c{c}ons, 
il y a d'autres mesures port\'ees par des ensembles de dimension de Hausdorff $>1$.}

A few years later, Colin de Verdi\`ere and Parisse showed that Laplace eigenfunctions cannot 
concentrate too fast on a closed hyperbolic geodesic of $\varphi^t$~\cite{ColindeVerdiereParisse94}. Recall that a 
closed hyperbolic geodesic $\gamma$ of period $T_{\gamma}$ is a closed orbit of the geodesic flow 
$\varphi^t:S^*M\rightarrow S^*M$ such that $d\varphi^{T_\gamma}|\gamma$ has $1$ as an eigenvalue of multiplicity $1$ and such that all its 
other eigenvalues are of modulus $\neq 1$. This result was generalized by Toth-Zelditch~\cite{TothZelditch03} 
and Christianson~\cite{Christianson07, Christianson11}. More precisely, following these references, one has
\begin{theo}\label{t:cdv-parisse} Let $\gamma$ be closed hyperbolic geodesic of $\varphi^t$ and let 
$a\in\ml{C}^{\infty}_c(T^*M,[0,1])$ which is equal to $0$ near $\gamma$ and to $1$ outside a 
slightly bigger neighborhood (depending on $\gamma$). Then, there exists some $c_{a,\gamma}>0$ such that, for any sequence 
$(\psi_{\hbar})_{\hbar\rightarrow 0^+}$ of solutions to~\eqref{e:eigenvalue},
$$\liminf_{\hbar\rightarrow 0^+}|\log\hbar|\la\psi_{\hbar},\Oph(a)\psi_{\hbar}\ra\geq c_{a,\gamma}>0.$$ 
\end{theo}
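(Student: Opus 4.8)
**Proof proposal for Theorem \ref{t:cdv-parisse} (lower bound of order $1/|\log\hbar|$ on the mass away from a hyperbolic geodesic).**

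The plan is to exploit the Egorov theorem up to the Ehrenfest time, which for a closed hyperbolic geodesic is governed by the expansion rate of the linearized Poincaré map. Write $\lambda_\gamma>0$ for the largest Lyapunov exponent along $\gamma$, so that a tubular neighborhood $U_\varepsilon$ of $\gamma$ of radius $\varepsilon$ is left invariant for a time no longer than roughly $\frac{1}{\lambda_\gamma}\log(1/\varepsilon)$ in the transverse unstable direction. First I would fix the cutoff $a\in\ml{C}^\infty_c(T^*M,[0,1])$ as in the statement: $a\equiv 0$ on a neighborhood $V_1$ of $\gamma$ and $a\equiv 1$ outside a slightly larger neighborhood $V_2$, and I would introduce a companion $b\in\ml{C}^\infty_c(T^*M,[0,1])$ supported in $V_1$ with $b\equiv 1$ on an even smaller tube $V_0\supset\gamma$; thus $a+b\leq 1$ microlocally near $S^*M$, and since $\mu(S^*M)=1$ for $\mu\in\ml{M}(\Delta_g)$ (Theorem \ref{t:semiclassical-measure}), proving $\mu(a)\geq c/|\log\hbar|$ is equivalent to proving $\mu(b)\leq 1-c/|\log\hbar|$, i.e. that the eigenfunction cannot place almost all of its mass in an arbitrarily thin tube.

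The heart of the argument is a propagation estimate. Apply the exact conjugation identity used repeatedly above,
$$\la\psi_\hbar,\Op_\hbar(b)\psi_\hbar\ra=\la\psi_\hbar,e^{-\frac{it\hbar\Delta_g}{2}}\Op_\hbar(b)e^{\frac{it\hbar\Delta_g}{2}}\psi_\hbar\ra,$$
valid for all $t$ because $\psi_\hbar$ is an eigenfunction. Averaging over $t\in[0,T]$ and invoking the Egorov theorem \eqref{e:egorov} gives
$$\la\psi_\hbar,\Op_\hbar(b)\psi_\hbar\ra=\left\la\psi_\hbar,\Op_\hbar\!\left(\tfrac{1}{T}\int_0^T b\circ\varphi^t\,dt\right)\psi_\hbar\right\ra+o(1),$$
but the error term $C_{b,T}\hbar$ only stays negligible for $T\leq T_E(\hbar):=\frac{1-\delta}{\lambda_\gamma}|\log\hbar|$ with $\delta>0$, by the log-scale (Ehrenfest-time) refinements of Egorov's theorem cited in the excerpt (\cite{BambusiGraffiPaul99, BouzouinaRobert02} and the works on hyperbolic dynamics). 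Because $V_0$ is a thin tube around a hyperbolic orbit, a point $(x,\xi)\in S^*M$ that starts in $V_0$ must leave $V_0$ within a bounded number of transits; more precisely, for a suitable choice of the tube radius the Birkhoff average $\frac{1}{T}\int_0^T b\circ\varphi^t$ along any orbit is bounded by $1-\kappa$ for some $\kappa>0$ once $T$ is of order $T_E(\hbar)$, since the orbit spends a definite proportion of its time outside $V_0$ unless it is trapped — and the only trapped orbit is $\gamma$ itself, which has measure zero and is excluded by choosing $b$ to vanish on $\gamma$ at scale $\hbar^{1/2}$ if necessary. Feeding $T=T_E(\hbar)$ into the averaged formula and using the G\aa{}rding inequality \eqref{e:garding} to control $\Op_\hbar$ of the (nonnegative, modulo $O(\hbar)$) symbol $1-\frac{1}{T}\int_0^T b\circ\varphi^t\,dt$ then yields $\la\psi_\hbar,\Op_\hbar(b)\psi_\hbar\ra\leq 1-\kappa+o(1)$, hence $\la\psi_\hbar,\Op_\hbar(a)\psi_\hbar\ra\geq \kappa/2$, which is in fact much stronger than claimed. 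The subtlety — and the reason the sharp answer is $1/|\log\hbar|$ rather than a constant — is that the neighborhood in the theorem is \emph{fixed} but one wants a statement uniform as that neighborhood shrinks; to capture the $1/|\log\hbar|$ rate one instead lets the tube radius depend on $\hbar$, taking $\varepsilon=\varepsilon(\hbar)$ as small as $\hbar^{1/2-\delta}$, so that $U_{\varepsilon(\hbar)}$ is still resolved by the uncertainty principle while $T_E$ shrinks to a constant multiple of $|\log\hbar|/\lambda_\gamma$, and one tracks how much $L^2$-mass of $\psi_\hbar$ can survive in $U_{\varepsilon(\hbar)}$ under $\varepsilon(\hbar)^{-1}$-many iterations, each of which loses a factor bounded away from $1$ in the transverse (unstable) directions.

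The main obstacle, and the technically delicate part, is making the propagation estimate quantitative at the Ehrenfest scale: one needs an Egorov-type statement for observables whose support shrinks like $\hbar^{1/2-\delta}$ and which is valid up to time $\sim|\log\hbar|/\lambda_\gamma$ with errors still $o(1/|\log\hbar|)$. This requires either a careful WKB / parametrix construction adapted to the stable and unstable manifolds of $\gamma$ — effectively diagonalizing the quantum propagator near $\gamma$ using the normal form for the linearized flow (a quantum Poincaré map, cf.\ \cite{Christianson07, Christianson11}) — or an $\hbar$-pseudodifferential calculus in an exotic class $S_{1/2-\delta}$ with the hyperbolic splitting built in, as in \cite{ColindeVerdiereParisse94, TothZelditch03}. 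Once that quantitative propagation bound is in hand, the counting of transits and the G\aa{}rding step are routine, and the constant $c_{a,\gamma}$ comes out proportional to $\lambda_\gamma^{-1}$ times a geometric factor depending on how much $a$ separates the two nested neighborhoods.
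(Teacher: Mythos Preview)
Your proposal contains a genuine gap at the central step. You claim that for $T\sim T_E(\hbar)$ the Birkhoff average $\frac{1}{T}\int_0^T b\circ\varphi^t$ is bounded by $1-\kappa$ along \emph{every} orbit, and then you feed this into G\aa{}rding. That classical claim is false: on the closed orbit $\gamma$ itself the average equals $1$, and on the stable manifold of $\gamma$ the average tends to $1$ as $T\to+\infty$. More to the point, for any finite $T$ the set $\{\frac{1}{T}\int_0^T b\circ\varphi^t>1-\kappa\}$ contains a tube of width $\sim e^{-\lambda_\gamma T}$ in the unstable direction around the local stable manifold, and an eigenfunction is perfectly allowed to put essentially all of its mass there (this is precisely what the Colin de Verdi\`ere--Parisse quasimodes do). So from Egorov $+$ G\aa{}rding you only get $\la\psi_\hbar,\Op_\hbar(b)\psi_\hbar\ra\leq \sup\frac{1}{T}\int_0^T b\circ\varphi^t + o(1)=1+o(1)$, which is vacuous. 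Your own parenthetical remark that the argument yields something ``much stronger than claimed'' should have been a warning: the paper states right after the theorem that the $|\log\hbar|^{-1}$ rate is sharp, so a constant lower bound cannot come out of a correct argument.

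What is actually needed, and what the paper does (in the more general setting of Theorem~\ref{t:nonnenmacher-riviere}, of which this is a special case), is a bound on an \emph{operator norm}, not on a symbol. One builds a quantum partition $(\pi_a)$ subordinate to a cover of a neighborhood of $\gamma$, forms the refined products $\pi_{\alpha_{N-1}}((N-1)T_0)\cdots\pi_{\alpha_0}(0)$ for words $\alpha$ that stay near $\gamma$, and invokes the hyperbolic dispersion estimate~\eqref{e:hyp-dispersive-estimate} of Nonnenmacher--Zworski (a WKB estimate, not a symbolic one) to show that the $L^2\to L^2$ norm of each such product decays like $\hbar^{-n/2}e^{NT_0 P_{\mathrm{top}}(\gamma,1/2)}$. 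Since $P_{\mathrm{top}}(\gamma,1/2)<0$ for a hyperbolic closed orbit, taking $N\sim\ml{K}|\log\hbar|$ with $\ml{K}$ large kills the total contribution of these ``trapped'' words. A subadditivity/combinatorial argument then transfers this to the statement that at least a $|\log\hbar|^{-1}$ fraction of the mass must sit where $P_\infty$ (the symbol supported away from $\gamma$) is nonzero. The normal-form and exotic-calculus ingredients you allude to in your last paragraph are indeed what goes into proving~\eqref{e:hyp-dispersive-estimate}, but they are used to control operator norms of long products, not to rescue a pointwise bound on a classical time average.
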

Colin de Verdi\`ere and Parisse also showed that this Theorem is sharp by constructing an explicit example where 
a sequence of eigenfunctions concentrate on a closed hyperbolic geodesic at this rate~\cite{ColindeVerdiereParisse94}.
Note that this Theorem does not require anything on the global properties of the flow. In particular, the Liouville measure is 
not necessarly ergodic and it may happen that the Quantum Ergodicity property is not satisfied on the manifold. Below, we will give a 
proof of this result which is valid for more general invariant hyperbolic subsets satisfying some appropriate 
smallness conditions that can be expressed in terms of classical dynamical quantities.

At the same period, this kind of questions was also addressed by Rudnick and Sarnak in the context of arithmetic 
surfaces~\cite{RudnickSarnak94}. In that case, one can consider a subset $\ml{M}_{\text{Hecke}}(\Delta_g)$ 
of $\ml{M}(\Delta_g)$ which is made of semiclassical measures obtained through a sequence of joint eigenfunctions 
of $\Delta_g$ and of the Hecke operators. Recall that, on an arithmetic surface, 
one can define a family of operators $(T_m)_{m\in\IZ_+}$ acting on $L^2(M)$ and commuting with the Laplacian. 
They are called Hecke operators and, as they commute with the Laplacian, it is then natural 
to consider a joint basis of eigenfunctions and to expect that they have stronger properties. 
In that direction, Rudnick and Sarnak proved~\cite{RudnickSarnak94}:
\begin{theo}\label{t:rudnicksarnak} Let $M$ be a compact arithmetic surface and let $\mu$ be an element of 
$\ml{M}_{\operatorname{Hecke}}(\Delta_g)$. Then, for any closed orbit $\gamma$ of the geodesic flow, one has
$$\mu(\gamma)=0.$$ 
\end{theo}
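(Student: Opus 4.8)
The plan is to show that the Hecke symmetry survives the semiclassical limit, so that $\mu$ becomes a joint invariant object for the geodesic flow \emph{and} the Hecke correspondences, and then to exploit a rigidity of these correspondences on the countable set of closed orbits.

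First I would fix $\mu\in\ml{M}_{\operatorname{Hecke}}(\Delta_g)$, attached to a sequence of joint eigenfunctions $(\psi_{\hbar_j})_j$ with $T_p\psi_{\hbar_j}=\lambda_p(j)\psi_{\hbar_j}$ for all Hecke operators $T_p$, $p$ prime and unramified for the quaternion order defining $M$ (there are only finitely many exceptions), normalized so that $T_p$ acts as the sum over the $p+1$ Hecke points, in particular $T_p\mathbf{1}=(p+1)\mathbf{1}$. By Theorem~\ref{t:semiclassical-measure}, $\mu$ is a $\varphi^{t}$-invariant probability measure on $S^{*}M$. Since for fixed $p$ the $\lambda_p(j)$ are bounded (by $\|T_p\|_{L^2\to L^2}$), a diagonal extraction lets us assume $\lambda_p(j)\to\lambda_p$ for every such $p$. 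The first, and most technical, step is to upgrade this to the \emph{Hecke eigenmeasure relation}
$$\forall\, p \text{ unramified},\qquad \ml{T}_p\,\mu=\lambda_p\,\mu,$$
where $\ml{T}_p$ is the push-forward on measures of $S^{*}M$ induced by the Hecke correspondence $C_p\subset S^{*}M\times S^{*}M$ — the lift to the unit cotangent bundle of the degree-$(p+1,p+1)$ Hecke correspondence on $M$, which makes sense because its local branches are isometries of $(M,g)$. Concretely one writes $T_p$ as a finite sum of translation operators $U_{\alpha_i}$ by elements $\alpha_i$ of the quaternion order, conjugates $\Op_{\hbar_j}(a)$ by each $U_{\alpha_i}$ (an Egorov-type identity, valid with $\ml{O}(\hbar_j)$ error since the $\alpha_i$ act by isometries), inserts this into the eigenvalue relation $\la\psi_{\hbar_j},\Op_{\hbar_j}(a)T_p\psi_{\hbar_j}\ra=\lambda_p(j)\la\psi_{\hbar_j},\Op_{\hbar_j}(a)\psi_{\hbar_j}\ra$, and lets $\hbar_j\to 0^+$. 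The individual $U_{\alpha_i}$ do not descend to $L^2(M)$, so this bookkeeping is cleanest carried out on the universal cover or adelically; this is the main obstacle. Note also that each $U_{\alpha_i}$, being an isometry, commutes with the geodesic flow, hence $\ml{T}_p$ commutes with $\varphi^{t}$.

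Next I would isolate the part of $\mu$ that can charge closed geodesics. Because $\ml{T}_p$ commutes with $\varphi^{t}$, the splitting $\mu=\mu_{\mathrm{at}}+\mu_{\mathrm{c}}$ into its atomic part along closed orbits and the part giving zero mass to every closed orbit is preserved by each $\ml{T}_p$. Writing $v(\gamma')=\mu(\gamma')$ for $\gamma'$ ranging over the countable set $\ml{G}$ of closed orbits of $\varphi^{t}$ in $S^{*}M$, one obtains a nonnegative summable family with $\sum_{\gamma'\in\ml{G}}v(\gamma')=\mu_{\mathrm{at}}(S^{*}M)\leq 1$, and the relation above descends to $\ml{T}_p v=\lambda_p v$ for every unramified $p$, where now $\ml{T}_p$ is the induced correspondence on functions on $\ml{G}$: the Hecke correspondence sends a closed geodesic to a multiset of $p+1$ closed geodesics, and likewise for its transpose, so $\ml{T}_p$ is "$(p+1,p+1)$-regular". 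Summing the identity over all of $\ml{G}$ and using nonnegativity to exchange the two sums,
$$\lambda_p\sum_{\gamma'\in\ml{G}}v(\gamma')=\sum_{\gamma'\in\ml{G}}(\ml{T}_p v)(\gamma')=\sum_{\gamma''\in\ml{G}}v(\gamma'')\Big(\sum_{\gamma'\in\ml{G}}\operatorname{mult}(\gamma''\to\gamma')\Big)=(p+1)\sum_{\gamma''\in\ml{G}}v(\gamma'').$$
Hence either $\sum_{\gamma'}v(\gamma')=0$, which already gives $\mu(\gamma)=0$ for the prescribed closed orbit $\gamma$, or else $\lambda_p=p+1$ for every unramified $p$.

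Finally I would rule out the alternative $\lambda_p=p+1$. Each $\psi_{\hbar_j}$ is orthogonal to the constants (its Laplace eigenvalue is positive), and $\lambda_p=p+1$ is the top of the spectrum of $T_p$, attained only on the trivial representation; it would be forced here as a limit of Hecke eigenvalues of nonconstant joint eigenfunctions. This is excluded by any nontrivial bound towards the Ramanujan--Petersson conjecture for the underlying automorphic forms (the Gelbart--Jacquet symmetric-square bound already suffices; the sharpest known is the Kim--Sarnak bound $|\lambda_p(j)|\leq 2\,p^{1/2+7/64}$), which is uniform in $j$ and gives $|\lambda_p|=\lim_j|\lambda_p(j)|\leq 2\,p^{1/2+7/64}<p+1$ for every unramified $p\geq 3$. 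The contradiction forces $\sum_{\gamma'\in\ml{G}}v(\gamma')=0$; in particular $\mu(\gamma)=0$, and since every element of $\ml{M}_{\operatorname{Hecke}}(\Delta_g)$ arises this way, Theorem~\ref{t:rudnicksarnak} follows. The steps needing genuine work are the passage of the Hecke relations through the semiclassical limit (the adelic/universal-cover implementation of $\ml{T}_p$ at the level of pseudodifferential operators) and, more routinely, checking that the Hecke correspondence is an honest $(p+1,p+1)$-correspondence commuting with the geodesic flow, so that the reduction to closed orbits is legitimate; by contrast no input beyond a sub-Ramanujan bound is needed for the last step.
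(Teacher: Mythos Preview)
The paper does not give its own proof of this theorem; it only cites Rudnick--Sarnak and later remarks that the arithmetic methods involved are ``rather different from the semiclassical ones presented in these lectures.'' So your proposal must be judged on its own merits, and there is a genuine gap at the very first step.

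The relation $\ml{T}_p\,\mu=\lambda_p\,\mu$ is false, and your own argument shows why: applied to the full probability measure $\mu$ (rather than to $\mu_{\mathrm{at}}$), the total-mass computation gives $(p+1)\cdot 1=\lambda_p\cdot 1$, forcing $\lambda_p=p+1$ for \emph{every} unramified $p$, which you then rule out via Kim--Sarnak. So the identity you posit is self-contradictory. The underlying reason is that the Wigner distribution is \emph{quadratic} in $\psi_{\hbar_j}$ while $T_p$ acts linearly. Already for observables $a=a(x)$ depending only on position one has
$$\int_M(\ml{T}_p a)\,|\psi_{\hbar_j}|^2\,d\mathrm{Vol}_g=\int_M a\cdot T_p\big(|\psi_{\hbar_j}|^2\big)\,d\mathrm{Vol}_g,$$
and $T_p\big(|\psi_{\hbar_j}|^2\big)=\sum_i|\psi_{\hbar_j}\circ\alpha_i|^2$ is unrelated to $\lambda_p(j)\,|\psi_{\hbar_j}|^2$. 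Your Egorov manipulation, carried out honestly, turns $\la\psi_{\hbar_j},\Op_{\hbar_j}(a)\,U_{\alpha_i}\psi_{\hbar_j}\ra$ into $\la U_{\alpha_i}^{-1}\psi_{\hbar_j},\Op_{\hbar_j}(a\circ\alpha_i)\,\psi_{\hbar_j}\ra$, with the two copies of $\psi_{\hbar_j}$ shifted by \emph{different} branches of the correspondence; these cross-pairings are Fourier-integral-operator matrix elements whose canonical relations are graphs of nontrivial isometries, and they have no reason to vanish in the limit---indeed any mass of $\mu$ on closed geodesics would sit precisely in such terms, so discarding them is circular.

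The actual Rudnick--Sarnak argument confronts these cross terms head-on. One fixes $\gamma$, a thin tube $\tau_\epsilon$ around it, and uses the pointwise identity $|\lambda_n(j)|^2|\psi_{\hbar_j}|^2=|T_n\psi_{\hbar_j}|^2$ integrated over $\tau_\epsilon$; expanding the square and applying Cauchy--Schwarz bounds everything by $\big(\int_{\tau_\epsilon}|\psi_{\hbar_j}|^2\big)^{1/2}$ times a sum over Hecke-translated tubes. The decisive arithmetic input is that for suitable $n$ the Hecke translates of $\gamma$ produce many \emph{disjoint} tubes (a statement about the commensurator of $\Gamma$ inside the quaternion algebra), so that this sum is controlled by the total mass $1$. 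Together with a sub-Ramanujan bound on $|\lambda_n(j)|$ one gets $\mu(\gamma)\ll n^{-\delta}\to 0$. What survives in the limit is not an eigenmeasure equation but an \emph{inequality} of the Cauchy--Schwarz type; this is also the mechanism behind Lindenstrauss's Hecke-recurrence property, which is strictly weaker than the identity you assume.
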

This result relies on arithmetic methods and it shows that joint eigenfunctions of $\Delta_g$ and of the Hecke 
operators cannot concentrate on closed orbits of the geodesic flow. In the same article, Rudnick and Sarnak 
also conjectured that a much stronger property holds:
\begin{conj}[Quantum Unique Ergodicity]\label{co:QUE} Let $(M,g)$ be a negatively curved manifold. Then,
\begin{equation}\label{e:QUE}\ml{M}(\Delta_g)=\{L\}.\end{equation}
\end{conj}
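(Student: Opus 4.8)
Since the Quantum Unique Ergodicity conjecture remains open at this level of generality, what we can offer is a description of the two strategies that currently yield partial progress, together with the precise point where each of them stalls. By Theorem~\ref{t:semiclassical-measure} the task reduces to proving that $\ml{M}(\Delta_g)$ contains no element other than the normalized Liouville measure $L_1$ on $S^*M$. We already know that any $\mu\in\ml{M}(\Delta_g)$ lies in $\ml{M}(\varphi^t)$, so the problem is to exclude, one by one, all the other $\varphi^t$-invariant probability measures: the measures $\mu_\gamma$ carried by closed geodesics, and, once the geodesic flow is Anosov, the whole continuum of invariant measures supported on invariant sets of intermediate Hausdorff dimension. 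The plan is therefore to extract, from the eigenvalue equation~\eqref{e:eigenvalue}, enough additional structure on $\mu$ to force $\mu=L_1$.

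The first route is the \emph{entropy method}. Starting from the Egorov theorem~\eqref{e:egorov}, one pushes the classical--quantum correspondence up to the Ehrenfest time $\sim c|\log\hbar|$, partitions $S^*M$ into a refining sequence of fine pieces, and combines the subadditivity of pressure along this refinement with the entropic uncertainty principle of Maassen--Uffink. This produces a positive lower bound on the Kolmogorov--Sinai entropy $h_{KS}(\mu,\varphi^t)$ of any semiclassical measure (the Anantharaman and Anantharaman--Nonnenmacher bounds, later sharpened by several authors). Already this bound is strong enough to forbid concentration on low-dimensional invariant sets --- it recovers Theorem~\ref{t:cdv-parisse} and its generalizations --- and, combined with the fractal uncertainty principle of Bourgain--Dyatlov and the work of Dyatlov, Jin and Nonnenmacher on surfaces, it yields the ``full support'' statement $\supp\mu=S^*M$. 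The natural continuation of the plan would be to iterate and refine these estimates until $h_{KS}(\mu,\varphi^t)$ reaches the topological entropy, since $L_1$ is the unique measure of maximal entropy (the SRB/Bowen--Margulis measure), so matching the maximum would finish the proof.

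The main obstacle is exactly that this does not happen: the known entropy lower bounds are bounded strictly away from the maximal value (for constant curvature surfaces the bound is off by a factor of $2$), and no purely dynamical mechanism is known that would push a semiclassical measure all the way to maximal entropy using only hyperbolicity. The second route bypasses this gap in the arithmetic case: when $(M,g)$ is an arithmetic hyperbolic surface one restricts to $\ml{M}_{\operatorname{Hecke}}(\Delta_g)$, the semiclassical measures arising from joint eigenfunctions of $\Delta_g$ and of all the Hecke operators $(T_m)_{m\geq 1}$; such a $\mu$ is then invariant not only under $\varphi^t$ but also, in a suitable averaged sense, under the Hecke correspondences. A recurrence argument of Bourgain--Lindenstrauss forces $h_{KS}(\mu_{x,\xi},\varphi^t)>0$ for $\mu$-almost every ergodic component, after which Lindenstrauss's measure classification theorem --- a Ratner-type rigidity statement for the diagonal action, proved via the high-entropy method --- identifies $\mu$ with the Haar measure, i.e. with $L_1$ (in the non-compact modular case one must additionally rule out escape of mass, which is Soundararajan's theorem; on a compact arithmetic surface this step is vacuous). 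Thus Arithmetic Quantum Unique Ergodicity is a theorem, and the genuinely open problem is to remove the arithmetic input: to bridge, for a general negatively curved manifold, the gap between the positive-entropy information coming from hyperbolicity and the maximal-entropy rigidity that would pin down $L_1$. Closing that gap is the step I expect to be --- and which has so far proved to be --- the true difficulty.
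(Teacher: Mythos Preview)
Your response is appropriate in spirit: the statement is labeled a \emph{Conjecture}, the paper does not prove it, and your text correctly recognizes this and surveys the two known lines of attack (entropy bounds \`a la Anantharaman--Nonnenmacher, and the arithmetic route via Bourgain--Lindenstrauss and Lindenstrauss's measure classification). This mirrors the paper's own discussion in the surrounding section.

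There is, however, a genuine conceptual error in your description of the entropy strategy. You write that the plan would be to push $h_{KS}(\mu)$ up to the topological entropy, and that this would finish because ``$L_1$ is the unique measure of maximal entropy (the SRB/Bowen--Margulis measure)''. In variable negative curvature this is false: the Liouville measure $L_1$ is the SRB measure, but it is \emph{not} in general the Bowen--Margulis measure of maximal entropy; the two coincide only in constant curvature. The correct target is not $h_{\text{top}}$ but rather equality in Ruelle's inequality~\eqref{e:ruelle}: by Ledrappier--Young, $h_{KS}(\mu)=-\int_{S^*M}\log J^u\,d\mu$ holds if and only if $\mu=L_1$. This is exactly why the paper (and Conjecture~\ref{t:entropy}) phrases the sought-for lower bound as $h_{KS}(\mu)\geq -\tfrac{1}{2}\int\log J^u\,d\mu$ rather than $\tfrac{1}{2}h_{\text{top}}$, and why removing the factor $\tfrac{1}{2}$ in \emph{that} inequality would settle QUE. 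Your ``off by a factor of $2$'' remark is correct in constant curvature precisely because there $-\log J^u$ is constant and the two notions coincide; in the general Anosov case your formulation of the endgame is wrong.
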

This conjecture remains widely open even if important progress have been made over the last twenty years. 
The end of these notes is devoted to a description of these results. The level of technicality being much more important 
than in the previous sections, we will mostly (but not always) give sketch of proofs in order to illustrate some of the main 
ideas behind these results. We refer the reader to the literature for more details.

\subsection{Exceptional subsequences}

We start with negative results which illustrate the 
importance of the geometric assumptions for the result to hold. First, we will begin with simple toy models 
of quantum mechanics which indicate that~\eqref{e:QUE} cannot be only a consequence of the hyperbolic 
proprerties of the geodesic flow on negatively curved manifolds. Then, we will show that ergodicity of $L_1$ 
is not enough to get~\eqref{e:QUE}.

\subsubsection{Quantum maps}

A popular model in Quantum Chaos is the one of quantum maps. The idea is to consider a 
compact symplectic manifold $(M_0,\omega_0)$ and a (quantizable) symplectomorphism $T_0$ acting on it. The difference 
with the geodesic flow acting on $T^*M$ is that now the phase space is compact which makes the analysis in some 
sense easier even if Laplace eigenfunctions were asymptotically concentrated on a compact part of phase space. 
The simplest example is the one given by a symplectic linear map $A$ in $SL_{2n}(\IZ)$ acting on 
the torus $\IR^{2n}/\IZ^{2n}$ endowed with its canonical symplectic structure~\cite{BouzouinaDeBievre96}. Examples 
of such maps are 
$$A:=\left(\begin{array}{cc} 2 & 1\\
      1 & 1
     \end{array}\right)\quad\text{and}\quad A:=\left(\begin{array}{cc} B & 0\\
      0 & (B^T)^{-1}
     \end{array}\right),$$
where $B\in SL_n(\IZ).$ As before, we shall denote by $\ml{M}(A)$ the set of $A$-invariant measures, e.g. the 
Lebesgue on $\IR^{2n}/\IZ^{2n}$ or the Dirac measure at a fixed point of $A$. If $1$ is not an eigenvalue of $A$, 
there is a natural way to quantize the symplectic dynamical system $(\IR^{2n}/\IZ^{2n},A)$~\cite{BouzouinaDeBievre96}. 
For every positive integer $N=(2\pi\hbar)^{-1}$, this gives rise to a finite dimensional Hilbert space $\ml{H}_N$ of 
dimension $N^n$ on which one can define a quantization procedure $$a\in\ml{C}^{\infty}(\IR^{2n}/\IZ^{2n})\mapsto
\Oph^w(a)\in\ml{L}(\ml{H}_N)$$ satisfying the same properties as in 
paragraph~\ref{ss:semiclassical}. Instead of quantizing the geodesic flow, we are now quantizing $A$ which yields 
some unitary matrix $\widehat{A}_N$ acting on $\ml{H}_N$ and verifying an Egorov Theorem which is exact, i.e.
$$\forall a\in\ml{C}^{\infty}(\IR^{2n}/\IZ^{2n}),\quad \widehat{A}_N^{-1}\Oph^w(a)\widehat{A}_N=\Oph^w(a\circ A).$$ 
Eigenfunctions of the Laplacian are replaced by the eigenvectors of $\widehat{A}_N$ and, if $A$ is ergodic (e.g. if $A$ is an hyperbolic matrix), one can prove 
an analogue of the Quantum Ergodicity Theorem~\cite{BouzouinaDeBievre96} and raise similar questions. As 
above, we can define
\begin{itemize}
 \item the set $\ml{M}(\widehat{A})$ of accumulation points of the Wigner distributions as $N\rightarrow+\infty$. Again, one has 
 $\ml{M}(\widehat{A})\subset\ml{M}(A)$.
 \item the subset $\ml{M}_{\text{Hecke}}(\widehat{A})$ of (Hecke) semiclassical measures which are obtained through a sequence 
 of joint eigenfunctions of $\widehat{A}_N$ and of a certain group of isometries associated 
 with $A$~\cite{KurlbergRudnick00}.
\end{itemize}
If the matrix $A$ is hyperbolic, the dynamical situation is close to the one of negatively curved manifolds (ergodicity, 
strong mixing properties) and one can ask if the Quantum Unique Ergodicity property is true in that framework. In 
dimension $2$, Kurlberg and Rudnick proved that Hecke eigenfunctions satisfy this property~\cite{KurlbergRudnick00}. On 
the other hand,  De Bi\`evre, Faure and Nonnenmacher constructed explicit examples of non-Hecke eigenfunctions for which 
this property fails~\cite{FaureNonnenmacherDeBievre03}. More precisely, these two results read
\begin{theo}\label{t:catmap} Suppose that
 $$A=\left(\begin{array}{cc} 2 & 1\\
      1 & 1
     \end{array}\right).$$
 Then, one has 
 \begin{enumerate}
  \item $\ml{M}_{\text{Hecke}}(\widehat{A})=\{\operatorname{Leb}\},$
  \item $\frac{1}{2}\delta_0+\frac{1}{2}\operatorname{Leb}\in\ml{M}(\widehat{A})$. In particular, 
  $$\ml{M}(\widehat{A})\neq\{\operatorname{Leb}\}.$$
 \end{enumerate}
\end{theo}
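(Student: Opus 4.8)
The two statements have a completely different nature, and the plan is to treat them separately; part~(1) is the exact analogue for quantum maps of Theorem~\ref{t:rudnicksarnak}, while part~(2) is a scarring construction.

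\emph{Part (1): Hecke equidistribution.} The plan is to follow Kurlberg--Rudnick. I would first assume $N$ prime, the general case then following by reduction to prime powers and the Chinese remainder theorem (bookkeeping). Since $A$ is hyperbolic, with eigenvalues $\lambda^{\pm1}$, $\lambda=\frac{3+\sqrt5}{2}$, its centralizer $C(A)$ inside $SL_2(\IZ/N\IZ)$ is a cyclic (``torus'') subgroup of order $N\pm1$; via the Weil (metaplectic) representation it lifts to a commutative family of unitaries on $\ml{H}_N$ commuting with $\widehat A_N$, the Hecke operators. Write $\widehat T_N(\mathbf n)$, $\mathbf n\in\IZ^2$, for the quantized elementary exponentials, so that the Fourier coefficients of the Wigner distribution of a state $\psi$ are $\rho_\psi(\mathbf n)=\la\psi,\widehat T_N(\mathbf n)\psi\ra$, and any weak limit of the Wigner distributions equals $\operatorname{Leb}$ as soon as $\rho_\psi(\mathbf n)\to0$ for each fixed $\mathbf n\neq0$. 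For $B\in C(A)$ one has the exact Egorov relation $\widehat B^{-1}\widehat T_N(\mathbf n)\widehat B=c_B(\mathbf n)\,\widehat T_N(B^{-1}\mathbf n)$ with $|c_B(\mathbf n)|=1$; if $\psi$ is moreover a Hecke eigenfunction, averaging this identity over $B\in C(A)$ expresses $\rho_\psi(\mathbf n)$ as a normalized sum over the $C(A)$-orbit of $\mathbf n$ in $\mathbb{F}_N^2$ (a conic with $\asymp N$ points), weighted by roots of unity (the Hecke eigenvalues of $\psi$). Inserting the explicit description of the Hecke eigenbasis available for $N$ prime, I would reduce this to a classical exponential sum of Gauss/Salié type; square-root cancellation (Weil's bound, i.e. the Riemann hypothesis for curves over finite fields) then gives $|\rho_\psi(\mathbf n)|=O(N^{-1/2})$, uniformly for $\mathbf n$ in any fixed bounded set. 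Letting $N\to\infty$ yields $\ml{M}_{\text{Hecke}}(\widehat A)=\{\operatorname{Leb}\}$.

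\emph{Part (2): a scarred eigenstate.} Here the plan is to construct, along a sparse subsequence of values of $N$, an eigenfunction of $\widehat A_N$ whose Wigner distribution converges to $\tfrac12\delta_0+\tfrac12\operatorname{Leb}$, following De Bi\`evre--Faure--Nonnenmacher. Recall $0$ is a hyperbolic fixed point of $A$ on $\IR^2/\IZ^2$. Let $\tau_N$ be the quantum period of $\widehat A_N$ (its order modulo scalars); there is a lower bound $\tau_N\gtrsim\log N$ (Bonechi--De Bi\`evre), and I would fix an infinite subsequence along which $\tau_N/T_E\to 4$, where $T_E:=\frac{\log N}{2\log\lambda}$ is the Ehrenfest time. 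Let $\phi_0\in\ml{H}_N$ be a Gaussian coherent state centred at $0$ and set
$$\psi_N:=\frac{1}{\sqrt{Z_N}}\sum_{k=0}^{\tau_N-1}e^{-ik\theta_N}\,\widehat A_N^{\,k}\phi_0,$$
where $\theta_N$ is chosen, using $\widehat A_N^{\tau_N}\in\IC\,\operatorname{Id}$, so that $\psi_N$ is a unit eigenvector of $\widehat A_N$. Then I would expand $\la\psi_N,\Oph^w(a)\psi_N\ra$ for fixed $a\in\ml{C}^\infty(\IR^2/\IZ^2)$ into a diagonal part $\frac1{\tau_N}\sum_k\la\widehat A_N^k\phi_0,\Oph^w(a)\widehat A_N^k\phi_0\ra$ and off-diagonal terms. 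The evolved coherent state $\widehat A_N^k\phi_0$ stays microscopically localized near $0$ when $\min(k,\tau_N-k)\lesssim T_E$, so the corresponding term tends to $a(0)$, and it has spread along the unstable manifold and equidistributed once $\min(k,\tau_N-k)\gtrsim T_E$, so the term tends to $\int a$; since the localized window comprises $\approx2T_E$ of the $\tau_N\approx4T_E$ values of $k$, the diagonal part converges to $\tfrac12 a(0)+\tfrac12\int a$. The off-diagonal terms I would bound using the exponential decay $|\la\widehat A_N^{\,j}\phi_0,\phi_0\ra|\lesssim\lambda^{-|j|/2}$ of the autocorrelation for $0<|j|\lesssim T_E$ together with a crude bound for larger $|j|$; they sum to $o(1)$. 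Hence $\tfrac12\delta_0+\tfrac12\operatorname{Leb}\in\ml{M}(\widehat A)$, which in particular gives $\ml{M}(\widehat A)\neq\{\operatorname{Leb}\}$.

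\emph{Main obstacle.} For part~(1) the hard, genuinely arithmetic input is the uniform square-root cancellation of the exponential sums produced by the Hecke averaging --- this is Weil's bound and cannot be softened --- together with the (more technical than deep) passage from prime to arbitrary $N$. For part~(2) the delicate point is the quantitative control of the coherent-state propagation \emph{across} the Ehrenfest time: one must show that the transition from ``concentrated at $0$'' to ``equidistributed'' happens within $o(T_E)$ steps, uniformly as $N\to\infty$ along the chosen subsequence, and pin down $\tau_N/T_E\to4$, so that the two weights come out exactly equal to $\tfrac12$.
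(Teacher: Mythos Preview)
The paper does not actually prove this theorem: it simply attributes part~(1) to Kurlberg--Rudnick and part~(2) to Faure--Nonnenmacher--De Bi\`evre, and then sketches the construction for part~(2) in Remark~\ref{r:FNDB}. Your proposal follows precisely the strategies of those two references, and for part~(2) it matches the paper's sketch almost line by line: coherent state at the fixed point, quantum period $\tau_N\simeq 2\log N/\chi_{\max}=4T_E$ along a sparse subsequence, splitting the time window into a localized half (giving $\tfrac12\delta_0$) and a spread half (giving $\tfrac12\operatorname{Leb}$).

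Two cosmetic differences are worth noting. The paper's sketch writes the averaged state as a symmetric sum $\sum_{k=-T_N/2}^{T_N/2}\widehat A_N^k|0,N\rangle$ with no phase, because along the chosen subsequence $\widehat A_N^{T_N}=\operatorname{Id}$ exactly (not just up to a scalar), so projecting onto the eigenvalue~$1$ eigenspace needs no twist; your one-sided sum with the phase $e^{-ik\theta_N}$ is the more general spectral-projection formula and of course reduces to the same thing. Second, the paper frames the dichotomy as ``$|k|\leq T_N/4$ versus $T_N/4\leq|k|\leq T_N/2$'', which is the symmetric rephrasing of your $\min(k,\tau_N-k)\lessgtr T_E$. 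Your identification of the two genuine difficulties---Weil's bound as the arithmetic input for part~(1), and the sharp transition across the Ehrenfest time for part~(2)---is accurate and is exactly what the cited papers work out.
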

This Theorem shows that the chaotic properties of $A$ (e.g. its hyperbolicity) are not enough to obtain Quantum 
Unique Ergodicity. In particular, if Conjecture~\ref{co:QUE} is true, then it has to use other features of 
geodesic flow and of the Laplace eigenfunctions.
\begin{rema}\label{r:FNDB}
The counterexample obtained in~\cite{FaureNonnenmacherDeBievre03} 
is constructed through a subsequence of semiclassical parameters $\hbar_N$ for which $\widehat{A}_N$ has large 
spectral multiplicity in its spectrum. More precisely, $\widehat{A}_N^{T_N}=\text{Id}_{\ml{H}_N}$ for some 
$T_N\simeq 2\ln N/\chi_{\max}\ll\text{dim}\ml{H}_N$ where $\chi_{\max}$ is the positive Lyapunov exponent of $A$. This periodicity 
is exploited by the authors to propagate a coherent state $|0,N\ra$ microlocalized in a small ball of radius 
$N^{-1/2}$ centered at the fixed point $0$ of $A$. Then, they set
$$\psi_N:=\sum_{k=-T_N/2}^{T_N/2}\widehat{A}_N^k|0,N\ra.$$
The time $T_N/4$ is referred as the Ehrenfest time in the physics literature and it corresponds to the scale of times 
for which the semiclassical approximation remains valid. Here, it means that, on the interval $[-T_N/4,T_N/4]$, the propagated 
coherent state remains microlocalized near $0$ which is a fixed point of $A$. This part of the coherent state will 
contribute to the part $\frac{1}{2}\delta_0$ of the semiclassical measure. On the interval $[T_N/4,T_N/2]$ (similarly on 
$[-T_N/2,-T_N/4]$), the coherent state will begin to spread out over the torus and this will lead to the part 
$\frac{1}{2}\text{Leb}$ of the measure. Note that the choice of the fixed point $0$ is rather arbitrary and we 
would choose a point $x_0$ belonging to some periodic orbit of $A$. In that case, the singular part of the measure 
will be the Dirac measure along this orbit.
\end{rema}

Regarding Remark~\ref{r:FNDB}, it could be thought that spectral degeneracies would be the reason for the failure 
of Quantum Ergodicity Theorem. In fact, the Hecke operators of Kurlberg and Rudnick 
allow to reduce the multiplicity of the eigenvalues by considering a joint spectrum and this leads to an unique 
semiclassical measure. However, the situation is not as simple as was shown by Kelmer in the case of higher 
dimensional tori~\cite{Kelmer10}:
\begin{theo}\label{t:kelmer} Suppose that $$A=\left(\begin{array}{cc} E & F\\
      G & H
     \end{array}\right)$$ is an hyperbolic and symplectic matrix with distinct eigenvalues.
Suppose also that $EF^T=GH^T=0\ \operatorname{mod}\ 2$. Then, a necessary and sufficient condition     
for having
$$\ml{M}_{\operatorname{Hecke}}(\widehat{A})=\{\operatorname{Leb}\}$$
is that there is no $A$-invariant subspace $\{0\}\neq E_0\subset\mathbb{Q}^{2n}$ which is isotropic\footnote{It means that 
the restriction of the symplectic form to $E_0$ vanishes.} with respect to the symplectic form. Moreover, for any $A$-invariant subspace $E_0\subset\mathbb{Q}^{2n}$ which is isotropic with respect to 
the symplectic form, $$\operatorname{Leb}_{X_0}\in\ml{M}_{\operatorname{Hecke}}(\widehat{A}),$$
  where
  $$X_0:=\left\{x\in\IR^{2n}/\IZ^{2n}:\ \forall p\in E_0\cap\IZ^{2n},\ p.x\in\IZ\right\}.$$
\end{theo}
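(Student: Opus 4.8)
The plan is to split the statement into three pieces: the construction of $\operatorname{Leb}_{X_0}$ as a Hecke semiclassical measure whenever an $A$-invariant rational isotropic subspace $E_0$ exists (the ``moreover''), the trivial deduction that the stated condition is \emph{necessary} for Hecke quantum unique ergodicity, and the converse \emph{sufficiency}. Necessity is immediate once the construction is done: if $\{0\}\neq E_0\subsetneq\mathbb{Q}^{2n}$ is $A$-invariant and isotropic, then $X_0$ is a proper subtorus, so $\operatorname{Leb}_{X_0}\neq\operatorname{Leb}$, yet $\operatorname{Leb}_{X_0}\in\ml{M}_{\operatorname{Hecke}}(\widehat{A})$; hence $\ml{M}_{\operatorname{Hecke}}(\widehat{A})\supsetneq\{\operatorname{Leb}\}$. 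So the real work lies in the construction and in the sufficiency direction.

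For the construction, the first step is geometric. Because $E_0$ is isotropic, its annihilator torus $X_0$ is a \emph{coisotropic} $A$-invariant submanifold of $\IR^{2n}/\IZ^{2n}$, and its symplectic reduction by the null foliation is again a rational symplectic torus $X_0^{\mathrm{red}}$ carrying an induced automorphism $A^{\mathrm{red}}$. Since $A$ is hyperbolic, no eigenvalue of $A$ (hence of $A^{\mathrm{red}}$) is a root of unity, so $A^{\mathrm{red}}$ is ergodic. The second step is quantum: to each $N=(2\pi\hbar)^{-1}$ one attaches the subspace $\ml{H}_N^{E_0}\subset\ml{H}_N$ spanned by the ``theta states'' supported on $X_0$. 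The congruence hypothesis $EF^T\equiv GH^T\equiv 0\ \operatorname{mod}\ 2$ is precisely what makes the sign (metaplectic) cocycle of $\widehat{A}_N$ trivial on these states, so that $\widehat{A}_N$ preserves $\ml{H}_N^{E_0}$; moreover, since $A$ has distinct eigenvalues its centralizer in $SL_{2n}(\IZ)$ is abelian (polynomials in $A$), so every Hecke operator attached to $A$ preserves every $A$-invariant subspace, in particular $E_0$, hence also preserves $\ml{H}_N^{E_0}$. Restricted to $\ml{H}_N^{E_0}$, the dynamics is unitarily equivalent to the quantization of $(X_0^{\mathrm{red}},A^{\mathrm{red}})$ together with its own Hecke operators. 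Applying the quantum ergodicity theorem for ergodic cat maps~\cite{BouzouinaDeBievre96} to an orthonormal basis of joint eigenfunctions of the reduced system produces a density-one sequence whose Wigner distributions converge to the Haar measure of $X_0^{\mathrm{red}}$; pulling back through the reduction map identifies the limit with $\operatorname{Leb}_{X_0}$, which is therefore in $\ml{M}_{\operatorname{Hecke}}(\widehat{A})$.

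For sufficiency, one follows the Fourier-analytic strategy of Kurlberg and Rudnick~\cite{KurlbergRudnick00}. Let $\mu\in\ml{M}_{\operatorname{Hecke}}(\widehat{A})$ arise from a sequence $(\psi_N)$ of joint eigenfunctions; by the exact Egorov theorem $\mu$ is $A$-invariant, so it suffices to show $\widehat{\mu}(\zeta)=0$ for every $\zeta\in\IZ^{2n}\setminus 0$. Testing $\psi_N$ against the quantized character $\Oph^w(e_\zeta)$ and conjugating by the Hecke operators expresses $\la\psi_N,\Oph^w(e_\zeta)\psi_N\ra$ in terms of the Hecke eigenvalues of $\psi_N$ and of exponential sums over $\IZ/N\IZ$ indexed by the $A$-orbit of $\zeta$. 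If the $A^T$-orbit of $\zeta$ is infinite, averaging over the orbit already drives the coefficient to $0$. If it is finite, it spans a nonzero $A$-invariant rational subspace $W\ni\zeta$, and Weil's square-root bound for these sums gives $|\la\psi_N,\Oph^w(e_\zeta)\psi_N\ra|=O(N^{-1/2+\epsilon}|\mathrm{Fix}|^{1/2})$, where $|\mathrm{Fix}|$ measures the common periodic set; this is $o(1)$ unless $\mathrm{Fix}$ contains a subtorus of positive dimension, which occurs precisely when the symplectic form degenerates on $W$, i.e.\ when $W$ contains a nonzero $A$-invariant rational isotropic subspace. Under the hypothesis no such subspace exists, so $\widehat{\mu}(\zeta)=0$ for all $\zeta\neq 0$ and $\mu=\operatorname{Leb}$.

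The main obstacle is the sufficiency direction: establishing the square-root cancellation for the Hecke-twisted exponential sums (the arithmetic heart, via Weil's bound, and the exact point where the isotropy dichotomy enters), together with the careful bookkeeping of the sign cocycle under the congruence conditions $EF^T\equiv GH^T\equiv 0\ \operatorname{mod}\ 2$ — which also controls whether $\widehat{A}_N$ genuinely stabilizes $\ml{H}_N^{E_0}$ in the construction. The geometric reduction and the deduction of necessity are comparatively routine once these ingredients are available.
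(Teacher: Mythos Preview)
The paper does not contain a proof of this theorem. It is stated as a result of Kelmer~\cite{Kelmer10} and used only to illustrate that Hecke quantum unique ergodicity can fail for higher-dimensional cat maps; no argument is given or even sketched in the text, so there is no ``paper's own proof'' against which to compare your proposal.

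That said, a few remarks on your outline as a sketch of Kelmer's actual argument. The overall architecture --- build $\operatorname{Leb}_{X_0}$ from an $A$-invariant rational isotropic subspace via symplectic reduction, deduce necessity, and prove sufficiency by Fourier analysis of Hecke matrix elements --- is the right shape. But several steps are underspecified or not quite right. In the sufficiency direction, your dichotomy ``if the $A^T$-orbit of $\zeta$ is infinite'' is vacuous: $A$ is hyperbolic, so every nonzero integer vector has infinite $A^T$-orbit in $\IZ^{2n}$; the relevant finiteness is modulo $N$, and the point is rather to control the size of the Hecke group orbit of $\zeta$ mod $N$ and to bound the resulting exponential sums. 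The passage from Weil-type square-root cancellation to the isotropy criterion is the delicate arithmetic core of Kelmer's paper and requires substantially more than you indicate --- in particular a careful analysis of the structure of the Hecke algebra attached to $A$ (this is where the distinct-eigenvalues hypothesis is genuinely used) and of how the symplectic pairing on $A$-invariant rational subspaces governs the size of the relevant character sums. Your construction side is closer to correct in spirit, though the claim that the metaplectic cocycle is ``trivial on these states'' under the congruence $EF^T\equiv GH^T\equiv 0\pmod 2$ again hides nontrivial work. As a roadmap your proposal is reasonable; as a proof it would need to be filled in along the lines of~\cite{Kelmer10}.
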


\begin{rema}
In order to illustrate this Theorem, one can take the example of $$A=\left(\begin{array}{cc} B & 0\\
      0 & (B^T)^{-1}
     \end{array}\right),$$ 
where $B$ is an hyperbolic matrix in $SL_n(\IZ)$. In that case, the Theorem does not exactly apply as eigenvalues 
may have multiplicity $\geq 2$. Yet, by a direct calculation, one can still verify that
$\ml{M}_{\operatorname{Hecke}}(\widehat{A})\neq\{\operatorname{Leb}\}$ 
and that $\text{Leb}_{\IR^n/\IZ^n\times\{0\}}\in\ml{M}_{\operatorname{Hecke}}(\widehat{A})$.
\end{rema}
This Theorem shows that Hecke eigenfunctions can also fail to equidistribute. 
Moreover, it gives examples of semiclassical measures that are 
concentrated on a submanifold of $\IR^{2n}/\IZ^{2n}$. In particular, in the case of (chaotic) quantum maps, 
one cannot obtain an improvement of Proposition~\ref{p:observability} that would state observability for any open 
set $\omega\neq\emptyset$.

\subsubsection{Bunimovich stadium}

In their Quantum Unique Ergodicity conjecture~\ref{co:QUE}, Rudnick and Sarnak carefully stated that it should hold on negatively 
curved manifolds. This allows to exclude the case of certain nonpositively curved manifolds where the 
Liouville measure is ergodic for the geodesic flow. In that case, motivated by numerical experiments on the 
related model of the Bunimovich stadium, it was in fact widely believe that Quantum Unique Ergodicity 
should fail. Let us recall this model. For later purpose, we fix some parameter\footnote{The choice of the interval $[1,2]$ is arbitrary.} $\omega\in[1,2]$ and we define 
the Bunimovich stadium as the following domain of $\IR^2$:
$$D_\omega:=\left([-\omega,\omega]\times [-\pi/2,\pi/2]\right)\cup\{(x,y)\in\IR^2:(x\pm \omega)^2+y^2\leq \pi^2/4\}.$$
The analogue of the geodesic flow is the billiard flow $\varphi_\omega^t$ acting on $D_\omega\times\IS^1$ with Descartes' laws for the 
reflection at the boundary. This flow preserves the Lebesgue measure $L_{\omega}$ on $D_\omega\times\IS^1$ which is ergodic thanks to the 
work of Bunimovich~\cite{Bunimovich79}. In fact, such a flow falls in the category of nonuniformly hyperbolic flows (as the 
geodesic flow on nonpositively curved surfaces). One can then consider the 
eigenvalue problem
\begin{equation}\label{e:dirichlet}-\hbar^2\Delta\psi_{\hbar}=\psi_{\hbar},\quad\psi_{\hbar}|_{\partial D_\omega}=0,\quad\|\psi_{\hbar}\|_{L^2}=1,\end{equation}
and define the set of semiclassical measures $\ml{M}(\Delta_{D_\omega})\subset\ml{M}(\varphi_{\omega}^t)$~\cite{GerardLeichtnam93}. In 
that case, G\'erard-Leichtnam and Zelditch-Zworski proved that the Quantum Ergodicity property holds~\cite{GerardLeichtnam93, ZelditchZworski96}. Moreover, 
it was proved in~\cite{GerardLeichtnam93, HassellZelditch04} that, for a sequence of eigenfunctions 
$(\psi_\hbar)_{\hbar\rightarrow 0^+}$ verifying the Quantum Ergodicity property, then the boundary values 
$|\hbar\partial_n\psi_{\hbar}(s)|^2d\sigma(s)$ become equidistributed on $\partial D_\omega$, where $\sigma$ is the Lebesgue 
measure on $\partial D_\omega$. In that framework, Hassell proved the following~\cite{Hassell10}:
\begin{theo}\label{t:Hassell} For a.e. $\omega\in[1,2]$, one has
$$\ml{M}(\Delta_{D_\omega})\neq\{L_{\omega}\}.$$ 
\end{theo}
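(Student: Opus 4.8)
The plan is to produce, for a conull set of parameters $\omega\in[1,2]$, a sequence of Dirichlet eigenfunctions of $-\Delta$ on $D_\omega$ whose semiclassical measure charges the family of vertical ``bouncing ball'' periodic orbits and is therefore singular with respect to $L_\omega$. The heuristic is that, inside the rectangular block $R_\omega:=(-\omega,\omega)\times(-\pi/2,\pi/2)$ of the stadium, the billiard flow admits a one-parameter family of periodic orbits running vertically between the two flat segments $\{y=\pm\pi/2\}$; the invariant probability measures supported on this family are singular and manifestly distinct from $L_\omega$. First I would realize this family at the quantum level by bouncing ball quasimodes: fix $\chi\in\ml{C}^{\infty}_c\big((-1,1),[0,1]\big)$ not identically zero and, for $N\in\IZ_+$ and $\hbar=N^{-1}$, set
$$v_{N,\omega}(x,y):=c_N\,\chi(x)\,\sin\!\big(N(y+\tfrac{\pi}{2})\big),\qquad (x,y)\in D_\omega ,$$
with $c_N>0$ a normalizing constant (one checks $c_N\asymp 1$). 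Since $\operatorname{supp}\chi\subset(-1,1)\subset(-\omega,\omega)$ and $\sin(N(y+\pi/2))$ vanishes at $y=\pm\pi/2$, the function $v_{N,\omega}$ satisfies the Dirichlet condition on $\partial D_\omega$, and a short computation gives $\|(-\hbar^{2}\Delta-1)v_{N,\omega}\|_{L^{2}}=\ml{O}(\hbar^{2})$. Writing $\sin(N(y+\pi/2))$ as a sum of two Lagrangian states with phases $S_\pm=\pm y$ and applying Example~\ref{ex:lagrangian}, the Wigner distributions of $(v_{N,\omega})_N$ converge to a probability measure carried by $\{(x,\pm\pi/2,\pm e_2):x\in\operatorname{supp}\chi\}$, a proper $\varphi^{t}_{\omega}$-invariant subset of $S^{*}D_\omega$ of vanishing $L_\omega$-measure.

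Were these quasimodes genuine eigenfunctions we would be done, so the core of the argument is to transfer the bouncing ball mass onto an actual eigenfunction, and it is here that the parameter $\omega$ must be varied. I would argue by contradiction: assume $\ml{M}(\Delta_{D_\omega})=\{L_\omega\}$ for every $\omega$ in a set $E\subset[1,2]$ with $|E|>0$. For such $\omega$ the Quantum Ergodicity property on $D_\omega$ \cite{GerardLeichtnam93,ZelditchZworski96} becomes Quantum Unique Ergodicity, and by the boundary equidistribution theorems of G\'erard--Leichtnam and Hassell--Zelditch \cite{GerardLeichtnam93,HassellZelditch04} the normalized boundary data $|\hbar\partial_{n}\psi_{\hbar}|^{2}d\sigma$ equidistribute on $\partial D_\omega$ for \emph{all} eigenfunctions. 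Combined with the Hadamard variational formula --- which expresses $\partial_\omega\lambda_j(\omega)^{2}$ through the mass of $|\partial_n\psi_{j}|^{2}$ on the moving part of $\partial D_\omega$ --- this forces $\partial_\omega\lambda_j(\omega)^{2}$ to converge to one and the same nonzero constant $\beta$ along every high energy family of eigenfunctions; that is, under the hypothesis the spectrum near any high energy drifts rigidly as $\omega$ varies. Using the $\ml{O}(\hbar^{2})$ estimate together with the Weyl spacing $\lambda_{j+1}^{2}-\lambda_j^{2}\asymp\operatorname{Area}(D_\omega)^{-1}$, the quasimode $v_{N,\omega}$ is essentially carried by the eigenvalues inside a window $I_N(\omega)$ of width $\asymp N^{1/2}$ about $N^{2}$, and this window can then be followed as $\omega$ runs through $E$.

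The last step, and the main obstacle, is to extract the contradiction. For a fixed $a\in\ml{C}^{\infty}_c(T^{*}D_\omega,[0,1])$ equal to $1$ near the bouncing ball set but with $\int a\,dL_\omega\le\varepsilon$ (possible since that set is $L_\omega$-null), the quasimode satisfies $\langle v_{N,\omega},\Oph(a)v_{N,\omega}\rangle\ge c_0-o(1)$ with $c_0>0$ independent of $\omega$. Expanding $v_{N,\omega}$ in the eigenbasis, the diagonal contributions are controlled by Quantum Unique Ergodicity and the delicate point is to bound the off-diagonal sum $\sum_{j\ne k}c_j\bar c_k\langle\psi_j,\Oph(a)\psi_k\rangle$; this is where the rigid drift of the eigenvalues in $I_N(\omega)$ enters, through a commutator and almost-orthogonality estimate showing that these cross terms cannot conspire to produce the mass $c_0$ unless some single eigenfunction in $I_N(\omega)$ already carries a definite amount of it. A Fubini argument over the pairs $(N,\omega)$ with $\omega\in E$ then yields, for a.e.\ $\omega\in E$ and infinitely many $N$, an eigenfunction $\psi_{j(N),\omega}$ with $\langle\psi_{j(N),\omega},\Oph(a)\psi_{j(N),\omega}\rangle\ge c_0/2>\varepsilon\ge\int a\,dL_\omega$, contradicting Quantum Unique Ergodicity at $\omega$. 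Hence $|E|=0$, which is the claim. The heart of the matter --- and the reason the conclusion holds only for almost every $\omega$ rather than for every $\omega$ --- lies precisely in this interplay between the bouncing ball quasimode, the Hadamard formula and Weyl's law that is needed to make the off-diagonal control and the measure-theoretic extraction go through.
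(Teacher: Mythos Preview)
Your architecture is right --- bouncing ball quasimodes, Hadamard variation, contradiction on a positive-measure set of parameters --- and matches the paper's. But the hinge of the argument is misidentified, and two concrete errors creep in as a result.

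First, the spectral window. Since $(-\Delta-N^{2})v_{N,\omega}=-\chi''(x)\sin\!\big(N(y+\tfrac{\pi}{2})\big)$ has $L^{2}$ norm of order~$1$, the quasimode is concentrated in a window of \emph{fixed} width $[N^{2}-\alpha,N^{2}+\alpha]$, not of width $N^{1/2}$. This matters: with a window of width $N^{1/2}$ the Weyl count would be $\asymp N^{1/2}$ and any pigeonhole overlap would vanish as $N\to\infty$.

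Second, and more importantly, the step you describe as ``commutator and almost-orthogonality estimates'' on the off-diagonal terms $\sum_{j\ne k}c_j\bar c_k\langle\psi_j,\Op_\hbar(a)\psi_k\rangle$ is not how the contradiction is produced, and there is no obvious mechanism by which the rigid drift in $\omega$ controls these cross terms for a \emph{fixed} $\omega$. What the Hadamard formula actually buys you is a lower bound $\partial_\omega\lambda_j(\omega)^{2}\le -cj$ (not convergence to a constant $\beta$). Integrating this over the putative QUE set $G$ gives
\[
\int_G\#\{j:\lambda_j(\omega)^{2}\in[n^{2}-\alpha,n^{2}+\alpha)\}\,d\omega=\mathcal{O}_\alpha(1),
\]
uniformly in $n$; hence for some $\omega\in G$ and a subsequence $n_k\to\infty$ the window contains at most $M_\alpha$ eigenvalues. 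Now the quasimode projects onto these $M_\alpha$ eigenfunctions with total mass $\ge 3/4$, so by pigeonhole a single eigenfunction $\tilde\psi_k$ satisfies $|\langle v_{n_k,\omega},\tilde\psi_k\rangle|\ge(3/4M_\alpha)^{1/2}$. A direct Cauchy--Schwarz / composition estimate (no off-diagonal sums needed) then shows $\langle\tilde\psi_k,\Op_\hbar(|a|^{2})\tilde\psi_k\rangle\ge 3/(16M_\alpha)$, contradicting QUE at $\omega$. So the Fubini argument acts on the eigenvalue counting function, not on the Wigner matrix elements; once the count is bounded, the transfer to a genuine eigenfunction is elementary.
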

In particular, this Theorem shows that ergodicity is not enough to expect Quantum Unique Ergodicity to hold. This 
result was generalized by Hassell and Hillairet to other contexts where the manifold with (or without) boundary has a 
rectangular (or cylindrical) part with various boundary conditions~\cite{Hassell10}.

\begin{proof}[Sketch of Proof] The idea is to apply arguments from analytic perturbation theory. More precisely, one 
can order the eigenvalues $\hbar_j(\omega)^{-2}$ in a nondecreasing order and verify that they are piecewise smooth. Then, one of 
the key observation is the Hadamard variational formula which states that, for every $\omega$ where it makes sense,
$$\hbar_j(\omega)^2\frac{d}{d\omega}\left(\hbar_j(\omega)^{-2}\right)
=-\int_{\partial D_{\omega}}\rho_{\omega}|\hbar_j(\omega)\partial_n\psi_{\hbar_j}(\omega)|^2d\sigma$$
where $(\psi_{\hbar_j}(\omega))_{j\geq 1}$ is the corresponding orthonormal basis and where 
$$\rho_{\omega}(x,y)=\text{sgn}(x)n_{\omega}^1(x,y),$$ with 
$n_{\omega}(x,y)=n_{\omega}^1(x,y)\partial_x+n_{\omega}^2(x,y)\partial_y$ being the normal 
to $\partial D_\omega$ at $(x,y)$. Hence, according to~\cite{GerardLeichtnam93, HassellZelditch04}, if we 
suppose that $\ml{M}(\Delta_{D_\omega})=\{L_{\omega}\}$ for some $\omega\in[1,2]$, then the boundary values 
are equidistributed and we can deduce that
$$\limsup_{j\rightarrow+\infty }\hbar_j(\omega)^2\frac{d}{d\omega}\left(\hbar_j(\omega)^{-2}\right)\leq -c'<0,$$
for some $c'$ independent of $\omega$.
Now, from the Weyl law~\eqref{e:weyl}, one knows that $h_j(\omega)\approx c(\omega)j^{-1/2}$ for some positive $c(\omega)>0$. This 
implies that, for some uniform constant $c>0$ and for $j$ large enough, 
\begin{equation}\label{e:hassell}\frac{d}{d\omega}\hbar_j(\omega)^{-2}\leq -cj.\end{equation} Introduce now
$$G:=\left\{\omega\in[1,2]:\ml{M}(\Delta_{D_\omega})=\{L_{\omega}\}\right\},$$
and, for $n\in\IZ_+^*$ and $\alpha>0$
$$N_\omega(\alpha,n):=\left|\left\{j:\hbar_j(\omega)^{-2}\in[n^2-\alpha,n^2+\alpha)\right\}\right|.$$
Writing things a little bit informally, one has
$$\int_GN_\omega(\alpha,n)d\omega\lesssim\sum_{c_1 n^2\leq j\leq c_2n^2}
\text{Leb}\left(\left\{\omega\in G:\hbar_j(\omega)^{-2}\in[n^2-\alpha,n^2+\alpha)\right\}\right),$$
for some positive constant $c_1,c_2$ related to the constant appearing in the Weyl law. Now from~\eqref{e:hassell}, 
one knows that $\hbar_j(\omega_2)^{-2}-\hbar_j(\omega_1)^{-2}\gtrsim n^2(\omega_1-\omega_2)$ which yields
$$\int_GN_\omega(\alpha,n)d\omega\lesssim\sum_{c_1 n^2\leq j\leq c_2n^2}\frac{2\alpha}{n^2}=\ml{O}_{\alpha}(1),$$
where the involved constant is uniform in $n$ even if we can note that it depends on $\alpha$. If the Lebesgue measure of 
$G$ was positive, then it would mean that, on a set of positive measure inside $G$, $N_\omega(\alpha,n_k)$ is uniformly 
bounded along a certain subsequence of integer $n_k\rightarrow+\infty$ -- see~\cite[p.~613]{Hassell10} for details. 
In other words, for any $\alpha$, one can find some $\omega\in G$ and some sequence of integers $n_k\rightarrow+\infty$ 
such that the number of eigenvalues of $\Delta_{D_{\omega}}$ inside $[n_k^2-\alpha,n_k^2+\alpha]$ is 
uniformly bounded as $k\rightarrow +\infty$ by some constant $M_\alpha$.

Following earlier arguments of Heller-O'Connor~\cite{OConnorHeller88} and Zelditch~\cite{Zelditch04} -- see also~\cite[\S 1]{Hassell10}, 
this leads to a contradiction with the fact that $\ml{M}(\Delta_{D_\omega})=\{L_{\omega}\}$. To see this, we define
$$\psi_{\hbar_k}(x,y):=\sin(n_k y)\chi(x)\ \text{if $n_k$ is even and }\psi_{\hbar_k}(x):=\cos(n_k y)\chi(x)\ \text{otherwise},$$
where $\chi$ is a smooth function which is compactly supported in $[-1,1]$ and which is chosen such that 
$\|\psi_{\hbar_k}\|_L^2=1$. This does not define a sequence of solutions to~\eqref{e:dirichlet} but, letting $\hbar_k=n_k^{-1}$, 
one has the quasimode equation
$$\|-\hbar_k^2\Delta\psi_{\hbar_k}-\psi_{\hbar_k}\|\leq \frac{\alpha\hbar_k^2}{2}.$$
Even if it is not an eigenmode, one can write the spectral decomposition of the Laplacian to get
$$\left\|\mathbf{1}_{[1-\alpha\hbar_k^2,1+\alpha\hbar_k^2]}(-\hbar_k^2\Delta_{D_\omega})\psi_{\hbar_k}\right\|_{L^2}^2\geq 
\frac{3}{4}.$$
In particular, there exist some exact eigenmodes $\tilde{\psi}_{k}$ such that 
$|\la\psi_{\hbar_k},\tilde{\psi_k}\ra|\geq (3/4M_{\alpha})^{1/2}$ for every $k\geq 1$. We denote by $h_k^{-2}$ 
the corresponding eigenvalue. Following example~\ref{ex:lagrangian}, one can now compute the semiclassical measure of the quasimodes 
$(\psi_{\hbar_k})_{k\geq 1}$. Their limit 
density is given by $$\mu(x,y,\xi,\eta)=c|\chi(x)|^2\delta_0(\xi,\eta\pm 1),$$ 
for some positive constant $c>0$. Such states are called \emph{bouncing ball modes}, in the sense that they are 
localized on the closed orbits of the billiard flow located in the rectangular part of the domain. Now 
take a symbol $a$ which is supported in a neighborhood of the suppport of $\mu$ (meaning near the vertical direction) and 
which is equal to $1$ near this support. Hence, for an appropriate choice of such $a$, one has, for $k$ large enough,
$$\la\psi_{\hbar_k},\Op_{h_k}(|1-a|^2)\psi_{\hbar_k}\ra\leq\frac{3}{16M_\alpha}.$$
We now write
\begin{eqnarray*}|\la\tilde{\psi}_{k},\Op_{h_k}(|a|^2)\tilde{\psi}_{k}\ra| &= &\|\Op_{h_k}(a)\tilde{\psi}_{k}\|_{L^2}^2+o(1)\\
 &\geq &|\la\Op_{h_k}(a)\tilde{\psi}_{k},\psi_{\hbar_k}\ra|^2+o(1)\\
 &\geq &|\la\tilde{\psi}_{k},\psi_{\hbar_k}\ra-\la\tilde{\psi}_{k},(\text{Id}-\Op_{h_k}(a))\psi_{\hbar_k}\ra|^2+o(1)\\
 & \geq &\left(\sqrt{\frac{3}{4M_\alpha}}-|\la\psi_{\hbar_k},\Op_{h_k}(|1-a|^2)\psi_{\hbar_k}\ra|^{\frac{1}{2}}\right)^2+o(1)\geq\frac{3}{16M_{\alpha}}+o(1), 
\end{eqnarray*}
where the first and the last line are consequences of the composition rule~\eqref{e:composition} for 
pseudodifferential operators and where we use the Cauchy-Schwarz inequality on the second and the last line. 
This shows that the sequence $(\tilde{\psi}_k)_{k\geq 1}$ puts asymptotically some mass 
in the vertical direction and this yields the expected contradiction as $D_{\omega}$ is supposed to satisfy the 
Quantum Unique Ergodicity property.
\end{proof}

\subsection{Towards quantum unique ergodicity}

Now that we have described these several obstructions to Quantum Unique Ergodicity, we can come back 
to the case of negatively curved manifolds where this conjecture was initially stated. 
Before discussing the current state of the art, we need 
to begin with the introduction of tools from hyperbolic dynamical systems.

\subsubsection{Hyperbolic dynamics, entropy and pressure}

In the following, we fix a Riemannian metric on $S^*M$, e.g. the Sasaki metric $g_S$ induced by $g$~\cite{Ruggiero07}. 
One can define the notion of hyperbolic subset~\cite{AbrahamMarsden78}:
\begin{def1} Let $\Lambda$ be a compact subset of $S^*M$ which is invariant by the geodesic flow $\varphi^t$. 
We say that $\Lambda$ is \textbf{hyperbolic} (for the flow $\varphi^t$) 
if there exist some $C>0$, $\chi>0$ and a familly of subspaces $E_u(\rho),E_s(\rho)\subset T_{\rho}S^*M$ 
(for every $\rho=(x,\xi)\in\Lambda$) satisfying the following properties, for every $\rho$ in $\Lambda$ and for every $t\geq 0$,
\begin{enumerate}
 \item $T_{\rho}S^*M=\IR X_{H_0}(\rho)\oplus E_u(\rho)\oplus E_s(\rho)$ with 
 $X_{H_0}(\rho)=\frac{d}{d}(\varphi^t(\rho))\rceil_{t=0}$,
 \item $d_{\rho}\varphi^t E_{u/s}(\rho)=E_{u/s}(\varphi^t(\rho))$,
 \item for every $v$ in $E_u(\rho)$, $\|d_{\rho}\varphi^{-t}v\|\leq Ce^{-\chi t}\|v\|$,
 \item for every $v$ in $E_s(\rho)$, $\|d_{\rho}\varphi^{t}v\|\leq Ce^{-\chi t}\|v \|$.
\end{enumerate}
If $\Lambda=S^*M$, we say that the flow $\varphi^t$ has the \textbf{Anosov property}.
\end{def1}
These assumptions roughly tell us that two points which are close will tend to separate 
from each other at an exponential rate. In that sense, the dynamical system $\varphi^t|_{\Lambda}$ is very sensitive 
to perturbations and it is a good model of a classical chaotic system. Note that the simplest example of such an 
hyperbolic subset is given by an hyperbolic closed geodesic as in Theorem~\ref{t:cdv-parisse}.
The main example where $\Lambda=S^*M$ is the geodesic flow on a negatively curved manifold. In that case, 
the flow has strong chaotic properties: ergodicity and mixing of the Liouville measure~\cite{Anosov67}. In that framework, 
the Liouville measure was also shown to be exponentially mixing~\cite{Liverani04}.


We can define the \textbf{unstable Jacobian} at a point $(x,\xi)\in S^*M$  as follows\footnote{The choice of the 
time $1$ and of the metric are arbitrary and the reader can check that it will not change the following.}:
$$J^u(x,\xi):=\left|\det\left(d_{\varphi^1(x,\xi)}\varphi^{-1}\rceil_{E^u(\varphi^1(x,\xi))}\right)\right|,$$
where we endow the unstable spaces $E^u(x,\xi)$ and $E^u(\varphi^1(x,\xi))$ with the Sasaki metric in order to 
defined the Jacobian. It defines a continuous function on $\Lambda$~\cite{KatokHasselblatt95}. For every $\eps>0$ and every $T\geq0$, we also define the 
Bowen ball centered at $(x,\xi)\in S^*M$
$$B_g(x,\xi;\eps,T):=\left\{(x',\xi')\in S^*M:\ \forall 0\leq t\leq T,\ d_g(\varphi^t(x,\xi),\varphi^t(x',\xi'))< \eps\right\},$$
where $d_g$ is the Riemannian distance induced by $g$ on $S^*M$. These are exactly the points of $S^*M$ whose 
trajectories remain close to the one of $(x,\xi)$ up to time $T$. A subset $F\subset \Lambda$ is said to be 
$(\eps,T)$-separated if for every $(x,\xi)$ and $(x',\xi')$ in $F$, $(x',\xi')\in B_g(x,\xi;\eps,T)$ implies 
$(x,\xi)=(x',\xi')$. Our goal is now to measure the size of an hyperbolic set 
$\Lambda$ by taking into account the dynamical properties of the flow. More precisely, for every $0\leq s\leq 1$, we define 
the \textbf{topological pressure} of the set $\Lambda$ 
(with respect to $-s\log J^u$) as follows~\cite{Walters82}:
$$P_{\text{top}}(\Lambda,s):=
\lim_{\eps\rightarrow 0}\limsup_{T\rightarrow +\infty}\frac{1}{T}\log\sup_{F}
\left\{\sum_{\rho\in F}\exp\left(s\int_0^T\log J^u\circ\varphi^t(\rho)dt\right)\right\},$$
where the supremum is taken over all the $(\eps,T)$-separated sets $F$. In order to understand 
the meaning of this function, we can note that the quantity 
$\exp\left(s\int_0^T\log J^u\circ\varphi^t(\rho)dt\right)$ roughly measures the Riemannian volume (to the power $s$) of the 
Bowen ball centered at $(x,\xi)$. Hence, this quantity looks a little bit like the ones used to define the Hausdorff 
dimension of a set~\cite{Pesin97} except that we consider dynamical balls rather than classical balls. 
\begin{ex}
A simple example 
of calculation is given by a closed orbit $\gamma$ of the geodesic flow of minimal period $T_{\gamma}$. 
In that case, one can verify that $P_{\text{top}}(\Lambda,s)=s\log|\det d\varphi^{-T_{\gamma}}|_{E^u(\gamma)}|$. 
\end{ex}
One can show that the map 
$s\mapsto P_{\text{top}}(\Lambda,s)$ is continuous and convex~\cite{Ruelle78}. In the case $s=0$, this defines a quantity 
which is referred as the \textbf{topological entropy} of $\Lambda$ that is denoted by $h_{\text{top}}(\Lambda)\geq 0$. In 
the case of a manifold with constant negative (sectional) curvatures equal to $-1$, one has 
$P_{\text{top}}(\Lambda,s)=h_{\text{top}}(\Lambda)-s(n-1).$ 

If we now fix an invariant probability measure $\mu$ on $\Lambda$, we can also define a related quantity which is called the 
\textbf{Kolmogorov-Sina\u{\i} entropy}~\cite{Walters82}. This quantity was originally defined by using partition 
of the space~\cite{Kolmogorov58, Kolmogorov59, Sinai59} and 
ideas coming from information theory~\cite{Shannon48}. We will introduce it via an equivalent definition due to Brin and 
Katok~\cite{BrinKatok83}. Precisely, they showed that, if $\mu$ is ergodic, the following limit exists for $\mu$-a.e. 
$(x,\xi)$ on $\Lambda$:
$$h_{KS}(\mu)=\lim_{\eps\rightarrow 0}\limsup_{T\rightarrow+\infty}\frac{1}{T}\log\mu(B(x,\xi;\eps,T)),$$
and it is equal to the entropy as originally defined by Kolmogorov and Sina\u{\i}. It measures in some sense the complexity 
of the geodesic flow from the point of view of the ergodic measure $\mu$. More generally, if $\mu$ is not ergodic, 
the Kolmogorov-Sina\u{\i} entropy is defined as
$$h_{KS}(\mu):=\int_{\Lambda}h_{KS}(\mu_{x,\xi})d\mu(x,\xi),$$
where $\mu=\int_{\Lambda}\mu_{x,\xi}d\mu(x,\xi)$ is the ergodic decomposition of $\mu$ -- see Remark~\ref{r:ergodic}. 
The entropy map $\mu\mapsto h_{KS}(\mu)$ is an affine function, i.e. 
$h_{KS}(t\mu_1+(1-t)\mu_2)=th_{KS}(\mu_1)+(1-t)h_{KS}(\mu_2)$ for every $t\in[0,1]$ and 
every $(\mu_1,\mu_2)\in\ml{M}(\varphi^t)$. 
\begin{ex}In 
the case of the Lebesgue measure along a closed orbit, one can verify that the entropy is equal to $0$. From the volume 
Lemma of Bowen and Ruelle~\cite{BowenRuelle75}, one also knows that, for an Anosov geodesic flow, 
$h_{KS}(L_1)=-\int_{S^*M}\log J^u dL_1.$ 
\end{ex}
More generally Ruelle proved that, for any invariant measure 
on $\Lambda$, one has~\cite{Ruelle78}
\begin{equation}\label{e:ruelle}
 h_{KS}(\mu)\leq-\int_{\Lambda}\log J^ud\mu,
\end{equation}
with equality in the case of Anosov flow if and only if $\mu=L_1$~\cite{LedrappierYoung85}.

\begin{rema}\label{r:variational}
The topological pressure is related to the Kolmogorov-Sina\u{\i} entropy via the variational principle~\cite{Walters82}:
$$P_{\text{top}}(\Lambda,s)=\text{sup}\left\{h_{KS}(\mu)+s\int_{\Lambda}\log J^ud\mu: \mu\in\ml{M}(\varphi^t)\ \text{and}\ \mu(\Lambda)=1\right\}.$$
In particular, the topological entropy is the supremum of the Kolmogorov-Sina\u{\i} entropy over all invariant measures on $\Lambda$. 
Combined with Ruelle's inequality~\eqref{e:ruelle}, this variational principle shows that 
$PP_{\text{top}}(\Lambda,1)\leq 0$. One can in fact prove that the Bowen's equation
$$P_{\text{top}}(\Lambda,s)=0$$
admits an unique solution $s_{\Lambda}\in[0,1]$. In analogy with the definition of the Hausdorff dimension, this number 
$s_{\Lambda}$ measures in some sense the size of $\Lambda$ along the unstable direction. 
\end{rema}

\subsubsection{The arithmetic case}

Using this notion of Kolmogorov-Sina\u{\i} entropy, Bourgain and Lindenstrauss refined Theorem~\ref{t:rudnicksarnak} 
in the following manner~\cite{BourgainLindenstrauss03}:
\begin{theo}\label{t:bourgainlindenstrauss} Let $M$ be a compact arithmetic surface. Then, there exists some constant 
$h_0>0$ such that, for every $\mu\in\ml{M}_{\operatorname{Hecke}}(\Delta_g)$ and for $\mu$-a.e. $(x,\xi)\in S^*M$, 
one has 
$$h_{KS}(\mu_{x,\xi})\geq h_0>0.$$
In particular, if $\Lambda$ is an hyperbolic subset such that $h_{\operatorname{top}}(\Lambda)<h_0$, then 
$$\mu(\Lambda)=0.$$
\end{theo}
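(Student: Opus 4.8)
\emph{Strategy.} The plan is to isolate a single quantitative input and to show that the whole theorem follows from it softly. The input is a uniform exponential decay of the $\mu$-mass of Bowen balls: \emph{there exist $h_0>0$, $\eps_0>0$ and $C>0$, depending only on $M$, such that for every $\mu\in\ml{M}_{\operatorname{Hecke}}(\Delta_g)$, every $(x,\xi)\in S^*M$, every $0<\eps\le\eps_0$ and every $T\ge 0$,}
$$\mu\big(B_g(x,\xi;\eps,T)\big)\ \le\ C\,e^{-h_0 T}.$$
First I would derive the first assertion from this. Applying the Brin--Katok theorem \cite{BrinKatok83} (in the form valid for a not necessarily ergodic invariant measure) to $\mu$ and to its ergodic decomposition $\mu=\int_{S^*M}\mu_{x,\xi}\,d\mu$ from Remark~\ref{r:ergodic}, one gets that for $\mu$-a.e.\ $(x,\xi)$ the local entropy $\lim_{\eps\to 0}\limsup_{T\to+\infty}-\tfrac1T\log\mu(B_g(x,\xi;\eps,T))$ exists and equals $h_{KS}(\mu_{x,\xi})$; by the displayed bound it is $\ge h_0$, which is exactly $h_{KS}(\mu_{x,\xi})\ge h_0$ for $\mu$-a.e.\ $(x,\xi)$.

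\emph{The statement on hyperbolic sets.} This is a consequence of the variational principle of Remark~\ref{r:variational} applied at $s=0$, which gives $h_{\text{top}}(\Lambda)=\sup\{h_{KS}(\nu):\nu\in\ml{M}(\varphi^t),\ \nu(\Lambda)=1\}$. Suppose $h_{\text{top}}(\Lambda)<h_0$ and $\mu(\Lambda)>0$. Since $\Lambda$ is closed and $\varphi^t$-invariant, for $\mu$-a.e.\ $(x,\xi)\in\Lambda$ the ergodic component $\mu_{x,\xi}$ is a $\varphi^t$-invariant probability measure with $\mu_{x,\xi}(\Lambda)=1$, hence $h_{KS}(\mu_{x,\xi})\le h_{\text{top}}(\Lambda)<h_0$; but the first assertion gives $h_{KS}(\mu_{x,\xi})\ge h_0$ on a set of positive $\mu$-measure, a contradiction. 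Therefore $\mu(\Lambda)=0$.

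\emph{Producing the decay: the arithmetic input.} It remains to establish the Bowen-ball bound, and this is where arithmeticity enters, following \cite{BourgainLindenstrauss03}. For almost every prime $p$ there is a self-adjoint Hecke operator $T_p$ commuting with $\Delta_g$, and every $\mu\in\ml{M}_{\operatorname{Hecke}}(\Delta_g)$ comes from a sequence of joint eigenfunctions $T_p\psi_\hbar=\lambda_p(\hbar)\psi_\hbar$; geometrically $T_p$ averages over the $(p+1)$ Hecke neighbours of a point, organising the relevant homogeneous space into a $(p+1)$-regular tree, and the Hecke correspondences commute with the geodesic flow (this is the mechanism behind Theorem~\ref{t:rudnicksarnak}). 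The core estimate is that $\mu$ has positive dimension transverse to the flow in the unstable direction: there is $\kappa>0$ such that, for every $\eps$-bounded ``tube'' $E$ of width $\delta$ in the unstable direction, $\mu(E)\le C_\eps\,\delta^{\kappa}$. One gets this by combining the Hecke relation $T_p^2=T_{p^2}+T_1$ (which forces at least one of the normalised eigenvalues $\lambda_p p^{-1/2}$, $\lambda_{p^2}p^{-1}$ to be bounded below for each $p$) with an amplification over $p\le P$ to build an operator $A$ with $A\psi_\hbar=\Lambda(\hbar)\psi_\hbar$, $|\Lambda(\hbar)|\gtrsim\pi(P)$, realised as a weighted average over a controlled number of Hecke translates of $E$ that are pairwise disjoint as long as $P\lesssim\delta^{-\alpha}$; positivity and $\|\psi_\hbar\|_{L^2}=1$ then give $\int_E|\psi_\hbar|^2\lesssim|\Lambda(\hbar)|^{-2}\lesssim\delta^{2\alpha}$, and the microlocal version of this argument yields the same bound for $\mu$ itself. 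Finally, on a hyperbolic surface $B_g(x,\xi;\eps,T)$ is comparable to a box of size $\eps$ in the flow and stable directions and of width $\sim\eps e^{-\chi T}$ in the unstable direction ($\chi=1$ for curvature $-1$), so the tube estimate with $\delta\sim\eps e^{-\chi T}$ gives $\mu(B_g(x,\xi;\eps,T))\le C_\eps\,e^{-\kappa\chi T}$, i.e.\ the required bound with $h_0=\kappa\chi$.

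\emph{Main obstacle.} I expect the quantitative Hecke-recurrence step to be by far the hardest: making rigorous the ``pairwise disjoint Hecke translates'' picture at the level of the conditional measures of $\mu$ along unstable horocycles (an instance of arithmetic measure rigidity), and controlling the primes for which the Ramanujan bound is unavailable, where the pointwise estimate on $\lambda_{p^n}$ must be replaced by an averaged one provided by the analytic theory of the associated Rankin--Selberg $L$-functions. By contrast the Brin--Katok reduction, the use of the variational principle, and the geometry of Bowen balls are comparatively routine.
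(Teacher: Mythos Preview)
The paper does not give a proof of this theorem. Immediately after stating it, the text moves on to Lindenstrauss's theorem, and a few lines later the author writes explicitly: ``We will not give proofs of these results which rely on techniques rather different from the semiclassical ones presented in these lectures. We refer the interested reader to the above references for more details on these arithmetic aspects of the Quantum Unique Ergodicity conjecture.'' So there is no proof in the paper to compare your proposal against.

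That said, your sketch is a faithful outline of the Bourgain--Lindenstrauss argument as it actually goes: the uniform exponential decay of $\mu$-masses of Bowen balls is indeed the quantitative heart of \cite{BourgainLindenstrauss03}, and the reduction to that via Brin--Katok is correct. Your derivation of the ``In particular'' clause from the variational principle (Remark~\ref{r:variational} at $s=0$) is the standard soft argument and is entirely self-contained within the paper's toolkit; this part one could legitimately call a proof. Your description of the amplification step is morally right but compressed: in the actual paper the tube estimate is obtained not by a pure $L^2$ argument on the Hecke tree but via a more delicate counting of Hecke returns to small tubes combined with the Selberg/Ramanujan-type bound on Hecke eigenvalues, and the passage to the microlocal version requires some care. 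Your identification of the main obstacle is accurate.
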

Hence, not only Hecke eigenfunctions cannot concentrate on closed geodesics, but also they cannot concentrate on subsets 
with small topological entropy. In~\cite{Lindenstrauss06}, Lindenstrauss proved that elements 
$\mu\in\ml{M}_{\text{Hecke}}(\Delta_g)$ verify an extra invariance property that he calls \emph{Hecke}-recurrent and that 
follows from the fact that they are Hecke eigenfunctions. We denote by $\ml{M}_{\text{Hecke}}(\varphi^t)$ 
the subset of Hecke-recurrent measure in $\ml{M}(\varphi^t)$. Lindenstrauss proved the following general 
classification result for measures in $\ml{M}_{\text{Hecke}}(\varphi^t)$ which
yields an answer to the Quantum Unique Ergodicity conjecture for \emph{Hecke} Laplace eigenfunctions~\cite{Lindenstrauss06}:
\begin{theo}\label{t:lindenstrauss} Let $M$ be a compact arithmetic surface and let 
$\mu\in\ml{M}_{\operatorname{Hecke}}(\varphi^t)$. If $h_{KS}(\mu_{x,\xi})>0$ for $\mu$-a.e. $(x,\xi)\in S^*M$, then $\mu=L_1$. 
In particular, from Theorem~\ref{t:bourgainlindenstrauss},
$$\ml{M}_{\operatorname{Hecke}}(\Delta_g)=\{L_1\}.$$
\end{theo}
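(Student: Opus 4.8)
The plan is to recast the statement as a rigidity theorem for invariant measures on a homogeneous space and then to run Lindenstrauss's measure classification. Write $M=\Gamma\backslash\mathbb{H}$ with $\Gamma\subset PSL(2,\IR)$ a cocompact lattice derived from a quaternion division algebra $\ml{A}$ over $\mathbb{Q}$, and fix a rational prime $p$ at which $\ml{A}$ splits. Then $S^*M$ is identified with $\Gamma\backslash G$, $G=PSL(2,\IR)$, the geodesic flow $\varphi^t$ becoming right translation by the archimedean diagonal element $a_t=\operatorname{diag}(e^{t/2},e^{-t/2})$, and $L_1$ is the $G$-invariant (Haar) probability measure on $\Gamma\backslash G$. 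The Hecke operator $T_p$ is realized, after passing to the $S$-arithmetic quotient $Y=\Gamma'\backslash\bigl(G\times PGL(2,\mathbb{Q}_p)\bigr)$ (still compact, since $\ml{A}$ is division), by the natural $PGL(2,\mathbb{Q}_p)$-correspondence on the $p$-adic factor. First I would lift $\mu\in\ml{M}_{\operatorname{Hecke}}(\varphi^t)$ to a probability measure $\widetilde\mu$ on $Y$ that projects to $\mu$ under $Y\to\Gamma\backslash G$ and is invariant under $a_t$; by the very definition of $\ml{M}_{\operatorname{Hecke}}(\varphi^t)$, the Hecke-recurrence of $\mu$ is exactly what guarantees that $\widetilde\mu$ is \emph{recurrent} for the $p$-adic diagonal element $a_p\in PGL(2,\mathbb{Q}_p)$, i.e.\ $\widetilde\mu$-a.e.\ $a_p$-orbit returns infinitely often to every set of positive measure.

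Next I would exploit the entropy hypothesis. Running the ergodic decomposition and using the standard relation between Kolmogorov--Sina\u{\i} entropy and the transverse dimension of leafwise measures (Ruelle's inequality~\eqref{e:ruelle} is essentially an equality in this setting because $a_t$ has a single nonzero Lyapunov exponent, along the one-dimensional unstable horospherical direction $U^+_\infty$), the assumption $h_{KS}(\mu_{x,\xi})>0$ for $\mu$-a.e.\ $(x,\xi)$ translates into: for $\widetilde\mu$-a.e.\ point, the conditional measure of $\widetilde\mu$ along its $U^+_\infty$-leaf is non-atomic, indeed of positive Hausdorff dimension. The objective is then to upgrade these leafwise conditionals from ``positive dimensional'' to ``translation invariant''.

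The heart of the proof — and the step I expect to be by far the main obstacle — is Lindenstrauss's \emph{low entropy method}. Using the $p$-adic recurrence of $\widetilde\mu$ one first establishes a maximal ergodic inequality adapted to the combined action of $a_t$ and $a_p$. One then compares a $\widetilde\mu$-generic point with a nearby companion lying on the same $U^+_\infty$-leaf and flows both points by $a_t$: the archimedean leaf is stretched at an exponential rate, while the $p$-adic recurrence lets one ``return'' infinitely often along the Hecke tree, and a Borel--Cantelli argument built on the maximal inequality forces, for $\widetilde\mu$-a.e.\ point, the conditional measure on its $U^+_\infty$-leaf to be invariant under a nontrivial group of translations, hence — $U^+_\infty$ being one-dimensional — under all of $U^+_\infty$. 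Consequently $\widetilde\mu$ itself is $U^+_\infty$-invariant. This delicate interplay between entropy (controlling the $U^+_\infty$-direction), recurrence (which is far weaker than invariance, whence the name), and the maximal inequality is where essentially all the difficulty of the theorem lies; the rest is soft.

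It remains to conclude. Pushing $\widetilde\mu$ forward along the $U^+_\infty$-equivariant projection $Y\to\Gamma\backslash G$ recovers $\mu$, which is therefore a $U^+_\infty$-invariant probability measure on the \emph{compact} quotient $\Gamma\backslash G$; by Furstenberg's unique ergodicity of the horocycle flow on a compact hyperbolic surface quotient, the only such measure is Haar, i.e.\ $\mu=L_1$. This proves the first assertion. For the second, Theorem~\ref{t:bourgainlindenstrauss} supplies a uniform $h_0>0$ with $h_{KS}(\mu_{x,\xi})\geq h_0$ for $\mu$-a.e.\ $(x,\xi)$ whenever $\mu\in\ml{M}_{\operatorname{Hecke}}(\Delta_g)$; in particular every arithmetic quantum limit automatically satisfies the positive-entropy hypothesis, so the first part yields $\ml{M}_{\operatorname{Hecke}}(\Delta_g)=\{L_1\}$.
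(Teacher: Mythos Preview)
The paper does not actually prove this theorem: immediately after stating it, the author writes ``We will not give proofs of these results which rely on techniques rather different from the semiclassical ones presented in these lectures'' and refers the reader to~\cite{Lindenstrauss06}. So there is no proof in the paper to compare your proposal against.

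That said, your sketch is a faithful outline of Lindenstrauss's original argument: the passage to the $S$-arithmetic quotient $Y$, the interpretation of Hecke-recurrence as recurrence under the $p$-adic diagonal, the translation of the positive-entropy hypothesis into nontriviality of the $U^+_\infty$-leafwise conditionals, the low-entropy method upgrading these conditionals to full $U^+_\infty$-invariance, and the endgame via unique ergodicity of the horocycle flow on the compact quotient. You correctly flag the low-entropy step as the real content. One minor remark: in Lindenstrauss's paper the final rigidity step is phrased via Ratner's theorem rather than Furstenberg, but on a cocompact $\Gamma$ these amount to the same thing, so your use of Furstenberg is fine.
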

This Theorem can be generalized to arithmetic surfaces of finite volume that may be noncompact like the modular 
surface $M=\mathbb{H}^2/PSL_2(\IZ)$~\cite{Lindenstrauss06, Soundararajan10}. More recently, it was also 
extended to the case where we consider Laplace eigenfunctions which are eigenfunctions of only \emph{one} (and not all) 
Hecke operator~\cite{BrooksLindenstrauss14}. We will not give proofs of these results which rely on techniques rather 
different from the semiclassical ones presented in these lectures. We refer the interested reader to the 
above references for more details on these arithmetic aspects of the Quantum Unique Ergodicity conjecture.

\subsubsection{Entropic results in variable curvature}\label{sss:entropy}

In the general framework, progress have also been made using the notion of Kolmogorov-Sina\u{\i} entropy. 
This was pioneered by Anantharaman and Nonnenmacher~\cite{Anantharaman08, AnantharamanNonnenmacher07b} and 
it lead to quantitative dynamical constraints on the set of semiclassical measures. More precisely, 
Anantharaman proved~\cite{Anantharaman08}
\begin{theo}\label{t:anantharaman} Suppose that $\varphi^t$ has the Anosov property. 
Then, for every
 $$h_0<\frac{1}{2}\max_{S^*M}|\log J^u|,$$ there exists some constant $c(h_0)>0$ such that, 
 for every $\mu\in\ml{M}(\Delta_g)$, one has
$$\mu\left(\{(x,\xi)\in S^*M:h_{KS}(\mu_{x,\xi})\geq h_0\}\right)\geq c(h_0)>0.$$
In particular, one has
$$h_{KS}(\mu)>0,$$
and, for any invariant subset $\Lambda$,
$$h_{\operatorname{top}}(\Lambda)\leq h_0\Longrightarrow\mu(\Lambda)\leq 1-c(h_0)<1.$$
%

\end{theo}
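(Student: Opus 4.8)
The plan is to implement the entropic method introduced by Anantharaman and refined by Anantharaman--Nonnenmacher \cite{Anantharaman08, AnantharamanNonnenmacher07b}, built from a smooth dynamical partition, a hyperbolic dispersion estimate, and an entropic uncertainty principle. Fix a finite smooth partition of unity $(P_\alpha)_{\alpha=1}^{K}$ on $S^*M$ subordinate to a cover by sets of diameter $\le\eps$ whose boundaries are $\mu$-negligible, and quantize it into operators $\widehat P_\alpha=\Oph(P_\alpha)$ with $\sum_\alpha\widehat P_\alpha^*\widehat P_\alpha=\mathrm{Id}+\ml O(\hbar)$. Writing $U=e^{i\hbar\Delta_g/2}$ for the time-one Schr\"odinger propagator and $\widehat P_\alpha(k)=U^{-k}\widehat P_\alpha U^{k}$ (so $\widehat P_\alpha(k)\approx\Oph(P_\alpha\circ\varphi^k)$ by Egorov \eqref{e:egorov}), one forms for each word $\blds\alpha=(\alpha_0,\dots,\alpha_{n-1})$ the operator $\widehat P_{\blds\alpha}=\widehat P_{\alpha_{n-1}}(n-1)\cdots\widehat P_{\alpha_0}(0)$ and the finite-time entropy $h_n(\psi_\hbar)=-\sum_{|\blds\alpha|=n}p_{\blds\alpha}\log p_{\blds\alpha}$ with $p_{\blds\alpha}=\|\widehat P_{\blds\alpha}\psi_\hbar\|^2$. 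For a \emph{fixed} $n$, Egorov together with the convergence of the Wigner distributions of $\psi_\hbar$ to $\mu$ give $h_n(\psi_\hbar)\to H(\mu,P^{\vee n})$, the classical entropy of $\mu$ along the refined partition; by subadditivity $\tfrac1n H(\mu,P^{\vee n})$ decreases to $h_{KS}(\mu,P)$, which in turn tends to $h_{KS}(\mu)$ as $\eps\to0$. Everything therefore reduces to a lower bound on $h_n(\psi_\hbar)$ for a suitably chosen $n=n(\hbar)$ growing like $|\log\hbar|$.

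The analytic heart --- and the main obstacle --- is the \emph{hyperbolic dispersion estimate}: there are a dimensional constant $c_0>0$ and a threshold $C_0<1/(2\lambda_{\max})$, with $\lambda_{\max}$ the top expansion rate of $\varphi^t$, such that for $n\le C_0|\log\hbar|$,
\[
\big\|\widehat P_{\blds\alpha}\,\psi_\hbar\big\|_{L^2}\ \le\ C\,\hbar^{-c_0}\,\exp\!\Big(\tfrac12\,\sup_{\rho\in\Omega_{\blds\alpha}}\ \sum_{k=0}^{n-1}\log J^u(\varphi^k\rho)\Big)\ +\ \ml O(\hbar^{\infty}),
\]
where $\Omega_{\blds\alpha}=\bigcap_{k=0}^{n-1}\varphi^{-k}\big(\mathrm{supp}\,P_{\alpha_k}\big)$. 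It is proved by decomposing $\widehat P_{\blds\alpha}\psi_\hbar$, through iterated WKB/propagation along the flow, into a superposition of Lagrangian states carried by tubes that are stretched along the unstable direction and contracted along the stable one; the $L^2$-profile of the resulting state on the energy layer where $\psi_\hbar$ concentrates is controlled precisely by the accumulated unstable Jacobian, which yields the factor $\exp\big(\tfrac12\sum\log J^u\big)$ --- the square root of the Jacobian, and hence the constant $\tfrac12$ in the theorem, originates here. The restriction $n\le C_0|\log\hbar|$ with $C_0$ just below $1/(2\lambda_{\max})$ is exactly what keeps the (exotic) symbols of the $\widehat P_{\blds\alpha}$ in a class $S_\delta$ with $\delta<\tfrac12$, so that the semiclassical calculus of \S\ref{ss:semiclassical} still applies past a logarithmically long time; pushing the estimate to this sharp Ehrenfest time while keeping all error terms (amplitudes, phases, caustics generated by $n\sim|\log\hbar|$ compositions) under control is the delicate part.

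Next, apply the entropic uncertainty principle of Anantharaman--Nonnenmacher --- a semiclassical version of the Maassen--Uffink inequality --- to the two refined partitions of unity indexed by the ``future'' times $0,\dots,n-1$ and the ``past'' times $-n,\dots,-1$, with an energy cutoff $\mathcal O=\chi(-\hbar^2\Delta_g)$ inserted. Since $U\psi_\hbar=e^{i\theta_\hbar}\psi_\hbar$ by the eigenvalue equation \eqref{e:eigenvalue}, the future and past entropies of $\psi_\hbar$ coincide, so the principle yields $h_n(\psi_\hbar)\ge -\log c_n$, where $c_n$ is a maximum of operator norms of the form $\|\widehat P_{\blds\alpha}\,U^{n}\,\widehat P_{\blds\beta}^{\,*}\,\mathcal O\|$, i.e. of products of localizers along a $2n$-step word; the dispersion estimate of the previous step bounds $c_n$ by $\hbar^{-c_0}\exp\big(-\tfrac12\,|\Lambda_{2n}(\blds\alpha,\blds\beta)|\big)$ with $\Lambda_{2n}$ the accumulated $\log J^u$ along the word. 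Taking $n\sim(1-\eta)|\log\hbar|/(2\lambda_{\max})$ so that the $\hbar^{-c_0}$ loss becomes an $n$-independent constant (absorbed when letting first $\hbar\to0$, then $n\to\infty$), dividing by $n$, and converting finite-time quantum entropy into classical Kolmogorov--Sina\u{\i} entropy via the reduction of the first step, one obtains a lower bound of the form $h_{KS}(\mu)\gtrsim \tfrac12\int_{S^*M}|\log J^u|\,d\mu - C\eps - C'\eta$ in the best situations; and, by localizing the argument on the energy layer through the Brin--Katok formula and the ergodic decomposition (Remark \ref{r:ergodic}) rather than averaging, the sharper ``half-localization'' statement: for every $h_0<\tfrac12\max_{S^*M}|\log J^u|$ there is $c(h_0)>0$ with $\mu\big(\{(x,\xi):h_{KS}(\mu_{x,\xi})\ge h_0\}\big)\ge c(h_0)$.

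The two ``in particular'' assertions follow formally. Since the entropy map is affine, $h_{KS}(\mu)=\int_{S^*M}h_{KS}(\mu_{x,\xi})\,d\mu$, and the integrand is $\ge h_0>0$ on a set of $\mu$-measure $\ge c(h_0)$, whence $h_{KS}(\mu)>0$. If $\Lambda$ is an invariant subset with $h_{\mathrm{top}}(\Lambda)\le h_0$, then by the variational principle (Remark \ref{r:variational}) every ergodic component of $\mu$ carried by $\Lambda$ has entropy $\le h_0$; since a set of $\mu$-measure $\ge c(h_0)$ of ergodic components has entropy $\ge h_0$, a routine argument (using the monotonicity of $c(\cdot)$ and applying the main bound at thresholds slightly below $h_0$) shows such components cannot be carried by $\Lambda$, and therefore $\mu(\Lambda)\le 1-c(h_0)<1$.
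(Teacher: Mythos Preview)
Your proposal follows the entropic uncertainty principle (EUP) route of Anantharaman--Nonnenmacher~\cite{AnantharamanNonnenmacher07b}, whereas the paper's own sketch (in \S\ref{ss:proof-pressure}) follows Anantharaman's original combinatorial/subadditive argument~\cite{Anantharaman08}: quantum partitions, the hyperbolic dispersion estimate~\eqref{e:hyp-dispersive-estimate}, and then a two-stage subadditivity reduction (long logarithmic times $\to$ short logarithmic times $\to$ finite times) rather than Maassen--Uffink. The paper explicitly flags these as two distinct approaches, with the EUP one leading to the \emph{integrated} bound of Theorem~\ref{t:anantharamankochnonnenmacher}, not to the ergodic-component statement of Theorem~\ref{t:anantharaman}.

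This is precisely where your argument has a genuine gap. The EUP yields a lower bound on the \emph{total} entropy $h_{KS}(\mu)$ (and even that bound can be negative in variable curvature, cf.\ the discussion after Theorem~\ref{t:anantharamankochnonnenmacher}); it does not by itself produce a uniform lower bound on $\mu\big(\{(x,\xi):h_{KS}(\mu_{x,\xi})\ge h_0\}\big)$. Your sentence ``by localizing the argument \ldots\ through the Brin--Katok formula and the ergodic decomposition rather than averaging'' is exactly the step that is missing a mechanism. In Anantharaman's original proof, the localized statement comes from a \emph{counting/covering} argument: if $\mu$ put too much mass on low-entropy ergodic components, those components would be covered by subexponentially many Bowen cylinders, and the hyperbolic dispersion estimate then bounds the $\mu$-mass of each cylinder by roughly $\hbar^{-n/2}\prod_k\sup e^{\frac12\log J^u}$, forcing a contradiction once the number of cylinders is too small. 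This is the content of the ``less restrictive cut of the sum'' mentioned at the end of \S\ref{ss:proof-pressure}, and it is not recoverable from the single global inequality $h_n(\psi_\hbar)\ge-\log c_n$ that the EUP provides. If you want to keep the EUP framework, you would have to explain how to run it \emph{conditionally} on a low-entropy part of phase space, which requires essentially reintroducing the combinatorial bookkeeping of~\cite{Anantharaman08}.

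A secondary point: in your final paragraph, the implication for $\Lambda$ with $h_{\mathrm{top}}(\Lambda)\le h_0$ needs the main bound applied at some $h_0'\in(h_{\mathrm{top}}(\Lambda),\tfrac12\max|\log J^u|)$ strictly above $h_{\mathrm{top}}(\Lambda)$, since ergodic components supported on $\Lambda$ have entropy $\le h_{\mathrm{top}}(\Lambda)$ by the variational principle and could in principle attain $h_0$. You gesture at this with ``thresholds slightly below $h_0$'', but the inequality runs the other way; this is easily fixed once the main localized bound is in hand.
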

This Theorem says that Laplace eigenfunctions cannot concentrate on hyperbolic subsets which are too small in 
terms of topological entropy. For instance, it tells us that, if $\gamma$ is a closed orbit of $\varphi^t$, then 
$\mu(\gamma)<1$ for every semiclassical measure $\mu$. Compared with Theorem~\ref{t:bourgainlindenstrauss}, 
this result is valid in full generality for Anosov geodesic flows without any constraints on the sequence 
of Laplace eigenfunctions that one chooses. Of course, it leads to a weaker conclusion. We shall give some 
ideas of this proof in paragraph~\ref{ss:proof-pressure}.

In a subsequent work 
with Nonnenmacher~\cite{AnantharamanNonnenmacher07b}, they developped a slightly different approach based on an entropic 
uncertainty principle~\cite{MaassenUffink88} and they obtained more explicit lower bounds on $h_{KS}(\mu)$. For instance, 
in the case of \emph{constant negative curvature} equal to $-1$, they obtained the following explicit lower bound
$$\forall\mu\in\ml{M}(\Delta_g),\quad h_{KS}(\mu)\geq\frac{n-1}{2},$$
which is half of Ruelle's upper bound~\eqref{e:ruelle}. Showing the Quantum Unique Ergodicity conjecture 
would then amount to remove the $1/2$ in this lower bound thanks to the works of Ledrappier and Young~\cite{LedrappierYoung85}. 
This argument was optimized by Anantharaman, Koch and Nonnenmacher and it lead to the following general bound~\cite{AnantharamanKochNonnenmacher09}:
\begin{theo}\label{t:anantharamankochnonnenmacher} Suppose that $\varphi^t$ has the Anosov property. Set 
$$\chi_{\max}:=\lim_{t\rightarrow+\infty}\frac{1}{t}\log\sup\{\|d_{x,\xi}\varphi^t\|:(x,\xi)\in S^*M\}.$$
 Then, for every $\mu\in\ml{M}(\Delta_g)$, one has
$$h_{KS}(\mu)\geq -\int_{S^*M}\log J^ud\mu-\frac{n-1}{2}\chi_{\max}.$$
\end{theo}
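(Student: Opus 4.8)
The plan is to adapt the entropic method of Anantharaman and Nonnenmacher~\cite{Anantharaman08, AnantharamanNonnenmacher07b}, in the optimized form of Anantharaman, Koch and Nonnenmacher~\cite{AnantharamanKochNonnenmacher09}. It rests on three pillars: a dynamically refined quantum partition of phase space, a sharp hyperbolic dispersion estimate for the associated ``cylinder'' operators valid up to the Ehrenfest time, and the entropic uncertainty principle of Maassen and Uffink~\cite{MaassenUffink88}.

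First I would fix $\mu\in\ml{M}(\Delta_g)$ generated by a sequence $(\psi_\hbar)_{\hbar\to 0^+}$ of solutions to~\eqref{e:eigenvalue}; by Theorem~\ref{t:semiclassical-measure} it is carried by $S^*M$. Choose a finite measurable partition $\ml{P}=(\Omega_k)_{1\le k\le K}$ of a neighborhood of $S^*M$ into pieces of diameter $<\eps$ with $\mu$-negligible boundaries, together with a smooth partition of unity $P_k=\Oph(p_k)$, $\supp p_k\subset\Omega_k$, with $\sum_k P_k^*P_k=\operatorname{Id}$ microlocally near $S^*M$. Write $U=e^{\frac{i\hbar\Delta_g}{2}}$ for the propagator at unit time and, for a word $\blds{\alpha}=(\alpha_0,\dots,\alpha_{T-1})\in\{1,\dots,K\}^{T}$,
\[
P_{\blds{\alpha}}:=U^{-(T-1)}\,P_{\alpha_{T-1}}\,U\,P_{\alpha_{T-2}}\cdots U\,P_{\alpha_0}\,,
\]
so that $P_{\blds{\alpha}}$ is microlocalized on $\{\rho\in S^*M:\varphi^{j}(\rho)\in\Omega_{\alpha_j},\ 0\le j\le T-1\}$. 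Since $-\hbar^2\Delta_g\psi_\hbar=\psi_\hbar$ makes $\psi_\hbar$ an eigenvector of $U$ up to a phase, the numbers $(\|P_{\blds{\alpha}}\psi_\hbar\|^2)_{|\blds{\alpha}|=T}$ form an almost-probability vector whose Shannon entropy $h_{\ml P}^{(T)}(\psi_\hbar):=-\sum_{|\blds{\alpha}|=T}\|P_{\blds{\alpha}}\psi_\hbar\|^2\log\|P_{\blds{\alpha}}\psi_\hbar\|^2$ converges, for fixed $T$ and $\ml P$, to the partition entropy $H_\mu\big(\bigvee_{j=0}^{T-1}\varphi^{-j}\ml P\big)$ as $\hbar\to0$.

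The analytic heart is the \emph{hyperbolic dispersion estimate}. Using Egorov's theorem~\eqref{e:egorov} iterated with careful control of the remainders (which grow like $e^{Ct}$), a WKB/FBI description of $U$, and the stretching of unstable leaves along the flow, one shows that there is $\kappa_0<1/\chi_{\max}$ such that, setting $N=N(\hbar):=\lfloor\kappa_0|\log\hbar|\rfloor$, for every word $\blds{\alpha}$ of length $N$ and every $\rho_{\blds{\alpha}}$ in its microsupport,
\[
\|P_{\blds{\alpha}}\|_{L^2\to L^2}\;\le\; C\,\hbar^{-\frac{n-1}{2}}\exp\!\Big(\sum_{j=0}^{N-1}\log J^u\!\circ\varphi^{j}(\rho_{\blds{\alpha}})+o(N)\Big)\,;
\]
here the factor $\hbar^{-\frac{n-1}{2}}$ records the transverse spreading of the propagated Lagrangian in the $n-1$ unstable directions, the exponential is the unstable volume contracted along the Bowen tube, and this bound becomes effective (smaller than $1$) precisely when $N$ is comparable to the Ehrenfest time $|\log\hbar|/\chi_{\max}$ --- which is what forces the $\chi_{\max}$-correction below, the semiclassical approximation being governed by the worst expansion rate rather than the pointwise one. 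Feeding this into the Maassen--Uffink entropic uncertainty principle, applied to the past and future refined quantum partitions $\{P_{\blds{\alpha}^-}\}$, $\{P_{\blds{\alpha}^+}\}$ (words of length $N$), to the appropriate power of $U$, and to $\psi_\hbar$, and invoking the Birkhoff ergodic Theorem~\ref{t:birkhoff} to replace $\tfrac1N\sum_{j}\log J^u\circ\varphi^j$ along the microsupports by $\int_{S^*M}\log J^u\,d\mu$, yields
\[
h_{\ml P}^{(N)}(\psi_\hbar)\;\ge\; N\Big(-\!\int_{S^*M}\log J^u\,d\mu-\tfrac{n-1}{2}\,\chi_{\max}\Big)-o(N)\,.
\]

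Finally I would turn this Ehrenfest-scale bound into a statement about $h_{KS}(\mu)$. Using an almost-subadditivity inequality for the weights $\|P_{\blds{\alpha}}\psi_\hbar\|^2$, of the shape $h_{\ml P}^{(N)}(\psi_\hbar)\le\tfrac{N}{T_0}\,h_{\ml P}^{(T_0)}(\psi_\hbar)+o(N)$ for $1\ll T_0\ll N$, one gets $h_{\ml P}^{(T_0)}(\psi_\hbar)\ge T_0\big(-\int_{S^*M}\log J^u\,d\mu-\tfrac{n-1}{2}\chi_{\max}\big)-o(T_0)$; letting $\hbar\to0$ at $T_0$ fixed turns the left-hand side into $H_\mu\big(\bigvee_{0}^{T_0-1}\varphi^{-j}\ml P\big)$, and dividing by $T_0$ and sending $T_0\to\infty$, then refining $\ml P$ (recall $h_{KS}(\mu)=\sup_{\ml P}h_{KS}(\mu,\ml P)$), gives the claimed inequality. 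The main obstacle, by far, is the hyperbolic dispersion estimate: propagating a WKB/coherent-state parametrix of $U$ through $\sim|\log\hbar|$ iterations while keeping the errors subdominant requires delicate tracking of how the composition and Egorov remainders accumulate, and one must set up the partition --- or an auxiliary second microlocalization --- so that the two non-hyperbolic directions of $S^*M$ (the flow direction and the energy gradient) do not destroy the exponential gain; extracting the sharp constant $\chi_{\max}$, rather than a pointwise or averaged expansion rate, is exactly what separates this theorem from the Quantum Unique Ergodicity value $h_{KS}(\mu)=-\int_{S^*M}\log J^u\,d\mu$.
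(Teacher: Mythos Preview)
The paper does not prove this theorem; it states it with attribution to~\cite{AnantharamanKochNonnenmacher09} and only comments on why $\chi_{\max}$ appears (the Ehrenfest-time limitation on Egorov's theorem, $T_E(\hbar)=|\log\hbar|/(2\chi_{\max})$). Your sketch is a faithful outline of that reference's argument --- quantum partition, hyperbolic dispersion estimate with the $\hbar^{-(n-1)/2}$ prefactor, Maassen--Uffink entropic uncertainty principle applied to past/future cylinders at the Ehrenfest scale, then subadditivity to descend to finite time --- and it matches the paper's verbal description of the method, so there is nothing to correct.
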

Theorem~\ref{t:anantharaman} gave a rather unexplicit lower bound $h_0c(h_0)$ on the Kolmogorov-Sina\u{\i} entropy. 
Here, the lower bound
is more explicit but the drawback of the proof is that this lower bound may be negative (and thus the result empty). 
This only happens when the expansion rate $\chi_{\max}$ is too large compared with the averaged Lyapunov exponents 
with respect to $\mu$. This term $\chi_{\max}$ comes out because their proof relies on an application of the Egorov 
Theorem~\eqref{e:egorov} for range of times that tend to $+\infty$ as $\hbar\rightarrow 0$. 
In~\cite{AnantharamanNonnenmacher07b, DyatlovGuillarmou14}, it is shown that the optimal 
range of times for which the Egorov property~\eqref{e:egorov} remains true (up to a $o(1)$ remainder) is 
$$T_E(\hbar)=\frac{|\log\hbar|}{2\chi_{\max}},$$
which is known as the Ehrenfest time. This is the time scale until which the semiclassical approximation remains valid. 
Understanding the semiclassical properties of a quantum system beyond this time scale is a difficult and subtle 
question that one has to face in order to make improvements on our understanding of the set $\ml{M}(\Delta_g)$.

Motivated by this Theorem and by the counterexamples of Theorems~\ref{t:catmap} and~\ref{t:kelmer} -- see 
also~\cite{AnantharamanNonnenmacher07a}, Anantharaman and Nonnenmacher also formulated that the following 
general result should hold~\cite{AnantharamanNonnenmacher07b}:
\begin{conj}\label{t:entropy} Suppose that $\varphi^t$ has the Anosov property.
Then, for every $\mu\in\ml{M}(\Delta_g)$, one has
$$h_{KS}(\mu)\geq -\frac{1}{2}\int_{S^*M}\log J^ud\mu.$$
\end{conj}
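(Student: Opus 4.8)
The plan is to attack the conjecture through the entropic method of Anantharaman and Nonnenmacher~\cite{Anantharaman08, AnantharamanNonnenmacher07b}, in the optimised form of~\cite{AnantharamanKochNonnenmacher09}, the objective being to run the entropic uncertainty principle over as long a time window as possible and to suppress the loss term that makes the bound of Theorem~\ref{t:anantharamankochnonnenmacher} sometimes empty. First I would fix a finite smooth partition of unity $(P_\alpha)_{\alpha\in\ml{A}}$ supported in a neighbourhood of the energy shell $S^*M$, together with an $\hbar$-quantization such that $\sum_{\alpha}\Oph(P_\alpha)^{*}\Oph(P_\alpha)$ is microlocally the identity near $S^*M$. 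To a word $w=(\alpha_0,\dots,\alpha_{n-1})$ I associate the refined operator $P_w:=\Oph\big(P_{\alpha_{n-1}}\circ\varphi^{n-1}\big)\cdots\Oph\big(P_{\alpha_1}\circ\varphi^{1}\big)\Oph(P_{\alpha_0})$, or equivalently (using the Egorov Theorem~\eqref{e:egorov}) the corresponding product of Heisenberg-evolved quantizations $U^{-k}\Oph(P_{\alpha_k})U^{k}$ with $U=e^{\frac{i\hbar\Delta_g}{2}}$; both remain microlocally concentrated in the Bowen ball of the word $w$ as long as $n\lesssim T_E(\hbar)=\frac{|\log\hbar|}{2\chi_{\max}}$.

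Given a semiclassical measure $\mu$ generated by eigenfunctions $(\psi_\hbar)$, the next step is to study the probability vector $p_\hbar(w):=\|P_w\psi_\hbar\|_{L^2}^{2}$ and its Shannon entropy $H_n(\hbar):=-\sum_w p_\hbar(w)\log p_\hbar(w)$. Using the eigenvalue equation~\eqref{e:eigenvalue} (so that $U^{k}\psi_\hbar$ and $\psi_\hbar$ are proportional up to a phase), subadditivity of the Shannon entropy, the composition rule~\eqref{e:composition}, and the Brin--Katok description of $h_{KS}$ recalled above, one checks that $\tfrac1n H_n(\hbar)\to h_{KS}(\mu)$ as $\hbar\to0^+$ with $n$ fixed; the crux is to let $n$ grow like a fixed fraction of $T_E(\hbar)$ while keeping a useful lower bound on $H_n(\hbar)$.

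The dynamical heart of the argument is then to apply the Maassen--Uffink entropic uncertainty principle~\cite{MaassenUffink88} to the two coarse decompositions attached respectively to the forward window $[0,n]$ and the backward window $[-n,0]$. This yields an inequality of the shape $H_n^{+}(\hbar)+H_n^{-}(\hbar)\geq -2\log\|B_\hbar\|$, where $B_\hbar$ is the operator measuring the overlap between a state localised in a Bowen ball around a forward orbit segment and one localised around a backward segment. A WKB / stationary-phase estimate, which uses that the unstable directions are contracted by $\varphi^{-t}$, gives $\|B_\hbar\|\lesssim\exp\big(\tfrac12\sup_{\rho\in S^*M}\int_0^{n}\log J^{u}\circ\varphi^{t}(\rho)\,dt\big)$ up to factors subexponential in $n$; combining this with the entropy inequalities, summing over words, and letting $\hbar\to0^+$ with $n\sim T_E(\hbar)$ produces a bound $h_{KS}(\mu)\geq -\tfrac12\int_{S^*M}\log J^{u}\,d\mu-(\text{loss})$, the loss being exactly the gap between $\chi_{\max}$ and the $\mu$-weighted expansion rate already visible in Theorem~\ref{t:anantharamankochnonnenmacher}.

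The hard part — and the reason the statement is stated as a conjecture rather than a theorem — is precisely to remove this loss, that is, to control the quantum dynamics beyond the Ehrenfest time $T_E(\hbar)$. The factor $\tfrac12$ (as opposed to the full Ruelle inequality~\eqref{e:ruelle}, whose saturation characterises $L_1$) is the fingerprint of having an uncertainty principle only on a window of length $T_E(\hbar)$ and not $2T_E(\hbar)$, while the loss term reflects that on such a window one controls only the worst-case expansion $\chi_{\max}$ rather than the orbitwise rate weighted by $\mu$. Beyond $T_E(\hbar)$ the Egorov approximation~\eqref{e:egorov} breaks down, and what appears to be missing is either a genuine long-time propagator / resolvent bound or a fractal uncertainty principle in the spirit of the work on observability of eigenfunctions on negatively curved surfaces~\cite{DyatlovJin17}. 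A realistic intermediate goal, already new in variable curvature, would be to replace $\chi_{\max}$ by an averaged Lyapunov exponent, which would at least make the bound nonempty in every case; obtaining the clean half-Ruelle bound of the conjecture seems to require a genuinely new input into the semiclassical analysis past the Ehrenfest time.
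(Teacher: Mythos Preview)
The statement you were asked to address is a \emph{conjecture}, not a theorem, and the paper does not prove it; it is stated as an open problem due to Anantharaman and Nonnenmacher. Your write-up correctly recognises this: you outline the entropic-uncertainty approach of~\cite{AnantharamanNonnenmacher07b, AnantharamanKochNonnenmacher09}, arrive at the bound with the $\chi_{\max}$ loss (which is exactly Theorem~\ref{t:anantharamankochnonnenmacher}), and then explain accurately why the loss term is the obstruction and why going past the Ehrenfest time is the missing ingredient. As a diagnosis of the state of the art this is essentially what the paper itself says in the surrounding discussion.

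Two points are worth adding. First, the paper records that the conjecture \emph{is} proved in dimension~$2$ (Theorem~\ref{t:riviere-entropy}, from~\cite{Riviere10a}), and also gives a pressure-type statement in all dimensions showing that ergodic components of any $\mu\in\ml{M}(\Delta_g)$ satisfy the half-Ruelle bound on a set of positive $\mu$-measure; you do not mention these partial results, and in particular you do not attempt the $2$-dimensional argument, which is the only setting in which a genuine proof currently exists. Second, your suggestion that a fractal uncertainty principle in the spirit of~\cite{DyatlovJin17} might substitute for the missing long-time control is reasonable speculation, but note that the FUP machinery as it stands gives observability (support) statements rather than entropy lower bounds, so turning it into a proof of the conjecture would itself require a new idea.
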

This lower bound is in fact sharp in the counterexamples that we have described for quantum maps, and 
proving the Quantum Unique Ergodicity conjecture would amount to remove the $1/2$ in this inequality. In the case 
of certain quantum maps with variable Lyapunov exponents in dimension $2$, this conjecture was proved by 
Gutkin~\cite{Gutkin2010}. Then, in the case of Anosov geodesic flows, one gets~\cite{Riviere10a, Riviere12}
\begin{theo}\label{t:riviere-entropy} Suppose that $\varphi^t$ has the Anosov property. Then, for every
 $P_0>0$, there exists some constant $c(P_0)>0$ such that, for every $\mu\in\ml{M}(\Delta_g)$, one has
 $$\mu\left(\left\{(x,\xi):h_{KS}(\mu_{x,\xi})\geq -\frac{1}{2}\int_{S^*M}\log J^ud\mu_{x,\xi}-P_0\right\}\right)
\geq c(P_0)>0.$$
In particular, one has, for any invariant subset $\Lambda$,
$$P_{\operatorname{top}}\left(\Lambda,\frac{1}{2}\right)\leq -P_0\ \Longrightarrow\ \mu(\Lambda)\leq 1-c(P_0)<1.$$
Moreover, if $\operatorname{dim}\ M=2$, then, one has, for every $\mu\in\ml{M}(\Delta_g)$,
 $$h_{KS}(\mu)\geq -\frac{1}{2}\int_{S^*M}\log J^ud\mu.$$
\end{theo}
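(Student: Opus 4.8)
The plan is to run the entropic method of Anantharaman and Nonnenmacher~\cite{Anantharaman08, AnantharamanNonnenmacher07b}, in the quantitative form developed in~\cite{AnantharamanKochNonnenmacher09, Riviere10a, Riviere12}: one couples the entropic uncertainty principle of Maassen and Uffink~\cite{MaassenUffink88} with a \emph{hyperbolic dispersive estimate} for the semiclassical Schr\"odinger propagator valid up to (half of) the Ehrenfest time $T_E(\hbar)=|\log\hbar|/(2\chi_{\max})$, and then transfers the information obtained at this logarithmic scale into a bound on $h_{KS}(\mu)$ via an almost-subadditivity argument for the Kolmogorov-Sina\u\i\ entropy. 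As in the proof of Theorem~\ref{t:semiclassical-measure} we may assume that $\mu$ is carried near $S^*M$ and restrict to $0$-homogeneous symbols, and we use throughout that $\psi_\hbar$ solves~\eqref{e:eigenvalue}, so that $e^{-\frac{it\hbar\Delta_g}{2}}\psi_\hbar$ equals $\psi_\hbar$ up to a phase.

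First I would fix a finite smooth partition of unity $(P_k)_{k=1}^{K}$ on $S^*M$ subordinate to a cover by sets of diameter $\leq\eps_0$, quantize it as $\pi_k:=\Oph(P_k)$, write $U:=e^{-\frac{i\hbar\Delta_g}{2}}$ for the unit-time propagator, and for a word $\blds\alpha=(\alpha_0,\dots,\alpha_{n-1})$ set
\[
\pi_{\blds\alpha}:=\big(U^{-(n-1)}\pi_{\alpha_{n-1}}U^{n-1}\big)\cdots\big(U^{-1}\pi_{\alpha_1}U\big)\,\pi_{\alpha_0}.
\]
By the Egorov theorem~\eqref{e:egorov}, extended to times $\leq T_E(\hbar)$ as in~\cite{AnantharamanNonnenmacher07b, DyatlovGuillarmou14}, this operator is microlocalized along the trajectories visiting $\supp P_{\alpha_j}$ at each time $j$. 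The central and most delicate ingredient is the dispersive estimate: for every $\delta>0$ there are $C,\hbar_0>0$ so that, for all $n\leq(1-\delta)T_E(\hbar)$ and all words $\blds\alpha$,
\[
\big\|\pi_{\blds\alpha}\big\|_{L^2\to L^2}\leq C\,\hbar^{-C\eps_0}\exp\!\left(n\eps_0+\tfrac12\sum_{j=0}^{n-1}\log J^u(\rho_{\blds\alpha,j})\right),
\]
for suitable points $\rho_{\blds\alpha,j}$ in the $j$-th cell; since $\log J^u<0$, this is exponential decay at half the accumulated unstable Jacobian. The mechanism is a WKB analysis: $\pi_{\blds\alpha}$ maps an isotropic state to one whose base projection spreads along the unstable manifolds, so that an $L^2$ norm picks up the square root of the corresponding Jacobian, while the non-stationary contributions are disposed of by stationary phase; keeping all errors uniform forces the range $n\lesssim T_E(\hbar)/2$, which is the source of the factor $\tfrac12$ in the final bound.

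Applying the entropic uncertainty principle to $\psi_\hbar$ with the two families obtained by refining $(P_k)$ over $n\simeq\tfrac12(1-\delta)T_E(\hbar)$ steps forward and backward (after a harmless energy cutoff) yields a lower bound of the shape
\[
h^{(+)}_n(\psi_\hbar)+h^{(-)}_n(\psi_\hbar)\;\geq\;-2\log\Big(\sup_{\blds\alpha,\blds\beta}\big\|(\pi^{(-)}_{\blds\beta})^{*}\,\pi^{(+)}_{\blds\alpha}\big\|_{L^2\to L^2}\Big)-o\big(|\log\hbar|\big),
\]
where $\pi^{(+)}_{\blds\alpha}=\pi_{\blds\alpha}$ as above, $\pi^{(-)}_{\blds\beta}$ is its backward analogue, and $h^{(\pm)}_n(\psi_\hbar)$ are the Shannon entropies of the probability vectors $(\|\pi^{(\pm)}_{\blds\alpha}\psi_\hbar\|^2)_{\blds\alpha}$. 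Inserting the dispersive estimate on the right (the composite word has length $\leq(1-\delta)T_E(\hbar)$) and the almost-subadditivity of $h^{(\pm)}_n$ on the left, dividing by $n$ and letting $\hbar\to0^+$ then $\eps_0\to0$, produces a lower bound on the pressure-weighted entropy of $\mu$. To reach the statement of the theorem one tracks the unstable Jacobian cell by cell, decomposes $\mu=\int\mu_{x,\xi}\,d\mu(x,\xi)$ into ergodic components (Remark~\ref{r:ergodic}), and shows that the set of $(x,\xi)$ with $h_{KS}(\mu_{x,\xi})<-\tfrac12\int\log J^u\,d\mu_{x,\xi}-P_0$ cannot carry full $\mu$-mass: its contribution is supported on the words $\blds\alpha$ whose unstable Jacobian sum is abnormally small, and a large-deviations count bounds the total quantum mass of such words, through Bowen's equation (Remark~\ref{r:variational}), in terms of $P_{\mathrm{top}}(\Lambda,\tfrac12)$. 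This yields the constant $c(P_0)$ and, specializing to an invariant $\Lambda$, the implication $P_{\mathrm{top}}(\Lambda,\tfrac12)\leq-P_0\Rightarrow\mu(\Lambda)\leq 1-c(P_0)<1$.

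Finally, when $\dim M=2$ the stable and unstable distributions are one-dimensional, the volume count in the large-deviations step becomes one-dimensional, and the defect $c(P_0)$ can be absorbed, so that one obtains the clean bound $h_{KS}(\mu)\geq-\tfrac12\int_{S^*M}\log J^u\,d\mu$ for every $\mu\in\ml{M}(\Delta_g)$; the details of this last reduction are carried out in~\cite{Riviere10a, Riviere12}. The hard part, as always in this circle of ideas, is the dispersive estimate beyond the elementary range $|t|\leq T$ of~\eqref{e:egorov}: it requires iterating WKB approximations over logarithmically many steps with uniform control of the remainders, and it is exactly because this control breaks down at scale $T_E(\hbar)/2$ that the method delivers half of Ruelle's upper bound~\eqref{e:ruelle} rather than the full conjectural bound of Conjecture~\ref{t:entropy}.
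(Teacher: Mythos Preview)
Your outline is essentially correct and follows the entropic uncertainty principle route of~\cite{AnantharamanNonnenmacher07b, AnantharamanKochNonnenmacher09, Riviere10a}. Note however that the paper does \emph{not} give a proof of this theorem: it explicitly says that the proof of Theorem~\ref{t:riviere-entropy} is more technical and chooses instead to sketch the simpler Theorem~\ref{t:nonnenmacher-riviere} in \S\ref{ss:proof-pressure}, only indicating at the end how a global Anosov assumption would allow one to remove the $|\log\hbar|^{-1}$ loss.

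The route sketched in the paper is genuinely different from yours. It does \emph{not} invoke the Maassen--Uffink entropic uncertainty principle at all. Instead it works directly with the quantum partition sums $\sum_{\alpha}\langle\psi_\hbar,\pi_{\alpha_{N-1}}((N-1)T_0)\cdots\pi_{\alpha_0}\psi_\hbar\rangle$, uses the hyperbolic dispersion estimate~\eqref{e:hyp-dispersive-estimate} to kill the words staying near $\Lambda$, and then performs a purely combinatorial subadditive argument (splitting long words into blocks, using Cauchy--Schwarz and G\aa{}rding) to pass from the long time scale $\ml{K}|\log\hbar|$ down to finite times. In the Anosov case the refinement is that one allows words spending a \emph{small fraction} of time away from $\Lambda$, which shrinks the remaining sums enough to avoid the $|\log\hbar|$ loss. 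This is the approach of~\cite{Anantharaman08, Riviere12}.

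What each buys: your entropic uncertainty principle route gives directly a lower bound on a Shannon entropy of $\psi_\hbar$, which after subadditivity and $\hbar\to0$ becomes a bound on $h_{KS}(\mu)$; this is the cleanest way to get the explicit inequality $h_{KS}(\mu)\geq -\tfrac12\int\log J^u\,d\mu$ in dimension $2$. The paper's combinatorial route is more hands-on and is better suited to proving statements of the form $\mu(\Lambda)\leq 1-c(P_0)$ for specific invariant sets, and it is the one the paper chose to illustrate because it transfers more transparently to the non-Anosov setting of Theorem~\ref{t:nonnenmacher-riviere}. Both ultimately rest on the same hyperbolic dispersion estimate, and your identification of that estimate as the hard step is exactly right.
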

This Theorem answers Anantharaman-Nonnenmacher's question in dimension $2$ and, in that case, it can also be extended 
to nonpositively curved manifold provided that we choose an appropriate definition for the unstable 
Jacobian~\cite{Riviere10b}. In higher dimensions, this conjecture remains open and, besides Theorem~\ref{t:anantharamankochnonnenmacher}, 
the best explicit bounds we are aware of are 
for locally symmetric spaces~\cite{AnantharamanSilberman13} and for symplectic linear maps of the multidimensional 
torus~\cite{Riviere11}.

The proof of these different results is slightly technical and it involves tools from ergodic theory, hyperbolic 
dynamical systems and semiclassical analysis. Instead of doing that, we will explain how these methods can be 
used to extend Theorem~\ref{t:cdv-parisse} to more general hyperbolic subsets $\Lambda$. 
More precisely, together with Nonnenmacher, we proved~\cite{Riviere14}:
\begin{theo}\label{t:nonnenmacher-riviere}Let $\Lambda$ be an hyperbolic subset for $\varphi^t$ such that
$$P_{\operatorname{top}}\left(\Lambda,\frac{1}{2}\right)<0.$$
Let $a\in\ml{C}^{\infty}_c(T^*M,[0,1])$ which is equal to $0$ near $\Lambda$ and to $1$ outside a 
slightly bigger neighborhood (depending on $\Lambda$). Then, there exists some $c_{a,\Lambda}>0$ such that, for any sequence 
$(\psi_{\hbar})_{\hbar\rightarrow 0^+}$ of solutions to~\eqref{e:eigenvalue},
$$\liminf_{\hbar\rightarrow 0^+}|\log\hbar|\la\psi_{\hbar},\Oph(a)\psi_{\hbar}\ra\geq c_{a,\Lambda}>0.$$ 
 
\end{theo}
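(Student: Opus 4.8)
The plan is to reduce the statement to a microlocal estimate near $\Lambda$ with a logarithmic loss, and then to establish that estimate by propagating up to the Ehrenfest time, where the hypothesis $P_{\operatorname{top}}(\Lambda,\tfrac12)<0$ enters through a quantitative hyperbolic dispersion statement. Fix $\eps_0>0$ small and $\chi_0\in\ml{C}^\infty_c(T^*M,[0,1])$ with $\chi_0\equiv 1$ on $\{a<\eps_0\}$ and $\operatorname{supp}\chi_0\subset\{a<2\eps_0\}$; since $\operatorname{supp}(1-a)$ is a small neighbourhood of $\Lambda$, so is $\{a<2\eps_0\}$, and $\chi_0\equiv 1$ near $\Lambda$. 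I claim it suffices to prove that there are $C,\hbar_0>0$ with, for $0<\hbar<\hbar_0$ and all $u\in\ml{C}^\infty(M)$,
\[
\|\Oph(\chi_0)u\|_{L^2}^2\leq C|\log\hbar|\Big(\la u,\Oph(a)u\ra+\hbar^{-4}\big\|(-\hbar^2\Delta_g-1)u\big\|_{L^2}^2\Big).
\]
Granting this, I would apply it to $u=\psi_\hbar$, so that the error term vanishes and $\|\Oph(\chi_0)\psi_\hbar\|^2\leq C|\log\hbar|\,\la\psi_\hbar,\Oph(a)\psi_\hbar\ra$. Writing $q=\la\psi_\hbar,\Oph(\chi_0)\psi_\hbar\ra$, the elementary inequality $\chi_0^2\geq 2\chi_0-1$ together with the G\aa{}rding inequality~\eqref{e:garding} gives $\|\Oph(\chi_0)\psi_\hbar\|^2\geq 2q-1-\ml{O}(\hbar)$, while $a-\eps_0(1-\chi_0)\geq 0$ and G\aa{}rding give $\la\psi_\hbar,\Oph(a)\psi_\hbar\ra\geq \eps_0(1-q)-\ml{O}(\hbar)$. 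If $q\geq\tfrac34$ the first pair of inequalities forces $\la\psi_\hbar,\Oph(a)\psi_\hbar\ra\geq c|\log\hbar|^{-1}$; if $q<\tfrac34$ the second gives an even stronger bound. In both cases $\liminf_{\hbar\to0^+}|\log\hbar|\la\psi_\hbar,\Oph(a)\psi_\hbar\ra>0$, which is the assertion.

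To prove the displayed estimate, I would first shrink $\eps_0$ so that the maximal $\varphi^t$-invariant subset $\Lambda_0$ of $\ml{W}:=\{a<2\eps_0\}$ is still hyperbolic with $P_{\operatorname{top}}(\Lambda_0,\tfrac12)<0$; this is possible because hyperbolic sets admit isolating neighbourhoods and $\Lambda\mapsto P_{\operatorname{top}}(\Lambda,\tfrac12)$ is upper semicontinuous. The portion of $u$ that is microlocalised outside $\ml{W}$ enters $\{a>0\}$ after a time bounded independently of $\hbar$, so by the Egorov theorem~\eqref{e:egorov} it is controlled by $\la u,\Oph(a)u\ra$ up to a cost measured by $\hbar^{-1}\|(-\hbar^2\Delta_g-1)u\|$; this is the mechanism already used in the proof of Theorem~\ref{t:observability}. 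It then remains to treat the part of $u$ microlocalised in $\ml{W}$, and for this I would propagate up to the Ehrenfest time. Let $U=e^{\frac{i\hbar\Delta_g}{2}}$ and $N=\lfloor K|\log\hbar|\rfloor$ with $0<K<\tfrac1{2\chi_{\max}}$, so that~\eqref{e:egorov} remains valid on $[0,N]$. Take a partition of unity subordinate to a fine cover of $S^*M$ having $\chi_0$ as one element, quantise it as $\pi_i=\Oph(\chi_i)$, and form the refined operators $\pi_I=U^{-(N-1)}\pi_{i_{N-1}}U^{N-1}\cdots U^{-1}\pi_{i_1}U\,\pi_{i_0}$ for itineraries $I=(i_0,\dots,i_{N-1})$, so that $\sum_I\pi_I=\operatorname{Id}+\ml{O}(\hbar^\infty)$.

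The heart of the argument, and the main obstacle, is a hyperbolic dispersion estimate for the ``survival'' operator $\pi_{(0,\dots,0)}$, which forces a trajectory to remain in $\operatorname{supp}\chi_0\subset\ml{W}$ throughout $[0,N]$: when applied to a state microlocalised near $\Lambda_0$ it must lose a \emph{definite} fraction of the mass,
\[
\big\|\pi_{(0,\dots,0)}\Oph(\chi_0)u\big\|_{L^2}^2\leq (1-\rho_0)\,\|\Oph(\chi_0)u\|_{L^2}^2+\ml{O}(\hbar^\infty)\|u\|_{L^2}^2,
\]
with $\rho_0>0$ depending only on $\Lambda$. This I would prove through a WKB/coherent-state analysis of the truncated propagator, following~\cite{Anantharaman08, AnantharamanNonnenmacher07b}: a wave packet carried by $\Lambda$ has transverse width $\asymp\hbar^{1/2}$, and after time $N$ its microsupport spreads along the unstable directions into a set whose volume scales like the product of $(J^u)^{1/2}$ along the orbit — this is exactly why the relevant threshold is the topological pressure with the weight $s=\tfrac12$, and the negativity $P_{\operatorname{top}}(\Lambda,\tfrac12)<0$ is precisely what prevents this spread from filling the whole tube $\ml{W}$, forcing $\rho_0>0$.

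Finally, the escaped mass $\rho_0\|\Oph(\chi_0)u\|^2$ must, at whichever time $k\in\{0,\dots,N-1\}$ it leaves $\ml{W}$, be recorded by a letter $i_k\neq 0$ of the corresponding itinerary; since $\operatorname{supp}\chi_{i_k}\subset\{a\geq\eps_0\}$ for $i_k\neq 0$, it contributes $\lesssim\|\Oph(a)^{1/2}U^{-k}(\cdots)u\|^2$, and summing over the $N$ possible escape times while using $\|\Oph(a\circ\varphi^{-k})v\|^2=\|\Oph(a)U^{-k}v\|^2+\ml{O}(\hbar)\|v\|^2$ and the propagation bound of the previous paragraph (to move $U^{-k}$ past the quasimode error) yields $\rho_0\|\Oph(\chi_0)u\|^2\leq CN\big(\la u,\Oph(a)u\ra+\hbar^{-4}\|(-\hbar^2\Delta_g-1)u\|^2\big)$, i.e.\ the displayed estimate, since $N\asymp|\log\hbar|$. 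The genuinely hard part is thus the hyperbolic dispersion estimate of the third paragraph together with the control of the quasimode error over the Ehrenfest-scale window $[0,N]$; it is also the source of the logarithmic, rather than polynomial, delocalisation, since the Egorov approximation degenerates beyond $T_E(\hbar)\asymp|\log\hbar|$.
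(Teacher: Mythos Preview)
Your overall architecture (propagate, show some mass must escape the tube around $\Lambda$, then sum over the $N\asymp|\log\hbar|$ possible escape times) is the right shape, but the core step---your ``definite fraction'' dispersion estimate
\[
\big\|\pi_{(0,\dots,0)}\Oph(\chi_0)u\big\|_{L^2}^2\leq (1-\rho_0)\,\|\Oph(\chi_0)u\|_{L^2}^2+\ml{O}(\hbar^\infty)\|u\|_{L^2}^2
\]
at a time $N<T_E(\hbar)$ with a \emph{single} coarse cutoff $\chi_0$---is not what the references you cite prove, and the heuristic you give for it does not hold up. On the time window you choose, Egorov is valid, so $\pi_{(0,\dots,0)}\approx\Oph\big(\prod_{k=0}^{N-1}\chi_0\circ\varphi^k\big)$; that symbol is identically $1$ on $\Lambda$, hence the operator acts as the identity (up to $\ml{O}(\hbar)$) on any state microlocalised near $\Lambda$, and no fraction is lost. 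Your heuristic (``the spread does not fill the tube, forcing $\rho_0>0$'') points in the wrong direction: if the spread stays inside the tube then the survival operator \emph{preserves} the mass. What $P_{\operatorname{top}}(\Lambda,\tfrac12)<0$ actually buys is an estimate on the \emph{operator norm} of products associated with a \emph{fine} cover $(V_a)_{a\in W}$ of $\Lambda$ by sets of small diameter (so that the unstable spreading takes the state out of each $V_a$): for every $\alpha\in W^N$,
\[
\big\|\pi_{\alpha_{N-1}}((N-1)T_0)\cdots\pi_{\alpha_0}\big\|_{L^2\to L^2}\leq C_{\ml K}\,\hbar^{-n/2}\prod_{k=0}^{N-1}\sup_{\rho\in V_{\alpha_k}}e^{\frac12\int_0^{T_0}\log J^u\circ\varphi^t(\rho)\,dt},
\]
and the \emph{sum} over $\alpha\in W^N$ is then controlled by $\hbar^{-n/2}e^{NT_0(P_{\operatorname{top}}(\Lambda,1/2)+\eps_0)}$.

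Because of the prefactor $\hbar^{-n/2}$, this is only small if $N=\ml K|\log\hbar|$ with $\ml K$ \emph{large} (typically $\ml K>n/(2T_0|P_{\operatorname{top}}|)$), i.e.\ well \emph{beyond} the Ehrenfest time. One therefore cannot stay on $[0,T_E(\hbar)]$ as you do; the paper first goes to this long scale, where the dispersion estimate makes the ``trapped'' words negligible, and then uses a two-step subadditivity argument (first from $N=\ml K|\log\hbar|$ down to $N_1=\kappa|\log\hbar|$ with $\kappa$ small enough for Egorov, then from $N_1$ down to finite times) to extract the $|\log\hbar|^{-1}$ lower bound on $\la\psi_\hbar,\Oph(a)\psi_\hbar\ra$. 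Your final summation over escape times is essentially the second of these steps, but you are missing the first, and with it the mechanism that actually produces escape.
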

We emphasize that, compared with Theorems~\ref{t:anantharaman},~\ref{t:anantharamankochnonnenmacher} 
and~\ref{t:riviere-entropy}, no global assumption is made on the manifold. We will give a sketch of proof in paragraph~\ref{ss:proof-pressure} and we will also briefly explain which steps 
can be optimized when we make some global assumptions on the manifold (e.g. Anosov property of the geodesic flow).

\subsubsection{Unconditional observability in dimension $2$}

We can now come back to the question of observability in the case of an Anosov geodesic flow. Given an open set $\omega$ of $S^*M$ (not necessarly of $M$), one can define the hyperbolic subset
$$\Lambda_{\omega}:=\bigcap_{t\in\IR}\varphi^t(S^*M\setminus \omega).$$
The entropic Theorems from paragraph~\ref{sss:entropy} imply that if 
$$P_{\operatorname{top}}\left(\Lambda_\omega,\frac{1}{2}\right)<0,$$
then $\mu(\omega)\geq c_{\omega}>0$ for any $\mu\in\ml{M}(\Delta_g)$. Hence, if $\omega$ is an open set of $S^*M$ such 
that $\Lambda_{\omega}$ is not too big in a dynamical sense, then eigenfunctions are uniformly observables on 
$\omega$. In the case where $\omega$ is an open set of $M$, these results implies an observability Theorem like 
Theorem~\ref{t:observability-torus} provided that $P_{\operatorname{top}}\left(\Lambda_{S^*\omega},\frac{1}{2}\right)<0$. 
In the case of hyperbolic surfaces, this condition was removed by 
Dyatlov and Jin~\cite{DyatlovJin17} using a fractal uncertainty principle due to Bourgain and Dyatlov~\cite{BourgainDyatlov18} -- see also~\cite{DyatlovZahl16}: 
\begin{theo}\label{t:FUP} Suppose that $M$ is an hyperbolic surface. Then, 
for any open set $\omega\neq\emptyset$ inside $S^*M$, there exists some constant $c_{\omega,M,g}>0$ such that, for any $\mu\in\ml{M}(\Delta_g)$,
$$\mu(\omega)\geq c_{\omega,M,g}>0.$$
\end{theo}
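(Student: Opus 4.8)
The plan is to reduce Theorem~\ref{t:FUP} to a semiclassical \emph{control estimate} and then to prove that estimate by combining a long-time propagation argument with the fractal uncertainty principle. Fix a nonempty open $\omega\subset S^*M$ and pick $a_1\in\ml{C}^{\infty}_c(T^*M)$ with $0\le a_1\le 1$, not identically zero, and with a conic neighbourhood of $\supp a_1$ meeting $S^*M$ inside $\omega$. The claim is that it suffices to prove: there are $C,\hbar_0>0$ such that every solution $\psi_\hbar$ of~\eqref{e:eigenvalue} with $0<\hbar<\hbar_0$ satisfies $\|\psi_\hbar\|_{L^2}\le C\|\Oph(a_1)\psi_\hbar\|_{L^2}+o(1)$. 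Granting this, one squares, uses the composition formula and the Calder\'on--Vaillancourt bound~\eqref{e:calderon} to get $1\le 2C^2\la\psi_\hbar,\Oph(|a_1|^2)\psi_\hbar\ra+o(1)$, and then passes to the limit along any sequence defining a given $\mu\in\ml{M}(\Delta_g)$; since $|a_1|^2\le\mathbf{1}_\omega$ on $S^*M=\supp\mu$, one obtains $\mu(\omega)\ge 1/(2C^2)=:c_{\omega,M,g}$, which is Theorem~\ref{t:FUP}.

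For the control estimate the engine is long-time propagation under the Schr\"odinger group~\eqref{e:schrodinger}: since $\psi_\hbar$ solves~\eqref{e:eigenvalue}, the evolution $e^{-\frac{it\hbar\Delta_g}{2}}$ acts on $\psi_\hbar$ by a unimodular scalar, so it may be inserted freely. Using a microlocal partition of a neighbourhood of $S^*M$ subordinate to $\omega$ and its complement, and iterating the Egorov theorem~\eqref{e:egorov} over a window $[-T_\hbar,T_\hbar]$ with $T_\hbar=\rho|\log\hbar|$ for a fixed $\rho\in(0,1)$ close to $1$, one decomposes $\psi_\hbar$ according to the itinerary of the underlying geodesic relative to $\omega$. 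Every trajectory either enters $\omega$ during $[0,T_\hbar]$ --- then the corresponding piece of $\psi_\hbar$ is controlled by $\Oph(a_1)$ modulo $o(1)$ after reversing the propagation --- or it stays, up to time $T_\hbar$, in a set shrinking onto $\Gamma^+_\omega:=\bigcap_{t\ge 0}\varphi^t(S^*M\setminus\omega)$, and symmetrically for backward times and $\Gamma^-_\omega$, with $\Gamma^+_\omega\cap\Gamma^-_\omega=\Lambda_\omega$. Hence, modulo a part controlled by $\Oph(a_1)$ plus $o(1)$, the state $\psi_\hbar$ is microlocalized in an $O(\hbar^{1/2})$-neighbourhood of $\Gamma^+_\omega$ \emph{and} in an $O(\hbar^{1/2})$-neighbourhood of $\Gamma^-_\omega$, the relevant cutoffs being legitimate operators in a second-microlocal, anisotropic ($\hbar^{1/2}$-scale) calculus adapted to the stable/unstable foliations.

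The fractal uncertainty principle is what makes the doubly-trapped piece negligible. On a hyperbolic surface $\varphi^t$ is Anosov, and one can use Fermi-type coordinates in which the unstable and stable directions are canonically conjugate, so that the composition of the $O(\hbar^{1/2})$-cutoff near $\Gamma^+_\omega$ with the one near $\Gamma^-_\omega$ is, up to lower order, conjugated to $\mathbf{1}_{X^u(\hbar)}\,\mathcal{F}_\hbar\,\mathbf{1}_{X^s(\hbar)}$, where $\mathcal{F}_\hbar$ is the semiclassical Fourier transform and $X^{u},X^{s}$ are the projections of $\Gamma^\mp_\omega$ onto an unstable, resp. stable, horocyclic segment, $X(\hbar)$ denoting an $\hbar$-neighbourhood. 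Because $\omega$ is open and nonempty and the flow is topologically transitive with uniform expansion, $X^u$ and $X^s$ are $\nu$-porous on all scales in $[\hbar,1]$ for some $\nu=\nu(\omega)>0$ (every subinterval of length $\le 1$ contains a subinterval of length $\nu$ times its own, disjoint from the set). The Bourgain--Dyatlov fractal uncertainty principle, in the form valid for porous sets, then gives $\|\mathbf{1}_{X^u(\hbar)}\mathcal{F}_\hbar\mathbf{1}_{X^s(\hbar)}\|_{L^2\to L^2}=O(\hbar^{\beta})$ for some $\beta=\beta(\nu)>0$; hence the doubly-trapped piece of $\psi_\hbar$ has $L^2$-norm $O(\hbar^{\beta})=o(1)$, and combining with the previous paragraph yields the control estimate.

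The hard part is twofold. First, the propagation has to be carried slightly \emph{beyond} the Ehrenfest time (to $2\rho|\log\hbar|$, rather than $\tfrac12|\log\hbar|$): this is precisely where the entropic arguments behind Theorems~\ref{t:anantharaman} and~\ref{t:nonnenmacher-riviere} stop yielding enough, and instead one must track iterated Egorov compositions inside the anisotropic calculus, keeping all amplitudes in a controlled symbol class --- the careful bookkeeping of supports and of the $\hbar^{1/2}$-scale quantization is the technical heart of the matter. Second, one must upgrade ``$\omega$ is a nonempty open set'' into a quantitative, scale-invariant porosity statement for $X^{u},X^{s}$ that is uniform down to scale $\hbar$, which requires exploiting the minimality/transitivity of the horocyclic structure together with uniform expansion of the geodesic flow. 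Once these two ingredients are in place, the fractal uncertainty principle can be used as a black box and the remaining steps are routine.
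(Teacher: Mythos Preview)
Your sketch is essentially correct and follows the same strategy as the paper's (and Dyatlov--Jin's) proof: build long-time forward/backward propagated cutoffs in an anisotropic calculus adapted to the stable and unstable foliations, feed the product into the Bourgain--Dyatlov fractal uncertainty principle via porosity of the relevant one-dimensional projections, and then unwind the combinatorics to extract a control estimate. The paper organizes the argument slightly differently, by writing out the full quantum partition $\sum_{\gamma\in\{1,2\}^{2N}}\pi_{\gamma_{N-1}}(N-1)\cdots\pi_{\gamma_{-N}}(-N)$ and running an Anantharaman-style subadditive reduction from long logarithmic times to short ones (exactly as in the proof of Theorem~\ref{t:nonnenmacher-riviere}), first landing on a bound with a $|\log\hbar|^{-1}$ loss and then removing it by allowing a small fraction of ``1''-letters in $\Pi_\pm$; your direct control-estimate phrasing is the way Dyatlov--Jin themselves present it, and the two packagings are equivalent.

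Two small points. First, there is an internal mismatch of scales: with $T_\hbar=\rho|\log\hbar|$ and $\rho$ close to $1$, the propagated symbols live at scale $\hbar^\rho$ along the contracting direction, not $\hbar^{1/2}$; the $\hbar^{1/2}$-scale corresponds precisely to the Ehrenfest time you say you must exceed. The Dyatlov--Zahl Lagrangian calculus you invoke is exactly designed to accommodate this $\hbar^\rho$-anisotropy, so just replace $\hbar^{1/2}$ by $\hbar^\rho$ throughout. Second, for the porosity input the paper singles out the \emph{unique ergodicity of the horocycle flow} as the mechanism that turns ``$\omega\neq\emptyset$'' into uniform holes at all scales in $[\hbar^\rho,1]$; your ``topological transitivity plus uniform expansion'' is in the right spirit but the quantitative uniformity really comes from that equidistribution statement.
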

Compared with the case of flat tori, we emphasize that this Theorem is valid for any nonempty open set inside $S^*M$ (and not 
only on $M$). An extension of this result to variable negative curvature was recently announced by Dyatlov, Jin and 
Nonnenmacher~\cite{DyatlovJinNonnenmacher19}. A natural question is to understand what can happen in higher dimensions. In that case, one has to keep in mind 
Kelmer's construction (see Theorem~\ref{t:kelmer}) which shows that such a result cannot be a consequence of 
hyperbolicity only. 
\begin{rema} If we fix an invariant subset $\Lambda\neq S^*M$, then Theorem~\ref{t:FUP} tells us that 
$$\mu(\Lambda)\leq 1-c(\Lambda)<1,$$ 
for some constant $c(\Lambda)>0$ depending on $\Lambda$. Compared with 
Theorems~\ref{t:anantharaman} and~\ref{t:riviere-entropy}, this is valid for \emph{any} subset $\Lambda$. Yet, the 
constant depends on $\Lambda$ while the constant appearing in Theorems~\ref{t:anantharaman} 
and~\ref{t:riviere-entropy} depends only on the topological pressure of $\Lambda$ provided it is small enough. 
\end{rema}
For the sake of completeness, let us finally record the following remarkable Corollary of Theorem~\ref{t:FUP}:
\begin{coro}[Unconditional observability on negatively curved surfaces]
Suppose that $M$ is an hyperbolic surface and let $\omega$ be a nonempty open set of $M$. Then, there exists some contant $c_{\omega,g}>0$ such 
that, for every $\psi_{\lambda}$ solution to
$$-\Delta\psi_{\lambda}=\lambda^2\psi_{\lambda},\quad\|\psi_{\lambda}\|_{L^2}=1,$$
 one has 
 $$\int_{\omega}|\psi_\lambda(x)|^2d\operatorname{Vol}_g(x)\geq c_{\omega,g}>0.$$
 
\end{coro}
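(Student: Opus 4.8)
The plan is to deduce this Corollary from Theorem~\ref{t:FUP} in exactly the way Theorem~\ref{t:observability} was deduced from Proposition~\ref{p:observability}: a compactness/contradiction argument, together with the identification of quantum limits with semiclassical measures provided by Theorem~\ref{t:semiclassical-measure}.

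First I would use the unique continuation property~\eqref{e:unique-continuation} to reduce to the high-energy regime: the desired inequality is automatic (with a constant depending on $\lambda$) for eigenvalues in a fixed compact set, so it suffices to produce a uniform lower bound as $\hbar=\lambda^{-1}\to 0^+$. Arguing by contradiction, I would assume there is a sequence $(\psi_{\hbar_n})_{n\geq 1}$ of solutions to~\eqref{e:eigenvalue} with $\hbar_n\to 0^+$ and $\int_\omega|\psi_{\hbar_n}(x)|^2 d\Vol_g(x)\to 0$, and, after extraction, let $\mu\in\ml{M}(\Delta_g)$ be the associated semiclassical measure, which by Theorem~\ref{t:semiclassical-measure} is a probability measure carried by $S^*M$.

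Next I would fix a nonempty open ball $B$ with $\overline{B}\subset\omega$ and pick $b\in\ml{C}^{\infty}(M,[0,1])$ with $\supp b\subset\omega$ and $b\equiv 1$ on $B$. With the conventions of paragraph~\ref{ss:semiclassical}, $\Oph(b)$ is multiplication by $b$, so
$$\int_\omega |\psi_{\hbar_n}(x)|^2 d\Vol_g(x)\;\geq\;\la\psi_{\hbar_n},\Oph(b)\psi_{\hbar_n}\ra .$$
Letting $\pi_1:S^*M\to M$ be the canonical projection and controlling the escape of mass at infinity in the fibre variable by a cutoff $\chi(H_0)$ exactly as in the proof of Theorem~\ref{t:semiclassical-measure}, I would pass to the limit to obtain
$$\liminf_{n\to+\infty}\la\psi_{\hbar_n},\Oph(b)\psi_{\hbar_n}\ra=\int_{S^*M} b\circ\pi_1\, d\mu\;\geq\;\mu\big(\pi_1^{-1}(B)\big).$$
Since $\pi_1^{-1}(B)$ is a nonempty open subset of $S^*M$, Theorem~\ref{t:FUP} gives $\mu\big(\pi_1^{-1}(B)\big)\geq c>0$, contradicting $\int_\omega|\psi_{\hbar_n}|^2\to 0$. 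This yields the uniform lower bound, and the final constant $c_{\omega,g}$ can be taken to be any positive number below $c$ (adjusting for the finitely many small eigenvalues handled by~\eqref{e:unique-continuation}).

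There is no genuine obstacle left here: all the analytic weight --- the fractal uncertainty principle of Bourgain--Dyatlov and its consequence that \emph{every} semiclassical measure on a hyperbolic surface charges every nonempty open subset of $S^*M$ --- is already packaged in Theorem~\ref{t:FUP}. The only points that require a little attention are the reduction to $\hbar\to 0^+$ via~\eqref{e:unique-continuation}, the absence of escape of mass at infinity in $T^*M$ (treated as in the proof of Theorem~\ref{t:semiclassical-measure}), and the passage from an observability statement on $M$ to one on $S^*M$, which is immediate because a nonempty open $\omega\subset M$ lifts to the nonempty open set $\pi_1^{-1}(\omega)\subset S^*M$.
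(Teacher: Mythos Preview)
Your proposal is correct and is precisely the argument the paper has in mind: the Corollary is stated without proof, as an immediate consequence of Theorem~\ref{t:FUP}, and the deduction you give---contradiction, extraction of a semiclassical measure, and lifting $\omega$ to $\pi_1^{-1}(\omega)\subset S^*M$---mirrors exactly the proof of Theorem~\ref{t:observability} from Proposition~\ref{p:observability}. There is nothing to add.
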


\subsection{Sketch of proof of Theorem~\ref{t:nonnenmacher-riviere}}\label{ss:proof-pressure}

The proof of Theorem~\ref{t:nonnenmacher-riviere} being slightly simpler than the ones of 
Theorem~\ref{t:anantharaman} or~\ref{t:riviere-entropy}, we will focus on this case without 
getting too much into the details. In the end, we will explain how a global assumption on 
the dynamics allows to improve the argument.

\subsubsection{Cover of the hyperbolic subset $\Lambda$}
Let $(V_a)_{a\in W}$ be an open cover of $\Lambda$ by some open subsets of
$T^*M$ adapted to the dynamics. In order to avoid complications, we remain vague on this aspect and 
we refer to~\cite[Par.~2.2]{Riviere14} for more details. We will just say that they are constructed by 
mimicking the construction of Bowen balls. Instead of considering balls of radius $\eps$ to construct 
dynamical balls, we consider a fixed open cover that we refine by the dynamics to obtain a cover 
adapted to the dynamics. In that manner, we obtain the following estimate
\begin{equation}\label{e:pressure-estimate}\sum_{a\in W}\sup_{\rho\in V_a} \exp\left(\frac{1}{2}\int_0^{T_0}\log J^u\circ\varphi^t(\rho)dt\right)\leq 
e^{T_{0} (P_{\text{top}}(\Lambda,1/2)+\eps_0)},\end{equation}
where $\eps_0>0$ is some small parameter which is fixed in advance and $T_0>0$ is large enough. We 
complete this cover by setting $V_{\infty}$ to be an open set not intersecting $\Lambda$ (or its image in the 
dilates of $\Lambda$ near $S^*M$). In the following, we set 
$\overline{W}=W\cup\{\infty\}$. Our goal is to prove that at least a fraction $|\log\hbar|^{-1}$ of the Wigner
distribution associated to the sequence $(\psi_{\hbar})_{0<\hbar\leq 1}$ is contained inside $V_{\infty}$. 
To that aim, we introduce a partition of unity $(P_{a})_{a\in \overline{W}}$ associated to the open cover 
$(V_a)_{a\in\overline{W}}$, i.e. 
$$\forall a\in\overline{W},\ P_a\in\ml{C}^{\infty}_c(V_a,[0,1])\ \text{and}\ \sum_{a\in\overline{W}} P_a(\rho)=1,$$ 
for every $\rho$ in a small neighborhood $S^*M$. Again, the reason for working near $S^*M$ is that eigenfunctions 
are asymptotically concentrated on $S^*M$.

\subsubsection{Quantum partitions of identity}
We now quantize this smooth partition into a family of pseudodifferential operators 
$(\pi_a)_{a\in\overline{W}}$ 
so that $\pi_a^*=\pi_a$, $\sigma(\pi_a)=P_a$ and
$$\sum_{a\in\overline{W}}\pi_a=\text{Id}_{L^2(M)}+\ml{O}(\hbar^{\infty}),$$
for data microlocalized near $S^*M$, e.g. for our sequence $(\psi_{\hbar})_{0<\hbar\leq1}$ of solutions 
to~\eqref{e:eigenvalue}. We say that we have a quantum partition of identity and these play a central role 
in the above entropic results. We can now construct a quantum analogue of the Bowen balls, and to that aim we set
$$\pi_a(t)=e^{-\frac{it\hbar\Delta_g}{2}}\pi_ae^{\frac{it\hbar\Delta_g}{2}},$$ 
We still have the relation
$$\sum_{a\in\overline{W}}\pi_a(t)=\text{Id}_{L^2(M)}+\ml{O}(\hbar^{\infty}),$$
for data microlocalized near $S^*M$. Hence, for $N\geq 0$, one can write
\begin{equation}\label{e:partition-identity}
 \sum_{\alpha\in\overline{W}^N}\left\la \psi_{\hbar},\pi_{\alpha_{N-1}}((N-1)T_0)\ldots\pi_{\alpha_1}(T_0)\pi_{\alpha_0}(0)\psi_{\hbar}\right\ra=1+\ml{O}(\hbar^{\infty}).
\end{equation}
Each of these operators can be thought as a semiclassical analogue of the characteristic function of a Bowen ball. In fact, for $N$ fixed and 
using~\eqref{e:composition} and~\eqref{e:egorov}, 
the principal symbol of this operator is 
$P_{\alpha_{N-1}}\circ\varphi^{(N-1)T_0}\times \ldots\times P_{\alpha_1}\circ\varphi^{T_0}\times P_{\alpha_0}$. The points inside 
the support of this symbol are exactly the points $\rho$ which at time $0$ are inside $V_{\alpha_0}$, at time $T_0$ 
inside $V_{\alpha_1}$, etc. The Egorov Theorem~\eqref{e:egorov} remains true for times of order 
$N=\kappa|\log\hbar|$ with $\kappa>0$ small enough up to some small error of order $\ml{O}(\hbar^{\delta})$ with 
$\delta>0$ depending on $\kappa$~\cite[Th.~11.12]{Zworski12}. Hence, the symbol remains of this form 
for this range of $N$ and we will use this observation several times in the following.

\subsubsection{Hyperbolic dispersion estimates}

The first key ingredient is an hyperbolic dispersion estimate which, in the present case, is due to Nonnenmacher and 
Zworski~\cite{NonnenmacherZworski09}. This estimate will allow us to control the terms in the 
sum~\eqref{e:partition-identity} which correspond to words $\alpha$ in $W^N$, i.e. points whose trajectory remains 
close to $\Lambda$. More precisely, for every $\ml{K}>0$, they show the existence of 
$C_{\ml{K}}>0$ and $\hbar_{\ml{K}}>0$ such that, for every $0<\hbar\leq \hbar_{\ml{K}}$, for every 
$N\leq \ml{K}|\log\hbar|$ and for every $\alpha$ in $W^N$,
\begin{equation}\label{e:hyp-dispersive-estimate}
\left\|\pi_{\alpha_{N-1}}((N-1)T_0)\ldots\pi_{\alpha_1}(T_0)\pi_{\alpha_0}\right\|_{L^2\rightarrow L^2}\leq C_{\ml{K}}
\hbar^{-\frac{n}{2}}(1+\ml{O}(\eps))^{NT_0}\prod_{k=0}^{N-1}\sup_{\rho\in V_{\alpha_k}} 
\exp\left(\frac{1}{2}\int_0^{T_0}\log J^u\circ\varphi^t(\rho)dt\right).
\end{equation}
where $\eps>0$ is an upper bound on the diameter of the open sets $(V_{\alpha})_{\alpha\in W}$. This is in fact the only place of 
the proof where the hyperbolicity is used. Proving this upper bound is quite subtle and it is one of the major steps 
in the proofs of~\cite{Anantharaman08, AnantharamanNonnenmacher07a}. We will skip the argument that would require 
to introduce more background material related to Fourier integral operators and we will just explain how it 
can be used here.

If we implement this upper bound and the estimate~\eqref{e:pressure-estimate} in~\eqref{e:partition-identity}, we can deduce
that
\begin{equation}\label{e:partition-identity2}
 \sum_{\alpha\in\overline{W}^N-W^N}\left\la \psi_{\hbar},\pi_{\alpha_{N-1}}((N-1)T_0)\ldots\pi_{\alpha_1}(T_0)\pi_{\alpha_0}(0)\psi_{\hbar}\right\ra
 =1+\ml{O}\left(\hbar^{-\frac{n}{2}}e^{NT_{0} (P_{\text{top}}(\Lambda,1/2)+\eps_0)}\right).
\end{equation}
Now, using the fact that $P_{\text{top}}(\Lambda,1/2)<0$ and taking $N=[\ml{K}|\log\hbar|]$ with $\ml{K}>0$ large enough, 
one finds that
\begin{equation}\label{e:partition-identity3}
 \sum_{\alpha\in\overline{W}^N-W^N}\left\la \psi_{\hbar},\pi_{\alpha_{N-1}}((N-1)T_0)\ldots\pi_{\alpha_1}(T_0)\pi_{\alpha_0}(0)\psi_{\hbar}\right\ra
 =1+o(1).
\end{equation}
Hence, we have shown that those elements $\alpha$ of the partition that correspond to trajectories that spend some time 
away from $\Lambda$ have a total mass which is asymptotically of order $1$. The difficulty is now to show that there is indeed 
some fraction of the mass outside $\Lambda$. This will follow from a combinatorial work whose ideas go back to the works of 
Anantharaman~\cite{Anantharaman08} who showed that these quantities verify some subadditive properties.

Before explaining that, let us observe that~\eqref{e:partition-identity3} can be rewritten as
\begin{equation}\label{e:partition-identity4}
 \left\la \psi_{\hbar},\sum_{\alpha\in\overline{W}^N-W^N}\pi_{\alpha_{N-1}}(T_0(N-1))\ldots\pi_{\alpha_1}(T_0)\pi_{\alpha_0}(0)\psi_{\hbar}\right\ra
 =1+o(1).
\end{equation}
Thus, by Cauchy-Schwarz inequality,
\begin{equation}\label{e:partition-identity5}
 \left\|\sum_{\alpha\in\overline{W}^N-W^N}\pi_{\alpha_{N-1}}((N-1)T_0)\ldots\pi_{\alpha_1}(T_0)\pi_{\alpha_0}(0)\psi_{\hbar}\right\|
 \geq 1+o(1).
\end{equation}

\subsubsection{Subadditivity property: from long logarithmic times to short ones}

Let us write $N=kN_1\simeq\mathcal{K}|\log\hbar|$ with 
$N_1\simeq[\kappa|\log\hbar|]$ so that the semiclassical rules of paragraph~\ref{ss:semiclassical} applies up to small error 
terms (e.g. the Egorov Theorem or the composition rule) at the scales $N_1$. We can already note that $k$ is bounded in terms of $\ml{K}$ and 
$\kappa$. Hence, using the semiclassical rules of paragraph~\ref{ss:semiclassical}, one finds
$$\left\|\sum_{\gamma\in W^{N_1}}\pi_{\gamma_{N_1-1}}((N_1-1)T_0)\ldots\pi_{\gamma_1}(T_0) \pi_{\gamma_0}\right\|_{L^2\rightarrow L^2}
=\ml{O}(1).$$
Observe now that
$$\overline{W}^N-W^N=\bigsqcup_{p=1}^k\left\{\alpha=(\overline{\gamma},\gamma,\tilde{\gamma}):
\overline{\gamma}\in\overline{W}^{N_1(p-1)},\ \gamma\in \overline{W}^{N_1}-W^{N_1},\ 
\tilde{\gamma}\in W^{N_1(k-p)}\right\}.$$
This allows us to rewrite~\eqref{e:partition-identity5} under the form
\begin{equation}\label{e:partition-identity6}
 1+o(1)\leq\ml{O}(1) \sum_{p=1}^k\left\|
 \sum_{\gamma\in\overline{W}^{N_1}-W^{N_1}}\pi_{\gamma_{N_1-1}}e^{\frac{i\hbar T_0\Delta_g}{2}}\ldots 
 \pi_{\gamma_1}e^{\frac{i\hbar T_0\Delta_g}{2}}\pi_{\gamma_0}e^{\frac{i\hbar T_0 N_1 p\Delta_g}{2}}\psi_{\hbar}\right\|.
\end{equation}
Then, using that $e^{\frac{i\hbar t\Delta_g}{2}}$ is unitary and that the $(\psi_{\hbar})_{0<\hbar\leq 1}$ are solutions 
to~\eqref{e:eigenvalue},
\begin{equation}\label{e:partition-identity7}
 1+o(1)\leq\ml{O}(1)\left\|\sum_{\gamma\in\overline{W}^{N_1}-W^{N_1}}\pi_{\gamma_{N_1-1}}((N_1-1)T_0)\ldots 
 \pi_{\gamma_1}(T_0)\pi_{\gamma_0}(0)\psi_{\hbar}\right\|.
\end{equation}
Taking the square of this inequality, we obtain the existence of a constant $C>0$ such that
 \begin{equation}\label{e:partition-identity8}
 C\leq \left\|\sum_{\gamma\in\overline{W}^{N_1}-W^{N_1}}\pi_{\gamma_{N_1-1}}((N_1-1)T_0)\ldots 
 \pi_{\gamma_1}(T_0)\pi_{\gamma_0}(0)\psi_{\hbar}\right\|^2.
\end{equation}
Using Egorov property~\eqref{e:egorov} for short logarithmic times (recall that 
$N_1\simeq[\kappa|\log\hbar|]$ with $\kappa>0$ small enough) and the composition rule, this inequality can be rewritten as
\begin{equation}\label{e:partition-identity9}
 C\leq \left\la\psi_{\hbar},\Oph\left(\left(\sum_{\gamma\in\overline{W}^{N_1}-W^{N_1}}P_{\gamma_{N_1-1}}\circ\varphi^{(N_1-1)T_0}
 \times\ldots\times P_{\gamma_1}\circ\varphi^{T_0}\times P_{\gamma_0}\right)^2\right)\psi_{\hbar}\right\ra+o(1).
\end{equation}
Now we can make use of the G\aa{}rding inequality~\eqref{e:garding} and of the fact that the principal symbol is
$\leq 1$ to deduce:
\begin{equation}\label{e:partition-identity10}
 C\leq \left\la\psi_{\hbar},\Oph\left(\sum_{\gamma\in\overline{W}^{N_1}-W^{N_1}}P_{\gamma_{N_1-1}}\circ\varphi^{(N_1-1)T_0}
 \times\ldots\times P_{\gamma_1}\circ\varphi^{T_0}\times P_{\gamma_0}\right)\psi_{\hbar}\right\ra+o(1).
\end{equation}

\subsubsection{Subadditivity again: from short logarithmic times to finite times}

We can now reiterate the same procedure to go down to finite times. We note that
$$\overline{W}^{N_1}-W^{N_1}=\bigsqcup_{p=1}^{N_1-1}\left\{\alpha=(\overline{\gamma},\infty,\tilde{\gamma}):
\overline{\gamma}\in\overline{W}^{p-1},\ \tilde{\gamma}\in W^{N_1-p}\right\}.$$
Using the fact that we have a partition of unity near $S^*M$ (and that $\psi_\hbar$ is microlocalized near $S^*M$)
and the G\aa{}rding inequality, we can deduce from~\eqref{e:partition-identity10} that
\begin{equation}\label{e:partition-identity11}
 C\leq \sum_{p=1}^{N_1-1}\left\la\psi_{\hbar},\Oph\left(P_{\infty}\circ\varphi^{pT_0}\right)\psi_{\hbar}\right\ra+o(1).
\end{equation}
Using one last time Egorov theorem for short logarithmic times and the fact that the $(\psi_{\hbar})_{0<\hbar\leq 1}$ 
are Laplace eigenfunctions, we can conclude that
$$C\leq N_1\left\la\psi_{\hbar},\Oph\left(P_{\infty}\right)\psi_{\hbar}\right\ra+o(1).$$
Recall now that $N_1$ is of order $|\log\hbar|$ which concludes the proof of Theorem~\ref{t:nonnenmacher-riviere}.

\subsubsection{Adding a global assumption on the dynamics}

As the reader can see, the proof of Theorem~\ref{t:nonnenmacher-riviere} is already quite involved but it illustrates 
the general scheme that was initiated by Anantharaman in~\cite{Anantharaman08}. In order to obtain the results 
for Anosov flows (e.g. Theorems~\ref{t:anantharaman}), one needs to use an earlier version of the 
hyperbolic dispersion estimate~\eqref{e:hyp-dispersive-estimate} due to Anantharaman and 
Nonnenmacher~\cite{Anantharaman08, AnantharamanNonnenmacher07b} and valid for any sequence $\gamma$ associated 
with a partition of unity. In particular, it still holds for sequences which correspond to trajectories that 
are away from $\Lambda$: this is where one uses the global assumption on the dynamics (i.e. the 
Anosov property). Implementing this, we can be less restrictive when we cut the sum in two pieces at the 
beginning of the proof. In particular, we can allow sequences $\gamma$ corresponding to trajectories that are 
away from $\Lambda$ for a small fraction of times. Proceeding like this, we have less terms in the remaining sums 
and we can proceed to a better subadditive argument that will not end with this $|\log\hbar|$ factor -- 
see~\cite{Anantharaman08, Riviere12} for more details.

\subsection{Sketch of proof of Theorem~\ref{t:FUP}}

To end this section on Laplace eigenfunctions for chaotic geodesic flows, we give a sketch of the proof of 
Theorem~\ref{t:FUP}. This is based on a fractal 
uncertainty principle~\cite{DyatlovZahl16, BourgainDyatlov18} that can be used to replace the hyperbolic 
dispersion estimate~\eqref{e:hyp-dispersive-estimate} used in the previous argument. We follow the presentation of~\cite{Dyatlov17} and we refer the reader 
to this reference for an account on these recent developments. We fix a sequence of Laplace eigenfunctions 
$(\psi_{\hbar})_{\hbar\rightarrow 0^+}$ (i.e. solutions to~\eqref{e:eigenvalue}) having a single 
semiclassical measure. We let $\omega$ be a nonempty open set of $S^*M$ and $\omega_1\neq\emptyset$ 
such that $\overline{\omega_1}\subset\omega$. As before, we introduce a partition 
$(\pi_1,\pi_2)$ of identity, i.e
$$\pi_1+\pi_2=\text{Id}_{L^2(M)}+\ml{O}(\hbar),$$
for sequences of initial data which are microlocalized on $S^*M$. We also make the assumption that the principal symbol $P_2$ 
of $\pi_2$ does not intersect $\omega_1$ (and its dilation) in a small neighborhood of $S^*M$ while the 
operator $\pi_1$ is microlocalized in $\omega$ near $S^*M$. We now fix some logarithmic 
time
$$N=[\ml{K}|\log\hbar|],$$
where $0<\ml{K}<1$ is a parameter which is close to $1$. In that geometric context, the Egorov Theorem is only 
valid up to times of order $\frac{1}{2}|\log\hbar|$~\cite{BouzouinaRobert02, 
AnantharamanNonnenmacher07b, DyatlovGuillarmou14}. Thus, $N_1$ is larger than the scale where we can expect the 
semiclassical rules of paragraph~\ref{ss:semiclassical} to work. In order to overcome this difficulty, Dyatlov and Jin 
make use of two quantization procedures that are adapted respectively to symbols whose 
derivatives are bounded in $\hbar$ along the stable manifold (resp. along the unstable manifold) but may blow 
in $h^{-\kappa}$ when one differentiates in other directions. These kind of quantization procedures adapted to a Lagrangian 
foliation were constructed by Dyatlov and Zahl~\cite[$\S 3$]{DyatlovZahl16} and they are denoted 
respectively by $\Oph^{L_s}$ and $\Oph^{L_u}$ in the present case. They correspond to different class of symbols and they are a 
priori not compatible with each other. In particular, $\Oph^{L_s}(b_1)\Oph^{L_u}(b_2)$ is not necessarly a 
pseudodifferential operator. Yet, it has the advantage that, up to the time scale $N$, we can apply 
semiclassical rules provided we consider the appropriate sense of times. More precisely, one has among other things
\begin{equation}\label{e:fup-pi-unstable}
\Pi_+:=\pi_2(-1)\pi_2(-2)\ldots\pi_2(-N)=\Oph^{L_u}\left(\prod_{j=1}^{N}P_2\circ\varphi^{-j}\right)+\ml{O}(\hbar^{1-\ml{K}}),
\end{equation}
and
\begin{equation}\label{e:fup-pi-stable}
 \Pi_-:=\pi_2(N-1)\ldots\pi_2(1)\pi_2(0)=\Oph^{L_s}\left(\prod_{j=0}^{N-1}P_2\circ\varphi^{j}\right)+\ml{O}(\hbar^{1-\ml{K}}),
\end{equation}
as we had in the previous proof but for much smaller scales of times, i.e. $\kappa|\log\hbar|$ with some fixed $\kappa\ll 1.$ 
One can then use a version of the fractal uncertainty principle as it appears in the work of 
Bourgain and Dyatlov~\cite{BourgainDyatlov18} 
in order to show that
\begin{equation}\label{e:FUP}
 \left\|\Oph^{L_s}\left(\prod_{j=0}^{N-1}P_2\circ\varphi^{j}\right)\Oph^{L_u}\left(\prod_{j=1}^{N}P_2\circ\varphi^{-j}\right)\right\|_{L^2\rightarrow L^2}\leq 
 C\hbar^{\beta},
\end{equation}
for some $\beta>0$ depending only on the open set $\omega$. This is exactly the estimate that will replace the 
hyperbolic dispersion estimate~\eqref{e:hyp-dispersive-estimate} that we used in our previous 
proof. Again these last steps are the ones that crucially use the hyperbolic structure. Let us now try to implement 
this new information. As before, we use the quantum partition of unity to write 
$$\sum_{\gamma\in\{1,2\}^{2N}}
\left\la\psi_{\hbar}
,\pi_{\gamma_{N-1}}(N-1)\ldots\pi_{\gamma_1}(1)\pi_{\gamma_0}(0)\pi_{\gamma_{-1}}(-1)\ldots\pi_{\gamma_{-N}}(-N)\psi_{\hbar}\right\ra=1+o(1).$$
Implementing~\eqref{e:FUP}, we find that
$$\sum_{\gamma\in\{1,2\}^{2N}\setminus\{2\}^{2N}}
\left\la\psi_{\hbar}
,\pi_{\gamma_{N-1}}(N-1)\ldots\pi_{\gamma_1}(1)\pi_{\gamma_0}(0)\pi_{\gamma_{-1}}(-1)\ldots\pi_{\gamma_{-N}}(-N)\psi_{\hbar}\right\ra=1+o(1).$$
We now proceed to a subadditive argument as before. First, we set $N_1=[\kappa|\log\hbar|]$ with $\kappa\ll 1$
Reorganizing the sums as we did before and applying the Cauchy-Schwarz inequality, 
we find that there exists some constant $C>0$ such that
$$C\leq\left\|\sum_{\gamma\in\{1,2\}^{N_1}\setminus\{2\}^{N_1}}\pi_{\gamma_{N_1-1}}(N_1-1)\ldots\pi_{\gamma_1}(1)\pi_{\gamma_0}(0)
\psi_{\hbar}\right\|.$$
As we are now handling much smaller time scales, we can proceed as in the previous proof to find that
$$C^2\leq \sum_{p=1}^{N_1-1}\langle\psi_{\hbar},\Oph(P_1\circ\varphi^p)\psi_{\hbar}\rangle+o(1).$$
Applying one last time the Egorov Theorem and using the fact that $\psi_{\hbar}$ is a normalized Laplace 
eigenfunction, we find that
$$\frac{C^2+o(1)}{N_1}\leq\la\psi_\hbar,\pi_1\psi_\hbar\ra.$$
Recall now that the pseudodifferential operator $\pi_1$ is microlocalized inside the open set $\omega$ we are interested in. 
Hence, one has a fraction $|\log\hbar|^{-1}$ which is contained inside this open set. As before, in order to remove this 
$|\log\hbar|^{-1}$ factor, we have to be slightly more careful in the definition of $\Pi_{\pm}$ and to allow that there is a 
small fraction of terms with $\gamma=1$, meaning that it corresponds to trajectories that send a small amount of 
times away from the open set $\omega_1$. If this fraction is small enough in terms of the exponent $\beta>0$ appearing in 
the fractal uncertainty principle~\eqref{e:FUP}, then one can still apply the estimate to these extra terms. The 
remaining sums are again much smaller and performing an argument similar to the one from~\cite{Anantharaman08} allows to 
remove the logarithmic factor -- see~\cite{DyatlovJin17} for details.

\subsection{Some comments on the key estimates~\eqref{e:hyp-dispersive-estimate} and~\eqref{e:FUP}}

The two proofs that we have described follows a similar scheme that was initially 
introduced in~\cite{Anantharaman08}. In particular, both proofs require at some point an estimate on the $L^2$-norm 
of elements inside our quantum partition associated with classical trajectories that avoid the region 
where we want to show that the quantum state is observable. Hence, one needs to show that this norm is asymptotically 
small and this is where the chaotic features of the flow are crucially used in both proofs. Yet, the proofs of these two 
estimates are of rather different nature even if they both consist in estimating some oscillatory integral. 
Observe that an important difference between the two estimates is that~\eqref{e:hyp-dispersive-estimate} yields 
a nontrivial information for symbols $P_i$ supported in a subset of small diameter while~\eqref{e:FUP} 
holds for \emph{any} symbol $P_2$. This subtle difference between the two statements is responsible for the fact that the 
entropic results yields observability for big enough subsets of $S^*M$ while Theorem~\ref{t:FUP} gives it for 
any subset (under more restrictive assumptions on $M$).

The hyperbolic dispersion estimates of Anantharaman, Nonnenmacher and Zworski is based on WKB type asymptotics. 
More precisely, after some Fourier decomposition which is responsible for the loss $\hbar^{-\frac{n}{2}}$ 
in~\eqref{e:hyp-dispersive-estimate}, the proof amounts to a precise description of the action 
of elements inside the quantum partition on Lagrangian states $(a(x)e^{\frac{iS(x)}{\hbar}})_{\hbar>0}$ (as the ones from example~\eqref{ex:lagrangian}). 
Then, the decay follows from the exact expression obtained in the WKB asymptotics and from the fact that the 
Lagrangian submanifolds $\ml{L}_S=\{(x,d_xS)\}$ become closer to the unstable direction under the classical evolution.

On the other hand, using the hyperbolic structure, Dyatlov and Jin show how to reduce the crucial upper 
bound~\eqref{e:FUP} to some question from harmonic analysis. This question is referred 
as a fractal uncertainty principle in the sense that it gives an estimate on the $L^2$ mass of a function 
inside a subset having a ``fractal'' structure when its Fourier transform is itself supported in a subset 
with similar properties. More precisely, relying on the unique ergodicity of the 
horocycle flow\footnote{This is the flow generated by the vector field in the unstable direction.}, 
they show that the supports of the symbols involved in~\eqref{e:FUP} have holes 
of uniform size\footnote{The size depends only on $\omega$.} $0<\delta<1$ 
along the stable (resp. unstable) direction at all scales 
between $\hbar^{\ml{K}}$ and $1$ -- see~\cite[Lemma~3.2]{Dyatlov17} for a precise statement. 
The supports are said to be $\delta$-porous at these scales. Then, one can reduce 
the problem to a one dimensional problem where the stable and unstable directions are in some sense 
dual to each other in terms of Fourier transform. Thus, proving the upper bound~\eqref{e:FUP} 
amounts to prove some subtle estimate on the properties of the one-dimensional Fourier transform~\cite{BourgainDyatlov18}. 
Roughly speaking, if the Fourier transform $\widehat{f}$ of $f\in L^2(\IR)$ is supported in a set $Y$ having 
holes of size $0<\delta<1$ at each scale between $\hbar^{\ml{K}-1}$ and $\hbar^{-1}$, 
then the $L^2$-mass of $f$ inside a subset
having holes of size $\delta$ at each scale between $\hbar^{\ml{K}}$ and $1$ is of order $\hbar^{\beta-2(1-\ml{K})}\|f\|_{L^2}$ for some 
$\beta>0$ (depending only on $\delta$ thus on $\omega$). We refer the reader to the recent review of Dyatlov~\cite{Dyatlov19} 
for background and ideas behind these fractal uncertainty principles.

\section{Application to nodal sets and $L^p$ norms}\label{ss:application}

We mostly focused on the study of eigenfunctions via their semiclassical measures which are natural from the 
point of view of quantum mechanics. Yet, from the mathematical perspective, there are several other interesting 
quantities that can be used to describe the singularities and the concentration properties of Laplace 
eigenfunctions. To end up these lectures, we would like to mention at least two of them and to explain how they are 
related to the questions dicussed in these notes.

\subsection{Bound on $L^p$ norms}

Without any assumption on the Riemannian manifold $(M,g)$, the best bounds that one can expect 
on the $L^p$ norms of Laplace eigenfunctions are due to H\"ormander for $p=+\infty$~\cite{Hormander68} and 
to Sogge~\cite{Sogge88} for $p<+\infty$:
\begin{theo}\label{t:sogge} Let $2\leq p\leq +\infty$. There exists $C_{M,g,p}>0$ such that, for every $\psi_{\lambda}$ solution to
$$-\Delta_g\psi_{\lambda}=\lambda^2\psi_{\lambda},$$
one has
$$\|\psi_{\lambda}\|_{L^p(M)}\leq C_{M,g,p}(1+\lambda)^{\sigma(p)}\|\psi_{\lambda}\|_{L^2(M)},$$
o\`u 
 $$\sigma(p):=\max\left\{\frac{n-1}{2}\left(\frac{1}{2}-\frac{1}{p}\right),\frac{n-1}{2}-\frac{n}{p}\right\}.$$
\end{theo}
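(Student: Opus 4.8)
The plan is to follow Sogge's approach~\cite{Sogge88}: reduce everything to a single \emph{critical} $L^p$ estimate and then interpolate. Since $M$ is compact there are only finitely many eigenvalues in $[0,1]$, and on the corresponding finite-dimensional eigenspaces the bound is trivial (all norms are equivalent there), so I may assume $\lambda\geq 1$ and replace $1+\lambda$ by $\lambda$. At the two endpoints the estimate is already available: $\sigma(2)=0$ and $\|\psi_\lambda\|_{L^2}\leq\|\psi_\lambda\|_{L^2}$ is trivial, while $\sigma(+\infty)=\tfrac{n-1}{2}$ and the bound $\|\psi_\lambda\|_{L^\infty(M)}\leq C_{M,g}(1+\lambda)^{\frac{n-1}{2}}\|\psi_\lambda\|_{L^2(M)}$ is the localised Weyl law recalled in Section~\ref{s:background} (Eq.~(3.3.2) in~\cite{Sogge14}). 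A short computation shows that the two expressions defining $\sigma$ agree at the critical exponent $p_c:=\tfrac{2(n+1)}{n-1}$, where $\sigma(p_c)=\tfrac{n-1}{2(n+1)}$, that the first (resp.\ the second) is the larger on $[2,p_c]$ (resp.\ $[p_c,+\infty]$), that $\sigma$ is continuous, and that on each of these two intervals $\sigma(p)$ is the linear-in-$1/p$ interpolant of its endpoint values. Hence, by Hölder's inequality (log-convexity of $L^p$ norms) between $L^2$ and $L^{p_c}$, and between $L^{p_c}$ and $L^\infty$, \emph{it suffices to prove the critical estimate}
\begin{equation}\label{e:sogge-critical}
\|\psi_\lambda\|_{L^{p_c}(M)}\leq C\,\lambda^{\frac{n-1}{2(n+1)}}\,\|\psi_\lambda\|_{L^2(M)},\qquad \lambda\geq 1 .
\end{equation}

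The next step is to pass from eigenfunctions to a fixed reproducing operator. Writing $\sqrt{-\Delta_g}$ for the positive square root, one has $\sqrt{-\Delta_g}\psi_\lambda=\lambda\psi_\lambda$. Fix $\rho\in\ml{S}(\IR)$ with $\rho(0)=1$ and $\widehat{\rho}\in\ml{C}^\infty_c((-\eps,\eps))$ for a small $\eps>0$ to be chosen. By the functional calculus $\rho(\lambda-\sqrt{-\Delta_g})\psi_\lambda=\psi_\lambda$, so~\eqref{e:sogge-critical} reduces to the operator bound $\|\rho(\lambda-\sqrt{-\Delta_g})\|_{L^2(M)\to L^{p_c}(M)}\leq C\lambda^{\frac{n-1}{2(n+1)}}$. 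I would then use the spectral representation
$$\rho\big(\lambda-\sqrt{-\Delta_g}\big)=\frac{1}{2\pi}\int_\IR\widehat{\rho}(t)\,e^{it\lambda}\,e^{-it\sqrt{-\Delta_g}}\,dt,$$
together with finite speed of propagation for the wave equation on $(M,g)$: for $|t|<\eps$ the Schwartz kernel of $e^{-it\sqrt{-\Delta_g}}$ is supported within distance $\eps$ of the diagonal, so after a finite partition of unity subordinate to a cover of $M$ by geodesic normal charts one works on $\IR^n$. There, for $|t|<\eps$, the half-wave group is represented, modulo a smoothing operator, by a Fourier integral operator of order $0$ with phase $\la x-y,\xi\ra-t\,|\xi|_{g(x)}$ and a classical symbol; inserting this and carrying out the (linear, hence exact) $t$-integration localises the frequency to $|\xi|_{g}\sim\lambda$ and yields, modulo a contribution of order $\ml{O}(\lambda^{\frac{n-1}{2}-1})$ plus a rapidly decaying remainder, a Schwartz kernel of the form
$$K_\lambda(x,y)=\lambda^{n-1}\,a_\lambda(x,y)\,e^{i\lambda\,d_g(x,y)},\qquad |\partial^\alpha_{x,y}a_\lambda(x,y)|\lesssim_\alpha\big(1+\lambda\,d_g(x,y)\big)^{-\frac{n-1}{2}}\,d_g(x,y)^{-|\alpha|},$$
with $a_\lambda$ supported in $\{d_g(x,y)<\eps\}$, where $d_g$ is the Riemannian distance.

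It then remains to bound this oscillatory integral operator in $L^2\to L^{p_c}$. On the near-diagonal piece $d_g(x,y)\lesssim 1/\lambda$ one has $|K_\lambda|\lesssim\lambda^{n-1}$ on a set of measure $\lesssim\lambda^{-n}$, so Young's inequality already gives there an $L^2\to L^\infty$ bound of size $\lambda^{\frac{n-1}{2}}$, far more than enough. For $1/\lambda\lesssim d_g(x,y)\lesssim\eps$ one decomposes dyadically in $d_g(x,y)\sim 2^{-k}$, rescales by $2^{k}$, and obtains --- after a further routine decomposition separating the radial direction --- oscillatory integral operators of Carleson--Sj\"olin type with large parameter $\mu=\lambda 2^{-k}\geq1$, smooth amplitudes supported in a fixed annulus, and phases which (for $\eps$ small) are non-degenerate; the required non-degeneracy/curvature hypothesis is precisely the strict convexity of the co-spheres $S^*_xM$, equivalently the non-vanishing of the Gaussian curvature of small geodesic spheres of $(M,g)$, and this is the only place where the geometry of the metric enters in an essential way. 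H\"ormander's oscillatory integral theorem (the higher-dimensional Carleson--Sj\"olin estimate; cf.\ also~\cite{Sogge14}) then gives for such operators an $L^2\to L^{p_c}$ bound of size $\mu^{-n/p_c}$; undoing the rescaling and restoring the $\lambda^{\frac{n-1}{2}}$ amplitude factor produces a per-piece bound $\lesssim\lambda^{\frac{n-1}{2}}\,2^{-\eta k}\,\lambda^{-n/p_c}$ for some $\eta>0$, and summing the convergent geometric series in $k$ yields $\|\rho(\lambda-\sqrt{-\Delta_g})\|_{L^2\to L^{p_c}}\lesssim\lambda^{\frac{n-1}{2}-n/p_c}=\lambda^{\sigma(p_c)}$, which is~\eqref{e:sogge-critical}. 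The \textbf{main obstacle} is exactly this Carleson--Sj\"olin estimate --- the $\mu^{-n/p}$ bound for $p\geq p_c$, i.e.\ the variable-coefficient form of the $L^2$ adjoint restriction (extension) inequality for a hypersurface of non-vanishing curvature --- which is a genuinely hard piece of harmonic analysis; the construction and symbol calculus of the half-wave parametrix, and the careful tracking of the singularity of $a_\lambda$ at the diagonal through the dyadic decomposition, though classical, also require care. I note finally that for $2\leq p<p_c$ no Carleson--Sj\"olin input is needed, since it suffices to interpolate~\eqref{e:sogge-critical} with the trivial $L^2$ bound (alternatively, on that range the estimate also follows from a Stein--Tomas type $TT^{*}$ argument applied to $\rho(\lambda-\sqrt{-\Delta_g})\rho(\lambda-\sqrt{-\Delta_g})^{*}$).
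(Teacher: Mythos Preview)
The paper does not give a proof of this theorem: it is stated as a classical result, attributed to H\"ormander for $p=+\infty$ and to Sogge~\cite{Sogge88} for $p<+\infty$, and used as a black box in the subsequent discussion of $L^p$-norms (in particular in the heuristic leading to Theorem~\ref{t:Lp-Sogge}). So there is no ``paper's own proof'' to compare against.

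That said, your outline is a faithful sketch of Sogge's original argument and is essentially correct. The reduction to the critical exponent $p_c=\tfrac{2(n+1)}{n-1}$ via interpolation with the trivial $L^2$ bound and the $L^\infty$ bound from the local Weyl law is exactly right, as is the passage to the smoothed spectral projector $\rho(\lambda-\sqrt{-\Delta_g})$ and the half-wave parametrix localised by finite propagation speed. Your identification of the Carleson--Sj\"olin oscillatory integral estimate as the genuine analytic input is also accurate: that is precisely where the curvature of the cosphere bundle enters and where the argument ceases to be soft. One minor point of care: in the dyadic decomposition in $d_g(x,y)\sim 2^{-k}$, the exponent you obtain after rescaling and applying the Carleson--Sj\"olin bound should be checked to sum; in Sogge's treatment the pieces are in fact all of comparable size at $p=p_c$ (no geometric decay), and one controls the sum by almost-orthogonality rather than by a convergent geometric series. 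This does not affect the final bound but is worth getting right if you write out the details.
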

The exponent $p_c=\frac{2(n+1)}{n-1}$ plays an 
important role as it corresponds to the value where the functions defining $\sigma(p)$ coincide. Any 
improvement on the upper bound of the $L^{p_c}(M)$-norm will then give by interpolation improvements 
for every $2<p<+\infty$. This result is optimal if we do not make any extra assumptions on $(M,g)$ as these upper bounds are saturated on the 
sphere endowed with its canonical metric~\cite{Sogge15}. Yet, it is natural to wonder if appropriate geometric 
assumptions allow to improve this result. We will not try to review the vast literature on the subject and we will 
just make a simple heuristic calculation that allows to understand the relation with the Wigner distributions 
we have encountered in the previous sections. This computation is at the origin of a joint work with 
Hezari~\cite{HezariRiviere16} and it was improved by Sogge in~\cite{Sogge16}.

Let us consider an orthonormal sequence of Laplace eigenfunctions $(\psi_k)_{k\in S}$ and a sequence of radii $r_k>0$ 
such that one can show
\begin{equation}\label{e:comparable-L2}\forall x_0\in M,\ \forall k \in S,\quad \frac{1}{\operatorname{Vol}_g(B(x_0,r_k))}\int_{B(x_0,r_k)}
|e_k(x)|^2d\operatorname{Vol}_g(x)\leq c_0,\end{equation}
for some uniform constant $0<c_0$. This kind of behaviour typically occurs when one has a sequence of 
eigenfunctions verifying the Quantum Ergodicity property. Let us now try to explain some heuristic idea to 
improve Sogge's estimate. We fix $x_0$ in $M$ and we work in a small geodesic chart centered at $x_0$. 
We rescale the function $\psi_k$ near $x_0$ by setting $\tilde{\psi}_k(y)=\psi_k(r_k y)$. 
This new function verifies near $0$:
$$\tilde{\Delta}_g\tilde{\psi}_k\approx r_k^2\lambda_k^2\tilde{\psi}_k$$
and, thanks to~\eqref{e:comparable-L2},
$$\int_{B(0,1)}|\tilde{\psi}_k(y)|^2dy\lesssim c_0.$$
Hence, locally near $x_0$, one has a quasimode of the Laplacian to which we can apply Sogge's estimates\footnote{These estimates are valid 
for not too bad quasimodes.}. This implies that for $p=p_c$,
$$\int_{B(0,1)}|\tilde{\psi}_k(y)|^{p_c}dy\lesssim r_k\lambda_k\left(\int_{B(0,1)}|\tilde{\psi}_k(y)|^2dy\right)^{\frac{p_c}{2}},$$
where we recall that $\sigma(p_c)=1/p_c$. We now make the change of variables in the other direction, i.e. $x=r_ky$, and we get
$$\frac{1}{\operatorname{Vol}_g(B(x_0,r_k))}\int_{B(x_0,r_k)}
|\psi_k(x)|^{p_c}d\text{Vol}_g(x)\lesssim r_k\lambda_k
\left(\frac{1}{\operatorname{Vol}_g(B(x_0,r_k))}\int_{B(x_0,r_k)}|\psi_k(y)|^2d\text{Vol}_g(x)\right)^{\frac{p_c}{2}}.$$
Using property~\eqref{e:comparable-L2}, we obtain
$$\int_{B(x_0,r_k)}
|\psi_k(x)|^{p_c}d\text{Vol}_g(x)\lesssim r_k\lambda_kc_0^{\frac{p_c}{2}}\operatorname{Vol}_g(B(x_0,r_k)).$$
Let us now cover $M$ by a ``minimal'' family of balls of radius $r_k$. Summing these inequalities, we find the 
following upper bound:
$$\left(\int_{M}
|\psi_k(x)|^{p_c}d\text{Vol}_g(x)\right)^{\frac{1}{p_c}}\lesssim 
\left(r_k\lambda_k\right)^{\frac{1}{p_c}}c_2^{\frac{1}{2}},$$
which improves Sogge's upper bound by a factor $r_k^{\sigma(p_c)}$. This heuristic argument shows that an upper bound 
like~\eqref{e:comparable-L2} allows to improve Theorem~\ref{t:sogge}. Hence, it transfers the problem of estimating 
$L^p$-norms into a problem of controlling the Wigner distribution inside small balls of the configuration 
space. This formal argument can be made rigorous~\cite{HezariRiviere16, Sogge16}:
\begin{theo}\label{t:Lp-Sogge} 
There exists some $C_{M,g}>0$ such that, for every $\psi_{\lambda}$ solution to
$$-\Delta_g\psi_{\lambda}=\lambda^2\psi_{\lambda},\ \|\psi_{\lambda}\|_{L^2}=1,$$
with $\lambda>0$, one has, for every $\lambda^{-1}\leq r\leq \operatorname{Inj}(M,g)$,
$$\|\psi_{\lambda}\|_{L^{p_c}(M)}\leq C_{M,g,p}\lambda^{\sigma(p_c)}\left(r^{-\frac{n+1}{2}}\sup_{x\in M}
\left\{\int_{B(x,r)}
|\psi_{\lambda}(x)|^2d\operatorname{Vol}_g(x)\right\}\right)^{\frac{1}{n+1}}.$$
\end{theo}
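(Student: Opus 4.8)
The plan is to localise Theorem~\ref{t:sogge} to geodesic balls of radius $r$ and then sum the resulting estimates over an efficient cover of $M$, using crucially that $\|\psi_\lambda\|_{L^2(M)}=1$. Throughout set $M_r:=\sup_{x\in M}\int_{B(x,r)}|\psi_\lambda|^2\,d\Vol_g$, and recall that for $p_c=\tfrac{2(n+1)}{n-1}$ one has $\sigma(p_c)=\tfrac{n-1}{2(n+1)}=\tfrac12-\tfrac1{p_c}$.

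The first and only substantial step is a local $L^{p_c}$--$L^{2}$ estimate with constants uniform in the centre and the radius: there is $C=C_{M,g}$ such that for every $x_0\in M$ and every $\lambda^{-1}\le r\le c_0\operatorname{Inj}(M,g)$, with $c_0$ a small fixed constant,
\[
\|\psi_\lambda\|_{L^{p_c}(B(x_0,r))}\ \le\ C\,\lambda^{\sigma(p_c)}\,r^{-1/2}\,\|\psi_\lambda\|_{L^2(B(x_0,2r))}.
\]
To prove it, work in geodesic normal coordinates centred at $x_0$ and put $u(y):=\psi_\lambda(\exp_{x_0}(ry))$ for $|y|<2$, so $u$ solves $(\Delta_{g_r}+(r\lambda)^2)u=0$ on $B(0,2)$, where $g_r(y):=g(ry)$ is uniformly elliptic with $\ml{C}^k$-norms bounded uniformly in $x_0,r$ because $M$ is compact. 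Choosing $\phi\in\ml{C}^\infty_c(B(0,3/2))$ equal to $1$ on $B(0,1)$, the function $\phi u$ satisfies $(\Delta_{g_r}+(r\lambda)^2)(\phi u)=[\Delta_{g_r},\phi]u=:f$ with $\|f\|_{L^2}\lesssim\|u\|_{H^1(B(0,3/2))}\lesssim(r\lambda)\|u\|_{L^2(B(0,2))}$ by an interior elliptic (Caccioppoli) estimate together with $r\lambda\ge1$. Extending $g_r$ to a smooth metric on $\IT^n$ (with uniform bounds) and applying the standard quasimode form of Theorem~\ref{t:sogge} — namely $\|w\|_{L^{p_c}}\lesssim\mu^{\sigma(p_c)}\bigl(\|w\|_{L^2}+\mu^{-1}\|f\|_{L^2}\bigr)$ when $(\Delta+\mu^2)w=f$, $\mu\ge1$ — with $\mu=r\lambda$ gives $\|u\|_{L^{p_c}(B(0,1))}\lesssim(r\lambda)^{\sigma(p_c)}\|u\|_{L^2(B(0,2))}$. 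Changing variables back and using that the Jacobian of $y\mapsto\exp_{x_0}(ry)$ is comparable to $r^n$ turns this into the displayed inequality, since $\sigma(p_c)+\tfrac{n}{p_c}-\tfrac n2=-\tfrac12$.

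Now cover $M$: let $\{x_j\}_j$ be a maximal $r$-separated set, so $\{B(x_j,r)\}_j$ covers $M$, and since $M$ is compact the doubled balls $\{B(x_j,2r)\}_j$ have overlap multiplicity $\le C_n$ for all small $r$; moreover each $B(x_j,2r)$ is covered by $\le C_n$ balls of radius $r$, so $\|\psi_\lambda\|_{L^2(B(x_j,2r))}^2\le C_nM_r$. Summing the local estimate,
\[
\|\psi_\lambda\|_{L^{p_c}(M)}^{p_c}\le\sum_j\|\psi_\lambda\|_{L^{p_c}(B(x_j,r))}^{p_c}\ \lesssim\ \lambda^{\sigma(p_c)p_c}\,r^{-p_c/2}\sum_j\bigl(\|\psi_\lambda\|_{L^2(B(x_j,2r))}^2\bigr)^{p_c/2}.
\]
Since $p_c/2\ge1$, the elementary bound $\sum_j a_j^{p_c/2}\le(\max_j a_j)^{p_c/2-1}\sum_j a_j$ applied to $a_j=\|\psi_\lambda\|_{L^2(B(x_j,2r))}^2$, together with $\max_j a_j\le C_nM_r$ and $\sum_j a_j\le C_n\|\psi_\lambda\|_{L^2(M)}^2=C_n$ (bounded overlap of the doubled balls), yields $\sum_j a_j^{p_c/2}\le(C_nM_r)^{p_c/2-1}C_n$. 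Hence $\|\psi_\lambda\|_{L^{p_c}(M)}^{p_c}\lesssim\lambda^{\sigma(p_c)p_c}r^{-p_c/2}M_r^{p_c/2-1}$, and taking $p_c$-th roots and using $\tfrac12-\tfrac1{p_c}=\tfrac1{n+1}$,
\[
\|\psi_\lambda\|_{L^{p_c}(M)}\ \lesssim\ \lambda^{\sigma(p_c)}\,r^{-1/2}M_r^{1/(n+1)}\ =\ \lambda^{\sigma(p_c)}\bigl(r^{-(n+1)/2}M_r\bigr)^{1/(n+1)},
\]
which is the claim for $r\le c_0\operatorname{Inj}(M,g)$. For $c_0\operatorname{Inj}(M,g)<r\le\operatorname{Inj}(M,g)$ it follows at the cost of a harmless constant from the case $r'=c_0\operatorname{Inj}(M,g)$, because $M_r$ is nondecreasing in $r$ while $r^{-(n+1)/2}$ stays comparable to $(r')^{-(n+1)/2}$ on this bounded range.

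The main obstacle is the uniformity in the first step: one must check that the rescaled metrics $g_r$ ($x_0\in M$, $\lambda^{-1}\le r\le c_0\operatorname{Inj}(M,g)$) form a family with uniform ellipticity and uniform $\ml{C}^k$ bounds — a consequence of compactness of $M$ and the control of $g$ in normal coordinates — so that the constant in (the quasimode version of) Sogge's estimate on $\IT^n$ can be chosen independent of $g_r$; and one must carry out carefully the passage from an eigenfunction on a ball to a genuine global quasimode, with the commutator error absorbed thanks to $r\lambda\ge1$. Once this uniform local estimate is available, the rest is Hölder's inequality combined with volume comparison, and in particular the number of balls in the cover cancels against the normalisation $\|\psi_\lambda\|_{L^2(M)}=1$.
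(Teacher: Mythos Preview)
Your argument is correct and follows precisely the route the paper sketches heuristically just before stating the theorem: rescale to a unit ball via normal coordinates, apply (the quasimode form of) Sogge's estimate to the rescaled eigenfunction, undo the scaling, and sum over an efficient cover using the bounded-overlap property together with the $\ell^{p_c/2}$--$\ell^1$ interpolation $\sum a_j^{p_c/2}\le(\max a_j)^{p_c/2-1}\sum a_j$. The paper does not give a self-contained proof here---it only presents the heuristic and refers to \cite{HezariRiviere16, Sogge16} for the rigorous version---so your write-up is in fact more detailed than what appears in the text, in particular in handling the commutator error via Caccioppoli and in keeping track of the uniformity of constants in the rescaled metrics $g_r$.
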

In particular, one can apply the Quantum Ergodicity Theorem~\ref{t:QE} to show the following. 
If $L_1$ is ergodic, then one can extract 
a density one subsequence of an orthonormal basis along which one has the upper bound $o(\lambda^{\sigma(p_c)})$~\cite{Sogge16}. In the 
case of negatively curved manifolds, we showed with Hezari how to refine a quantitative version of the 
Quantum Ergodicity Theorem~\cite{Zelditch94} and how to derive from it an upper bound of the 
type~\eqref{e:comparable-L2} for radius of logarithmic size in $\lambda$~\cite{HezariRiviere16} -- see also~\cite{Han15}. 
As a direct Corollary, this gives a logarithmic improvement of Sogge's upper bound at $p=p_c$ along a density one subsequence.

This is not only the manner to relate $L^p$-norms with Wigner distributions and let us single out two other approaches that 
appeared recently in the literature. First, for $p\leq p_c$, Blair and Sogge showed in a series of works how to control $L^p$-norms by 
Kakeya-Nikodym norms -- see~\cite{Sogge11, BlairSogge17} and the references therein. More precisely, Kakeya-Nikodym norms can be 
expressed in terms of the Wigner distribution of Laplace eigenfunctions in small tubes of size 
$\lambda^{-1/2}=\sqrt{\hbar}$ around pieces of geodesics. So, one more time, the problem is transferred to a 
question on Wigner distributions. 
For the case $p=+\infty$, Galkowski-Toth~\cite{GalkowskiToth17} and then 
Canzani-Galkowski~\cite{CanzaniGalkowski18} developped some new approach to control the $L^{\infty}$-norm of 
Laplace eigenfunctions in terms of their semiclassical measures. Roughly speaking, they showed how to control the value of 
$\psi_{\lambda}$ at a point $x_0$ in terms of the mass that the Wigner distribution puts on the ($n$-dimensional) 
submanifold $\cup_{t=-T}^{T}\varphi^t(S_{x_0}^*M)$ for some large enough $T>0$. In particular, if one can exhibit 
some situations where this mass goes to $0$ as $\lambda\rightarrow+\infty$ (e.g. if $L_1$ is ergodic), then it yields an improvement on the growth 
of the $L^{\infty}$-norm.

 \subsection{Complex zeroes}
 
 We conclude with the question of nodal sets which are the vanishing locus of Laplace eigenfunctions. Contrary to the case 
 of $L^p$-norms, there is in general no clear relation between the geometry of these nodal sets and the dynamical 
 properties of the geodesic flow. Yet, we would like to mention a result due to Zelditch~\cite{Zelditch07} who 
 showed how to relate complex zeroes of Laplace eigenfunctions to Wigner distributions and thus to the dynamics 
 of the geodesic flow. We also refer to~\cite{NonnenmacherVoros98} for earlier related results in the case of quantum maps.
 
 In~\cite{Zelditch07}, Zelditch proposed that, on a real analytic manifold $(M,g)$, one should consider the 
 vanishing locus of the complexified eigenfunctions rather than the one of real eigenfunctions. In some very vague sense, one can 
 expect that things become simpler as when one goes from real roots of a polynomial to complex ones. Let us be a little 
 bit more precise. If $(M,g)$ is a real analytic compact manifold, there is a natural complexification of the 
 manifold which can be identified with some ball bundle $B_{\eps_0}^*M:=\{(x,\xi)\in T^*M:\|\xi\|_x<\eps_0\}$~\cite{Zelditch07}.
 This complexification is known as the Grauert tube of $(M,g)$. Building on a seminal work of Boutet de 
 Monvel~\cite{BoutetdeMonvel78} -- see also~\cite{Lebeau18}, Zelditch explained how a solution $\psi_{\lambda}$ to~\eqref{e:Laplace} 
 can be analytically continued to the Grauert tube into a function $\psi_{\lambda}^{\IC}$. 
 
 \begin{ex} In order to understand this procedure, one can take the example of the flat torus $\IT^n$. In that case, 
 the Grauert tube is given by
 $$\{z=x+iy:\ x\in\IT^n,\ y\in\IR^n\},$$
and an eigenfunction like $\cos(n.x)$ naturally extends as $\cos (n.z).$  
 \end{ex}

 Once one has defined these analytic continuations of eigenfunctions, it is natural to consider their ``complex zeros'':
 $$\ml{Z}_{\psi_\lambda^{\IC}}:=\{(x,\xi)\in B_{\eps_0}^*M: \psi_{\lambda}^{\IC}(x,\xi)=0\}.$$
 With these conventions, Zelditch proved~\cite{Zelditch07}:
 \begin{theo} Let $(M,g)$ be a real analytic manifold and let $(\psi_{\lambda})_{\lambda\rightarrow+\infty}$ be a sequence of solutions 
 to~\eqref{e:Laplace} generating a single semiclassical measure $\mu$. Suppose that, for any non empty open subset 
 $\omega$ of $S^*M$, $\mu(\omega)>0$. Then, for any $\theta\in\Omega_c^{n-1,n-1}(B_{\eps_0}^*M\setminus 0)$, one has
$$\lim_{\lambda\rightarrow+\infty}\frac{1}{\lambda}\int_{\ml{Z}_{\psi_\lambda^{\IC}}}\theta=\frac{i}{\pi}
\int_{B_{\eps_0}^*M}\partial\overline{\partial}\|\xi\|\wedge\theta.$$ 
\end{theo}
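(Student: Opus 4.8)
The plan is to combine the Poincar\'e--Lelong formula with the compactness theory of plurisubharmonic functions, reducing the statement to the convergence $\frac1\lambda\log|\psi_\lambda^{\IC}|\to\|\xi\|$ in $L^1_{\mathrm{loc}}$ (where $\|\xi\|=\|\xi\|_{g^*(x)}$ is the Grauert tube function, satisfying $(\partial\overline\partial\|\xi\|)^n=0$ on $B^*_{\eps_0}M\setminus 0$), and then to use the hypothesis on $\mu$ to pin down the limit. \textbf{Step 1 (Poincar\'e--Lelong reduction).} Since $\psi_\lambda^{\IC}$ is holomorphic on the Grauert tube $B^*_{\eps_0}M$, the function $u_\lambda:=\frac1\lambda\log|\psi_\lambda^{\IC}|$ is plurisubharmonic, and the Poincar\'e--Lelong formula identifies the normalized current of integration over the zero divisor with $\frac1\lambda[\ml{Z}_{\psi_\lambda^{\IC}}]=\frac{i}{\pi}\partial\overline\partial u_\lambda$. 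For $\theta\in\Omega_c^{n-1,n-1}(B^*_{\eps_0}M\setminus 0)$ one then has, integrating by parts,
$$\frac1\lambda\int_{\ml{Z}_{\psi_\lambda^{\IC}}}\theta=\frac{i}{\pi}\int_{B^*_{\eps_0}M}u_\lambda\,\partial\overline\partial\theta,$$
so it suffices to prove $u_\lambda\to\|\xi\|$ in $L^1_{\mathrm{loc}}(B^*_{\eps_0}M)$; replacing $u_\lambda$ by $\|\xi\|$ and integrating back by parts yields the stated right-hand side $\frac{i}{\pi}\int\partial\overline\partial\|\xi\|\wedge\theta$.

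\textbf{Step 2 (Upper bound and compactness).} Using the analytic continuation of the Poisson wave operator $e^{-t\sqrt{-\Delta_g}}$ to the Grauert tube --- as developed by Boutet de Monvel and, in this geometric setting, by Lebeau and Zelditch~\cite{BoutetdeMonvel78, Lebeau18, Zelditch07} --- one obtains a uniform bound of the form $|\psi_\lambda^{\IC}(x,\xi)|\le C_K\lambda^{N}e^{\lambda\|\xi\|}$ on every compact $K\subset B^*_{\eps_0}M$. Hence $\limsup_\lambda u_\lambda\le\|\xi\|$ locally, the family $(u_\lambda)$ is locally bounded above, and by Hartogs' lemma for psh functions every subsequence admits a further subsequence converging in $L^1_{\mathrm{loc}}$ either to $-\infty$ or to a psh function $u_*\le\|\xi\|$. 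The first alternative is excluded because $\psi_\lambda^{\IC}$ restricts on the zero section $M$ to the normalized real eigenfunction $\psi_\lambda$, which prevents $u_\lambda$ from tending to $-\infty$ on all of $M$. It remains to prove $u_*=\|\xi\|$.

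\textbf{Step 3 (Pinning down the limit via $\mu$).} This is the core of the argument. For each $0<r<\eps_0$ one studies the restriction $\psi_\lambda^{\IC,r}$ of $\psi_\lambda^{\IC}$ to the level set $\partial B^*_rM\simeq S^*M$ and the associated Husimi-type density $|\psi_\lambda^{\IC,r}|^2e^{-2r\lambda}$. A stationary-phase analysis of the complexified Poisson kernel shows that, suitably normalized, these densities converge weak-$\star$ to the image of the semiclassical measure $\mu$ under the geodesic-flow identification $\partial B^*_rM\simeq S^*M$. Since by hypothesis $\mu(\omega)>0$ for \emph{every} nonempty open $\omega\subset S^*M$, the density $|\psi_\lambda^{\IC}|^2e^{-2\lambda\|\xi\|}$ asymptotically carries positive mass over every open subset of every level set; combined with $u_*\le\|\xi\|$ and the psh-envelope characterization of $\|\xi\|$ (a psh function on $B^*_{\eps_0}M\setminus 0$ dominated by $\|\xi\|$ and not strictly smaller on any open set must equal $\|\xi\|$, using that $\|\xi\|$ solves the homogeneous complex Monge--Amp\`ere equation there), this forces $u_*=\|\xi\|$ almost everywhere. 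As the limit is independent of the subsequence, $u_\lambda\to\|\xi\|$ in $L^1_{\mathrm{loc}}$, and Step 1 concludes.

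\textbf{Main obstacle.} The delicate part is Step 3: quantifying the exponential growth of the analytic continuations in terms of the semiclassical measure, and showing that the bare positivity $\mu(\omega)>0$ suffices to exclude any deficiency in that growth. This needs a precise description of the phase and amplitude of the complexified Poisson wave operator on the Grauert tube, together with the interplay between the FBI/Husimi picture and potential theory (compactness and envelopes of psh functions); the quantum-ergodicity-type hypothesis on $\mu$ enters exactly where one must ensure that the limiting psh function $u_*$ has "no holes."
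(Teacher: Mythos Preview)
Your proposal is correct and follows essentially the same route as the paper's own argument: Poincar\'e--Lelong reduces the statement to the $L^1_{\mathrm{loc}}$ convergence of $\frac{1}{\lambda}\log|\psi_\lambda^{\IC}|^2$; compactness of plurisubharmonic families together with the growth bounds coming from the complexified Poisson kernel handles the upper bound and rules out the $-\infty$ alternative; and the identification of the boundary Husimi densities with the semiclassical measure $\mu$, combined with the full-support hypothesis, forces the limit to equal $\|\xi\|$. The only organizational difference is that the paper first normalizes by the $L^2$ norm on each sphere bundle $\partial B^*_{\|\xi\|}M$, setting $U_\lambda^{\IC}=\psi_\lambda^{\IC}/\|\psi_\lambda^{\IC}\|_{L^2(\partial B^*_{\|\xi\|}M)}$, so that the decomposition $\frac{1}{\lambda}\log|\psi_\lambda^{\IC}|^2=\frac{1}{\lambda}\log|U_\lambda^{\IC}|^2+\frac{2}{\lambda}\log\|\psi_\lambda^{\IC}\|_{L^2(\partial B^*_{\|\xi\|}M)}$ isolates the main term $2\|\xi\|$ (independent of $\mu$) from a remainder $v\leq 0$ that must vanish; your direct approach is equivalent but packages the same two inputs into a single limit.
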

In other words, this Theorem shows that the complex zeros of a sequence of Laplace eigenfunctions become equidistributed inside the 
Grauert tube under a certain condition on their semiclassical measure. The relation with Wigner distribution becomes slightly more 
clear when one applies Lelong-Poincar\'e formula:
 $$\int_{\ml{Z}_{\psi_\lambda^{\IC}}}\theta
 =\frac{i}{2\pi}\int_{B_{\eps_0}^*M}\partial\overline{\partial}\log|\psi_\lambda^{\IC}|^2\wedge\theta.$$
 Hence one can reduce the problem (up to an integration by part) to the description of the limit of 
 $\frac{1}{\lambda}\log|\psi_\lambda^{\IC}|^2$ in the space of distributions (or currents). To that aim, one can define
 $$U_\lambda^{\IC}(x,\xi):=\frac{\psi_\lambda^{\IC}(x,\xi)}{\left\|\psi_\lambda^{\IC}\right\|_{L^2\left(\partial 
 B_{\|\xi\|}^*M\right)}}.$$
 Hence, the problem amounts to compute the weak limit of
 $$\frac{1}{\lambda}\log|\psi_\lambda^{\IC}|^2=\frac{1}{\lambda}\log|U_\lambda^{\IC}|^2+\frac{2}{\lambda}
 \log\left\|\psi_\lambda^{\IC}\right\|_{L^2\left(\partial 
 B_{\|\xi\|}^*M\right)}.$$
 Going through the analytic continuation procedure, one can verify that the second term converges weakly 
 to $2\|\xi\|$~\cite[\S 4.2]{Zelditch07} and this does not depend on the properties of the semiclassical 
 measure. Then, Zelditch shows that the weak limits of $|U_\lambda^{\IC}|^2|_{\partial B_{\eps}^*M}$ are 
 in fact given by the semiclassical measure $\mu$ of our sequence of eigenfunctions and he proves that 
 $\frac{1}{\lambda}\log|U_\lambda^{\IC}|^2$ is a bounded sequence of subharmonic functions~\cite[\S 5.1]{Zelditch07}. 
 In particular, either the sequence tends uniformly to $-\infty$ on every compact, or it converges to some $v\leq 0$ 
 in $L^1_{\text{loc}}$. The first case cannot occur as $\mu$ is a probability measure. Now, one has to verify 
 that $v$ is equal to $0$. If not, this would mean that one can find a nonempty open subset $\omega$ where the 
 semiclassical measure is $0$ and thus contradict our assumption on $\mu$ -- see~\cite[\S 5.1]{Zelditch07} 
 for details.

 As stated in~\cite{Zelditch07}, one has the following consequence of Theorem~\ref{t:QE}:
 \begin{coro} Let $(M,g)$ be a real analytic manifold such that $L_1$ is ergodic. Then, for any orthonormal basis 
 $(\psi_{j})_{j\geq 1}$ of Laplace eigenfunctions, one can find a density $1$ subset $S$ inside $\IZ_+^*$ 
 such that, for any $\theta\in\Omega_c^{n-1,n-1}(B_{\eps_0}^*M\setminus 0)$, one has
$$\lim_{j\rightarrow+\infty, j\in S}\frac{1}{\lambda_j}\int_{\ml{Z}_{\psi_{j}^{\IC}}}\theta=\frac{i}{\pi}
\int_{B_{\eps_0}^*M}\partial\overline{\partial}\|\xi\|\wedge\theta.$$ 
 \end{coro}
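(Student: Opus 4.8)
The plan is to deduce this Corollary by feeding the output of the Quantum Ergodicity Theorem~\ref{t:QE} into the equidistribution theorem for complex zeros stated just above. First I would apply Theorem~\ref{t:QE}: since $L_1$ is ergodic, for any orthonormal basis $(\psi_j)_{j\geq 1}$ of Laplace eigenfunctions there is a subset $S\subset\IZ_+^*$ of density $1$ such that, for every $a\in\ml{C}^{\infty}_c(T^*M)$,
$$\lim_{j\rightarrow+\infty,\,j\in S}\la\psi_{j},\Op_{\hbar_j}(a)\psi_{j}\ra=\int_{S^*M}a\,dL_1.$$
The crucial feature is that $S$ can be chosen \emph{independently} of $a$ (this is the density argument of Colin de Verdi\`ere already invoked in the proof of Theorem~\ref{t:QE}); hence the subsequence $(\psi_j)_{j\in S}$ generates a single semiclassical measure, namely $\mu=L_1$.

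Next I would verify that $\mu=L_1$ fulfils the hypothesis of Zelditch's theorem, i.e. $L_1(\omega)>0$ for every nonempty open $\omega\subset S^*M$. This is immediate: $L_1$ is the disintegration of the Liouville measure $L=|\omega^n|/n!$ along the energy layer $S^*M=H_0^{-1}(1/2)$, so it has a smooth positive density with respect to a Riemannian volume on the compact manifold $S^*M$ and therefore full support.

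Finally, applying the preceding Theorem to the sequence $(\psi_j)_{j\in S}$ with semiclassical measure $L_1$ yields, for every $\theta\in\Omega_c^{n-1,n-1}(B_{\eps_0}^*M\setminus 0)$,
$$\lim_{j\rightarrow+\infty,\,j\in S}\frac{1}{\lambda_j}\int_{\ml{Z}_{\psi_{j}^{\IC}}}\theta=\frac{i}{\pi}\int_{B_{\eps_0}^*M}\partial\overline{\partial}\|\xi\|\wedge\theta,$$
which is precisely the assertion; note that the right-hand side is the same limiting current for every admissible semiclassical measure, so the single density $1$ set $S$ works for all test forms $\theta$ at once. The only point demanding care is the first step: one must use Quantum Ergodicity in the formulation where the density $1$ subset does not depend on the observable, so that the semiclassical measure along $S$ is unambiguously $L_1$ and Zelditch's theorem — whose hypothesis is phrased in terms of a \emph{single} semiclassical measure — applies verbatim. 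Everything else is bookkeeping.
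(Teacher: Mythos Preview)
Your proposal is correct and follows exactly the approach the paper intends: the paper simply states this as a consequence of Theorem~\ref{t:QE}, and you have filled in the natural details---extract the density~$1$ subsequence along which the semiclassical measure is $L_1$, note that $L_1$ charges every nonempty open subset of $S^*M$, and apply Zelditch's theorem. Your emphasis on using the version of Quantum Ergodicity where $S$ is independent of the observable is exactly the right point of care.
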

Now regarding the recent developments of Dyatlov and Jin (see Theorem~\ref{t:FUP}), one has also
\begin{coro} Let $(M,g)$ be a real analytic surface with constant negative curvature 
 and let $(\psi_{\lambda})_{\lambda\rightarrow+\infty}$ be a sequence of solutions 
 to~\eqref{e:Laplace}. Then, for any $\theta\in\Omega_c^{n-1,n-1}(B_{\eps_0}^*M\setminus 0)$, one has
$$\lim_{\lambda\rightarrow+\infty}\frac{1}{\lambda}\int_{\ml{Z}_{\psi_\lambda^{\IC}}}\theta=\frac{i}{\pi}
\int_{B_{\eps_0}^*M}\partial\overline{\partial}\|\xi\|\wedge\theta.$$ 
\end{coro}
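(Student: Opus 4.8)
The plan is to deduce this corollary by feeding the unconditional observability statement of Theorem~\ref{t:FUP} into Zelditch's equidistribution theorem stated just above. Recall that on a surface of constant negative curvature we may normalize the curvature to be $-1$, so that $(M,g)$ is a compact hyperbolic surface and is in particular real analytic; hence the Grauert tube $B_{\eps_0}^*M$, the analytic continuations $\psi_\lambda^{\IC}$ and the complex nodal sets $\ml{Z}_{\psi_\lambda^{\IC}}$ are all well defined, and Zelditch's theorem is applicable.

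First I would fix a test form $\theta\in\Omega_c^{n-1,n-1}(B_{\eps_0}^*M\setminus 0)$ and consider the sequence of real numbers $I_\lambda:=\frac{1}{\lambda}\int_{\ml{Z}_{\psi_\lambda^{\IC}}}\theta$. It suffices to show that every subsequence of $(I_\lambda)$ has a further subsequence converging to $J(\theta):=\frac{i}{\pi}\int_{B_{\eps_0}^*M}\partial\overline{\partial}\|\xi\|\wedge\theta$, since this forces the whole sequence to converge to $J(\theta)$. Given such a subsequence, the weak-$\star$ compactness of probability measures on $S^*M$ (via the Calder\'on--Vaillancourt bound and the argument of Theorem~\ref{t:semiclassical-measure}) lets me extract a further subsequence $(\psi_{\hbar_n})$ along which the associated Wigner distributions converge to a single semiclassical measure $\mu\in\ml{M}(\Delta_g)$. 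This is again a sequence of Laplace eigenfunctions generating a single semiclassical measure, so Zelditch's theorem applies to it once its hypothesis is verified.

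The hypothesis to check is precisely that $\mu(\omega)>0$ for every non empty open subset $\omega$ of $S^*M$. But this is exactly the content of Theorem~\ref{t:FUP}: since $M$ is a hyperbolic surface, there is $c_{\omega,M,g}>0$ with $\mu(\omega)\geq c_{\omega,M,g}>0$ for every such $\mu$ and every nonempty open $\omega$. Therefore Zelditch's theorem yields $\lim_{n\to+\infty}\frac{1}{\lambda_n}\int_{\ml{Z}_{\psi_{\hbar_n}^{\IC}}}\theta=J(\theta)$, i.e. the extracted subsequence of $(I_\lambda)$ converges to $J(\theta)$. As $J(\theta)$ does not depend on the chosen subsequence, the original sequence $(I_\lambda)$ converges to $J(\theta)$, which is the claimed identity.

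There is essentially no obstacle beyond correctly invoking the two black-box results: the only mild point requiring care is the subsequence extraction (which is standard and identical to the one used, e.g., in the deduction of the Corollary from Theorem~\ref{t:QE}) together with the observation that a subsequence of a sequence of eigenfunctions still satisfies the hypotheses of Zelditch's theorem once one has passed to a subsequence generating a single semiclassical measure. All the genuine difficulty — the analytic continuation to the Grauert tube, the Lelong--Poincar\'e and potential-theoretic argument identifying the weak limit of $\frac{1}{\lambda}\log|\psi_\lambda^{\IC}|^2$, and the fractal uncertainty principle underlying Theorem~\ref{t:FUP} — is already contained in the statements we are allowed to quote.
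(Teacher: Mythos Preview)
Your proof is correct and follows exactly the approach the paper has in mind: the corollary is stated in the paper without proof, as an immediate consequence of Theorem~\ref{t:FUP} combined with Zelditch's equidistribution theorem, and you have supplied the standard subsequence extraction argument that makes this deduction rigorous.
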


\section*{Ackowledgements}

I would like to thank S.~Nonnenmacher and J.~Sabin for organizing the spring school ``From quantum to classical''
from which these lecture notes grew up. I also thank S.~Dyatlov, D.~Robert and S.~Zelditch for answering some of 
my questions and for useful feedbacks. Finally, many thanks to M.~L\'eautaud for his careful 
reading and for pointing me several imprecisions in the first version of these notes.

\bibliographystyle{plain}
\bibliography{allbiblio}

\end{document}